\documentclass[11pt]{article}

\usepackage[utf8]{inputenc} 
\usepackage[T1]{fontenc}
\usepackage{lmodern} 
\usepackage{microtype}

\usepackage[margin=1in]{geometry} 

\usepackage{setspace}
\usepackage{mathtools, amsthm, amssymb, mathrsfs, bbm}
\usepackage{thmtools}

\usepackage[
  colorlinks=true,
  linkcolor=blue,
  citecolor=blue,
  urlcolor=blue
]{hyperref}       

\usepackage{url}            
\usepackage{booktabs}       
\usepackage{amsfonts}       
\usepackage{nicefrac}       
\usepackage{xcolor}         

\usepackage{graphicx, wrapfig, subfigure} 
\usepackage{enumerate}
\usepackage{comment}
\usepackage[capitalize]{cleveref}
\usepackage{todonotes}
\usepackage{algorithm}
\usepackage[noend]{algpseudocode}
\usepackage[style=alphabetic]{biblatex}
\usepackage{dsfont}

\usepackage{enumitem}
\usepackage{xparse}

\usepackage{authblk}
\newtheorem{definition}{Definition}

\newtheorem{lemma}{Lemma}

\newtheorem{theorem}{Theorem}
\newtheorem{observation}{Observation}

\newtheorem{remark}{Remark}

\newcommand{\greedyalgorithm}{\text{\textsc{Greedy}}}
\newcommand{\optalgorithm}{\text{\textsc{Opt}}}
\newcommand{\acceptrejectalgorithm}{\text{\textsc{AcceptReject}}}

\newcommand{\greedy}[1]{\mathcal{G}_{#1}}
\newcommand{\greedydet}[1]{G_{#1}}
\newcommand{\acceptreject}[2]{\mathcal{A}^{#1}_{#2}}
\newcommand{\acceptrejectdet}[2]{A^{#1}_{#2}}
\newcommand{\opt}[1]{\mathcal{O}_{#1}}
\newcommand{\optdet}[1]{O_{#1}}
\newcommand{\smax}[1]{H_{#1}}

\newcommand{\neighborsin}[2]{N_{#1}(#2)}
\newcommand{\neighbors}[1]{N(#1)}
\newcommand{\f}[1]{|\neighbors{#1}|}

\DeclareMathOperator{\E}{\mathbb{E}}

\DeclarePairedDelimiter\floor{\lfloor}{\rfloor}

\newcommand{\abs}[1]{\big| #1 \big|}

\newcommand{\todoj}[2][]{\todo[color=blue!20!white,#1]{JC: #2}}

\newcommand{\cM}{\mathcal{M}}
\newcommand{\cA}{\mathcal{A}}
\newcommand{\cB}{\mathcal{B}}
\newcommand{\cX}{\mathcal{X}}
\newcommand{\cY}{\mathcal{Y}}

\newcommand{\cF}{\mathcal{F}}
\newcommand{\cD}{\mathcal{D}}
\newcommand{\cE}{\mathcal{E}}

\newcommand{\cI}{\mathcal{I}}
\newcommand{\cU}{\mathcal{U}}

\newcommand{\cG}{\mathcal{G}}

\newcommand{\RR}{\mathbb{R}}
\newcommand{\bigabs}[1]{\bigl| #1 \bigr|}
\newcommand{\bigpar}[1]{\left( #1 \right)}

\newcommand{\ind}{\perp\!\!\!\!\perp}

\allowdisplaybreaks

\usepackage{color-edits}
\addauthor{eb}{blue}
\addauthor{fs}{magenta}
\addauthor{jc}{red}

\title{
    On the Average-Case Performance of Greedy \\ for Maximum Coverage\thanks{Eric Balkanski and Jason Chatzitheodorou were supported in part by NSF grants CCF-2210501 and IIS-2147361.}
}

\author[a]{Eric Balkanski\thanks{eb3224@columbia.edu}}
\author[a]{Jason Chatzitheodorou\thanks{ic2621@columbia.edu}}
\author[b]{Flore Sentenac\thanks{sentenac@hec.fr}}
\affil[a]{Columbia University, IEOR}
\affil[b]{HEC Paris, ISOM}
\date{}

\begin{document}

\maketitle

\begin{abstract}
    For the classical maximum coverage problem, the greedy algorithm achieves a worst-case $1-1/e$ approximation, which is optimal unless $\text{P} = \text{NP}$. The notion of coverage appears in a wide range of optimization tasks, where empirical evaluations indicate approximation ratios close to $1$ for the greedy algorithm on real data. Random models have provided average-case justifications for the empirical performance of many well-known algorithms, but little is known about the average-case performance of  greedy for maximum coverage.
    
    We analyze the expected approximation ratio of the greedy algorithm in a random model, which we call the left-regular random model. We first show that, for all parameter settings of this model, the expected approximation ratio of the greedy algorithm improves by a constant over its worst-case $1-1/e$ guarantee. We then identify two simple conditions, either of which ensures that the expected approximation ratio is close to $1$ for sufficiently large graphs. Finally, we show that there is a regime where greedy does not achieve an expected approximation better than $0.94$. To obtain these results, we develop analytical tools, including a novel application of the differential equation method and a connection to maximum matching in Erd\H{o}s-Rényi graphs, which may be of independent interest for other random models.
\end{abstract}

\thispagestyle{empty} 
\clearpage            

\pagenumbering{arabic}


\section{Introduction}
            
Maximizing coverage is a fundamental problem in computer science and operations research, as the notion of coverage arises across a wide range of optimization tasks. Influence maximization in social networks~\cite{influence-social-networks}, facility and sensor placement~\cite{facility-sensor-allocation}, information retrieval~\cite{information-retrieval}, content recommendation~\cite{blog-watch}, and finding essential webpages~\cite{essential-web-pages}  are examples of problems that have been formulated as maximum coverage problems. Maximum coverage is also closely related to set cover, one of Karp's original 21 NP-complete problems~\cite{karp1972reducibility}.

Maximum coverage can be defined as follows. Given a bipartite graph with left nodes $L$ and right nodes $R$, the goal is to find the $k$ nodes in $L$ that cover the largest number of nodes in $R$, i.e., $\max_{S \subseteq L: |S| \leq k} |N(S)|$. 
The greedy algorithm, which iteratively selects the node in $L$ that covers the largest number of nodes in $R$ not yet covered, achieves a $1-1/e$ worst-case approximation ratio guarantee~\cite{greedy-worst-case}, which is optimal unless $\text{P} = \text{NP}$~\cite{maxcoverage-inapproximability}. However,  worst-case guarantees are often driven by pathological instances, leading to pessimistic bounds that can be uninformative in practice. There have been numerous approaches to bridging this gap between theory and practice for various problems through beyond–worst-case analyses of algorithms. A canonical example is the smoothed analysis of the simplex method~\cite{spielman2004smoothed}.

For maximum coverage, the greedy algorithm empirically performs significantly better than its $1-1/e$ 
approximation ratio guarantee~\cite{essential-web-pages, submodular-instance-specific}. There have been numerous attempts going beyond worst-case to explain this strong performance in practice for the broader problem of monotone submodular maximization under a cardinality constraint. These include the parameterized properties of curvature~\cite{conforti1984submodular}, stability~\cite{chatziafratis2017stability}, and sharpness~\cite{pokutta2020unreasonable}. More recently, \cite{rubinstein2022budget} considered a budget-smoothed model where the cardinality constraint $k$ is stochastic.

However, except for a few notable exceptions discussed in \cref{sec:relatedwork}, little is known about the average-case performance of greedy for maximum coverage on random graphs. In contrast, random models have provided average-case justifications for the performance of classical algorithms such as quicksort~\cite{hoare1962quicksort}, first-fit decreasing for bin packing~\cite{frederickson1980probabilistic},  local search for the traveling salesman problem~\cite{englert2014worst}, and the greedy algorithm for both max-value~\cite{mastin2013greedy, arnosti2022greedy} and min-cost matching~\cite{frieze1990greedy}. Motivated by the strong empirical performance of greedy for maximum coverage, we explore the following question.

\begin{center}
     \emph{ What is the average-case performance of the greedy algorithm for maximum coverage?    }
\end{center}

We study the approximation ratio of greedy under a model that we call the left-regular random model $\textsc{LRR}(n, d)$. This model generates a bipartite graph with $n$ left nodes $L$ and $n$ right nodes $R$, where each node $u \in L$ independently selects $d$ neighbors  uniformly at random from $R$. We also consider extensions in which $|L| \neq |R|$ and the nodes in $L$ have non-uniform degrees. Our motivation for the $d$-regularity assumption for left nodes $L$ in the base model is that it is a challenging regime for greedy. In particular, there are simple bad instances for greedy in which the left nodes have equal degree  (see~\cref{sec:badinstance} for an example). Moreover, in empirical evaluations of extensions of this model, greedy's approximation ratio was lowest when the nodes in $L$ had equal degree (see~\cref{sec:unequal degrees} for an example).

\paragraph*{Our results.}  
Our first main result is an unconditional bound on the expected approximation ratio of greedy in the left-regular random model. We show that for all values of $n, d,$ and $k$, there is a constant improvement for the approximation ratio of greedy in this random model compared to its worst-case approximation ratio of $1-1/e$. For all results, we assume that greedy breaks ties according to an arbitrary but consistent ordering of the nodes in $L$.

\begin{restatable}{theorem}{TheoremOne}
\label{thm:Theorem 1}
 There exists a constant $c > 0 $ such that for all $n \in \mathbb{N} , d \in [n], k \in [n]$, the greedy algorithm achieves, in expectation, a $(1-1/e + c)$-approximation under the left-regular random model.
\end{restatable}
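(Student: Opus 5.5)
We will compare the expected value of greedy, $\E[|\neighbors{\greedy{k}}|]$, with a simple upper bound on the optimum. Since $|\neighbors{S}| \le \min(kd, n)$ for every $S\subseteq L$ with $|S|\le k$, it suffices to show that $\E[|\neighbors{\greedy{k}}|] \ge (1-1/e+c)\min(kd,n)$. The expected ratio $\E[\,|\neighbors{\greedy{k}}|/|\neighbors{\optdet{k}}|\,]$ is then at least $\E[|\neighbors{\greedy{k}}|]/\min(kd,n)$. As a baseline we use a trivial random strategy that picks any $k$ fixed left nodes. Its expected coverage is
\[
n\bigl(1-(1-1/n)^{kd}\bigr) \ \ge\ n\bigl(1-e^{-kd/n}\bigr),
\]
and greedy dominates it only in the sense that greedy's marginal gain at each step is at least the marginal gain of a fresh unused node. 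We will make this precise with the following step-by-step argument.

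\textbf{Main step.} Let $X_t$ be the number of right nodes covered after $t$ greedy steps. The $n-t$ unused left nodes have neighborhoods that are independent of greedy's choices except through ties and argmax conditioning. Greedy picks the maximum marginal gain among them, so that gain is at least that of any fixed unused node. Conditioned on $X_t$, a fixed unused node's neighbors are uniform, so its expected gain is $d(1-X_t/n)$ up to the with-replacement correction. This yields the recursion
\[
\E[X_{t+1}-X_t\mid X_t] \ \ge\ n\bigl(1-(1-1/n)^d\bigr)(1-X_t/n),
\]
which already gives $1-1/e$. The constant improvement must come from the \emph{maximum} over the $n-t$ candidates strictly exceeding the mean.

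We will split into regimes. \emph{(i) $kd \le \epsilon n$.} Coverage is nearly disjoint, so $|\neighbors{\greedy{k}}|\ge kd(1-O(kd/n))$ and the ratio is near $1$. More simply, the random baseline gives $(1-e^{-x})/x$ with $x=kd/n$, which exceeds $1-1/e$ by a constant whenever $x\le 1-\delta$. \emph{(ii) $kd\ge Cn$.} Then $1-e^{-kd/n}\ge 1-e^{-C}$, and the baseline alone beats $1-1/e$ by a constant. The baseline ratio $(1-e^{-x})/\min(x,1)$ thus exceeds $1-1/e+c$ unless $x\in[1-\delta,1+\delta]$.

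\textbf{The hard regime, $kd\approx n$, and the main obstacle.} Here we must exploit the max over candidates. For small $d$ (constant $d$, including $d=1$), the gain of a candidate is a discrete random variable with variance bounded below. With $n-t\ge n-k$ candidates, and $k\le n/d$ so that $n-k\ge n/2$ when $d\ge2$, the maximum of many roughly independent binomial/hypergeometric gains exceeds the mean by $\Omega(1)$ absolute units, at least a constant fraction of the time. For large $d$, a sample of $d$ uniform nodes with uncovered fraction $p$ has gain fluctuations of order $\sqrt{d}$, and the max over $\Theta(n)$ candidates exceeds $dp$ by $\Omega(\sqrt{d\log n})$, again a constant-factor improvement during a constant fraction of the steps. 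Feeding this into the recursion gives $\E[X_{t+1}-X_t]\ge(1+c')\,d(1-X_t/n)$ for a constant fraction of steps, whenever $(1-X_t/n)$ is bounded away from $0$ and $1$. Solving the resulting difference inequality gives $X_k\ge n(1-e^{-(1+c'')x})$, which beats the baseline by a constant near $x=1$.

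The hard part is making the maximum-over-candidates bound rigorous despite the dependence. Greedy's past choices condition the unused nodes' neighborhoods, since they were \emph{not} selected, so they are biased toward small gains. We will handle this by noting that being unselected only conditions on upper bounds of past gains, and gains are monotone in overlap with uncovered sets. We will then use a coupling or FKG-type argument, or alternatively condition only on a fixed disjoint subset of $\Theta(n)$ candidates revealed lazily. Uniformity in $d$, spanning both the constant-$d$ and the $d\to\infty$ cases, is the other delicate point. We would handle it by a single anti-concentration lemma for the gain distribution, bounding $\Pr[\text{gain}\ge \text{mean}+\max(1,\sqrt{d}/10)]$ below by a constant.
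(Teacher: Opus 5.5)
Your reduction fails in the critical window $kd\approx n$ once $d$ is large, because you only ever bound the optimum by $\min(kd,n)$. In that window you would need $\E[|\neighbors{\greedy{k}}|]\ge(1-1/e+c)\,n$, and this is false. Take $d=k=\sqrt n$. For each fixed $k$-set $S$, $|\neighbors{S}|$ is a sum of negatively correlated indicators with mean $n\bigl(1-(1-d/n)^k\bigr)\to(1-1/e)n$. A Chernoff bound plus a union bound over the $\binom{n}{k}\le e^{k\log(en/k)}=e^{O(\sqrt n\log n)}$ candidate sets then gives $|\neighbors{\opt{k}}|\le(1+o(1))(1-1/e)\,n$ with high probability. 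Since greedy never beats the optimum, $\E[|\neighbors{\greedy{k}}|]/\min(kd,n)\le 1-1/e+o(1)$. So no lower bound on greedy alone can finish this case. In particular, your $\Omega(\sqrt{d\log n})$ excess of the maximum gain over its mean is a relative gain of order $\sqrt{\log n/d}$. This vanishes for $d\gg\log n$, so your factor $(1+c')$ cannot be uniform in $d$.

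The missing ingredient is exactly this concentration upper bound on the optimum. The paper applies it for every $d$ above a large constant ($d\ge 20^4/\epsilon^8$ with $\epsilon=0.3$) to get $\E[|\neighbors{\opt{k}}|]\le(1+\epsilon)\,\E[|\neighbors{\smax{k}}|]$. There the improvement over $1-1/e$ comes from the true ratio being close to $1$, not from greedy exceeding $(1-1/e)n$.

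Second, the justification for your ``Main step'' recursion is wrong, and your regimes (i)--(ii), as well as any repair of the above, depend on it. Conditioned on $X_t$, the unused nodes are not uniform: each of them lost every earlier argmax. The inequality itself fails for $d=1$. If $X_t=x<t$, greedy has already stalled, so $\neighbors{\greedy{t}}=\neighbors{L}$ and the next gain is $0$ with certainty. Your inequality would instead require a gain of at least $1-x/n>0$, about $1/e$ at the typical value $x\approx(1-1/e)n$.

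The conclusion $\E[|\neighbors{\greedy{k}}|]\ge\E[|\neighbors{\smax{k}}|]$ is true, but the paper needs the hybrid sets $\mathcal{Y}^t=\greedy{t}\cup M_{k-t}(\greedy{t})$ to prove it. Conditioning on $\greedy{t}$, on $\neighbors{\greedy{t}}$, and on all residual degrees makes the residual neighborhoods uniform and independent. That yields $\E[|\neighbors{\mathcal{Y}^t}|]\ge\E[|\neighbors{\mathcal{Y}^{t-1}}|]$.

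For constant $d$, your anti-concentration/FKG plan is only a sketch. The paper avoids the conditioning problem by running the differential equation method on the first phase of \textsc{AcceptReject}. There each node's neighborhood is still unexamined when it is scanned, so its acceptance probability given $|A|$ is exactly $\binom{n-d|A|}{d}/\binom{n}{d}$. This gives $t_d=\Theta(n/d)$ full-gain steps with high probability. Restarting the worst-case recursion at $t_d$ then yields ratio $1-1/e+\frac1e\bigl(t_d\, d/n\bigr)^3$. Small $n$ must also be handled separately, via $1-(1-1/k)^k\ge 1-1/e+1/(4ek)$.
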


The second main result provides two alternative sufficient conditions, either of which ensures a near-optimal approximation ratio for greedy for sufficiently large graphs.

\begin{restatable}{theorem}{TheoremTwo}
\label{thm:Theorem 2}
For any  $\epsilon \in (0,1)$, $n= \Omega(\epsilon^{-8})$,   the greedy algorithm achieves, in expectation, a $1- \epsilon$ approximation under the left-regular random  model if either $d = \Omega(\epsilon^{-8})$ or $k \in \left[0, \frac{\epsilon n}{2d}\right] \cup \left[\frac{2n}{\epsilon d},\, n \right]$.
\end{restatable}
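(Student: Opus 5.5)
We prove \cref{thm:Theorem 2} by treating its three sufficient conditions as three separate regimes: small $k$, large $k$, and large $d$. In each regime we lower bound $\E[|\neighbors{\greedydet{k}}|]$ and upper bound $\E[|\neighbors{\optdet{k}}|]$. The trivial upper bound $|\neighbors{\optdet{k}}| \le \min(kd, n)$ always holds. Since $|\neighbors{\greedydet{k}}| \le |\neighbors{\optdet{k}}| \le \min(kd,n)$ deterministically, we compare expectations directly. A ratio bound on expectations turns into an expected-ratio bound once we handle low-probability failure events, which cost at most $\epsilon/2$ because the ratio lies in $[0,1]$.

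\textbf{Small $k$, i.e.\ $k \le \epsilon n/(2d)$.} We ignore greedy's choices and only use that greedy covers at least as much as the first $k$ left nodes taken in any fixed order. Here is why. Greedy's $i$-th pick has marginal gain at least that of every unpicked node. After $i-1$ picks at most $(i-1)d \le \epsilon n/2$ right nodes are covered. A fresh node whose $d$ neighbors are uniform has expected marginal gain at least $d(1-\epsilon/2)$. Greedy's gain dominates that of any single unpicked node, so it is at least this in expectation. So $\E|\neighbors{\greedydet{k}}| \ge kd(1-\epsilon/2)$ up to the repeated-sample correction within one node (about $d^2/n$, which is small because $d \le \epsilon n/(2k)$). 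Comparing with $\optdet{k}\le kd$ gives ratio $1-\epsilon$.

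\textbf{Large $k$, i.e.\ $k \ge 2n/(\epsilon d)$.} Now we compare with $n$. By the same marginal-gain argument, while the uncovered set $U_i$ has size $u$, greedy's next gain is at least the average gain of the remaining $n-i$ nodes. The average gain over unpicked nodes is at least about $d u/n$, because unpicked nodes are a uniform random sample whose neighborhoods are independent of the past picks only in a weak sense. This is the subtle point, and we handle it by lower bounding with the gain of the max-gain node over a fresh subset. This gives the recursion $u_{i+1} \le u_i(1-d/n)$, so $u_k \le n e^{-kd/n} \le n e^{-2/\epsilon} \le \epsilon n$. Hence greedy covers $(1-\epsilon)n \ge (1-\epsilon)|\neighbors{\optdet{k}}|$.

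\textbf{Large $d$.} This is the main obstacle. In the middle range $\epsilon n/(2d) < k < 2n/(\epsilon d)$, both greedy and the optimum are close to the random-order value $n(1-e^{-kd/n})$. The lower bound follows from the recursion above: any fixed $k$ nodes cover about $n(1-(1-1/n)^{kd})$ with concentration, and greedy dominates the random-order process step by step.

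For the upper bound on the optimum, we use a union bound over all $\binom{n}{k}$ subsets. A fixed set's coverage is a balls-in-bins count with $kd$ balls. It concentrates with deviation $t$ at probability $\exp(-t^2/(2kd))$ by McDiarmid, since each ball changes the count by at most $1$. We need $t^2/(kd) \gtrsim k\log(en/k)$, i.e.\ $t \approx k\sqrt{d\log(en/k)}$. Relative to the mean, which is about $kd$ up to constants in $\epsilon$ in this regime, the deviation is $\sqrt{\log(en/k)/d}$. In the middle range $\log(en/k) = O(\log(d/\epsilon))$, so the relative error is $O(\sqrt{\log(d/\epsilon)/d})$. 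This is at most $\epsilon$ once $d = \Omega(\epsilon^{-8})$, which is generous, and $n = \Omega(\epsilon^{-8})$ absorbs the $(1-1/n)^{kd}$ versus $e^{-kd/n}$ discrepancies and the failure probabilities. Combining, $\E[\text{ratio}] \ge 1-\epsilon$. The delicate part is making the greedy lower bound rigorous despite the dependence created by adaptive selection. For that we would invoke the coupling "greedy covers at least as much as sequentially picking fresh uniform nodes", and justify it by stochastic dominance of the maximum marginal gain.
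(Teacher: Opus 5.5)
\textbf{The gap is in the lower bound on greedy.} Your upper bounds on $|\neighbors{\opt{k}}|$ are essentially the paper's: by $kd$, by $n$, and by a union bound over all $\binom{n}{k}$ sets plus concentration. But all three regimes also need a lower bound on greedy. The step you rely on is that greedy's gain at each iteration dominates that of a fresh uniform node, giving $\E[u_{i+1}]\le(1-d/n)\E[u_i]$. This is not merely unproven; as a per-iteration statement it is false.

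No unpicked node is fresh. Conditional on the history, each unpicked node had residual gain at most the winner's at every earlier step, and strictly less if its index is lower. Also, the average gain over unpicked nodes equals $\frac{1}{n-i+1}\sum_{v\in U_i}\deg(v)$. This can be far below $du_i/n$, because $U_i$ is biased toward low-degree right nodes.

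A concrete failure: each step gains at least one node until $\neighbors{L}$ is covered, so greedy covers all of $\neighbors{L}$ within $|\neighbors{L}|\approx(1-e^{-d})n$ steps. Take $k=n$, which lies in your large-$k$ window whenever $d\ge 2/\epsilon$. Then the last $\approx e^{-d}n$ gains are $0$, while $du_i/n\approx de^{-d}>0$. The paper illustrates the same phenomenon numerically (greedy's expected marginal gain eventually falls below the fixed set's) and notes that the comparison cannot be made iteration by iteration. The cumulative bound $\E[u_k]\le n(1-d/n)^k$ is nonetheless true, but it needs a different proof. The same flaw affects your small-$k$ step, which treats a conditioned unpicked node as fresh, and your large-$d$ lower bound (``greedy dominates the random-order process step by step'').

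\textbf{What the paper does instead.} The missing idea is the paper's fixed-set lemma $\E|\neighbors{\greedy{k}}|\ge\E|\neighbors{\smax{k}}|$. It is proved through the hybrid $\mathcal{Y}^t=\greedy{t}\cup M_{k-t}(\greedy{t})$: run greedy for $t$ steps, then add the $k-t$ nodes with largest residual gain. This interpolates between $\mathcal{Y}^0=\smax{k}$ and $\mathcal{Y}^k=\greedy{k}$.

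One shows $\E|\neighbors{\mathcal{Y}^t}|\ge\E|\neighbors{\mathcal{Y}^{t-1}}|$ by conditioning on greedy's first $t$ picks, on $\neighbors{\greedy{t}}$, and on all residual degrees. Greedy's trajectory is unchanged by rewiring edges that land outside $\neighbors{\greedy{t}}$. Hence, under this conditioning, the residual neighborhoods of the unpicked nodes are uniform and independent, so the expected size of their union depends only on their sizes. The top $k-t$ residual sizes with respect to $\greedy{t}$ then dominate those of the $k-t$ nodes that $\mathcal{Y}^{t-1}$ adds beyond $\greedy{t}$. Both sides are evaluated under the same conditioning, so the selection bias that breaks your argument cancels.

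Combined with $\E|\neighbors{\smax{k}}|\ge(1-e^{-kd/n})n$, your three upper bounds then finish the proof essentially as you describe.

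\textbf{Two minor points.} The paper defines approximation as a ratio of expectations, so no conversion to an expected ratio is needed. And McDiarmid over $kd$ ``balls'' requires independent coordinates, while each node's $d$ neighbors are drawn without replacement. The paper instead uses negative correlation of the coverage indicators and the matching Chernoff bound.
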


This result implies  that, for any constant $\epsilon$,  the degree $d$ of nodes in $L$ being super-constant or the cardinality constraint $k$ lying outside of a constant factor neighborhood of $n/d$ are sufficient conditions for a $1-\epsilon$ approximation. We note that when $k = n/d$, then the sum of the degrees of  $k$ left nodes  is equal to the number of right nodes. The only regime for which \cref{thm:Theorem 2} does not give near-optimality is where $d$  is constant and $k = \Theta(n/d)$. The third main result shows that for $d = 2$ and some  $k$ such that $k =  \Theta(n/d)$, the approximation ratio of greedy is not near-optimal as $n$ grows large: it is upper bounded by $0.94$.

\begin{restatable}{theorem}{TheoremThree}
\label{thm:Theorem 3}
    For $d=2$, there exists a sequence $k(n)$ of cardinality constraints such that the expected approximation ratio of the greedy algorithm is  at most $0.94$ in the limit as $n \rightarrow \infty$ under the left-regular random model.
\end{restatable}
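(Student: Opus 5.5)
We will prove \cref{thm:Theorem 3} for $d=2$ by viewing the instance as a random multigraph and computing, in the limit $n\to\infty$, both greedy's value and the optimum at $k = cn$ for a well-chosen constant $c$. Each left node $u\in L$ picks two endpoints in $R$, so the instance is a random multigraph $\Gamma$ on vertex set $R$ with $n$ edges. This is essentially an Erd\H{o}s--R\'enyi graph $G(n,p)$ with $p\approx 2/n$, i.e.\ average degree $2$, up to $O(1)$ loops and multi-edges, which we can ignore. Choosing $k$ left nodes means choosing $k$ edges, and the coverage is the number of vertices touched by these edges.

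\textbf{Optimum.} The optimum with $k$ edges equals $\min(2k, \text{number of covered vertices})$, controlled by matchings: $\optdet{k} = k + \min\{k, \nu(\Gamma)\}$ for $k$ up to the edge-cover threshold, where $\nu$ is the maximum matching size. Indeed, take a maximum matching (two new vertices per edge), then add one edge per remaining coverable non-isolated vertex. This is the standard identity relating edge covers and matchings (Gallai). I will choose $k=\nu(\Gamma)$, or more precisely $k(n) = \lfloor \gamma n\rfloor$ with $\gamma$ the Karp--Sipser limiting matching fraction of $G(n,2/n)$. Then the optimum is $2k(1-o(1))$ in expectation, since $\nu/n$ concentrates. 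Karp--Sipser gives $\nu/n \to \gamma$ explicitly for average degree $c=2$ (below $e$, so the simple formula holds): $\gamma = 1 - \frac{t+e^{-ct}+ct e^{-ct}}{2}$ with $t$ the smallest root of $t = e^{-c e^{-ct}}$, normalized per vertex.

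\textbf{Greedy.} Greedy on this graph picks, at each step, an edge with both endpoints uncovered (gain $2$) while one exists, then gain-$1$ edges. Consistent tie-breaking by a fixed order that is independent of the random graph makes the gain-2 phase a uniformly random greedy matching process. By the differential equation method (Wormald), or by the classical analysis of random greedy matching on $G(n,c/n)$, the greedy matching has size $\frac{c}{2(c+1)}n\cdot(1+o(1)) = \frac{n}{3}(1+o(1))$ for $c=2$ (Dyer--Frieze--Pittel). After that every remaining step gains at most $1$. Hence with $k=\gamma n$, greedy covers at most $2\cdot\frac{n}{3} + (\gamma n - \frac n3)$ in the limit, and the ratio tends to $\frac{\gamma+1/3}{2\gamma}$. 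Numerically, for $c=2$, $\gamma\approx 0.4$ per vertex, giving about $(0.4+0.333)/0.8\approx 0.92 < 0.94$. Concentration of both quantities plus bounded ratio in $[0,1]$ converts the ratio of limits into the limit of the expected ratio.

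\textbf{Main obstacle.} The delicate step is justifying that greedy's gain-2 phase behaves like random greedy matching. The tie-breaking order is fixed on $L$, i.e.\ on edges, and the edges are i.i.d.\ uniform pairs, so scanning edges in that order and taking any edge disjoint from earlier chosen ones is exactly the random greedy matching. I would verify this carefully and cite or rederive the $c/(2(c+1))$ constant via a one-variable ODE. I would also confirm numerically that $\gamma$ from Karp--Sipser at $c=2$ gives a ratio below $0.94$. If it is too tight, I would adjust $k$ to optimize the bound.
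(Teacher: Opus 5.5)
Your proposal is correct and follows essentially the same route as the paper: you lower bound the optimum through the maximum matching of the random graph on $R$ via Karp--Sipser, and you upper bound greedy by $t_2 + k$, where $t_2 \approx n/3$ comes from the differential-equation analysis of the gain-$2$ phase (which is exactly the edge-order greedy matching). This gives the same limit $\frac12 + \frac{1}{6\gamma} \approx 0.925$ with $\gamma = 1 - \frac{\gamma^*+\gamma_*+\gamma^*\gamma_*}{4} \approx 0.392$; the only differences are cosmetic: your exact identity $k+\min\{k,\nu\}$ lets you take $k=\lfloor \gamma n\rfloor$ without the paper's $n^{-1/9}$ slack, and your concentration argument handles the paper's ratio-of-expectations definition just as well as the expected ratio.
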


These results indicate that the left-regular random model is sufficient to always get a constant improvement in the approximation ratio of greedy, but that it requires additional assumptions for near-optimality; \cref{thm:Theorem 2} gives two such assumptions.

\vspace{-.05cm}
\paragraph{Extensions.} We  consider two extensions of the left-regular random model.

\begin{itemize}[leftmargin=*]
    \item \emph{Unbalanced bipartite graphs.}
    We consider an extension $\textsc{ULRR}(n, m, d)$ of the base model $\textsc{LRR}(n, d)$ where  the number of right nodes is denoted by $m$ and is potentially different from the number of left nodes $n$. We show that for all $n, m$ such that $n = cm$  for some constant $c > 0$ and for all $d \in [m]$ and $k \in [n]$, there exists a constant $\epsilon_c > 0$ such that greedy achieves a $1 -1/e +\epsilon_c$ approximation in expectation. 
    \item \emph{Non-regular unbalanced bipartite graphs.} 
    We consider an extension $\textsc{GenR}(n, m, \{d_i\}_{i \in [n]})$ of the base model  where the left nodes have degree $d_1 \geq \cdots \geq d_n$ and the bipartite graph is unbalanced. We show that if either (1) the sum of the $k$ largest degrees is bounded away from $m$: $\sum_{i=1}^{k}d_i \not \in [(\epsilon/2) m, (\epsilon/2)^{-1}m]$, (2) the average degree of the $k$ largest degrees is sufficiently large: $\frac{1}{k} \sum_{i=1}^{k}d_i = \Omega(\max(1, (n/m)^2)\epsilon^{-8})$, or (3) $d_1, \ldots, d_n$ are drawn from a power law distribution and $k$ is sufficiently small, then the expected approximation ratio of greedy is at least $1 - \epsilon$. 

\end{itemize}

\vspace{-.2cm}

\paragraph{Our techniques.} The positive results  rely on two main  analytical tools, which  may also be of interest for analyzing greedy in  other random bipartite graph models.

First, we develop a framework using the differential equation method to analyze an \textsc{AcceptReject} algorithm that is equivalent to  greedy. \textsc{AcceptReject} proceeds in phases; in each phase, it iterates through every node to either add it to the solution or reject it. We show that the evolution of the value of the current solution of \textsc{AcceptReject} during the first phase can be approximated by a function that is the unique solution to a differential equation. This differential equation is obtained by approximating the expected per-iteration increase in the solution size as a function of the current solution size.

Second, we show that greedy's expected value is at least the expected value of a fixed set of $k$ left nodes, chosen independently of the random graph's realization.  
Although this claim may seem intuitive, it requires a careful argument because there are iterations in which the marginal gain of the fixed set is larger than that of greedy, where these marginal gains are measured with respect to their respective current solutions.  We circumvent that obstacle by analyzing a hybrid algorithm that runs greedy for $t$ iterations and then completes the solution by selecting the $k-t$ nodes with the largest residual degrees at iteration $t$. 

The negative result is established by relating maximum coverage in $2$-left regular graphs and maximum matching in Erd\H{o}s-R\'{e}nyi random graphs. This connection yields a lower bound on the coverage of the optimal solution via known results on maximum matching. At the same time, we tightly control the performance of greedy through our differential equation analysis.

\vspace{-.12cm}
\paragraph{Open questions.}  An alternative  random  model in which it would be natural to study the approximation ratio of greedy  is the Erd\H{o}s–R\'{e}nyi
random bipartite graph model $G_{n,n,p}$ with $n$ left nodes, $n$ right nodes, and  an edge between each left and right node independently with probability $p$. To the best of our knowledge, there is no known result for the maximum coverage problem under Erd\H{o}s–Rényi
random bipartite graphs and analyzing greedy in that model seems non-trivial. 

Obtaining stronger unconditional bounds on the approximation ratio of greedy in the left-regular random model is a natural direction for future work. In particular, using the differential equation method to analyze additional phases of \textsc{AcceptReject} is a promising direction for establishing stronger approximation guarantees for greedy in the regime where it is not near-optimal.

\vspace{-.12cm}
\paragraph{Paper organization.} We discuss additional related work in \cref{sec:relatedwork} and introduce preliminary definitions and notation in \cref{sec:prelims}. In \cref{sec:guarantees}, we show the  approximation guarantees achieved by greedy in the left-regular random model. In \cref{sec:limitation}, we show the negative result for greedy. In \cref{sec: theorem 1 generalization n = c m} and \cref{sec:nonregular}, we study the unbalanced and non-regular extensions of the model.

\vspace{-.12cm}
\subsection{Related Work}
\label{sec:relatedwork}
\vspace{-.05cm}

\cite{essential-web-pages} use the dual of the maximum coverage LP to provide bounds on the approximation ratio of greedy that are in practice much stronger than its worst-case $1-1/e$ approximation. In particular, for $k \geq 50$, they obtain bounds on real data that are always at least $0.95$. They also show that in the random model, where left nodes and right nodes both have degrees that follow a power law distribution and edges are added randomly according to these degree distributions, greedy achieves a $1-\epsilon$ approximation if it has a marginal contribution of at least $O(\epsilon^{-2}\log n)$ at every iteration. In \cite{typical-approximations-greedy}, three algorithms for maximum coverage, including greedy, are analyzed heuristically using the differential equation method on sparse biregular random bipartite graphs. In particular, they use a differential equation to predict a threshold for the cardinality constraint $k$ below which greedy achieves exact optimality with high probability and above which it is no longer optimal with high probability.

For the closely related set cover problem where the goal is to cover all the nodes in $R$ with a subset of $L$ of minimum size, \cite{hitting-set} and \cite{blot1995average} also study a random model with fixed degrees and also use the differential equation method. However, their primary focus is on  methods that, given parameter values for the random model, yield numerical estimates of the approximation ratio of greedy on random instances for these specific parameters, but not closed-form approximation ratios. In addition to these numerical estimates,  \cite{hitting-set} shows that for a fixed ratio $|R|/|L|$, greedy is asymptotically optimal as the degree of right nodes tends to infinity.   When each edge is included with constant probability, \cite{TELELIS2005171} shows that there is an algorithm, distinct from greedy, that is asymptotically optimal for set cover.

\section{Preliminaries}
\label{sec:prelims}

In the maximum coverage problem, there is a bipartite graph  $B = (L, R, E)$ with  $n$ left nodes $L = \{u_1, \ldots, u_n\}$, $m$ right nodes $R = \{v_1, \ldots, v_m\}$, and edges $E \subseteq L \times R$. The goal is to find the $k$ nodes in $L$ that cover the largest number of nodes in $R$, i.e., $\max_{S \subseteq L : |S| \leq k} |N_B(S)|$ where $N_B(S)$ are the neighbors of $S$ in $B$. We abuse notation and write $N(S)$ when $B$ is clear from context. The greedy algorithm (\greedyalgorithm) iteratively adds to the current solution $S$ the left node that covers the most right nodes not yet covered.  We refer to $|N(u) \setminus N(S)|$  and $|N(S)|$ as the marginal contribution of $u$ to $S$ and the value of $S$, respectively.

\begin{algorithm}[H]
\begin{algorithmic}
    \caption{$\greedyalgorithm(B,  k)$} 
    \State \textbf{Input:} bipartite graph $B = (L, R, E)$, cardinality constraint $k$
    \State $S \gets \emptyset$
        \For{$i = 1, 2, \dots, k$}
            \State Add $\arg\max_{u \in L \setminus S} |N(u) \setminus N(S)|$ to $S$ \vspace{.1cm}
        \EndFor
    \Return $S$
\end{algorithmic}
\end{algorithm}

\paragraph{The random graph model.} We consider instances of the maximum coverage problem with random bipartite graphs drawn from the following random graph model that we call the left-regular random model. Throughout the paper, uppercase letters represent fixed sets/graphs, while calligraphic letters denote random sets/graphs.

\begin{definition}
The  $\textsc{LRR}(n, d)$ model generates a bipartite graph
  $\cB = (L, R, \cE)$ by setting $|L| = |R| = n$ and, independently  for each $u \in L$, selecting $d$ neighbors in $R$ uniformly at random without replacement.
\end{definition}

We note that  a left-regular random graph satisfies $\Pr\left[ v_j \in \neighbors{u_i} \right]      = \frac{d}{n}$ for all $v_j \in R$ and $u_i \in L$. When analyzing $\greedyalgorithm$ under the left-regular random model, we assume it breaks ties according to an arbitrary, but consistent, ordering of the nodes in $L$, and that the left nodes $u_i$ are indexed in the order in which greedy breaks ties.

We let $\mathcal{G}_k(\mathcal{B})$ and $\mathcal{O}_k(\mathcal{B})$  denote the greedy solution and the optimal solution, respectively, for the maximum coverage problem with cardinality constraint $k$ over a bipartite graph  $\mathcal{B} \sim \textsc{LRR}(n, d)$. Throughout the paper, approximation guarantees in expectation are defined as ratios of expected values. For  fixed $n, d,$ and $k$, the greedy algorithm achieves, in expectation, an $\alpha$-approximation if
$$\E\limits_{\cB \sim \textsc{LRR}(n, d)} [|N(\mathcal{G}_k(\cB))|] \geq \alpha \cdot \E\limits_{\cB \sim \textsc{LRR}(n, d)}[|N(\mathcal{O}_k(\cB))|]. $$

We abuse notation and write $\mathcal{G}_k$ and $\mathcal{O}_k$ when $\cB$ is clear from context. It will be useful in our analysis to compare $\E[|N(\mathcal{G}_k)|]$  to the expected value of a fixed set of size $k$. For that purpose, we denote the first $k$ left nodes as    $\smax{k} = \{u_1, u_2, \dots, u_k\}.$ We also use the notation $[n] = \{1, \ldots, n\}$.

\section{Approximation Guarantees}
\label{sec:guarantees}

In this section, we show that greedy achieves an approximation ratio that is, for all parameters $n, d$, and $k$, a constant improvement over its worst-case $1-1/e$ guarantee and, in a wide range of regimes,  asymptotically optimal. We first discuss two central technical tools that we use for these two results in \cref{sec: technical tools n=m}, then partition the space of all parameters in three regions and analyze greedy in each of them in \cref{sec: different regions n=m}, and finally combine the analyses of these regions to obtain the desired results  in \cref{sec: putting everything together n=m}. Missing proofs are deferred to \cref{sec: omitted proofs of section 3}.


\subsection{Technical Tools}
\label{sec: technical tools n=m}

\vspace{-.05cm}
\paragraph{The near-linear, saturated, and critical regions.} We distinguish three parameter regimes that we analyze separately. 
First, for small values of $k$, we have the near-linear region, where $\greedyalgorithm$ selects nodes that have neighborhoods with almost no overlap. Therefore, their value increases almost linearly with $k$. 
Second, for large values of $k$, we have the saturated region, where  $\greedyalgorithm$ covers almost all the right nodes.
Between these two regions lies the critical region, which is the most challenging to analyze.

\vspace{-.12cm}
\paragraph{The differential equation method.} 
The first technical tool is a novel application of the differential equation method. We use it to  estimate  the number of iterations 
\vspace{-.04cm}
$$t_d = \max\{t \geq 1: |\neighbors{\greedy{t}(\mathcal{B})} \setminus \neighbors{\greedy{t-1}(\mathcal{B}) }| = d \}$$
\vspace{-.05cm}
in which $\greedyalgorithm$, over a left-regular random graph $\mathcal{B} \sim \text{\textsc{LRR}}(n, d)$, selects a node  with marginal contribution to the current solution $\mathcal{G}_{t-1}$ that is equal to $d$, i.e.,  all $d$ neighbors of the selected node are disjoint from the neighbors of $\greedy{t-1}$. The main lemma for this estimate $t^*_{d}$ of $t_d$ is a bound on its error $|t_d - t^*_d|$ that, for any $d \leq \sqrt{n/2}$, has an $\tilde{O}(\sqrt{n})$ dependence on $n$  with high probability.

\begin{restatable}{lemma}{FirstPhaseAcceptsN}
\label{lem:first phase accepts for n=m}
    Let $n \in \mathbbm{N}$, $d \leq \sqrt{n/2}, k \in [n]$ and  $\cB \sim \text{\textsc{LRR}}(n, d)$. 
    With probability at least $1 - \frac{1}{n}$, estimate $t^*_d = \left( 1 - \left( 1 + d(d-1) \right)^{-\frac{1}{d-1}} \right) \cdot \frac{n}{d}$ is such that
    $|t_d - t^*_d | \leq 3e^{d^2} \cdot \sqrt{8n\log{2n}}$.
\end{restatable}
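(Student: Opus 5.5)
The plan is to replace \greedyalgorithm by a one-pass accept/reject process and then track that process with a differential equation. Since every node has degree $d$, no marginal contribution exceeds $d$. With ties broken by index, greedy therefore keeps selecting, in index order, those nodes all of whose $d$ neighbors are still uncovered, for as long as such a node exists. Hence $t_d$ equals the number of acceptances in the following first phase of \acceptrejectalgorithm: scan $u_1,\dots,u_n$ once and accept $u_i$ iff $N(u_i)\cap N(S)=\emptyset$.

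The key point is that $N(u_i)$ is a fresh uniform $d$-subset of $R$, independent of the past. So if $A_i$ is the number of acceptances after scanning $u_1,\dots,u_i$, then at most $dA_i$ right nodes are covered (exactly $dA_i$, since accepted neighborhoods are disjoint), and
\[
\Pr\bigl[A_{i+1}=A_i+1 \mid \cF_i\bigr]=p(A_i):=\binom{n-dA_i}{d}\Big/\binom{n}{d}.
\]
This makes $(A_i)$ a Markov chain with $t_d=A_n$.

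\textbf{Step 1 (continuous approximation of the drift).} Show that $|p(a)-(1-da/n)^d|\le C d^2/n$ uniformly in $a$. This uses $d\le\sqrt{n/2}$, via
\[
\prod_{j<d}\frac{n-da-j}{n-j}=\Bigl(1-\tfrac{da}{n}\Bigr)^d\bigl(1+O(d^2/n)\bigr).
\]

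\textbf{Step 2 (the ODE).} With $s=i/n$ and $a(s)=A_i/n$, the associated equation is $a'(s)=(1-da)^d$ with $a(0)=0$. Writing $y=da$ gives $y'=d(1-y)^d$. Separating variables,
\[
(1-y)^{-(d-1)}-1=d(d-1)s .
\]
At $s=1$ this yields $na(1)=\bigl(1-(1+d(d-1))^{-1/(d-1)}\bigr)\frac nd=t^*_d$, which is exactly the stated estimate. (For $d=1$ the limiting form $1-e^{-1}$ is consistent, but that case is trivial anyway: every node has marginal $1$ until coverage saturates.)

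\textbf{Step 3 (concentration and error propagation).} Define the martingale $M_i=A_i-\sum_{j<i}p(A_j)$. Its increments are bounded by $1$, so Azuma--Hoeffding together with a union bound over $i\le n$ gives
\[
\max_i |M_i|\le \sqrt{8n\log 2n}
\]
with probability at least $1-1/n$. Next compare $A_i$ with the Euler-type recursion $\alpha_{i+1}=\alpha_i+(1-d\alpha_i/n)^d$, and compare $\alpha_i$ with $n\,a(i/n)$.

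The drift function $g(a)=(1-da/n)^d$ is Lipschitz in $a$ with constant at most $d^2/n$. Hence the error $e_i=|A_i-n\,a(i/n)|$ satisfies
\[
e_{i+1}\le e_i\bigl(1+d^2/n\bigr)+(\text{martingale increment})+O(d^2/n)+O(\text{discretization}).
\]
A discrete Gronwall argument over $n$ steps multiplies the accumulated error by at most $(1+d^2/n)^n\le e^{d^2}$. The additive terms $\sqrt{8n\log 2n}$, $O(d^2)$ (at most $O(n)$ in total but really $O(d^2)$ after summing $n$ copies of $O(d^2/n)$), and the discretization error are all at most $2\sqrt{8n\log 2n}$. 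This gives the claimed $3e^{d^2}\sqrt{8n\log 2n}$.

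The main obstacle is Step 3's bookkeeping. Gronwall must be applied to the martingale-perturbed recursion so that the martingale error, which enters at every step, is controlled through its running maximum rather than summed naively. I would first handle this by summation by parts, writing the error as a weighted sum of martingale increments with weights bounded by $e^{d^2}$. I would also check that the Step 1 approximation error and the Euler discretization error each fit within the $\sqrt{8n\log 2n}$ budget, using $d^2\le n/2$.
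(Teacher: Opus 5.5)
Your route is the paper's route with the black box opened. The paper makes the same reduction to the first phase of the accept/reject scan. It uses the same drift $\binom{n-dY}{d}/\binom{n}{d}$, the same trend bound $2d^2/n$, the same Lipschitz constant $d^2$, and the same ODE. It then simply plugs $\delta=2d^2/n$, $L=d^2$, $\beta=1$, $T=1$, $\lambda=\sqrt{8\log(2n)/n}$ into Warnke's version of Wormald's theorem rather than proving the concentration step. Proving that step yourself via Azuma and a discrete Gronwall is legitimate. It also gives fully explicit constants, with no domain $\mathcal{D}$, constants $R,T$, or boundary time $\sigma$ to specify.

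The step that fails as written is your accounting in Step 3. Set $\Lambda=\sqrt{8n\log 2n}$. If you sum all additive errors first and then multiply by $(1+d^2/n)^n\le e^{d^2}$, you get $e^{d^2}\bigl(\Lambda+3d^2\bigr)$. Your claim that $3d^2\le 2\Lambda$ does not follow from $d^2\le n/2$: for $d^2\approx n/2$ the left side is linear in $n$. Either of two easy repairs works.

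(a) Amplify each per-step error only from the step where it enters. Let $D_i=A_i-na(i/n)$ and $F(x)=(1-dx)^d$. Then exactly $D_i=M_i+\sum_{j<i}\bigl(p(A_j)-\int_j^{j+1}F(a(u/n))\,du\bigr)$. Hence $|D_i|\le\Lambda+3d^2i/n+\frac{d^2}{n}\sum_{j<i}|D_j|$. Comparing with $z_{i+1}=(1+d^2/n)z_i+3d^2/n$, $z_0=\Lambda$, gives $|D_n|\le e^{d^2}\Lambda+3(e^{d^2}-1)\le 3e^{d^2}\Lambda$. This is exactly why Warnke's condition reads $\lambda\ge\delta\min\{T,L^{-1}\}+R/n$ rather than involving $\delta T$.

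(b) The bound is vacuous once $e^{d^2}\ge n$, since $|t_d-t_d^*|\le n/d$. For $d^2<\log n$, you do have $3d^2\le 2\Lambda$.

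The identity in (a) also shows that summation by parts is unnecessary, because the martingale enters only through $M_i$ itself. If you do use it, note that $F$ is decreasing on $[0,1/d]$ and $d^2/n\le 1/2$. The recursion is then $D_{j+1}=\theta_jD_j+\xi_{j+1}+r_j$ with $\theta_j\in[1/2,1]$ and $|r_j|\le 3d^2/n$, so the weights lie in $(0,1]$ and are monotone. You then get $|D_n|\le 2\Lambda+3d^2$, which already implies the lemma with no exponential amplification.

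Finally, keep Step 1 additive. The multiplicative form $(1-da/n)^d(1+O(d^2/n))$ is false when $n-da$ is comparable to $d$; for example, $p(a)=0$ when $n-da<d$. The additive bound $2d^2/n$ that you actually use is fine.
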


This estimate plays a key role in the proof showing that greedy achieves a constant improvement over $1-1/e$ in the critical region when $d$ is small. To prove \cref{lem:first phase accepts for n=m}, we employ the differential equation method, formalized by \cite{wormald-differential-equations}. 
Under appropriate regularity conditions, including bounded increments and Lipschitz continuity, this method yields a high-probability bound on the discrepancy between a discrete stochastic process and the deterministic solution of an associated differential equation. It has found great use in the field of random graphs~\cite{hitting-set,pittel1996sudden,wormald-differential-equations,wormald1999differential,mastin2013greedy,frieze1998MaximumMatchingRevisited}.

We do not apply the method directly to the standard description of $\greedyalgorithm$ because its evolution is difficult to express through a low-dimensional Markovian state, which is crucial for tractability. Instead, we analyze
 $\acceptrejectalgorithm$, an equivalent reformulation of $\greedyalgorithm$.
$\acceptrejectalgorithm$, which is formally defined in \cref{alg:threshold greedy}, proceeds in phases from $p=\max_{u \in L} |N(u)|$ to $1$. In phase $p$, it iterates over the left nodes $u_i$ of the bipartite graph in the order in which greedy breaks ties  and either accepts (meaning a node gets added to the solution) or rejects each of them based on whether their marginal contribution to the current solution $A$ is at least $p$. This iterative exploration of the graph in each phase implies that its randomness is revealed sequentially, allowing low-dimensional Markovian states to describe the evolution of the current solution size $|A|$.

\vspace{-.11cm}

\begin{algorithm}[H]
\begin{algorithmic}
    \caption{$\acceptrejectalgorithm(B,  k)$} \label{alg:threshold greedy}
    \State \textbf{Input:} bipartite graph $B = (L, R, E)$, cardinality constraint $k$
    \State $\acceptrejectdet{}{} \gets \emptyset$
    \For{$p = \max_{u \in L} |N(u)|$ to $0$}
        \For{$i = 1, 2, \dots, n$}
            \If{$|\neighbors{u_i} \setminus \neighbors{\acceptrejectdet{}{}}| \geq p$ and $u_i \notin A$ and $|\acceptrejectdet{}{}| < k$} 
                \State $\acceptrejectdet{}{} \gets \acceptrejectdet{}{} \cup \{u_i\}$ 
            \EndIf
        \EndFor
    \EndFor
    \Return $\acceptrejectdet{}{}$
\end{algorithmic}
\end{algorithm}

\vspace{-.17cm}

We denote by $\acceptreject{p}{i}$ the set $A$ of accepted nodes at iteration $i$ of phase $p$ of $\acceptrejectalgorithm$. For the proof of 
 \cref{lem:first phase accepts for n=m}, we focus on analyzing the first phase $p=\max_{u \in L} |N(u)| = d$ of $\acceptrejectalgorithm$ on a graph $\cB \sim \text{\textsc{LRR}}(n, d)$.

\begin{proof}[Proof sketch of \cref{lem:first phase accepts for n=m} (full proof in \cref{sec: technical tools n=m appendix})]

The solution returned by $\acceptrejectalgorithm$ and $\greedyalgorithm$ is not only the same, but the ordering in which nodes are added to this solution is also identical since the nodes $u_i$ are indexed in the order in which $\greedyalgorithm$ breaks ties. Thus, the number of iterations $t_d$ where $\greedyalgorithm$ adds nodes with marginal contribution $d$ is equal to the number of nodes accepted during the first phase $p=d$ of $\acceptrejectalgorithm$, i.e., $t_d = |\acceptreject{d}{n}|$. We define the function $F(y) = \left(1 - d y\right)^d$ for all $y \in \mathbbm{R}$. 

The central part of the proof consists of showing that $|\acceptreject{d}{i}| \approx n \cdot y(i/n)$ where $y(\cdot)$ is the unique solution to the following differential equation:
    \begin{align*}
        y'(t) = F(y(t)) = \left( 1 - d \cdot y(t) \right)^d,\ y(0) = 0. 
    \end{align*}

    This approximation is obtained by using the differential equation method, which requires the following four conditions.
    \begin{itemize}
    \item     The expected one-step change $\E\left[ |\acceptreject{d}{i+1}| - |\acceptreject{d}{i}| \middle| |\acceptreject{d}{i}| \right]$ is tightly approximated by $F\left( |\acceptreject{d}{i}| / n \right)$. For this condition, we show the following:
    \vspace{-.2cm}
    \begin{align*}
        \hspace{-0.9cm}
        \E\left[ |\acceptreject{d}{i+1}| \! - \! |\acceptreject{d}{i}| \middle| |\acceptreject{d}{i}| \right]
        = \Pr\left[|N(u_{i+1}) \! \setminus \! N(\acceptreject{d}{i})| = d \middle|  |\acceptreject{d}{i}| \right]  
        = \frac{\binom{n - |\acceptreject{d}{i}| d}{d}}{\binom{n}{d}}
        \approx \left( 1 \! - \! \frac{d|\acceptreject{d}{i}|}{n}  \right)^d 
        \! = F\left(\frac{|\acceptreject{d}{i}|}{ n} \right).
    \end{align*}
    \vspace{-.5cm}
    \item The function $F$ is Lipschitz.
    \item  A deterministic bound for the one-step change $|\acceptreject{d}{i+1}| - |\acceptreject{d}{i}|$. In our case, $|\acceptreject{d}{i+1}| - |\acceptreject{d}{i}| \leq 1$.  
    \item $|\acceptreject{d}{0}| \approx y(0)$. In our case,  $|\acceptreject{d}{0}| = y(0) = 0$.
    \end{itemize}
    We show that $\acceptreject{d}{i}$ and $F$ satisfy these four conditions with parameters such that the resulting approximation $t_d = |\acceptreject{d}{n}| \approx n \cdot y(1)$ is sufficiently tight to imply the desired bound.
\end{proof}
A promising direction for future work is to leverage estimates of the number of accepts across all phases, rather than only the first, as this could yield a tighter lower bound on the coverage of $\greedyalgorithm$. 
The main challenge is that the solution of the differential equations in each phase depends on the solutions from the preceding phases, making closed-form expressions analogous to those obtained for $t_d$ unlikely. This interdependence substantially complicates the derivation of the approximation ratio. 
Nevertheless, we believe that a sufficiently careful analysis could overcome these technical obstacles and uncover the true extent of the improvement that $\greedyalgorithm$ achieves over $1 - 1/e$.

\paragraph{The fixed-set lower bound.} The second technical tool provides a seemingly simple lower bound on the expected value of the greedy solution. The lower bound is the expected value of a fixed set of left nodes chosen independently of the realization of the random graph. For this fixed set of nodes, we consider the $k$ nodes with lowest indices $\smax{k} = \{u_1, \dots, u_k\}$. 

\begin{restatable}{lemma}{GreedyLowerBoundSmaxN}
\label{lem:greedy lower bound by smax for n=m}
    Let $n \in \mathbbm{N}$, $d\in [n], k \in [n]$, and $\cB \sim \text{\textsc{LRR}}(n, d)$, then $\E\left[ |\neighbors{\greedy{k}}| \right] \geq \E\left[ |\neighbors{\smax{k}}| \right]$.
\end{restatable}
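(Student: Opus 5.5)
We follow the hybrid-algorithm strategy announced in the introduction. For $t \in \{0,1,\dots,k\}$, let $\mathcal{H}_t$ denote the set obtained by running $\greedyalgorithm$ for $t$ iterations, producing $\greedy{t}$. It is then completed by adding the $k-t$ nodes of $L\setminus \greedy{t}$ with the largest residual degrees $|\neighbors{u}\setminus\neighbors{\greedy{t}}|$, with ties broken by index. Then $\mathcal{H}_k=\greedy{k}$. We will compare $\mathcal{H}_0$ with $\smax{k}$ and then show that $\E[|\neighbors{\mathcal{H}_t}|]$ is nondecreasing in $t$. Chaining these gives $\E[|\neighbors{\greedy{k}}|]=\E[|\neighbors{\mathcal{H}_k}|]\ge \E[|\neighbors{\mathcal{H}_0}|]$.

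\textbf{Base step.} In $\textsc{LRR}(n,d)$ every left node has degree exactly $d$, so at $t=0$ all residual degrees tie. The tie-breaking rule then selects exactly $\{u_1,\dots,u_k\}=\smax{k}$, giving $\mathcal{H}_0=\smax{k}$.

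\textbf{Monotonicity step.} Condition on the first $t$ greedy choices and the revealed neighborhoods $\greedy{t}$, and let $C=\neighbors{\greedy{t}}$. Call the remaining nodes $u$ unrevealed. Conditionally, their neighborhoods are not simply independent uniform $d$-sets, because greedy's choices impose constraints: every unselected node has residual degree at most the marginal gain of the node greedy picked. The natural exchangeable structure is the following. Condition additionally on the multiset of residual degrees $r_u=|\neighbors{u}\setminus C|$ of the unselected nodes. Then, by symmetry of $R\setminus C$ under permutations fixing $C$, the residual neighborhoods $\neighbors{u}\setminus C$ are independent uniform $r_u$-subsets of $R\setminus C$. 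The constraint that greedy chose specific nodes depends only on the residual degrees and the past, so this symmetry should survive the conditioning.

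Both $\mathcal{H}_t$ and $\mathcal{H}_{t+1}$ contain $\greedy{t}$ together with some $k-t$ further nodes chosen as functions of the residual degrees only:
\begin{itemize}
\item $\mathcal{H}_t$ takes the top $k-t$ residual degrees.
\item $\mathcal{H}_{t+1}$ takes the top node $u^*$, and then the top $k-t-1$ residual degrees after removing $\neighbors{u^*}$.
\end{itemize}
For $\mathcal{H}_t$, the expected additional coverage given the degrees $r_1\ge\dots\ge r_{k-t}$ is
\[
M\Bigl(1-\prod_i (1-r_i/M)\Bigr)\quad\text{in the hypergeometric sense, with } M=|R\setminus C|.
\]
For $\mathcal{H}_{t+1}$, the later nodes are chosen adaptively after $\neighbors{u^*}$ is revealed. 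The original top-$(k-t-1)$ nodes other than $u^*$ remain available options, and selecting the maximum residual degrees is at least as good as selecting those fixed nodes. We will use a lemma that, among sets of nodes with independent uniform residual neighborhoods, the expected coverage is monotone in each degree. This lemma holds because the coverage $M-\E[\text{uncovered}]$ is increasing in each $r_i$. The hybrid's adaptive choice of the largest residual degrees after $u^*$ is then handled recursively, or by a coupling in which the adaptive choice dominates the fixed set $\mathcal{H}_t\setminus\{u^*\}$.

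\textbf{Main obstacle.} The hardest part is this last comparison. It is not enough that a larger residual degree helps in expectation. We need the adaptively chosen nodes, selected after observing $\neighbors{u^*}$ as maximizers of the new residual degree, to cover in expectation at least as much as the fixed nodes $\mathcal{H}_t\setminus\{u^*\}$. Selection by maximum residual degree correlates with the overlap pattern.

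The plan is to prove it in two parts. First, for a fixed set $T$ of unrevealed nodes, $\E[|\neighbors{T}\setminus \neighbors{S}|]$ depends only on the residual degrees, via the product formula, and is monotone in them. Second, the top-$(k-t-1)$ selection after adding $u^*$ stochastically dominates those residual degrees. To make the second part rigorous, we would reveal neighborhoods sequentially in the spirit of $\acceptrejectalgorithm$ and apply induction on $k-t$.

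If the exact exchangeability under greedy's conditioning fails, the fallback is to condition on the full history only through the residual-degree vector. We would then verify directly that greedy's stopping events are measurable with respect to it.
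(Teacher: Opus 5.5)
Your overall architecture matches the paper's. Your hybrid $\mathcal{H}_t$ is the paper's $\mathcal{Y}^t$, the base case $\mathcal{H}_0=\smax{k}$ is the same, and so is the structural fact that, given greedy's prefix, the residual neighborhoods are independent uniform sets. That fact holds because greedy's first choices depend on unselected nodes only through $\neighbors{u}\cap\neighbors{\greedy{t}}$; this measurability, rather than permutation symmetry alone, is what gives independence. The gap is in the comparison step, which you yourself flag as the main obstacle. You condition on the \emph{earlier} prefix $\greedy{t}$ and its residual degrees. That fixes the completion of $\mathcal{H}_t$, but it makes the completion of $\mathcal{H}_{t+1}$ depend on the very residual neighborhoods whose union you are measuring. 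Your proposed remedies (sequential revealing, induction on $k-t$, an unspecified coupling) are not carried out, so the monotonicity $\E[|\neighbors{\mathcal{H}_{t+1}}|]\ge\E[|\neighbors{\mathcal{H}_t}|]$ remains unproved.

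The missing idea is to condition on the \emph{later} prefix instead. Condition on $\greedy{t+1}=A$, $\neighbors{A}=T$, and the residual degrees $D_u=|\neighbors{u}\setminus T|$ of all $u\notin A$; this is what the paper does when comparing $\mathcal{Y}^t$ with $\mathcal{Y}^{t-1}$. The completion of $\mathcal{H}_{t+1}$ is then deterministic: the nodes with the $k-t-1$ largest values $D_u$. The completion $\{w_2,\dots,w_{k-t}\}$ of $\mathcal{H}_t$ is chosen by ranking $|\neighbors{u}\setminus\neighbors{\greedy{t}}| = D_u+|\neighbors{u}\cap(T\setminus\neighbors{\greedy{t}})|$. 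This ranking uses only edges into $T$. Greedy's first $t+1$ choices are also determined by edges into $T$, so conditionally those edges are independent of the residual sets $\neighbors{u}\setminus T$, which are independent uniform subsets of $R\setminus T$. Hence each completion covers, in expectation, $M\bigl(1-\prod_i(1-r_i/M)\bigr)$ new nodes, where $M=|R\setminus T|$ and $r_i$ are its members' residual degrees. The $w_j$ are $k-t-1$ distinct nodes outside $A$, so their sorted residual degrees are dominated pointwise by the top $k-t-1$ values of $D_u$. Monotonicity of the product formula then finishes the step with no adaptivity left to handle.
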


In \cref{sec: different regions n=m}, we use this bound to show that greedy is near-optimal in the near-linear, the saturated, and the critical region when $d$ is large. 
The only region where it does not provide a constant improvement is in the critical region when $d$ is small, where we use \cref{lem:first phase accepts for n=m} instead.

It may seem natural that $\greedyalgorithm$  performs better than $\smax{k}$ over left-regular random graphs. However, a difficulty that arises in proving \cref{lem:greedy lower bound by smax for n=m}, is that the expected marginal improvement  $\E[|\neighbors{\greedy{t}} \setminus \neighbors{\greedy{t-1}}|]$ of $\greedyalgorithm$ is not always larger than the expected marginal improvement $\E[|\neighbors{\smax{t}} \setminus \neighbors{\smax{t-1}}|]$ of $\smax{k}$ at all iterations $t$, which we illustrate in \cref{fig:marginal of greedy vs smax} in \cref{sec: technical tools n=m appendix}.
Nevertheless, in the early iterations, where a small number of right nodes have been covered, it does hold that $\E[|\neighbors{\greedy{t}} \setminus \neighbors{\greedy{t-1}}|] \geq \E[|\neighbors{\smax{t}} \setminus \neighbors{\smax{t-1}}|]$. The challenge is thus to show that the expected value of $\greedyalgorithm$ remains larger  in later iterations. 

\begin{proof}[Proof sketch of~\cref{lem:greedy lower bound by smax for n=m} (full proof in \cref{sec: technical tools n=m appendix})]
    We define a hybrid algorithm $\mathcal{Y}^t$ that runs in two phases. First, it runs $t$ iterations of $\greedyalgorithm$, after which its current solution is $\greedy{t}$. In the second phase, it adds to the solution the $k-t$  nodes with  largest marginal contribution to $\greedy{t}$.
    
    The main part of the proof consists in showing that trading one iteration of $\greedyalgorithm$ in the first phase for one more node in the second phase reduces the expected value, i.e., $$\E\left[ |\neighbors{\mathcal{Y}^t}| \right] \geq \E\left[ |\neighbors{\mathcal{Y}^{t-1}}| \right].$$
    
    To show that inequality, we first note that $\greedy{t} \subseteq \mathcal{Y}^t \cap \mathcal{Y}^{t-1}$, which implies that it suffices to compare the marginal contribution of the $k-t$ other nodes in the two solutions to $\greedy{t}$. The crucial difference that we leverage is that the $k-t$ other nodes in $\mathcal{Y}^t$ are selected to be the nodes with largest marginal contribution to $\greedy{t}$, whereas for $\mathcal{Y}^{t-1}$ they are the nodes with largest marginal contribution to $\greedy{t-1}$. With $\E\left[ |\neighbors{\mathcal{Y}^t}| \right] \geq \E\left[ |\neighbors{\mathcal{Y}^{t-1}}| \right],$
    the lemma then follows  since $\E\left[ |\neighbors{\greedy{k}}| \right]
        = \E\left[ |\neighbors{\mathcal{Y}^k}| \right]$ and $ \E\left[ |\neighbors{\mathcal{Y}^{0}}| \right]
        = \E\left[ |\neighbors{\smax{k}}| \right].$
\end{proof}

\subsection{Analysis For The Different Regions}
\label{sec: different regions n=m}

\paragraph{Near-linear and saturated regions.}
In both regions, we show that $\greedyalgorithm$ achieves a ratio close to $1$ in \cref{lem:ratio lower bound for n=m and uniform d - outside critical region k large or small}. 
Intuitively, in the near-linear region (i.e., $k << n/d$), even the nodes of $\smax{k}$ share very few neighbors; therefore, the expected value of $\greedyalgorithm$ is close to $k d$. On the other hand, in the saturated region (i.e., $k >> n/d$), $\smax{k}$ covers most of the right nodes, even if the nodes of $\smax{k}$ share many neighbors. 
\begin{restatable}{lemma}{OutsideCriticalRegionND}
\label{lem:ratio lower bound for n=m and uniform d - outside critical region k large or small}
    Let $\epsilon \in (0, 1), n \in \mathbbm{N}, d \in [n], k \in \left[0, \frac{\epsilon n}{2 d}\right] \cup [\frac{2n}{\epsilon d}, n ]$ and $\cB \sim \text{\textsc{LRR}}(n, d)$.
    Then  we have $\E\left[ |\neighbors{\greedy{k}}| \right] \geq  \left( 1 - \epsilon \right) \cdot \E\left[ |\neighbors{\opt{k}}| \right]$.
\end{restatable}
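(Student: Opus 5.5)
The plan is to reduce everything to the fixed-set lower bound, \cref{lem:greedy lower bound by smax for n=m}, which gives $\E[|\neighbors{\greedy{k}}|] \geq \E[|\neighbors{\smax{k}}|]$. It then suffices to compare $\E[|\neighbors{\smax{k}}|]$ with trivial upper bounds on the optimum. Deterministically, $|\neighbors{\opt{k}}| \leq \min(kd, n)$. The quantity $\E[|\neighbors{\smax{k}}|]$ has a closed form, since the $k$ nodes of $\smax{k}$ choose their neighborhoods independently of one another. Each right node $v_j$ is missed by a given $u_i$ with probability $1 - d/n$, so
\[
\E\bigl[|\neighbors{\smax{k}}|\bigr] = n\Bigl(1 - \bigl(1 - \tfrac{d}{n}\bigr)^{k}\Bigr).
\]
Writing $x = kd/n$, we have $(1-d/n)^k \leq e^{-x}$, and hence $\E[|\neighbors{\smax{k}}|] \geq n(1 - e^{-x})$. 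The proof then splits into two cases according to which interval $k$ lies in.

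\emph{Near-linear case} ($x \leq \epsilon/2$). We compare against $kd = nx$. The inequality $1 - e^{-x} \geq x - x^2/2 = x(1 - x/2)$ gives $\E[|\neighbors{\smax{k}}|] \geq nx(1 - \epsilon/4) \geq (1-\epsilon)\, kd \geq (1-\epsilon)\E[|\neighbors{\opt{k}}|]$. The case $k = 0$ is trivial, since both sides are zero.

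\emph{Saturated case} ($x \geq 2/\epsilon$). We compare against $n$. Here $e^{-x} \leq e^{-2/\epsilon} \leq \epsilon/2$, using $e^{y} \geq y$ with $y = 2/\epsilon$, i.e. $e^{2/\epsilon} \geq 2/\epsilon$. Therefore $\E[|\neighbors{\smax{k}}|] \geq (1 - \epsilon/2)n \geq (1-\epsilon)\E[|\neighbors{\opt{k}}|]$.

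I do not expect a real obstacle, because all the difficulty is absorbed into \cref{lem:greedy lower bound by smax for n=m}. The only care needed is in the elementary bounds: checking $(1-d/n)^k \leq e^{-kd/n}$, which holds even when $d = n$, and verifying that the constant factors actually yield $1-\epsilon$ in each regime. Since the approximation is defined as a ratio of expectations, both comparisons are made at the level of expectations. This requires no concentration argument.
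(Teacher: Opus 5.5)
Your proposal is correct and follows the paper's proof essentially step for step. It uses the same ingredients: the fixed-set bound of \cref{lem:greedy lower bound by smax for n=m}, the estimate $\E[|\neighbors{\smax{k}}|] \geq (1-e^{-kd/n})\,n$ (which the paper packages as \cref{lem:smax expected coverage for n=m}), and comparison with $kd$ in the near-linear regime and with $n$ in the saturated regime. The only differences are in the elementary inequalities, and they are immaterial: you use $1-e^{-x}\geq x-x^2/2$ and $e^{-2/\epsilon}\leq \epsilon/2$, whereas the paper uses $\frac{1-e^{-\epsilon}}{\epsilon}\geq 1-\epsilon$ (after bounding $kd/n\leq\epsilon$) and $e^{-1/\epsilon}\leq \epsilon$.
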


The proof uses the fixed set lower bound from \cref{lem:greedy lower bound by smax for n=m}, whose value is lower bounded by the following lemma (proof deferred to \cref{sec: technical tools n=m appendix}).
\begin{restatable}{lemma}{SmaxExpectedCoverageN}
\label{lem:smax expected coverage for n=m}
    Let $n \in \mathbbm{N}$, $d \in [n], k \in [n]$, and $\cB \sim \text{\textsc{LRR}}(n, d)$.
    Then $\E\left[ |\neighbors{\smax{k}}| \right] 
    \geq \left( 1 - e^{-\frac{kd}{n}} \right) \cdot n$.
\end{restatable}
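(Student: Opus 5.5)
We use linearity of expectation over the right nodes. Write
\[
|\neighbors{\smax{k}}| = \sum_{j=1}^n \mathbbm{1}\left[v_j \in \neighbors{\smax{k}}\right],
\]
so that
\[
\E\left[|\neighbors{\smax{k}}|\right] = \sum_{j=1}^n \Pr\left[v_j \in \neighbors{\smax{k}}\right].
\]
Since $\smax{k} = \{u_1,\dots,u_k\}$ is fixed independently of the realization of $\cB$, the event that $v_j$ is not covered is the intersection of the $k$ events $v_j \notin \neighbors{u_i}$ for $i \in [k]$. In the $\textsc{LRR}(n,d)$ model each left node chooses its $d$ neighbors independently, so these events are independent.

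Each of these events has probability exactly $1 - d/n$, using the observation from the preliminaries that $\Pr[v_j \in \neighbors{u_i}] = d/n$. Hence
\[
\Pr\left[v_j \notin \neighbors{\smax{k}}\right] = \left(1 - \frac{d}{n}\right)^k.
\]
Summing over the $n$ right nodes gives the exact identity
\[
\E\left[|\neighbors{\smax{k}}|\right] = n\left(1 - \left(1 - \frac{d}{n}\right)^k\right).
\]

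Finally, we apply the elementary inequality $1 - x \le e^{-x}$, valid for all real $x$, with $x = d/n \in (0,1]$. Because both sides are nonnegative, raising to the $k$-th power preserves the inequality:
\[
\left(1 - \frac{d}{n}\right)^k \le e^{-kd/n}.
\]
Substituting this into the identity yields $\E\left[|\neighbors{\smax{k}}|\right] \ge \left(1 - e^{-kd/n}\right) n$, which is the claim.

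No step is a real obstacle. The only point needing care is to state explicitly that the independence across $u_1,\dots,u_k$ comes from the independent neighbor choices in the model. The without-replacement sampling only creates dependence among the neighbors of a single left node, and that plays no role here.
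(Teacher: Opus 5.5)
Your proof is correct and matches the paper's argument: linearity of expectation over the right nodes, and independence of the neighbor choices of $u_1,\dots,u_k$, which gives the exact value $n\bigl(1-(1-d/n)^k\bigr)$. Both then finish with $1 - d/n \le e^{-d/n}$.
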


The proof also uses the following inequality, whose proof is deferred to \cref{sec: analysis for the different regions n=m appendix}.
\begin{restatable}{claim}{CLAIMONE}
\label{claim one}
    For all $\epsilon \in (0, 1)$, $\frac{1 - e^{-\epsilon}}{\epsilon} \geq 1 - \epsilon$.
\end{restatable}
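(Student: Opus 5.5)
We prove the claim by studying the function $g(x) = 1 - e^{-x} - x(1-x) = 1 - e^{-x} - x + x^2$ on $[0,1]$. For $\epsilon \in (0,1)$, the inequality $\frac{1-e^{-\epsilon}}{\epsilon} \geq 1-\epsilon$ is equivalent, after multiplying by $\epsilon > 0$, to $1 - e^{-\epsilon} \geq \epsilon - \epsilon^2$, that is, to $g(\epsilon) \geq 0$. It therefore suffices to show that $g$ is nonnegative on $[0,1)$.

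The plan is a simple monotonicity argument. First, $g(0) = 1 - 1 - 0 + 0 = 0$. Next,
\[
g'(x) = e^{-x} - 1 + 2x .
\]
Using the standard bound $e^{-x} \geq 1 - x$, valid for all real $x$ by convexity of the exponential, we get $g'(x) \geq (1-x) - 1 + 2x = x \geq 0$ for $x \geq 0$. Hence $g$ is nondecreasing on $[0,\infty)$, so $g(\epsilon) \geq g(0) = 0$ for every $\epsilon \in (0,1)$, which is the claim.

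An alternative route avoids derivatives: the alternating-series truncation $e^{-\epsilon} \leq 1 - \epsilon + \epsilon^2/2$ for $\epsilon \geq 0$ gives
\[
1 - e^{-\epsilon} \geq \epsilon - \frac{\epsilon^2}{2} \geq \epsilon - \epsilon^2 .
\]
Dividing by $\epsilon$ yields the claim, in fact with the stronger constant $1 - \epsilon/2$. This truncation bound follows by integrating $e^{-s} \geq 1 - s$ twice from $0$.

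There is no real obstacle. The only point needing care is citing the elementary exponential inequality and checking that the direction of each inequality is preserved when multiplying or dividing by $\epsilon > 0$.
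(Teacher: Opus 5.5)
Your proof is correct and is essentially the paper's: the paper just invokes $e^{-\epsilon} \le 1 - \epsilon + \epsilon^2$ and divides by $\epsilon$, and your condition $g(\epsilon) \ge 0$ is exactly that inequality, which you also prove via $g(0)=0$ and $g'(x) \ge x \ge 0$. One small slip in your alternative route: a single integration of $e^{-s} \ge 1 - s$ from $0$ to $\epsilon$ already gives $e^{-\epsilon} \le 1 - \epsilon + \epsilon^2/2$, whereas integrating it twice yields a third-order lower bound on $e^{-\epsilon}$ rather than the upper bound you need.
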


We are now ready to prove~\cref{lem:ratio lower bound for n=m and uniform d - outside critical region k large or small}.
\begin{proof}[Proof of \cref{lem:ratio lower bound for n=m and uniform d - outside critical region k large or small}]
    For $k \geq \frac{2n}{\epsilon d}$:
    \begin{align*}
        \E\left[ |\neighbors{\greedy{k}}| \right] 
        \geq \E\left[ |\neighbors{\smax{k}}| \right]
        \geq \left( 1 - e^{-kd/n} \right)  n
        \geq \left( 1 - e^{-1/\epsilon} \right) \cdot n
        \geq \left( 1 - \epsilon \right) \cdot \E\left[ |\neighbors{\opt{k}}| \right],
    \end{align*}
    where the first inequality is because $\E\left[ |\neighbors{\greedy{k}}| \right]$ is lower bounded by $\E\left[ |\neighbors{\smax{k}}| \right]$ by \cref{lem:greedy lower bound by smax for n=m}, the second inequality is due to \cref{lem:smax expected coverage for n=m}, the third inequality is because $kd/n \geq 2/\epsilon \geq 1/\epsilon$ by assumption, the fourth inequality is due to $e^{-1/\epsilon} \leq \epsilon$ and $|\neighbors{\opt{k}}|$ is upper bounded by $n$. 
    For $k \leq \frac{\epsilon n}{2d}$:
    \begin{align*}
        \E\left[ |\neighbors{\greedy{k}}| \right] 
        \geq \E\left[ |\neighbors{\smax{k}}| \right]
        \geq \left( 1 \! - \! e^{-kd/n} \right)  n
        = \frac{1 - e^{-kd/n}}{kd/n} \cdot kd 
        \geq \frac{1 \! - \! e^{-\epsilon}}{\epsilon} \cdot kd
        \geq \left( 1 \! - \! \epsilon \right) \cdot \E\left[ |\neighbors{\opt{k}}| \right],
    \end{align*}
    where the first inequality is because $\E\left[ |\neighbors{\greedy{k}}| \right]$ is lower bounded by $\E\left[ |\neighbors{\smax{k}}| \right]$ by \cref{lem:greedy lower bound by smax for n=m}, the second inequality is due to \cref{lem:smax expected coverage for n=m}, the third inequality is because $kd/n \leq \epsilon/2 \leq \epsilon$ by assumption, the fourth inequality is due to \cref{claim one} and $|\neighbors{\opt{k}}|$ is upper bounded by $kd$.
\end{proof}

\paragraph{Critical region: large degrees.}
We use \cref{lem:greedy lower bound by smax for n=m} and  Chernoff bounds over negatively correlated variables to show a $1 - \epsilon$ ratio in the critical region (i.e. $k \approx n/d$) when the degrees are large. 
Intuitively, it is possible to upper bound the optimal solution, since any $k$ nodes suffer from sharing too many neighbors. 
\begin{restatable}{lemma}{CriticalRegionLargeDegreesND}
\label{lem:ratio lower bound for n=m and uniform d - large d}
    Let $\epsilon \in (0, 1),  d \geq \frac{20^4}{\epsilon^8}, n \geq d, k \in \left[ \frac{\epsilon n}{2d}, \frac{2n}{\epsilon d} \right],$ and $\cB \sim \text{\textsc{LRR}}(n, d)$.
    Then we have $\E\left[ |\neighbors{\greedy{k}}| \right] \geq \left( 1 - \epsilon \right) \cdot \E\left[ |\neighbors{\opt{k}}| \right]$.
\end{restatable}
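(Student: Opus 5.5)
Our plan is to sandwich both sides around the benchmark $n(1-e^{-kd/n})$. For the lower bound we combine \cref{lem:greedy lower bound by smax for n=m} and \cref{lem:smax expected coverage for n=m}, which give $\E[|\neighbors{\greedy{k}}|] \geq (1-e^{-x})\, n$ with $x = kd/n \in [\epsilon/2, 2/\epsilon]$. It then suffices to show $\E[|\neighbors{\opt{k}}|] \leq (1-e^{-x})\,n + \delta n$ with $\delta$ small compared with $\epsilon(1-e^{-x})$. Since $x \geq \epsilon/2$, we have $1-e^{-x} \geq \epsilon/4$, so $\delta \leq \epsilon^2/4$ is enough.

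To bound the optimum, we fix a set $S \subseteq L$ of size $k$ and write $|\neighbors{S}| = \sum_{v\in R} X_v$, where $X_v$ indicates that $v$ is covered. Each $X_v$ has mean $1-(1-d/n)^k \leq 1-e^{-x}+O(kd^2/n^2)$; since $k d^2/n^2 = x d/n$, we will handle this error term separately. The variables $X_v$ are negatively correlated: each left node samples $d$ neighbors without replacement, so coverage indicators for a single node are negatively associated, and independence across nodes preserves negative association. A Chernoff bound for negatively associated variables then gives
\[
\Pr\bigl[|\neighbors{S}| \geq (1-e^{-x})n + \delta n\bigr] \leq \exp(-2\delta^2 n).
\]
We take a union bound over the $\binom{n}{k} \leq (en/k)^k$ choices of $S$. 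Because $k \leq 2n/(\epsilon d)$, the count is at most $\exp\bigl(\tfrac{2n}{\epsilon d}\log\tfrac{2ed}{\epsilon}\bigr)$, which is beaten once $2\delta^2 n$ exceeds $\tfrac{4n}{\epsilon d}\log(ed/\epsilon)$. This holds when $\delta^2 \gtrsim \log(d)/(\epsilon d)$. With $d \geq 20^4\epsilon^{-8}$ we get $\log d/d \lesssim d^{-3/4} \leq \epsilon^{6}/20^3$, so $\delta = \Theta(\epsilon^{2.5})$ works. The failure event contributes at most $n e^{-\Omega(n\delta^2)}$ to the expectation, which is negligible relative to $\epsilon^2 n$, because $n \geq d$ is large.

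The mean error term is handled as follows. Using $(1-d/n)^k \geq e^{-x}(1-xd/n)$, the term is bounded by $x d/n \cdot n$, which is fine when $d/n \leq \epsilon^3$. When $d$ is comparable to $n$, we instead use $x \leq 2/\epsilon$, so that $k$ is a constant; then the greedy lower bound can be compared directly. An alternative is to work with exact probabilities, replacing $1-e^{-x}$ by $1-(1-d/n)^k$ on both sides, because \cref{lem:smax expected coverage for n=m} is presumably proved via exactly $n(1-(1-d/n)^k)$, which equals $\E[|\neighbors{\smax{k}}|]$. I would use this exact form, $\E|\neighbors{\smax{k}}| = n(1-(1-d/n)^k) =: \mu$, which avoids the error term entirely. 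The optimum is bounded by $\mu + \delta n$ in expectation, and since $\mu \geq n(1-e^{-x}) \geq \epsilon n/4$, the ratio is at least $1 - 4\delta/\epsilon \geq 1-\epsilon$.

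The main obstacle is the constant bookkeeping in the union bound. We must check that $\binom{n}{k}$ is dominated by the Chernoff tail uniformly over the whole critical range of $k$, including its top end $k = 2n/(\epsilon d)$ where the count is largest. We also need to justify negative association for sampling without replacement, by citing the standard result that permutation distributions are negatively associated.
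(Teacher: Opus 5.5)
Your final route is correct and is essentially the paper's proof. It lower-bounds greedy by $H_k$ and uses the exact mean $\mu=n(1-(1-d/n)^k)$, which is shared by every $k$-set. It then applies a tail bound for the negatively correlated coverage indicators, takes a union bound over $\binom{n}{k}\le (en/k)^k$ sets, and absorbs the failure event using $\mu\ge \epsilon n/4$. The differences are cosmetic: you use an additive tail $e^{-2\delta^2 n}$ and handle the failure probability via $n\ge d$, whereas the paper uses the multiplicative bound $e^{-\delta^2\mu/3}$ with a $\log n/n$ term built into $\delta$; you can also drop the detour through $(1-e^{-x})n$ and its error term entirely.
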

\begin{proof}[Proof sketch (full proof in \cref{sec: analysis for the different regions n=m appendix})]
    We show that the values of all sets concentrate with high probability. To this end, we express the value $|N(S)|  = \sum_{v \in R} \mathbbm{1}\{ v \in N(S) \}$ of a set $S$ as the sum over  $v \in R$ of the indicator variables for whether $v$ is covered by $S$. We first show that these indicator variables are negatively correlated.
    
    We then use a generalization of the Chernoff bound for negatively correlated variables and a union bound to show that  $\Pr\left[ |\neighbors{\opt{k}}| \geq (1 + \delta) \cdot \E\left[ |\neighbors{\smax{k}}| \right] \right]\leq \frac{1}{n}$ for an appropriate choice of $\delta$. 
    It then follows that
    $\E\left[ |\neighbors{\greedy{k}}| \right] \geq \E\left[ |\neighbors{\smax{k}}| \right] \geq \frac{1}{1 + \epsilon}  \E\left[ |\neighbors{\opt{k}} \right]$, where the first inequality is by Lemma~\ref{lem:greedy lower bound by smax for n=m} and the second inequality by the choice of $\delta$ and the assumption on $n$.
\end{proof}

\paragraph{Critical region: small degrees.}
We use \cref{lem:first phase accepts for n=m} to lower bound the ratio in the critical region, when the degrees are small.
The main challenge in this case is that the optimal solution can achieve a high value close to $n$, so \cref{lem:greedy lower bound by smax for n=m,lem:smax expected coverage for n=m} only retrieve worst-case ratio $1 - \frac{1}{e}$. 
Therefore, we need a tighter lower bound for $\greedyalgorithm$.
This is the most challenging case and the only one in which we do not show a $1 - \epsilon$ ratio. Nevertheless, we still improve over the $1 - \frac{1}{e}$ ratio by a constant, albeit small.
\begin{restatable}{lemma}{CriticalRegionSmallDegreesND}
\label{lem:ratio lower bound for n=m and uniform d - small d}
There exists a function $n_{0}(x): (0,1) \rightarrow \mathbbm{N}$ such that for  $\epsilon \in (0,1), n \geq n_{0}(\epsilon), d \leq \frac{20^4}{\epsilon^8}, k \in \left[ \frac{\epsilon n}{2d}, \frac{2n}{\epsilon d} \right],$ and $\cB \sim \text{\textsc{LRR}}(n, d)$, we have $\E\left[ |\neighbors{\greedy{k}}| \right] \geq \left( 1 - \frac{1}{e} + \Omega\left( \epsilon^{24} \right) \right) \cdot \E\left[ |\neighbors{\opt{k}}| \right]$.
\end{restatable}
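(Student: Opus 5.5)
We plan to lower bound $\E[|\neighbors{\greedy{k}}|]$ by combining the first-phase count from \cref{lem:first phase accepts for n=m} with the worst-case greedy recursion for the remaining iterations, and to upper bound $\E[|\neighbors{\opt{k}}|]$ by $\min(kd,n)$. Since $d \le 20^4/\epsilon^8$ is a constant depending only on $\epsilon$, the error term $3e^{d^2}\sqrt{8n\log 2n}$ is $o(n)$. We can therefore choose $n_0(\epsilon)$ large enough that, with probability $1-1/n$, $t_d \ge t^*_d - \eta n/d$ for a tiny $\eta$. We also need $d\le\sqrt{n/2}$, which holds for $n\ge n_0$.

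\textbf{Step 1: the first phase.} During the first $\min(t_d,k)$ iterations greedy gains exactly $d$ per step. Write $\tau = \min(t_d,k)$. We have $t^*_d = (1-(1+d(d-1))^{-1/(d-1)})\,n/d$, and for bounded $d$ this is $\beta_d\, n/d$ with $\beta_d\in(0,1)$ bounded below by a constant depending on $\epsilon$. For $d=1$ the phase covers everything, since $t_1=n$ and greedy is then optimal. The rough lower bound is $1-(1+d^2)^{-1/d}\gtrsim \log(d^2)/d$, which is roughly $\epsilon^{8}$-polynomial.

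\textbf{Step 2: worst-case analysis after the first phase.} After $\tau$ iterations the coverage is $\tau d$. The standard greedy argument relative to the benchmark $\mathrm{OPT}=|\neighbors{\opt{k}}|$ then gives $|\neighbors{\greedy{k}}|\ge \mathrm{OPT}-(\mathrm{OPT}-\tau d)(1-1/k)^{k-\tau}$. Using $\mathrm{OPT}\le \min(kd,n)$, and the fact that the bound is increasing in $\mathrm{OPT}$ only in the right direction, we instead work with the ratio. Set $x=\tau d/\mathrm{OPT}$ and $s=\tau/k$. The ratio is at least $1-(1-x)e^{-(1-s)}$. Since $\mathrm{OPT}\le kd$, we have $x\ge s$, so the ratio is at least $1-(1-s)e^{-(1-s)}$. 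This exceeds $1-1/e$ by $\Omega(s^2)$, because $(1-s)e^{s}\le 1-s^2/2$.

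\textbf{Step 3: bounding $s$ from below.} We have $s=\min(t_d,k)/k \ge \min(1, t_d d/(2n/\epsilon))\gtrsim \epsilon\beta_d$ when $k\le 2n/(\epsilon d)$. With $\beta_d\gtrsim 1/d\gtrsim\epsilon^{8}$, this gives $s=\Omega(\epsilon^{9})$ and a gain of $\Omega(\epsilon^{18})$, which is within the claimed $\Omega(\epsilon^{24})$.

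\textbf{Passing to expectations.} The ratio bound holds pointwise on the high-probability event, so $|\neighbors{\greedy{k}}|\ge(1-1/e+c_\epsilon)|\neighbors{\opt{k}}|$ holds with probability $1-1/n$. The complement contributes at most $n\cdot(1/n)=1$ to $\E[|\neighbors{\opt{k}}|]$. Since $\E[|\neighbors{\opt{k}}|]\ge \E[|\neighbors{\smax{k}}|]\ge (1-e^{-\epsilon/2})n=\Omega(\epsilon n)$ by \cref{lem:smax expected coverage for n=m}, this loss is negligible once $n\ge n_0(\epsilon)$.

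\textbf{Main obstacle.} The hardest step is the bookkeeping of the lower bound on $\beta_d$ uniformly over $d$ up to $20^4\epsilon^{-8}$, together with absorbing the additive error from \cref{lem:first phase accepts for n=m} into $\eta$. The error grows like $e^{d^2}$, so $n_0(\epsilon)$ must be doubly exponential in $\epsilon^{-8}$, which the statement permits.
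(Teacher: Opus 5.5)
Your proposal is correct and follows essentially the paper's route. The ingredients are the same: the high-probability estimate of $t_d$ from \cref{lem:first phase accepts for n=m}, gains of exactly $d$ for the first $\min(t_d,k)$ iterations, the worst-case greedy recursion from there to $k$ (the paper's \cref{lem:augmented worst case analysis for n=m}), and absorbing the $1/n$ failure event via $\E[|\neighbors{\opt{k}}|]=\Omega(\epsilon n)$.

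The only real difference is the endgame. The paper uses $|\neighbors{\opt{k}}|\le\min(kd,n)$ to get a $k$-independent gain $\frac{1}{e}(t_d d/n)^3$. You instead use only $|\neighbors{\opt{k}}|\le kd$ together with $k\le 2n/(\epsilon d)$, and the quadratic bound $(1-s)e^{s}\le 1-s^2/2$ in place of the paper's cubic one. This yields the stronger $\Omega(\epsilon^{18})$, and your $\tau=\min(t_d,k)$ handles the case $k<t_d$ more cleanly than the paper does. One small correction: for $d=1$ greedy is indeed optimal, but $t_1$ equals the number of distinct right neighbors, not $n$.
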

The proof uses worst-case analysis after iteration $t_d$, to obtain a result for $k \geq t_d$. 
Specifically, we show that as long as $t_d$ is on the order of $n/d$, $\greedyalgorithm$ is bounded a constant away from $1 - \frac{1}{e}$, which we formalize in the following lemma.
\begin{restatable}{lemma}{AugmentedWorstCaseAnalysisN}
\label{lem:augmented worst case analysis for n=m}
    Let $B = (L, R, E)$ be a $d$-left-regular bipartite graph with $|L|=|R|=n$.
    For all $k \geq t_d$ we have $|\neighbors{\greedydet{k}}| \geq \left( 1 - \frac{1}{e} + \frac{1}{e} \cdot \left( \frac{t_d}{n/d} \right)^3 \right) \cdot |\neighbors{\optdet{k}}|$.
\end{restatable}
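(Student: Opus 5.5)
The plan is to combine an exact accounting of the first $t_d$ greedy iterations with the standard worst-case greedy recurrence for the remaining $k-t_d$ iterations. Write $a = t_d$ and $O = |\neighbors{\optdet{k}}|$. By definition of $t_d$, and since the marginal contributions of greedy are non-increasing in a left-regular graph (a node's marginal contribution to a growing set can only shrink, and greedy picks the maximum), each of the first $a$ iterations has marginal contribution exactly $d$. Hence $|\neighbors{\greedydet{a}}| = ad$. Since $k \geq a$, we get $O \geq |\neighbors{\greedydet{k}}| \geq ad$. Also $O \leq \min(n, kd)$.

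For the remaining iterations I will use the classical argument. At any iteration $t$, the $k$ nodes of $\optdet{k}$ together add at least $O - |\neighbors{\greedydet{t}}|$ new nodes, so some node has marginal contribution at least $(O - |\neighbors{\greedydet{t}}|)/k$. This gives
\[ O - |\neighbors{\greedydet{k}}| \leq \left(1 - \tfrac{1}{k}\right)^{k-a} (O - ad) \leq e^{-1} e^{a/k} (O - ad). \]
Set $y = ad/O \in [0,1]$ and $x = ad/n$. Because $O \leq kd$ we have $a/k \leq y$, and because $O \leq n$ we have $y \geq x$. The gap is then at most $\frac{1}{e} O \cdot e^{y}(1-y)$. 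It therefore suffices to show
\[ e^{y}(1-y) \leq 1 - y^3 \quad \text{for } y \in [0,1], \]
because then $|\neighbors{\greedydet{k}}| \geq \left(1 - \frac{1}{e} + \frac{1}{e} y^3\right) O \geq \left(1 - \frac{1}{e} + \frac{1}{e}\left(\frac{t_d}{n/d}\right)^3\right) O$, using $y \geq x$.

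The elementary inequality is the only real computation and is the step I expect to need the most care. First factor
\[ 1 - y^3 - e^{y}(1-y) = (1-y)\bigl(1 + y + y^2 - e^{y}\bigr). \]
Since $1-y \geq 0$, it remains to show $1 + y + y^2 \geq e^{y}$ on $[0,1]$. This follows from the Taylor series: for $y \leq 1$,
\[ e^{y} = 1 + y + \sum_{j \geq 2} \frac{y^j}{j!} \leq 1 + y + y^2 \sum_{j \geq 2} \frac{1}{j!} = 1 + y + (e-2)\, y^2 \leq 1 + y + y^2. \]

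The one point to double-check is the claim that greedy's first $t_d$ marginals all equal $d$, rather than the set of iterations with marginal $d$ being non-contiguous. This follows from the monotonicity of greedy's marginal gains, which holds by submodularity.
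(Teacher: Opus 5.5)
Your proof is correct and takes essentially the same route as the paper. Both proofs count the exact gain $t_d d$ over the first $t_d$ iterations (the paper's ``by definition of $t_d$'' tacitly uses the monotonicity of greedy's marginal gains, which you make explicit), apply the standard $(1-1/k)^{k-t_d}$ recurrence to the remaining iterations, and then use $|N(O_k)| \le \min\{kd, n\}$ together with $e^{x} \le 1 + x + x^2$. The only difference is cosmetic: the paper replaces $|N(O_k)|$ by $\min\{kd,n\}$ and uses a two-case auxiliary lemma to reduce to $x = t_d/(n/d)$ before applying the cubic bound, whereas you apply the cubic bound at $y = t_d d/|N(O_k)|$ and then use $y \ge x$, which avoids the case split.
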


We are now ready to give a proof  sketch of Lemma~\ref{lem:ratio lower bound for n=m and uniform d - small d}.

\begin{proof}[Proof sketch of \cref{lem:ratio lower bound for n=m and uniform d - small d} (full proof in \cref{sec: analysis for the different regions n=m appendix})]
    Let $\delta$ be an appropriately chosen error term. For $k \leq t^*_d - \delta$, $\greedyalgorithm$ gains $d$ on each iteration with high probability by \cref{lem:first phase accepts for n=m}, so
$        \E\left[ |\neighbors{\greedy{k}}| \right] 
        \geq \left( 1 - \frac{1}{n} \right) \cdot \E\left[ |\neighbors{\opt{k}}| \right].$ For $k \geq t^*_d - \delta$, $\greedyalgorithm$ gains $d$ on each iteration until $t^*_d - \delta$ with high probability by \cref{lem:first phase accepts for n=m}, and for later iterations we use \cref{lem:augmented worst case analysis for n=m}:
    \begin{align*}
        \E\left[ |\neighbors{\greedy{k}}| \right]
        \geq \left( 1 - \frac{1}{e} + \frac{1}{e} \cdot \left( \frac{t^*_d - \delta}{n/d} \right)^3 - \frac{8}{\epsilon n} \right) \cdot \E\left[ |\neighbors{\opt{k}}| \right].
    \end{align*}
    We choose $n_{0}(\epsilon)$ large enough to make terms $\frac{\delta}{n/d} = O(\epsilon^8)$ and $\frac{8}{\epsilon n} = O(\epsilon^{25})$.
    The statement follows by lower bounding $\frac{t^*_d - \delta}{n/d} = \Omega(\epsilon^{8})$.
\end{proof}

\subsection{Putting Everything Together}
\label{sec: putting everything together n=m}

By combining \cref{lem:ratio lower bound for n=m and uniform d - outside critical region k large or small} and \cref{lem:ratio lower bound for n=m and uniform d - large d}, we immediately get the near-optimality of greedy in the near-linear region, saturated region, and critical region with large degrees.
\TheoremTwo*

The second main result regarding the approximation guarantee of greedy requires  the following inequality (proof in \cref{sec: analysis for the different regions n=m appendix}).

\begin{restatable}{claim}{CLAIMTWO}
\label{claim}
    For all $k \geq 1$, $\left( 1- \frac{1}{k}\right)^k \leq \frac{1}{e} - \frac{1}{4ek}$.
\end{restatable}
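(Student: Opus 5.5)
We want $(1-1/k)^k \le e^{-1} - \frac{1}{4ek}$ for all $k\ge 1$. The plan is to work with logarithms: since $(1-1/k)^k = \exp(k\log(1-1/k))$, it suffices to show $k\log(1-1/k) \le -1 + \log\bigl(1-\frac{1}{4k}\bigr)$. That reduces the claim to a comparison of two elementary series.

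The case $k=1$ is immediate: the left side is $0$ and the right side is $\frac{3}{4e}>0$. We may therefore take $k\ge 2$. On the left, the Taylor expansion of $\log(1-x)$ gives
$$k\log(1-1/k) = -\sum_{j\ge1}\frac{1}{j k^{j-1}} \le -1-\frac{1}{2k}.$$
On the right, $-1+\log(1-\frac{1}{4k})$ is at least $-1-\frac{1}{4k}-\frac{1}{16k^2}\cdot c$ for a small constant $c$. For example, $\log(1-x)\ge -x-x^2$ holds for $x\le 1/2$, which gives the right side $\ge -1-\frac{1}{4k}-\frac{1}{16k^2}$.

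It then remains to verify $-1-\frac{1}{2k} \le -1-\frac{1}{4k}-\frac{1}{16k^2}$. This is equivalent to $\frac{1}{4k}\ge \frac{1}{16k^2}$, i.e.\ $4k\ge 1$, which is true.

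If the claim is meant for real $k\ge1$ rather than integers, the same argument applies with $x=1/k\in(0,1]$. The endpoint $x=1$ is handled as above, and for $x\in(0,1)$ the series bound and $\log(1-x/4)\ge -x/4-(x/4)^2$ (valid since $x/4\le 1/4$) go through unchanged. The only point needing care is justifying the inequality $\log(1-y)\ge -y-y^2$ for $y\in[0,1/2]$. I would prove it by noting that $g(y)=\log(1-y)+y+y^2$ has $g(0)=0$ and $g'(y)=y(1-2y)/(1-y)\ge 0$ on that interval. Everything else is routine.
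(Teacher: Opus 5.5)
Your proof is correct and takes essentially the paper's route: both use $k\log(1-1/k)\le -1-\frac{1}{2k}$ and then need $e^{-1/(2k)}\le 1-\frac{1}{4k}$. The paper gets the second step from $e^{-x}\le 1-x/2$ on $[0,1]$, while you prove the equivalent log-form bound $\log(1-y)\ge -y-y^2$, and your separate check of $k=1$ is a bit of care the paper skips (it implicitly uses $\log 0=-\infty$).
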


By combining \cref{lem:ratio lower bound for n=m and uniform d - small d}, which is for the critical region with small degrees, and \cref{thm:Theorem 2}, which is for all other regions, we get the constant improvement over $1-1/e$ for all regions.

\TheoremOne*
\begin{proof}
    Let $\epsilon = 0.3$ and $n_0(x)$ be the function from the statement of \cref{lem:ratio lower bound for n=m and uniform d - small d}.
    For $n \geq n_{0}(\epsilon)$, the statement is implied by \cref{thm:Theorem 2,lem:ratio lower bound for n=m and uniform d - small d}.
    For $n < n_{0}(\epsilon)$, the worst-case approximation ratio of greedy is $1 - \left( 1 - \frac{1}{k} \right)^k$, so we get
    \begin{align*}
         1 - \left( 1 - \frac{1}{k} \right)^k  \geq  1 - \frac{1}{e} + \frac{1}{4ek}  \geq  1 - \frac{1}{e} + \frac{1}{4e n_{0}(\epsilon)},
    \end{align*}
    where  the first inequality is by \cref{claim}  and the second since $k \leq n < n_{0}(\epsilon)$. 
\end{proof}

\section{Limitation of Greedy}
\label{sec:limitation}

In the previous section, \cref{thm:Theorem 2} does not guarantee an approximation ratio close to $1$ for all parameter values. The remaining difficult regime is when the degree $d$ is small and $k$ lies in a critical window around $n/d$.
In this section, we establish \cref{thm:Theorem 3}, which demonstrates that this gap is due to a fundamental limitation of $\greedyalgorithm$ rather than a proof artifact. 
Specifically, we identify a counterexample with $d=2$, where the ratio is bounded away from $1$ as $n$ goes to infinity.
The main results are \cref{lem: lower bound for optimal solution for d=2,lem: coverage of greedy for d=2}, which lower bound $\optalgorithm$ and upper bound $\greedyalgorithm$ respectively.
We defer the complete proofs to \cref{sec: omitted proofs of section 4}.

\subsection{Lower Bound for \texorpdfstring{$\optalgorithm$}{Opt}}

The first challenge is to lower-bound the value of the optimal solution.
We resolve this difficulty by relating the value of the optimal solution in $2$-left-regular random graphs to the size of a maximum matching in Erd\H{o}s-Renyi random graphs. 
This process takes two steps. 

First, the value of the optimal solution in any $2$-left-regular bipartite graph $B = (L, R, E)$ is at least twice the minimum between $k$ and the size of the largest set of left nodes with no common neighbors, i.e. 
\begin{align}\label{eq:lowerboundoptlambda}
    |N(\mathcal{O}_k(B))| \geq 2 \min\left(k,\lambda(B)\right),
\end{align}
with
\begin{align}\label{eq:deflambda}
\lambda(B) = \max\{ |S|:  S \subseteq L \text{ s.t. } \neighbors{u} \cap \neighbors{u'} = \emptyset \ \forall u\neq u' \in S \}.\end{align}

The following lemma reduces the computation of $\lambda(B)$ to finding the size of a maximum matching in a specific graph $I_B$, defined as follows: the vertex set of $I_B$ is the set of right nodes of $B$, and for each left node $u\in L$, we add an edge in $I_B$ between the two neighbors of $u$. If two nodes on the left side share the same neighbor set, the parallel edges are merged; in other words, $I_B$ is a simple graph.

\begin{restatable}{lemma}{ReductionToMatching}
\label{lem: reduction to matching for d=2}
    Let $B = (L, R, E)$ be a $2$-left-regular bipartite graph. 
    Also let $I_B = (R, E_B)$ be a graph such that $E_B = \{ e_1, \dots, e_n \}$ where $e_i = \neighbors{u_i}$ for all $i \in [n]$.
    Then $\lambda(B) = \mu(I_B)$, where $\mu(I_B)$ is the size of the maximum matching in $I_B$.
\end{restatable}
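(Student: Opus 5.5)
The plan is to prove the two inequalities $\lambda(B) \le \mu(I_B)$ and $\lambda(B) \ge \mu(I_B)$ separately, each by an explicit correspondence between sets of left nodes with pairwise disjoint neighborhoods and matchings in $I_B$. The starting observation is that $B$ is $2$-left-regular, so every $u_i \in L$ has $\neighbors{u_i} = \{v, v'\}$ with $v \neq v'$, because neighbors are chosen without replacement. Hence $e_i = \neighbors{u_i}$ is a genuine edge of $I_B$ and not a loop, and every edge of $I_B$ equals $\neighbors{u_i}$ for at least one $i$. The paper merges parallel edges, so the map $u_i \mapsto e_i$ from $L$ onto $E_B$ may fail to be injective. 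This is the only point that needs care.

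\emph{Step 1 ($\lambda(B) \le \mu(I_B)$).} Take $S \subseteq L$ attaining the maximum in \eqref{eq:deflambda} and set $M = \{\neighbors{u} : u \in S\} \subseteq E_B$.
\begin{itemize}
    \item For distinct $u, u' \in S$ we have $\neighbors{u} \cap \neighbors{u'} = \emptyset$, so the corresponding edges share no endpoint. Thus $M$ is a matching.
    \item The same disjointness shows the edges are distinct: two nonempty disjoint sets cannot be equal, so $u \mapsto \neighbors{u}$ is injective on $S$.
    \item Therefore $|M| = |S| = \lambda(B)$, which gives $\mu(I_B) \ge \lambda(B)$.
\end{itemize}

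\emph{Step 2 ($\lambda(B) \ge \mu(I_B)$).} Take a maximum matching $M \subseteq E_B$. For each edge $e \in M$, choose one left node $u_e$ with $\neighbors{u_e} = e$; such a node exists by the definition of $E_B$. Set $S = \{u_e : e \in M\}$.
\begin{itemize}
    \item Distinct edges of $M$ yield distinct nodes, so $|S| = |M|$.
    \item For $e \neq e'$ in $M$, the edges are vertex-disjoint because $M$ is a matching, so $\neighbors{u_e} \cap \neighbors{u_{e'}} = e \cap e' = \emptyset$.
    \item Hence $S$ is feasible in \eqref{eq:deflambda}, and $\lambda(B) \ge |M| = \mu(I_B)$.
\end{itemize}

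The main obstacle is the bookkeeping around merged parallel edges. Step 1 uses disjointness to rule out collisions, and Step 2 picks a single representative for each edge. Beyond that the argument is elementary and draws on no earlier results.
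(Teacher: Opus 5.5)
Your proof is correct and follows the paper's argument. The paper also maps an optimal $S$ to the matching $\{e_i : u_i \in S\}$, and for the reverse direction it converts a hypothetically larger matching $M'$ back into a disjoint-neighborhood set, phrased as a contradiction rather than your direct inequality. Your choice of a single representative $u_e$ per edge is slightly more careful than the paper's $S' = \{u_i : e_i \in M'\}$: when several left nodes share the same neighbor pair, that set overcounts and breaks the disjointness claim.
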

\begin{proof} Consider a set $S \subseteq L$ attaining the maximum in Equation \eqref{eq:deflambda}, i.e., $|S|=\lambda(B)$ and
$\neighbors{u} \cap \neighbors{u'} = \emptyset$ for all distinct $u,u' \in S$.
    We construct \[M = \{ e_i: u_i \in S \},\]
    which is a valid matching in $I_B$ by definition of $S$, and clearly $|M| = |S|$.
    We now show that $M$ is a maximum matching in $I_B$.
    Towards a contradiction, assume there exists a matching $M'$ in $I_B$ such that $|M'| > |M|$.
    Then we can define:
    \begin{align*}
        S' = \{u_i : e_i \in M'\},
    \end{align*}
    where $\neighbors{u_i} \cap \neighbors{u_j} = \emptyset$ for all $u_i, u_j \in S'$ since $M'$ is a matching.
    Then $|S'| = |M'| > |M| = |S| = \lambda(B)$, which contradicts optimality of $S$.
    The statement follows as $\lambda(B) = |S| = |M| = \mu(I_B)$.
\end{proof}

Now suppose that $\cB$ is sampled from $\text{\textsc{LRR}}(n,2)$. Then $\cI_{\cB}$ is distributed as a graph on $n$ vertices obtained by sampling $n$ edges independently and uniformly at random \emph{with replacement} from the $\binom{n}{2}$ possible edges, merging parallel edges. This distribution is closely related to the classical Erd\H{o}s--R\'enyi model $\mathbbm{G}_{n, m}$, in which $m$ edges are sampled uniformly at random \emph{without replacement}. Sharp estimates for the maximum matching size in Erd\H{o}s--R\'enyi graphs are known; see, e.g., \cite{karp1981MaximumMatchingRandomGraphs}.
As we show in the following lemma, these results yield a bound on $\mu(\cI_\cB)$, which can be translated into a lower bound on the expected value of the optimal solution.
Before we proceed, we define constants $\gamma_*, \gamma^*$, where $\gamma_*$ is the smallest root of $x = 2 e^{-2 e^{-x}}$ and $\gamma^* = 2 e^{-\gamma_*}$ since we use them extensively in our proofs.

\begin{restatable}{lemma}{OptCoverageDegreeTwo}
\label{lem: lower bound for optimal solution for d=2}
    Let $n \in \mathbbm{N}$ and $\cB \sim \text{\textsc{LRR}}\left(n, 2\right)$.
    Then for any $k \leq \left(1 - \frac{\gamma^* + \gamma_* + \gamma^* \gamma_*}{4} - n^{-\frac{1}{9}} \right) \cdot n$, we have $\E\left[ |\neighbors{\opt{k}}| \right] 
    \geq (1 - o(1)) \cdot 2 k.$
\end{restatable}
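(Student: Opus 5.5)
The plan is to combine the two reductions already in place and then invoke the known asymptotics for maximum matchings in sparse Erd\H{o}s--R\'enyi graphs. By \eqref{eq:lowerboundoptlambda} and \cref{lem: reduction to matching for d=2}, every realization satisfies $|\neighbors{\opt{k}}| \geq 2\min(k, \mu(\cI_\cB))$. Writing $\beta = 1 - \frac{\gamma^*+\gamma_*+\gamma^*\gamma_*}{4}$, the hypothesis is $k \leq (\beta - n^{-1/9})n$. It then suffices to show
\[
\Pr\left[\mu(\cI_\cB) \geq (\beta - n^{-1/9})\, n\right] = 1 - o(1),
\]
because this gives $\E\left[|\neighbors{\opt{k}}|\right] \geq 2k \cdot \Pr[\mu(\cI_\cB)\geq k] \geq (1-o(1))\cdot 2k$.

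\textbf{Step 1: from sampling with replacement to $\mathbbm{G}_{n,m}$.} The multigraph on $R$ has $n$ edges sampled uniformly with replacement from $\binom{n}{2}$ pairs. The expected number of repeated pairs is at most $n^2/(2\binom{n}{2}) = O(1)$. By Markov's inequality, the number $m'$ of distinct edges satisfies $m' \geq n - n^{1/2}$ with probability $1 - O(n^{-1/2})$. Conditioned on $m'$, the simple graph $\cI_\cB$ is distributed exactly as $\mathbbm{G}_{n,m'}$, since the distinct sampled pairs form a uniform $m'$-subset by symmetry. Maximum matching size is monotone under adding edges, and $\mathbbm{G}_{n,m_0}$ can be coupled as a subgraph of $\mathbbm{G}_{n,m'}$ whenever $m'\geq m_0$. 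Hence it suffices to lower bound $\mu(\mathbbm{G}_{n,m_0})$ with $m_0 = \lceil n - n^{1/2}\rceil$.

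\textbf{Step 2: Karp--Sipser asymptotics.} For $\mathbbm{G}_{n,cn/2}$, the result of \cite{karp1981MaximumMatchingRandomGraphs}, refined by Aronson--Frieze--Pittel \cite{frieze1998MaximumMatchingRevisited} to include concentration, gives
\[
\mu = \left(1 - \frac{\gamma^*(c)+\gamma_*(c)+\gamma^*(c)\gamma_*(c)}{2c}\right)n + \tilde O(n^{1/5})
\]
with high probability. Here $\gamma_*(c)$ is the smallest root of $x = c\,e^{-c e^{-x}}$ and $\gamma^*(c) = c\,e^{-\gamma_*(c)}$. Our graph has average degree $c = 2m_0/n = 2 - O(n^{-1/2})$. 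At $c=2$ the formula evaluates exactly to $\beta$, matching the constants defined before the lemma. At $c=2$ we are in the supercritical regime $c>e$? No: $2<e$, so we are in the subcritical Karp--Sipser regime, where the smallest-root characterization is the relevant one and the root is simple. By the implicit function theorem, the limiting fraction is therefore Lipschitz in $c$ near $2$. Shifting $c$ by $O(n^{-1/2})$ thus changes the matching size by only $O(n^{1/2})$. The total deviation, $\tilde O(n^{1/5}) + O(n^{1/2})$, is $o(n^{8/9})$, so $\mu(\cI_\cB) \geq (\beta - n^{-1/9})n$ with probability $1-o(1)$.

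\textbf{Main obstacle.} The delicate point is Step 2: extracting from the literature a quantitative, high-probability error term of order $o(n^{8/9})$, not merely the almost-sure limit $\mu/n \to \beta$. Since $c = 2 < e$, I would first try Karp--Sipser's own observation that Phase 1 of their algorithm is near-optimal below $e$. Their differential-equation analysis of Phase 1 then yields polynomial error rates, and the $n^{-1/9}$ slack leaves ample room. If this proves awkward, I would fall back on the explicit concentration bounds of \cite{frieze1998MaximumMatchingRevisited}. I would also double-check that the formula with $2c$ in the denominator is the one matching this paper's $/4$ normalization.
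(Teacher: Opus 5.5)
Your route is the paper's: reduce via \eqref{eq:lowerboundoptlambda} and the matching reduction to showing $\Pr[\mu(\cI_\cB)\ge(\beta-n^{-1/9})n]=1-o(1)$. Then control duplicated edges with a first-moment/Markov bound, and invoke the Karp--Sipser/Aronson--Frieze--Pittel concentration at average degree $2$. Step 1 is correct as stated, and your check that the $2c$-denominator formula at $c=2$ reproduces the paper's $/4$ constant is right.

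The step that does not go through as justified is the end of Step 2. Because you couple $\cI_\cB$ \emph{down} to $\mathbbm{G}_{n,m_0}$ with $m_0=n-\lceil\sqrt n\rceil$, you need the matching theorem at an $n$-dependent average degree $c_n=2-O(n^{-1/2})$. The cited theorem (\cref{lem: maximum matching size in erdos-renyi}) is stated only for a fixed constant $c$. Lipschitz continuity of the \emph{limiting} fraction $\beta(c)$, obtained from the implicit function theorem, says nothing about where the random variable $\mu(\mathbbm{G}_{n,c_n n/2})$ concentrates. For that you would need the concentration to hold uniformly in $c$ near $2$, which the theorem does not provide.

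The repair is elementary and is what the paper does (\cref{lem:approximation of matching in random graph with replacement by erdos renyi}): couple \emph{up} instead. Conditioned on having $m'$ distinct edges, add $n-m'$ further uniformly random new pairs. The result is distributed exactly as $\mathbbm{G}_{n,n}$. Since deleting one edge lowers the matching number by at most one, $\mu(\cI_\cB)\ge\mu(\mathbbm{G}_{n,n})-(n-m')$ deterministically, and the theorem now applies at exactly $c=2$. The paper uses the threshold $n-m'\le\log n$, which fails with probability at most $1/\log n$ by Markov. It takes the concentration window $n^{7/8}$ from the theorem with $\epsilon=1/7-1/8$, and $n^{7/8}+\log n=o(n^{8/9})$ fits inside the $n^{-1/9}$ slack.

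Equivalently, within your own setup, nest $\mathbbm{G}_{n,m_0}\subseteq\mathbbm{G}_{n,n}$ by deleting $n-m_0$ random edges and apply the same edge-deletion bound. That bound, not the implicit function theorem, is why the shift costs only $O(\sqrt n)$ in matching size. With this repair, the Karp--Sipser Phase~1 fallback in your ``main obstacle'' paragraph is unnecessary.
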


The proof leverages the following lemma, which provides a probabilistic lower bound on the size of the maximum matching in $\cI_\cB$. 

\begin{restatable}{lemma}{LBmaxmatchGB}\label{lem:LBmaxmatchGB}
    Let $n \in \mathbbm{N}$ and $\cB \sim \text{\textsc{LRR}}\left(n, 2\right)$, then
        \[ 
        \Pr\left[ \mu(\cI_\cB) \geq \left(1 - \frac{\gamma^* + \gamma_* + \gamma^* \gamma_*}{4} - n^{-\frac{1}{9}} \right) \cdot n \right]
        \geq 1 - o(1).
        \]
\end{restatable}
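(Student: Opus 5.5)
We reduce \cref{lem:LBmaxmatchGB} to the known asymptotics for the maximum matching in the Erd\H{o}s--R\'enyi model $\mathbbm{G}_{n,m}$ with $m = n$ edges, i.e., average degree $c = 2m/n = 2$. The Karp--Sipser analysis \cite{karp1981MaximumMatchingRandomGraphs} (as sharpened in \cite{frieze1998MaximumMatchingRevisited}) shows that for $\mathbbm{G}_{n, cn/2}$ with $c > e$ the maximum matching has size $\bigl(1 - \frac{\gamma^* + \gamma_* + \gamma^*\gamma_*}{2c}\bigr) n + o(n)$ with high probability, where $\gamma_*$ is the smallest root of $x = c e^{-c e^{-x}}$ and $\gamma^* = c e^{-\gamma_*}$. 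With $c = 2$ the denominator $2c$ equals $4$, which matches the constant in the statement. The quantitative versions give an error $O(n^{1/5}\mathrm{polylog}\, n)$ or similar, which is $o(n^{8/9})$ and is absorbed by the $n^{-1/9} n$ slack.

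\textbf{Step 1: coupling with $\mathbbm{G}_{n,m}$.} The graph $\cI_\cB$ is formed by drawing $n$ edges uniformly with replacement from $\binom{n}{2}$ and merging duplicates. Let $\mathcal{D}$ be the number of distinct edges. Conditioned on $\mathcal{D} = m'$, the set of distinct edges is uniform among $m'$-subsets, so $\cI_\cB \sim \mathbbm{G}_{n,m'}$ conditionally. Repeated pairs have expected count at most $\binom{n}{2}^{-1}\binom{n}{2}\cdot$ const $= O(1)$. By Markov's inequality, $\mathcal{D} \geq n - n^{1/2}$ with probability $1-o(1)$.

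\textbf{Step 2: monotonicity and transferring the bound.} The matching number is monotone under edge addition. Under the standard coupling, $\mathbbm{G}_{n,m'}$ contains $\mathbbm{G}_{n,m''}$ for $m'' \le m'$. Hence, on the event $\mathcal{D} \ge n - \sqrt n$, the quantity $\mu(\cI_\cB)$ stochastically dominates $\mu(\mathbbm{G}_{n, n-\sqrt n})$. The latter has average degree $c_n = 2 - O(n^{-1/2})$. The limiting constant $1 - (\gamma^*+\gamma_*+\gamma^*\gamma_*)/(2c)$ depends continuously on $c$ near $2$, so the shift costs only $O(n^{1/2})$. Since $2 < e$, the relevant result is the subcritical Karp--Sipser regime, where the formula is precisely the one stated above.

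\textbf{Main obstacle.} The difficult step is obtaining an explicit polynomial error rate rather than just $o(n)$. I would first try to cite an explicit concentration-plus-expectation statement. If only $o(n)$ is available, two tools fill the gap. First, $\mu$ is $1$-Lipschitz in edges, so Azuma's inequality over the $m$ edge exposures gives $\Pr\bigl[|\mu - \E\mu| > n^{0.6}\bigr] \le e^{-\Omega(n^{0.2})}$. Second, the Karp--Sipser phase~1 can be analyzed via the differential equation method \cite{wormald-differential-equations}, which yields an $O(n^{3/4})$ error on the expectation. The total error is then $o(n^{8/9})$, which gives the claim.
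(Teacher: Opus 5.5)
Your outline matches the paper's. You control the duplicate edges of $\cI_\cB$, pass to an Erd\H{o}s--R\'enyi graph, and invoke the Karp--Sipser / Aronson--Frieze--Pittel asymptotics with a polynomial error that the $n^{-1/9}$ slack absorbs. Step~1 is correct: conditioned on its size, the set of distinct edges is uniform, which is also what justifies the paper's coupling. The domination $\mu(\cI_\cB) \succeq \mu(\mathbbm{G}_{n,n-\lceil\sqrt n\rceil})$ on $\{\mathcal{D} \geq n-\sqrt n\}$ is also fine.

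\textbf{The gap is the next move.} Write $f(c) = 1 - \frac{\gamma^*+\gamma_*+\gamma^*\gamma_*}{2c}$. You pass from density $c_n = 2 - O(n^{-1/2})$ to $c = 2$ by continuity of $f$, and this does not work. The cited theorem is for a \emph{fixed} constant density and says nothing about uniformity of its error term or failure probability in $c$. Continuity compares only the deterministic limits $f(c_n)$ and $f(2)$, not the random matching numbers. Made rigorous (monotonicity down to a fixed $c' < 2$, then continuity), this only gives $\mu \geq (f(2)-\eta)n$ for each fixed $\eta>0$. That is an $o(n)$ error, not the $O(n^{8/9})$ the statement requires.

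The fix uses the $1$-Lipschitz property instead of continuity. In the nested coupling, $\mathbbm{G}_{n,n-\lceil\sqrt n\rceil}$ is $\mathbbm{G}_{n,n}$ with $\lceil\sqrt n\rceil$ random edges deleted. Hence $\mu(\mathbbm{G}_{n,n-\lceil\sqrt n\rceil}) \geq \mu(\mathbbm{G}_{n,n}) - \lceil\sqrt n\rceil$ pointwise, and the theorem is applied at exactly $n$ edges. This is the paper's device: it adds the at most $\log n$ missing edges to $\cI_\cB$ to get a $\mathbbm{G}_{n,n}$ sample, and deleting them costs at most $\log n$. With this fix your proof coincides with the paper's, up to $\sqrt n$ versus $\log n$.

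\textbf{The quantitative input is also misidentified.} The $n^{1/5+o(1)}$ bound of Aronson--Frieze--Pittel measures how far the matching \emph{produced by Karp--Sipser} is from a maximum matching, for $c>e$. It does not bound the deviation of $\mu(\mathbbm{G}_{n,n})$ from $f(2)\,n$. What you need, and what the paper cites, is the concentration statement $\Pr\bigl[\,|\mu(n,n)/n - f(2)| \geq n^{-1/7+\epsilon}\bigr] = o(1)$. Taking $\epsilon = \frac17-\frac18$ gives deviation $n^{7/8}$, and $\log n - n^{8/9} \leq -n^{7/8}$ for large $n$ finishes the argument.

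Your fallback (Azuma plus a differential-equation analysis of Karp--Sipser phase~1 with $O(n^{3/4})$ error) is only asserted. It would be a substantial proof in its own right, and it is unnecessary once this citation is used. A minor point: you first state the formula for $c>e$ and then apply it at $c = 2 < e$. The formula holds for all $c$, with $\gamma_*=\gamma^*$ the unique root when $c \leq e$.
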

The proof, deferred to \cref{sec: omitted proofs of section 4}, bounds how much the maximum matching size can change due to the discrepancy between the generation process of  $\cI_\cB$ and the seminal Erd\H{o}s-Renyi model, and leverages what is known about the size of maximum matchings in Erd\H{o}s-Renyi. We are now ready to prove \cref{lem: lower bound for optimal solution for d=2}.
\begin{proof}[Proof of \cref{lem: lower bound for optimal solution for d=2}] 
Consider any $k\leq \left(1 - \frac{\gamma^* + \gamma_* + \gamma^* \gamma_*}{4} - n^{-\frac{1}{9}} \right) \cdot n$, then:
\begin{align*}
    \E\left[ |\neighbors{\opt{k}}| \right] 
    \geq 2k \cdot \Pr\left[ \lambda(\cB) \geq k \right]
    = 2k \cdot \Pr\left[ \mu(\cI_\cB) \geq k \right],
\end{align*}
where the inequality is due to \cref{eq:lowerboundoptlambda} and the equality is due to $\lambda(\cB) = \mu(\cI_{\cB})$ by \cref{lem: reduction to matching for d=2}.
We now lower bound $\mu(\cI_\cB)$:
\begin{align*}
    \Pr\left[ \mu(\cI_\cB) \geq k \right]
    \geq \Pr\left[ \mu(\cI_\cB) \geq \left(1 - \frac{\gamma^* + \gamma_* + \gamma^* \gamma_*}{4} - n^{-\frac{1}{9}} \right) \cdot n \right]
    \geq 1 - o(1),
\end{align*}
where the first inequality is due to the assumption on $k$ and the second due to \cref{lem:LBmaxmatchGB}.
Chaining with the previous inequality establishes the lemma.
\end{proof}

\subsection{Upper Bound for \texorpdfstring{$\greedyalgorithm$}{Greedy}}
The second challenge is to upper bound the value of $\greedyalgorithm$. We do so through the machinery of \cref{lem:first phase accepts for n=m} in the following lemma.

\begin{restatable}{lemma}{GreedyCoverageDegreeTwo}
\label{lem: coverage of greedy for d=2}
    Let $n \in \mathbbm{N}$ and $\cB \sim \text{\textsc{LRR}}\left(n, 2\right)$.
    Then we have that, for any $k > \frac{1}{3} \cdot n$, $\E\left[ |\neighbors{\greedy{k}}| \right] \leq (1 + o(1)) \cdot \left(\frac{1}{3} \cdot n + k\right)$.
\end{restatable}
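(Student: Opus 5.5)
For $d=2$ every left node has degree $2$, so each greedy step has marginal contribution at most $2$. Steps taken during the first phase of $\acceptrejectalgorithm$ contribute exactly $2$, and every later step contributes at most $1$. Hence, for $k \geq t_2$,
\[
|\neighbors{\greedy{k}}| \leq 2 t_2 + (k - t_2) = t_2 + k .
\]
It therefore suffices to show that $\E[t_2] \leq (1+o(1))\, n/3$. Since $t_2 \leq n/2$ always (it counts disjoint pairs), the bound for $k<t_2$ is handled the same way, using $2k \leq t_2 + k$ when $k \leq t_2$. The full claim is
\[
|\neighbors{\greedy{k}}| \leq \min(2k,\ t_2+k) \leq t_2 + k \quad \text{for all } k .
\]

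To control $t_2$ I would invoke \cref{lem:first phase accepts for n=m} with $d=2$, which is allowed since $2 \leq \sqrt{n/2}$ for $n \geq 8$. The estimate is
\[
t^*_2 = \left(1 - (1+2)^{-1}\right)\cdot \frac{n}{2} = \frac{2}{3}\cdot\frac{n}{2} = \frac{n}{3}.
\]
With probability at least $1-1/n$ we have $t_2 \leq n/3 + 3e^{4}\sqrt{8n\log 2n}$. On the complementary event, $t_2 \leq n/2$. Combining the two cases,
\[
\E[t_2] \leq \frac{n}{3} + O\!\left(\sqrt{n\log n}\right) + \frac{1}{n}\cdot\frac{n}{2} = \frac{n}{3} + o(n).
\]

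Taking expectations in the deterministic bound gives
\[
\E\left[|\neighbors{\greedy{k}}|\right] \leq \E[t_2] + k \leq \frac{n}{3} + k + o(n).
\]
Because $k > n/3$, the term $\frac{n}{3}+k$ is at least $\frac{2n}{3}$, so the $o(n)$ error is an $o(1)$ fraction of $\frac{n}{3}+k$. This yields the stated $(1+o(1))\left(\frac{1}{3}n + k\right)$.

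The key step needing care is the deterministic bound. I must justify that after the last step with marginal contribution $2$, every subsequent greedy step has marginal contribution at most $1$. This holds by the definition of $t_2$ as the maximum $t$ with contribution $2$: marginal contributions of a fixed node only decrease as the solution grows, so once no node attains $2$ again, greedy picks at most $1$ per step. I also use that $t_2$ equals the number of first-phase accepts, as established in the sketch of \cref{lem:first phase accepts for n=m}.
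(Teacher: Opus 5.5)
Your proposal is correct and follows essentially the same route as the paper. Both arguments use the deterministic bound $|N(\mathcal{G}_k)| \le t_2 + k$, since greedy gains $2$ per step up to $t_2$ and at most $1$ afterwards, and then bound $\E[t_2] \le t^*_2 + 3e^{4}\sqrt{8n\log 2n} + O(1)$ via the first-phase lemma with $t^*_2 = n/3$. Your explicit treatment of $k < t_2$ (via $2k \le t_2 + k$) and the trivial bound $t_2 \le n/2$ (instead of the paper's $t_2 \le n$) are minor refinements.
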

\begin{proof}
By definition of $t_2$, up to iteration $t_2$, $\greedyalgorithm$ will gain $2$ per iteration, and thereafter it gains at most $1$ per iteration. Consequently, for any $k$, the following holds  
\begin{equation*}
\mathbb{E}\left[|\neighbors{\greedy{k}}|\right] \leq \mathbb{E}\left[2t_2+(k-t_2)\right]=\mathbb{E}[t_2]+k.\end{equation*}
As $t_2\leq n$ always holds,  we have:
\[
\mathbb{E}[t_2]\leq t^*_2+ 3e^{4} \cdot \sqrt{ 8n \log{2n}}+ n \cdot \Pr\left[ t_2\geq t^*_2+ 3e^{4} \cdot \sqrt{ 8n \log{2n}} \right].
\]
We upper bound the probability using \cref{lem:first phase accepts for n=m}:
\[
\Pr\left[ t_2\geq t^*_2+ 3e^{4} \cdot \sqrt{ 8n \log{2n}} \right] \leq \frac{1}{n}.
\]
Chaining those three inequalities yields:
\begin{equation*}
\mathbb{E}\left[|\neighbors{\greedy{k}}|\right] \leq t^*_2+ 3e^{4} \cdot \sqrt{ 8n \log{2n}} +1+k=\left(1+\frac{3e^{4} \cdot \sqrt{ 8n \log{2n}} +1}{t^*_2+k}\right)(t^*_2+k).
\end{equation*}
The statement then follows, since $t^*_2= n/3$ by definition.
\end{proof}

\subsection{Putting Everything Together}
We proceed to show \cref{thm:Theorem 3} by combining \cref{lem: lower bound for optimal solution for d=2,lem: coverage of greedy for d=2}.
\TheoremThree*
\begin{proof}
    Let $k(n) = \left\lfloor\left(1 - \frac{\gamma^* + \gamma_* + \gamma^* \gamma_*}{4} - n^{-\frac{1}{9}} \right) \cdot n \right\rfloor$, then we have:
    \begin{align*}
        k(n)
        \geq \left(1 - \frac{\gamma^* + \gamma_* + \gamma^* \gamma_*}{4} - n^{-\frac{1}{9}} \right) \cdot n - 1
        = \left(1 - \frac{\gamma^* + \gamma_* + \gamma^* \gamma_*}{4} - n^{-\frac{1}{9}} - n^{-1} \right) \cdot n. \tag{1}
    \end{align*}
    We upper bound the ratio:
    \begin{align*}
        \frac{\E\left[ |\neighbors{\greedy{k(n)}}| \right]}{\E\left[ |\neighbors{\opt{k(n)}}| \right]}
        \leq \frac{\E\left[ |\neighbors{\greedy{k(n)}}| \right]}{(1 - o(1)) \cdot 2k(n)}
        \leq \frac{(1 + o(1)) \cdot \left( \frac{1}{3} \cdot n + k(n) \right)}{(1 - o(1)) \cdot 2k(n)}
        \overset{n \rightarrow \infty}{\rightarrow} \frac{1}{2} + \frac{\frac{1}{3}}{2 - \frac{\gamma^* + \gamma_* + \gamma^* \gamma_*}{2}}
        \leq 0.94,
    \end{align*}
    where the first inequality is due to \cref{lem: lower bound for optimal solution for d=2}, the second inequality is due to \cref{lem: coverage of greedy for d=2} which applies since $k(n) \geq t^*_2 = \frac{1}{3} \cdot n$ for large enough $n$, the limit is due to Equation (1) and because all $o(1)$ terms vanish, and the third inequality is because both $\gamma_*, \gamma^* \leq 0.853$.
\end{proof}

\newpage
\printbibliography

\newpage 

\appendix

\section{Additional Discussion about the Left-Regular Random Model}

\subsection{A Simple Bad Instance for Greedy over a Left-Regular Graph}
\label{sec:badinstance}
The following instance corresponds to the classical bad instance for greedy originally provided in~\cite{submodular-inapproximability} to show that greedy achieves at best a $1-1/e$ approximation, as described using the notation from~\cite{pokutta2020unreasonable}. We highlight that this simple and original instance from~\cite{submodular-inapproximability} is left-regular with all the left nodes having equal degree. The instance is the following
\begin{itemize}
\item cardinality constraint $k$,
\item right nodes $R = \{1, \ldots, k\}^k$ with $|R| = k^k$, 
\item left nodes $L = \{a_1, \ldots, a_{k-1}\} \cup \{b_1, \ldots, b_k\}$ with 
\begin{itemize}
\item $N(a_i) = \{x \in R: x_i = 1\}$ for $i \in [k-1]$, and
\item $N(b_i) = \{x \in R: x_k = i\}$ for $i \in [k]$
\end{itemize}
\end{itemize}

We note that this graph is left-regular with all the nodes in $L$ having degree $k^{k-1}$. The optimal solution is $\{b_1, \ldots, b_k\}$, which covers all the right nodes and thus has value $k^k$. Note that for any $S \subseteq \{a_1, \ldots, a_{k-1}\}$, the marginal contribution of each node $u \in L \setminus S$ to $S$ is identical and equal to $\frac{1}{k} \cdot | R \setminus N(S)|$, i.e., a $1/k$ fraction of the right nodes not covered by $S$. If greedy breaks ties according to order $(a_1, \ldots, a_{k-1}, b_1, \ldots, b_k)$, it returns solution $\{a_1, \ldots, a_{k-1}, b_1\}$, which has value 
$$\left(1 - \left(1 - \frac{1}{k}\right)^k\right)|R|,$$
whose limit approaches $1-1/e$ as $k$ grows large. This bipartite graph can of course be made balanced with $|L| = |R|$ by adding a sufficient number of copies of an arbitrary left node.

\subsection{Empirical Performance of Greedy over Random Graphs with Uneven Left-Node Degrees}
\label{sec:unequal degrees}
We evaluate $\greedyalgorithm$ over random graphs with left nodes whose degrees are either equal or have different level of unevenness. We use this family of instances as an example to illustrate that, for random graphs and at least experimentally, greedy seems to  perform worse when the degrees of the left nodes are all equal.

To make the comparison clear, we use parameter $a \in [0,1]$ to interpolate between equal degrees ($a = 0$) and unequal degrees ($a = 1$), while keeping the average degree fixed.
The instances we create for our experiments have left degrees randomly chosen from $1, 4, 7$ with probabilities $\frac{a}{2}, 1-a, \frac{a}{2}$, so that the average degree is always $4$.
For each value of $a$, we estimate:
\begin{align*}
    \min_{k \in [n]} \frac{\E\left[ |\neighbors{\greedy{k}}| \right]}{\E\left[ |\neighbors{\opt{k}}| \right]}.
\end{align*}
As we can see in \cref{fig:uniform vs non-uniform}, the lowest ratio is achieved for equal degrees ($a = 0$).

\begin{figure}[H] 
    \centering
    \includegraphics[width=0.7\textwidth]{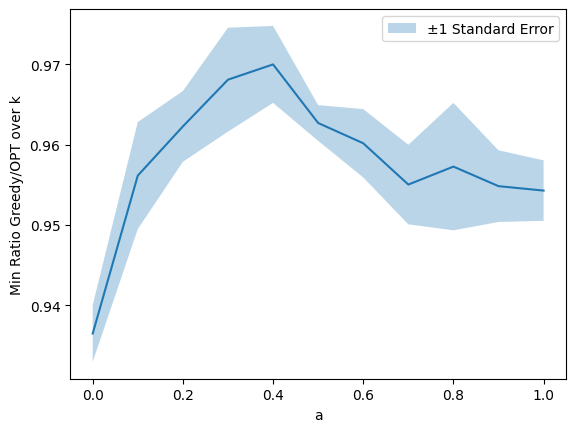}
    \caption{The computation is carried out for $n = m = 100$ and the ratio between the value of $\greedyalgorithm$ and $\optalgorithm$ is calculated by averaging over $50$ runs. $\optalgorithm$ was implemented in Python using Gurobi to solve the integer program. The shaded region depicts the standard error for our estimation.}
\label{fig:uniform vs non-uniform}
\end{figure}
\section{Generalization Of Technical Tools}
\label{sec:generalization of technical tools}

In this section, we show that our main techniques from the proof of \cref{thm:Theorem 1,thm:Theorem 2}, can be used in a more general random model that allows different number of nodes on each side of the bipartite graph as well as non-uniform degrees for the left nodes.
\begin{definition}
    Let $n, m \in \mathbbm{N}$ and $d \in [m]$. The random model \textsc{GenR}($n, m, d_1, \dots, d_n$) constructs a random bipartite graph $\cB = \left( L, R, \cE \right)$ such that the number of nodes is $|L| = n, |R| = m$ and the edges $\cE$ are such that $\neighbors{u_i} \in \binom{R}{d_i}$ is selected uniformly and independently for all $i \in [n]$.
\end{definition}
Notice that, $\textsc{GenR}$ generalizes $\textsc{LRR}$ and $\textsc{ULRR}$.

\subsection{Lower Bound By Differential Equation Method}
The following lemma shows that this algorithm is equivalent to $\greedyalgorithm$ for arbitrary $d$-left-regular bipartite graphs.
\begin{restatable}{lemma}{AcceptRejectEquivalence}
\label{lem: greedy is equivalent to acceptreject}
    Let $B = (L, R, E)$ be a $d$-left-regular bipartite graph. 
    Let $P_t = \max\{p \in [d]: |\acceptreject{p}{0}| \leq t\}$ 
    and $I_t = \min\{i \in [n]: |\acceptreject{P_t}{i}| \geq t\}$. 
    Then $\acceptrejectdet{p}{i} = \greedydet{t}$ for all $(p, i) \in \{(p', i'): p'=P_t, I_t \leq i' \leq n \text{ or } P_{t+1} < p' < P_{t}, i' \in [n] \text{ or } p' = P_{t+1}, 0 \leq i' < I_{t+1}\}$ and $t \in [k]$.
\end{restatable}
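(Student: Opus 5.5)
We prove the statement by strong induction on $t$: the $t$-th node accepted by $\acceptrejectalgorithm$ is exactly the $t$-th node selected by $\greedyalgorithm$. In particular, the accepted set remains equal to $\greedydet{t}$ from the moment of the $t$-th accept until just before the $(t+1)$-st accept. The index set in the statement is precisely this window of $(p,i)$ pairs, described in terms of the phase $P_t$ and iteration $I_t$ of the $t$-th accept. So once we know the accepted sets agree at each accept time, the claim follows: between consecutive accepts $A$ does not change.

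The key invariant is the following. At the start of phase $p$, every node $u\notin A$ has marginal contribution to the current $A$ at most $p$. At phase $p=d$ this holds trivially, since all degrees are $d$. For the inductive step, suppose the invariant holds at the start of phase $p+1$. During phase $p+1$, every $u_i$ scanned with marginal contribution $\ge p+1$ is accepted. Any node rejected earlier in the phase had marginal contribution $\le p$ at the time it was scanned. Marginal contributions only decrease as $A$ grows, by submodularity of coverage. Hence at the end of phase $p+1$ every unaccepted node has marginal contribution at most $p$.

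Now fix the $(t+1)$-st accept, which happens at $(p,i)=(P_{t+1},I_{t+1})$ with $A=\greedydet{t}$ just before it. By the invariant and monotonicity, every node outside $A$ has marginal contribution at most $p$. The accepted node $u_i$ has marginal contribution at least $p$, so it attains the maximum $p$.

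It remains to match greedy's tie-breaking, i.e., to show $u_i$ is the smallest-indexed node attaining marginal contribution $p$. Any $u_j$ with $j<i$ not in $A$ was scanned earlier in phase $p$ and rejected. At that time its marginal contribution was $<p$. The set $A$ at that time is a subset of the current $A$, so by monotonicity its marginal contribution is still $<p$. Thus greedy, with the same current set $\greedydet{t}$, selects $u_i$, giving $\acceptrejectdet{P_{t+1}}{I_{t+1}}=\greedydet{t+1}$.

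The main obstacle is bookkeeping rather than mathematics. One must verify that $P_t$ and $I_t$ as defined (through $|\acceptreject{p}{0}|\le t$ and the first $i$ with $|\acceptreject{P_t}{i}|\ge t$) pick out exactly the phase and iteration of the $t$-th accept. One also needs to handle edge cases: phase boundaries where $\acceptreject{p}{n}=\acceptreject{p-1}{0}$, several accepts in one phase, and the cap $|A|<k$, which only truncates after $k$ accepts. I would first prove that $|A|$ is nondecreasing in lexicographic $(\text{phase descending}, i)$ order with increments of at most $1$. That fact gives the window description directly, and the equality on the window then follows from the accept-time identity above.
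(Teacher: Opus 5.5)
Your argument is correct and follows essentially the paper's route: induction on the number of accepts, showing that the $(t+1)$-st node accepted by \textsc{AcceptReject} attains the maximum marginal contribution to $\greedydet{t}$ and is the lowest-indexed node doing so, hence is exactly greedy's $(t+1)$-st pick. The paper argues both points by contradiction (a strictly better greedy choice $u_g$, or an equally good but lower-indexed one, would have been accepted earlier), relying implicitly on the monotonicity of marginal contributions that your phase invariant makes explicit. Like the paper, you must read $P_t, I_t$ as the phase and iteration of the $t$-th accept, because the literal $\max$-based definition always gives $P_t = d$; with that reading, the window is the interval from the $t$-th accept up to, but not including, the $(t+1)$-st.
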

\begin{proof}
    Notice that $I_t$ is by definition the iteration on which $\acceptrejectalgorithm$ accepts the $t$-th node in phase $P_t$ and the additional nodes are never accepted.
    We start by showing that $\acceptreject{P_t}{I_t} = \greedy{t}$ by induction on $t$.
    For $t = 1$, we have that:
    \begin{align*}
        |\neighbors{u_1} \setminus \neighbors{\greedy{0}}| = |\neighbors{u_1} \setminus \neighbors{\acceptreject{d}{0}}| = d,
    \end{align*}
    where the first equality is due to $\greedy{0} = \acceptreject{d}{0} = \emptyset$. 
    Therefore $\acceptreject{d}{1} = \{u_1\}$ since on iteration $1$ it will accept $u_1$ and $\greedy{1} = \{u_1\}$ since among all the nodes with contribution $d$, it has the lowest index.
    For the inductive step $t = t_0 + 1$, let $u_g$ be the node such that $\greedy{t_0 + 1} \setminus \greedy{t_0} = \{u_g\}$.
    By definition of $\greedyalgorithm$ that implies:
    \begin{align*}
        g = \min\{i \in [n]: |\neighbors{u_{i}} \setminus \neighbors{\greedy{t_0}}| \geq |\neighbors{u_{i'}} \setminus \neighbors{\greedy{t_0}}|, \forall i' \in [n]\} \tag{*}.
    \end{align*}
    Also let $u_a$ be the node such that $\acceptreject{P_{t_0 + 1}}{I_{t_0 + 1}} \setminus \acceptreject{P_{t_0}}{I_{t_0}} = \{u_a\}$ and we have that:
    \begin{align*}
        |\neighbors{u_a} \setminus \neighbors{\acceptreject{P_{t_0}}{I_{t_0}}}|
        = |\neighbors{u_a} \setminus \neighbors{\greedy{t_0}}|
        \leq |\neighbors{u_g} \setminus \neighbors{\greedy{t_0}}| 
        = |\neighbors{u_g} \setminus \neighbors{\acceptreject{P_{t_0}}{I_{t_0}}}|,  \tag{**}
    \end{align*}
    where the equality is by inductive hypothesis $\acceptreject{P_{t_0}}{I_{t_0}} = \greedy{t_0}$ and the inequality by (*).
    Assume towards a contradiction that $|\neighbors{u_a} \setminus \neighbors{\acceptreject{P_{t_0}}{I_{t_0}}}| < |\neighbors{u_g} \setminus \neighbors{\acceptreject{P_{t_0}}{I_{t_0}}}|$, so $u_a$ is accepted on phase $|\neighbors{u_a} \setminus \neighbors{\acceptreject{P_{t_0}}{I_{t_0}}}|$, which is later than phase $|\neighbors{u_g} \setminus \neighbors{\acceptreject{P_{t_0}}{I_{t_0}}}|$ and implies that $\acceptrejectalgorithm$ rejected $u_g$ on phase $|\neighbors{u_g} \setminus \neighbors{\acceptreject{P_{t_0}}{I_{t_0}}}|$. 
    Overall we have that:
    \begin{align*}
        |\neighbors{u_a} \setminus \neighbors{\acceptreject{P_{t_0}}{I_{t_0}}}| 
        = |\neighbors{u_g} \setminus \neighbors{\acceptreject{P_{t_0}}{I_{t_0}}}|. \tag{***}
    \end{align*}
    Assume towards a contradiction that $a \neq g$.
    We start with the case $a < g$ which contradicts the minimality of $g$ in (*) due to (***).
    On the other hand, if $a > g$, $u_g$ was reached in an earlier iteration of $\acceptrejectalgorithm$ and must have been rejected in phase $|\neighbors{u_a} \setminus \neighbors{\acceptreject{P_{t_0}}{I_{t_0}}}|$ which is a contradiction by (***).
    Overall we have that:
    \begin{align*}
        \acceptreject{P_{t_0 + 1}}{I_{t_0 + 1}}
        = \acceptreject{P_{t_0}}{I_{t_0}} \cup \{u_a\}
        = \acceptreject{P_{t_0}}{I_{t_0}} \cup \{u_g\} 
        = \greedy{t_0} \cup \{u_g\} 
        = \greedy{t_0 + 1},
    \end{align*}
    where the first equality is by definition of $P_t, I_t$, the second due to $g = a$, the third is by inductive hypothesis and the fourth by definition of $u_g$.
    This concludes the induction and in order to obtain the statement note that for any $(p, i) \in \{(p', i'): p'=P_t, I_t \leq i' \leq n \text{ or } P_{t+1} < p' < P_{t}, i' \in [n] \text{ or } p' = P_{t+1}, 0 \leq i' < i_{t+1}\}$ we have that $\acceptreject{p}{i} = \acceptreject{P_t}{I_t}$ since no node is accepted in these iterations.
\end{proof}

\subsubsection{Number of accepts in first phase}

In the following lemma we calculate the accepts of phase 1 by using the differential equation method.
\begin{lemma}
\label{lem:first phase number of accepts}
    Let $n, m \in \mathbbm{N}$, $d \in [m], k \in [n]$ and draw bipartite graph $\cB = (L, R, \cE)$ from random model $\textsc{GenR}(n, m, d, \dots, d)$.
    Also let $t^*_d = \left( 1 - \left( 1 + \frac{n d(d-1)}{m} \right)^{-\frac{1}{d-1}} \right) \cdot \frac{m}{d}$ and $\delta = 3e^{\frac{n d^2}{m}} \cdot \sqrt{8n\log{2n}}$.
    Then $|t_d - t^*_d | \leq \delta$ with probability at $1 - \frac{1}{n}$ for all $m \geq 2d^2$.
\end{lemma}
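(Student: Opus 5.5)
We will follow the sketch given for \cref{lem:first phase accepts for n=m}, now with $m$ right nodes. By \cref{lem: greedy is equivalent to acceptreject}, greedy and $\acceptrejectalgorithm$ add the same nodes in the same order. Hence $t_d=|\acceptreject{d}{n}|$, the number of nodes accepted in the first phase $p=d$. Write $X_i=|\acceptreject{d}{i}|$.

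In phase $d$, node $u_{i+1}$ is accepted iff its neighborhood avoids the $dX_i$ already covered right nodes. Its neighborhood is independent of $u_1,\dots,u_i$, so $(X_i)$ is a Markov chain with increments in $\{0,1\}$ and
\[
\E[X_{i+1}-X_i\mid X_i]=\binom{m-dX_i}{d}\Big/\binom{m}{d}.
\]
We rescale time by $n$ and space by $m$: set $x=i/n$ and $z=X_i/m$. The drift is then approximately $(1-dz)^d$, with an error of $O(d^2/m)$ coming from the falling-factorial approximation. This error is where the assumption $m\ge 2d^2$ enters.

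The associated ODE is
\[
z'(x)=\frac{n}{m}(1-dz)^d,\qquad z(0)=0.
\]
Separating variables gives
\[
(1-dz)^{-(d-1)}=1+\frac{nd(d-1)}{m}\,x,
\]
so $m\,z(1)=t^*_d$, which matches the formula in the statement. For $d=1$ we take the limiting form; this case is trivial, since every node with an uncovered neighbor is accepted.

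Next we apply Wormald's theorem, or rather a direct martingale argument, which gives explicit constants. Let $D_i=X_i-m\,z(i/n)$. Taylor expansion of the ODE plus the drift error give
\[
\E[D_{i+1}-D_i\mid\mathcal F_i]=\frac{n}{m}\cdot\frac{F(X_i/m)-F(z(i/n))}{n/m}+O\!\left(\frac{d^2}{m}\right),
\]
where $F(y)=(1-dy)^d$. The function $F$ is $d^2$-Lipschitz on $[0,1/d]$, so
\[
|\E[D_{i+1}-D_i\mid\mathcal F_i]|\le \frac{d^2}{m}|D_i|+O\!\left(\frac{d^2}{m}\right).
\]
Decompose $D_i=M_i+\sum_{j<i}(\text{drift}_j)$, where $M_i$ is a martingale with increments bounded by $2$. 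Azuma's inequality gives $|M_i|\le \sqrt{8n\log 2n}$ for all $i\le n$ simultaneously, with probability at least $1-1/n$ after a union bound over $i$.

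A discrete Gr\"onwall argument then yields
\[
|D_i|\le \Bigl(\sqrt{8n\log 2n}+O(nd^2/m)\Bigr)\Bigl(1+\tfrac{d^2}{m}\Bigr)^n\le \Bigl(\sqrt{8n\log 2n}+O(nd^2/m)\Bigr)e^{nd^2/m}.
\]
This produces the factor $e^{nd^2/m}$ appearing in $\delta$.

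The main obstacle is the bookkeeping of the additive drift error $O(d^2/m)$ summed over $n$ steps. Taken naively, this contributes $nd^2/m$, which need not be $O(\sqrt{n\log n})$. The first thing to try is a sharper estimate of the ratio $\binom{m-dX}{d}/\binom{m}{d}$ against $(1-dX/m)^d$. This ratio is at most $(1-dX/m)^d$ and differs from it by a relative factor of $1-O(d^2/(m-dX))$ only when $dX$ is close to $m$. In that regime both quantities are tiny, so the absolute error per step is bounded by something like $\frac{d^2}{m}(1-dX/m)^{d-1}$. We expect this bound to telescope against the ODE, using that $\int_0^1 z'$ is bounded by $1/d$. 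If that fails, we fold the error into the constant $3$ by restricting to the regime where the bound is nontrivial, where it holds because $nd^2/m\lesssim\log$ of the bound. Finally, since the process is monotone and bounded by $m/d$, we need not stop it at the boundary $z=1/d$, which keeps the Lipschitz domain valid.
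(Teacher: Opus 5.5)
Your argument is correct in substance but takes a different route from the paper. The paper checks the four hypotheses of Warnke's version of Wormald's theorem and then invokes that theorem as a black box. The hypotheses are: trend error $2d^2/m$ from its binomial-ratio lemma, Lipschitz constant $L=nd^2/m$ in the scaling $Y(i)/n$, increments at most $1$, and $Y(0)=0$. You instead reprove the needed special case directly with Azuma and a discrete Gr\"onwall inequality. Both routes give the same ODE (your $z$ is the paper's $y$ times $n/m$), the same $t^*_d$, and the same amplification factor $e^{nd^2/m}$. The paper's route is shorter. Yours makes every constant explicit and handles the domain boundary by monotonicity, since $X_i/m\in[0,1/d]$ always. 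The paper, by contrast, simply takes the domain-dependent constants $T,R$ of Warnke's theorem to be $1$.

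The main obstacle you identify in your last paragraph is not real; it comes from a lossy Gr\"onwall step. The additive drift error per step is $C d^2/m$ for an absolute constant $C$ (about $2$ from the trend and $1$ from discretizing the ODE). This is a constant multiple of the per-step Lipschitz rate $d^2/m$, so it can be absorbed by a shift. If $|M_i|\le A$ for all $i$, then $f_i=|D_i|+C$ satisfies $f_i\le (A+C)+\frac{d^2}{m}\sum_{j<i}f_j$, hence $|D_n|\le (A+C)e^{nd^2/m}$. So the drift error costs only an additive $C e^{nd^2/m}$, not $C\frac{nd^2}{m}e^{nd^2/m}$, and $A+C\le 3\sqrt{8n\log 2n}$. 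This is exactly the role of the term $\delta\min\{T,L^{-1}\}$ in the paper's application ($\frac{2d^2}{m}\cdot\frac{m}{nd^2}=\frac{2}{n}$). No sharper estimate of $\binom{m-dX}{d}/\binom{m}{d}$ is needed. Your fallback also works: $t_d,t^*_d\in[0,n]$, so the claim is vacuous unless $e^{nd^2/m}<\sqrt n$. In that case $nd^2/m<\tfrac12\log n$, and the extra $O(nd^2/m)=O(\log n)$ is absorbed into the factor $3$.

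Finally, fix the Azuma bookkeeping. With increments bounded by $2$, the level $\sqrt{8n\log 2n}$ already has failure probability $1/n$ for a single $i$, so a union bound over $i$ gives nothing. The increments $X_{i+1}-X_i-\E[X_{i+1}-X_i\mid\mathcal{F}_i]$ actually lie in $[-1,1]$, which gives total failure probability at most $1/(8n^3)$ after the union bound. Alternatively, use the maximal form of Azuma--Hoeffding.
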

\begin{proof}
    Due to the equivalence between $\acceptrejectalgorithm$ and $\greedyalgorithm$ \cref{lem: greedy is equivalent to acceptreject}, on any graph $B = (L, R, E)$ we have $t_d = |\acceptreject{d}{n}|$, since at the end of phase $d$, $\acceptrejectalgorithm$ has accepted all the nodes with marginal contribution $d$. Therefore, it suffices to estimate $|\acceptreject{d}{n}|$. Towards this end we define $Y(i) = |\acceptreject{d}{i}|$, i.e. the number of accepted nodes on iteration $i$ of phase $d$ and we have:
    \begin{align*}
        \E\left[ Y(i+1) - Y(i) \middle| Y(i) \right] 
        = \E\left[ \mathbbm{1}\{u_{i+1} \text{ accepted }\} \middle| Y(i) \right]
        = \Pr\left[ u_{i+1} \text{ accepted } \middle| Y(i) \right]
        = \frac{\binom{m - Y(i) d}{d}}{\binom{m}{d}},
    \end{align*}
    where the last equality is because $u_{i+1}$ is accepted only if $|\neighbors{u_{i+1}} \setminus \neighbors{\acceptreject{d}{i}}| = d$. But since $|\neighbors{\acceptreject{d}{i}}| = Y(i) \cdot d$ and $\neighbors{u_{i+1}}$ are selected uniformly at random from the $m$ right nodes, the only way $u_{i+1}$ was accepted was if it sampled $d$ nodes from the $m - Y(i) \cdot d$ remaining.   
    We can now show one by one that the assumptions of the differential equation hold:
    \begin{enumerate}
        \item Trend Hypothesis: $\bigabs{ \E\left[ Y(i+1) - Y(i) \middle| Y(i) \right] - \left( 1 - \frac{nd}{m} \frac{Y(i)}{n} \right)^d } \leq \frac{2d^2}{m}$ by \cref{lem: approximation of success probability} for $r = m - Y(i) d$ since $m \geq 2d^2$,
        \item Lipschitz Hypothesis: $F(y) = \bigpar{1 - \frac{nd}{m} y}^d$ is $\frac{nd^2}{m}$-Lipschitz since: 
        \begin{align*}
            \bigabs{F(y_1) - F(y_2)} 
            &= \bigabs{ \bigpar{1 - \frac{nd}{m} y_1}^d - \bigpar{1 - \frac{nd}{m} y_2}^d } \\
            &= \frac{nd}{m} \cdot \bigabs{ \bigpar{\bigpar{1 - \frac{nd}{m} y_1}^{d-1} + \dots + \bigpar{1 - \frac{nd}{m} y_2}^{d-1}}} \cdot \bigabs{ y_2 - y_1 } \\
            &\leq \frac{nd}{m} \cdot \bigabs{ 1 + \dots + 1 } \cdot \bigabs{ y_2 - y_1 } \\
            &= \frac{nd^2}{m} \cdot \bigabs{ y_1 - y_2 },
        \end{align*}
        \item Boundedness Hypothesis: $\bigabs{Y(i+1) - Y(i)} \leq 1$ as only node $u_{i+1}$ can be accepted on iteration $i+1$,
        \item Initial Condition: $Y(0) = \hat{y} = 0$ so any $\lambda \geq \frac{2}{n} \geq \frac{d^2}{m} \min\{T, \frac{m}{nd^2} \} + \frac{R}{n} = \frac{d^2}{m} \min\{1, \frac{m}{nd^2} \} + \frac{1}{n}$ works. We choose $\lambda = \sqrt{\frac{8\log{(2n)}}{n}}$.
    \end{enumerate}
    
    Model $Y(i)$ by $n y\left( \frac{i}{n} \right)$ and we obtain the following differential equation:
    \begin{align*}
        y'(t) = \left( 1 - \frac{nd}{m} \cdot y(t) \right)^d,\ y(0) = 0.
    \end{align*}
    By \cref{thm:differential equation method} with probability $1 - 2 e^{-n \lambda^2 / 8 T} = 1 - \frac{1}{n}$ we have that:
    \begin{align*}
        \bigabs{Y(i) - n y(i/n)} < 3 e^{\frac{nd^2}{m} T} \cdot \lambda n = 3 e^{\frac{nd^2}{m}} \cdot \sqrt{8n \log(2n)}.
    \end{align*}
\end{proof}

In the following lemma we approximate the probability of success.
\begin{lemma} \label{lem: approximation of success probability}
    Let $d \in \mathbbm{N}$, $m \geq 2d^2$, $r \in [m]$
    then $ |\binom{r}{d} / \binom{m}{d} -  \left( \frac{r}{m} \right)^d | \leq \frac{2d^2}{m}$.
\end{lemma}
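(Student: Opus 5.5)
The plan is to compare the falling-factorial ratio with the power $(r/m)^d$ term by term. This uses only elementary inequalities on products of numbers in $[0,1]$. If $r<d$, then $\binom{r}{d}=0$ and $(r/m)^d \le (d/m)^d \le d/m \le 2d^2/m$, so that case is trivial. For $r \ge d$ we write
\begin{align*}
\frac{\binom{r}{d}}{\binom{m}{d}} = \prod_{j=0}^{d-1} \frac{r-j}{m-j}.
\end{align*}
Each factor $\frac{r-j}{m-j}$ lies in $[0,1]$ and satisfies $\frac{r-j}{m-j} \le \frac{r}{m}$, because $r\le m$ gives $m(r-j) \le r(m-j)$. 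Hence the ratio is at most $(r/m)^d$, and it remains to bound $(r/m)^d - \prod_j \frac{r-j}{m-j}$ from above.

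For the upper bound on this difference, use the telescoping inequality $|\prod a_j - \prod b_j| \le \sum_j |a_j - b_j|$, valid for $a_j,b_j\in[0,1]$. Here $a_j = r/m$ and $b_j = \frac{r-j}{m-j}$. Compute
\begin{align*}
\frac{r}{m} - \frac{r-j}{m-j} = \frac{j(m-r)}{m(m-j)} \le \frac{j}{m-j} \le \frac{j}{m-d}.
\end{align*}
Summing over $j=0,\dots,d-1$ gives at most $\frac{d(d-1)}{2(m-d)}$.

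Finally, the hypothesis $m \ge 2d^2$ gives $m-d \ge m/2$ (indeed $d \le d^2 \le m/2$). So the difference is at most $\frac{d(d-1)}{m} \le \frac{2d^2}{m}$, which proves the claim with room to spare.

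The only point needing care is the sign and range of the factors: all $a_j,b_j$ must lie in $[0,1]$ for the telescoping bound to apply, which holds since $0\le r-j \le m-j$ for $d\le r \le m$. The separate treatment of $r<d$, where binomials vanish, handles the case where the product formula would involve negative factors. No step is a real obstacle; the main thing to check is that $m-d \ge m/2$ follows from $m\ge 2d^2$.
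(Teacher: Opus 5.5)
Your proof is correct and follows essentially the same route as the paper. Both write the ratio as $\prod_{j=0}^{d-1}\frac{r-j}{m-j}$, get the upper bound from $\frac{r-j}{m-j}\le\frac{r}{m}$, and control the deficit with a union-bound-type product inequality followed by the fact that $m-d$ is a constant fraction of $m$. You apply $|\prod a_j-\prod b_j|\le\sum_j|a_j-b_j|$ to absolute differences, whereas the paper factors out $(r/m)^d$ and applies $\prod(1-x_j)\ge 1-\sum_j x_j$ to the relative terms $x_j=\frac{j(m-r)}{r(m-j)}$. Your separate treatment of $r<d$ is slightly more careful than the paper's: those $x_j$ can exceed $1$ when $r<d$, so the paper's stated justification does not cover that case.
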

\begin{proof}
    We start by showing the upper bound:
    \begin{align*}
        \frac{\binom{r}{d}}{\binom{m}{d}}
        = \frac{r (r - 1) \cdots (r - d + 1))}{m (m - 1) \cdots (m - d + 1)}
        = \prod_{i=0}^{d-1} \left( \frac{r - i}{m - i} \right)
        \leq \left( \frac{r}{m} \right)^d.
    \end{align*}
    We proceed to show the lower bound:
    \begin{align*}
        \frac{\binom{r}{d}}{\binom{m}{d}}
        = \prod_{i=0}^{d-1} \left( \frac{r - i}{m - i} \right)
        = \prod_{i=0}^{d-1} \left( \frac{r}{m} \cdot \frac{1 - \frac{i}{r}}{1 - \frac{i}{m}} \right)
        &= \left( \frac{r}{m} \right)^d \cdot \prod_{i=0}^{d-1} \left( 1 - \frac{i(m-r)}{r (m-i)} \right) \\
        &\geq \left( \frac{r}{m} \right)^d \cdot \left( 1 - \sum_{i=0}^{d-1} \frac{i(m-r)}{r (m-i)} \right) \\
        &\geq \left( \frac{r}{m} \right)^d \cdot \left( 1 - \frac{m-r}{r(m-d)} \cdot \sum_{i=0}^{d-1}i \right) \\
        &= \left( \frac{r}{m} \right)^d \cdot \left( 1 - \frac{m-r}{r(m-d)} \cdot \frac{d(d-1)}{2} \right) \\
        &= \left( \frac{r}{m} \right)^d - \left( \frac{r}{m} \right)^{d-1} \cdot \frac{m-r}{m-d} \cdot \frac{d^2}{2m} \\
        &\geq \left( \frac{r}{m} \right)^d - \frac{m}{m-d} \cdot \frac{d^2}{2m} \\
        &\geq \left( \frac{r}{m} \right)^d - \frac{2d^2}{m},
    \end{align*}
    where the first inequality is due to $\prod_{i=0}^{d-1}(1 - x_i) \geq 1 - \sum_{i=0}^{d-1} x_i$ for all $x_i \in [0,1]$, the second inequality is due to $m-i \geq m-d$ for all $i \in [d-1]$, the third inequality is due to $\frac{r}{m} \leq 1$ and the fourth inequality is because $d \leq \sqrt{m}$, so $\frac{m}{m-d} \leq \frac{m}{m - \sqrt{m}} = \frac{1}{1 - 1/\sqrt{m}} \leq 4$ for all $m \geq 2d^2 \geq 2$.
\end{proof}

In the following lemma we solve the main differential equation.
\begin{lemma} \label{lem: first phase differential equation}
    The unique solution of $y'(t) = \left( 1 - \frac{nd}{m} \cdot y(t) \right)^d$ with initial condition $y(0) = 0$ is 
    $y(t) = \frac{m}{nd} \cdot \left( 1 - \left( 1 + \frac{nd(d-1)}{m} \cdot t \right)^{- \frac{1}{d-1}} \right)$.
\end{lemma}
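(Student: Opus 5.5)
The ODE $y'(t) = \bigl(1 - \tfrac{nd}{m} y(t)\bigr)^d$, $y(0)=0$, is autonomous and separable, so the plan is to solve it explicitly and then argue uniqueness separately.

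\textbf{Step 1 (substitution).} Set $c = \tfrac{nd}{m}$ and $z(t) = 1 - c\,y(t)$. Then $z(0)=1$ and $z'(t) = -c\,z(t)^d$.

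\textbf{Step 2 (separation).} As long as $z>0$, divide by $z^d$ to get
\begin{align*}
\frac{d}{dt}\, z(t)^{-(d-1)} = (d-1)\, z^{-d} \cdot (-z') = (d-1)\,c .
\end{align*}
Integrating from $0$ with $z(0)=1$ gives $z(t)^{-(d-1)} = 1 + c(d-1)t$. Hence
\begin{align*}
z(t) = \bigl(1 + \tfrac{nd(d-1)}{m}t\bigr)^{-\frac{1}{d-1}} .
\end{align*}
This stays positive for all $t\ge 0$, so the division is justified along the whole trajectory. Substituting back $y = (1-z)/c$ yields the claimed formula.

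\textbf{Step 3 (verification).} I will also check the formula directly. Differentiating,
\begin{align*}
y'(t) = \frac{m}{nd}\cdot\frac{1}{d-1}\cdot\frac{nd(d-1)}{m}\bigl(1+\tfrac{nd(d-1)}{m}t\bigr)^{-\frac{d}{d-1}} = z(t)^d = \bigl(1-\tfrac{nd}{m}y(t)\bigr)^d ,
\end{align*}
and $y(0)=0$, as required.

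\textbf{Step 4 (uniqueness).} The map $y \mapsto (1-cy)^d$ is a polynomial, hence locally Lipschitz. Picard--Lindel\"of therefore gives uniqueness on every interval where a solution exists. Since the explicit solution exists for all $t \ge 0$, every solution coincides with it there.

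\textbf{Degenerate cases.} For $d=1$ the exponent $-\tfrac{1}{d-1}$ is undefined. In that case I interpret the formula as its limit $(1+ct)^{-1/(d-1)} \to e^{-ct}$. This limit is exactly the solution of the linear equation $z' = -cz$, so I will treat $d=1$ separately. The only other point needing care is that $z$ never reaches $0$, which Step 2 already guarantees.
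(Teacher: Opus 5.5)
Your proof is correct and follows the same route as the paper. The paper also separates variables: it integrates $y'/(1-\frac{nd}{m}y)^d = 1$ over $[0,t]$ and solves for $y$, and your substitution $z = 1-\frac{nd}{m}y$ is only a cosmetic change. Your direct verification by differentiation, the Picard--Lindel\"of uniqueness argument, and the remark that the formula needs $d \ge 2$ usefully spell out what the paper leaves implicit.
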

\begin{proof}
    We rewrite the differential equation in the following form:
    \begin{align*}
        \frac{y'(t)}{\left( 1 - \frac{nd}{m} \cdot y(t)\right)^d} = 1,\ y(0) = 0. \tag{1}
    \end{align*}
    By integrating both sides of Equation (1) in $[0, t]$ we have:
    \begin{align*}
        \frac{\left( 1 - \frac{nd}{m} y(t) \right)^{-(d-1)}}{\frac{nd(d-1)}{m}} - \frac{1}{\frac{nd(d-1)}{m}} = t. 
    \end{align*}
    We obtain the statement by solving for $y(t)$.
\end{proof}

\subsubsection{Differential Equation Method}
\setcounter{equation}{0}
\begin{lemma}[Theorem 2 in \cite{warnke2019wormaldsdifferentialequationmethod}]\label{thm:differential equation method}%
Given integers~$a,n \ge 1$, 
a bounded domain~$\cD \subseteq \RR^{a+1}$, 
functions~$(F_k)_{1 \le k \le a}$ with~$F_k:\cD \to \RR$, 
and $\sigma$-fields~$\cF_0 \subseteq \cF_1 \subseteq \cdots$, 
suppose that the random variables~$((Y_k(i))_{1 \le k \le a}$ are~$\cF_i$-measurable for~$i \ge 0$.   
Furthermore, assume that, for all~$i \ge 0$ and~$1 \le k \le a$, the following 
conditions hold whenever~$(i/n,Y_1(i)/n,...,Y_a(i)/n) \in \cD$: 
%
\begin{enumerate}%
\item[(i)]\label{dem:trend}%
 $\bigabs{\E\bigpar{Y_k(i+1)-Y_k(i) \mid \cF_{i}}-F_k\bigpar{i/n,Y_1(i)/n,...,Y_a(i)/n}} \le \delta$, 
where the function~$F_k$ is~$L$-Lipschitz-continuous on~$\cD$ \ \emph{(the `Trend hypothesis' and `Lipschitz hypothesis')}, 
\item[(ii)]\label{dem:bounded}%
$\bigabs{Y_k(i+1)-Y_k(i)}\le \beta$ \ \emph{(the `Boundedness hypothesis')}, 
\end{enumerate}
and that the following condition holds initially: 
\begin{enumerate}%
\item[(iii)]\label{dem:init}%
$\max_{1 \le k \le a} \bigabs{Y_k(0)- \hat{y}_k n} \le \lambda n$ for some~$(0,\hat{y}_1, \ldots, \hat{y}_a) \in \cD$ 
 \ \emph{(the `Initial condition')}. 
\end{enumerate}
Then there are~$R=R(\cD,(F_k)_{1 \le k \le a},L) \in [1,\infty)$ and~$T=T(\cD) \in (0,\infty)$ such that, 
whenever~$\lambda \ge \delta \min\{T,L^{-1}\} + R/n$,  
with probability at least $1-2a e^{-n\lambda^2/(8T \beta^2)}$ we have 
\begin{equation}\label{dem:error}
\max_{0 \le i \le \sigma n} \max_{1 \le k \le a}\bigabs{Y_k(i)-y_k\bigpar{\tfrac{i}{n}}n} \; < \; 3 e^{L T} \lambda n ,
\end{equation}
where~$(y_k(t))_{1 \le k \le a}$ is the unique solution to the system of differential equations
\begin{equation}\label{dem:sol}
y'_k(t) =F_k\bigpar{t,y_1(t), \ldots, y_a(t)} \quad \text{ with } \quad y_k(0) = \hat{y}_k \qquad \text{for~$1 \le k \le a$,}
\end{equation}
and~$\sigma=\sigma(\hat{y}_1, \ldots \hat{y}_a) \in [0,T]$ is any choice of~$\sigma \ge 0$ with the property
that~$(t,y_1(t), \ldots y_a(t))$ has~$\ell^{\infty}$-distance at least~$3 e^{L T} \lambda$ 
from the boundary~of~$\cD$ for all~$t \in [0,\sigma)$.  
\end{lemma}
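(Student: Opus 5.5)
Since this is Theorem~2 of \cite{warnke2019wormaldsdifferentialequationmethod}, the paper can simply cite it. To prove it directly, I would compare each $Y_k(i)$ with $n\,y_k(i/n)$ through a Doob decomposition plus a discrete Gr\"onwall argument.

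\emph{Setup.} Let $\tau$ be the first index $i$ at which $(i/n, Y_1(i)/n, \dots, Y_a(i)/n)$ leaves $\cD$. For $i \le \tau$ write
\[
Y_k(i) = Y_k(0) + \sum_{j<i} \E\bigpar{Y_k(j+1)-Y_k(j) \mid \cF_j} + M_k(i),
\]
where $M_k$ is a martingale. Its increments are bounded by $2\beta$ by the Boundedness hypothesis, and I would freeze it after $\tau$.

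\emph{Martingale concentration.} Applying the Azuma--Hoeffding inequality in its maximal form (e.g.\ Freedman's inequality or Doob plus Azuma) over $i \le Tn$ steps, and taking a union bound over the $2a$ tails, gives
\[
\max_{i \le Tn}\max_k |M_k(i)| \le \lambda n
\]
with probability at least $1-2a e^{-n\lambda^2/(8T\beta^2)}$. This is exactly the failure probability in the statement, and it explains the $T$ and $\beta^2$ in the exponent.

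\emph{Deterministic error recursion.} On this event, set $e(i)=\max_k |Y_k(i)-n\,y_k(i/n)|$. Writing $n\,y_k(i/n)-n\hat y_k=\sum_{j<i} n\int_{j/n}^{(j+1)/n} F_k(t,y(t))\,dt$ and using the Trend and Lipschitz hypotheses gives
\[
e(i) \le \lambda n + \lambda n + i\delta + \frac{iR'}{n} + \frac{L}{n}\sum_{j<i} e(j).
\]
The first $\lambda n$ is the initial condition and the second is the martingale bound. The term $R'/n$ per step comes from the Lipschitz continuity of $F_k$ together with the boundedness of $|F_k|$ on the bounded domain, and $R'$ is absorbed into $R$. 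A discrete Gr\"onwall step gives $e(i) \le (2\lambda n)(1+L/n)^i + (\delta + R/n^2)\,n\cdot\min\{i/n,\ (e^{Li/n}-1)/L\}$. This is the source of the factor $\min\{T,L^{-1}\}$: for $i \le Tn$ the accumulated drift error is at most $(\delta+R/n^2)\min\{T,L^{-1}\}e^{LT}n$. Under the hypothesis $\lambda \ge \delta\min\{T,L^{-1}\}+R/n$ this yields $e(i) < 3e^{LT}\lambda n$.

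\emph{Domain exit (main obstacle).} The recursion only holds while the process is in $\cD$. I would handle this by induction on $i \le \sigma n$. As long as $e(i) < 3e^{LT}\lambda n$, the point $(i/n, Y(i)/n)$ lies within $\ell^\infty$-distance $3e^{LT}\lambda$ of $(i/n, y(i/n))$. By the choice of $\sigma$, that point is at distance at least $3e^{LT}\lambda$ from $\partial\cD$, so it is still in $\cD$, and the hypotheses apply at step $i$. Hence $\tau > \sigma n$ on the good event, and the bound holds for all $0 \le i \le \sigma n$.

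The delicate part is carrying out this exit argument cleanly while fixing the constants $R$ and $T$ so that the factor $3$ works uniformly. I would first try taking $T$ larger than the $t$-diameter of $\cD$, and $R = 1 + L\sup_{\cD}|F| + \sup_{\cD}|F|$.
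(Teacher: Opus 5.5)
Your proposal is correct and takes essentially the same approach as the paper, which gives no proof of its own and simply cites Warnke's Theorem~2; your sketch follows the standard argument behind that result: a Doob decomposition of the stopped process, a maximal Azuma--Hoeffding bound with increments at most $2\beta$ over at most $Tn$ steps (giving the $2a\,e^{-n\lambda^2/(8T\beta^2)}$ failure probability), a discrete Gr\"onwall recursion, and an induction on $i\le \sigma n$ that keeps $(i/n,Y(i)/n)$ inside $\cD$. One bookkeeping fix: the per-step discretization error is $R'/n$, so the Gr\"onwall drift coefficient should be $\delta+R'/n$ rather than $\delta+R/n^2$, and the bound should use $\min\{(i/n)e^{Li/n},(e^{Li/n}-1)/L\}$ (as written, your $\min$ always equals $i/n$); this gives a drift contribution of at most $(\delta+R'/n)\min\{T,L^{-1}\}e^{LT}n\le \lambda e^{LT}n$, which is exactly what the $R/n$ term in the hypothesis on $\lambda$ absorbs once $R\ge R'\min\{T,L^{-1}\}$.
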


\subsection{Fixed Set Lower Bound}
We proceed to show that $\greedyalgorithm$ achieves higher value than the fixed set of the $k$ nodes with the highest degrees, $\smax{k}$. First, we calculate the expected value of this set.

\begin{lemma} 
\label{lem:smax expected coverage}
    Let $m, n \in \mathbbm{N}$, $m \geq d_1 \geq \dots \geq d_n \geq 0, k \in [n]$ and draw bipartite graph $\cB = (L, R, \cE)$ from random model $\textsc{GenR}(n, m, d_1, \dots, d_n)$.
    Then $\E[|\neighbors{\smax{k}}|] = \left(1 - \prod_{i=1}^{k}\left( 1 - \frac{d_i}{m} \right) \right) \cdot m \geq \left( 1 - e^{- \frac{\sum_{i=1}^{k} d_i }{m}} \right) \cdot m$.
\end{lemma}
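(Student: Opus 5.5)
The plan is to write the coverage of $\smax{k}$ as a sum of indicators over right nodes and use linearity of expectation together with the independence of the neighborhoods of distinct left nodes. Concretely,
\begin{align*}
|\neighbors{\smax{k}}| = \sum_{v \in R} \mathbbm{1}\{ v \in \neighbors{\smax{k}} \},
\end{align*}
so $\E[|\neighbors{\smax{k}}|] = \sum_{v \in R} \Pr[v \in \neighbors{\smax{k}}]$. For a fixed $v \in R$, I compute the complementary probability that no node of $\smax{k} = \{u_1, \dots, u_k\}$ selects $v$. Since $\neighbors{u_i}$ is a uniformly random $d_i$-subset of $R$, the probability that $v \notin \neighbors{u_i}$ is $\binom{m-1}{d_i}/\binom{m}{d_i} = 1 - \frac{d_i}{m}$. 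The neighborhoods are chosen independently across $i$, so
\begin{align*}
\Pr[v \notin \neighbors{\smax{k}}] = \prod_{i=1}^{k} \left(1 - \frac{d_i}{m}\right).
\end{align*}
This probability does not depend on $v$. Summing over the $m$ right nodes then gives the exact identity $\E[|\neighbors{\smax{k}}|] = \left(1 - \prod_{i=1}^{k}(1 - d_i/m)\right) m$.

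For the inequality, I apply $1 - x \leq e^{-x}$ to each factor. This is valid since $d_i/m \in [0,1]$, so every factor is nonnegative and the termwise bounds can be multiplied. It yields $\prod_{i=1}^k (1 - d_i/m) \leq e^{-\sum_{i=1}^k d_i/m}$, and hence the stated lower bound. The ordering $d_1 \geq \dots \geq d_n$ plays no role in the calculation itself. It only explains why $\smax{k}$ consists of the $k$ highest-degree nodes.

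There is no real obstacle. The only points to state carefully are that independence across left nodes justifies the product form, and that nonnegativity of the factors justifies multiplying the exponential bounds.
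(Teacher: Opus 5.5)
Your proposal is correct and follows the paper's proof essentially line by line. Both arguments decompose the coverage into indicators over right nodes, use independence across the left nodes of $\smax{k}$ to get $\prod_{i=1}^{k}(1 - d_i/m)$ as each right node's probability of being uncovered, and finish with the termwise bound $1 - x \leq e^{-x}$. The paper leaves implicit your check that $\binom{m-1}{d_i}/\binom{m}{d_i} = 1 - d_i/m$ and your note that the factors are nonnegative.
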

\begin{proof}
    Notice that $\smax{k}$ covers node $v_j$ as long as at least one of its nodes covers $v_j$, so for all $j \in [m]$: 
    \begin{align*}
        \Pr\left[ v_j \in \neighbors{\smax{k}} \right] 
        = 1 - \prod_{u_i \in \smax{k}}{ \left( 1 - \Pr\left[ v_j \in \neighbors{u_i} \right] \right) }
        = 1 - \prod_{i = 1}^{k}{\left(1 - \frac{d_i}{m}\right)}
        \geq 1 - e^{- \frac{\sum_{i=1}^{k} d_i }{m}}.
    \end{align*}
    We have that:
    \begin{align*}
        \E\left[ |\neighbors{\smax{k}}| \right] 
        = \E\left[ \sum_{j \in [m]}{\mathbbm{1}\{ v_j \in \neighbors{\smax{k}} \} } \right] 
        = \sum_{j \in [m]}{\Pr[v_j \in \neighbors{\smax{k}}]}
        = \Pr\left[ v_1 \in \neighbors{\smax{k}} \right] \cdot m,
    \end{align*}    
    where the first equality is by definition, the second equality is by linearity of expectation and the third equality because the probabilities are equal for all $j \in [m]$. Then the statement follows.
\end{proof}

We introduce notation that is useful for the proofs in this section. 
Informally, $M_t(S)$ is the set of the $t$ nodes with the highest marginal contributions to a set of nodes $S$.
Let $S \subseteq L$ and define:
\begin{align*}
    u^{(i)} = 
    \begin{cases}
        \arg \max\{|\neighbors{u} \setminus \neighbors{S}|: u \in L \setminus S\} &\text{ if } i = 1 \\
        \arg \max\{|\neighbors{u} \setminus \neighbors{S}|: u \in L \setminus \left(S \cup \{u^{(1)}, \dots, u^{(i-1)}\}\right)\} &\text{ otherwise }
    \end{cases},
\end{align*}
where we abuse notation so that $\arg \max$ breaks ties like $\greedyalgorithm$, which uniquely defines $u^{(i)}$.
Then we define $M_t(S) = \{u^{(1)}, u^{(2)}, \dots, u^{(t)} \}$.
Now we define the hybrid algorithm:
\begin{align*}
    Y^t = \greedydet{t} \cup M_{k-t}(\greedydet{t}).
\end{align*}
Notice that $Y^k = \greedydet{k}$ while $Y^0 = \smax{k}$.

\begin{lemma} \label{lem: greedy lower bound by smax}
    Let $m, n \in \mathbbm{N}$, $m \geq d_1 \geq \dots \geq d_n \geq 1$, $k \in [n]$ and draw graph $\mathcal{B}$ from \textsc{GenR}$(n, m, d_1, \dots, d_n)$. Then $\E[\f{\greedy{k}}] \geq \E[|\neighbors{\smax{k}}|]$ for any $n, m, k$.
\end{lemma}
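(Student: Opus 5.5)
The plan is to telescope along the hybrid algorithms $Y^t = \greedydet{t} \cup M_{k-t}(\greedydet{t})$ defined just above. Since $Y^k = \greedydet{k}$ and $Y^0 = \smax{k}$ (when $S=\emptyset$, the $\arg\max$ with greedy's tie-breaking picks the lowest-index, highest-degree nodes), it suffices to prove $\E[|\neighbors{\mathcal{Y}^t}|] \geq \E[|\neighbors{\mathcal{Y}^{t-1}}|]$ for every $t \in [k]$. The first observation is deterministic. The first element $u^{(1)}$ of $M_{k-t+1}(\greedydet{t-1})$ is exactly greedy's $t$-th pick $u_g$. Hence $Y^{t-1} = \greedydet{t} \cup T'$ and $Y^t = \greedydet{t} \cup T$, where:
\begin{itemize}
\item $T = M_{k-t}(\greedydet{t})$ consists of the $k-t$ nodes of $L\setminus \greedydet{t}$ with largest residual $r_u := |\neighbors{u}\setminus\neighbors{\greedydet{t}}|$;
\item $T'$ consists of the next $k-t$ nodes ranked by marginal contribution to $\greedydet{t-1}$.
\end{itemize}
Both solutions contain $\greedydet{t}$, so we only need to compare the expected number of right nodes outside $\neighbors{\greedy{t}}$ covered by $\mathcal{T}$ versus $\mathcal{T}'$.

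The key step is a conditioning argument. Let $\cF_t$ be the $\sigma$-field generated by the following:
\begin{itemize}
\item the full neighborhoods of the nodes in $\greedy{t}$ (hence $\neighbors{\greedy{t}}$ and the whole greedy history up to step $t$);
\item for every $u \notin \greedy{t}$, the set $\neighbors{u}\cap\neighbors{\greedy{t}}$ together with the number $r_u$.
\end{itemize}
I will argue that $\mathcal{T}$ and $\mathcal{T}'$ are both $\cF_t$-measurable. For $s\le t$ and unchosen $u$, the marginal contribution $|\neighbors{u}\setminus\neighbors{\greedy{s-1}}|$ equals $|\neighbors{u}\cap(\neighbors{\greedy{t}}\setminus\neighbors{\greedy{s-1}})| + r_u$. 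So all greedy decisions up to step $t$, and the ranking defining $T'$, depend on $u$'s neighborhood outside $\neighbors{\greedy{t}}$ only through $r_u$.

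Next I will argue that, conditional on $\cF_t$, the residual sets $\neighbors{u}\setminus\neighbors{\greedy{t}}$ for $u\notin\greedy{t}$ are independent, and each is uniform among $r_u$-subsets of $R\setminus\neighbors{\greedy{t}}$. This should follow from the independence of the $\neighbors{u_i}$ in $\textsc{GenR}$ together with invariance of the model under permutations of $R$ that fix $\neighbors{\greedy{t}}$ pointwise, since such permutations preserve every event in $\cF_t$. Formally, I would condition on a fixed realization of $\greedy{t}$ and its neighborhoods, and check that the event factorizes over the unchosen nodes.

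With this in hand, for any $\cF_t$-measurable set $T$ of unchosen nodes, letting $m' = m - |\neighbors{\greedy{t}}|$, the same computation as in \cref{lem:smax expected coverage} gives
\begin{align*}
\E\bigl[|\neighbors{\greedy{t}\cup T}| \bigm| \cF_t\bigr] = |\neighbors{\greedy{t}}| + m'\Bigl(1-\prod_{u\in T}\bigl(1-\tfrac{r_u}{m'}\bigr)\Bigr).
\end{align*}
This quantity is nondecreasing in each $r_u$. Among all $(k-t)$-subsets of unchosen nodes, it is therefore maximized by taking the $k-t$ largest $r_u$, which is exactly $T = M_{k-t}(\greedy{t})$. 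It thus dominates the value for $T'$, and taking expectations over $\cF_t$ yields the inequality.

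The main obstacle is rigorously justifying the conditional uniformity and independence of the residual neighborhoods given $\cF_t$. The subtlety is that $\greedy{t}$ itself is random and is selected using the $r_u$. What I would try first is to sum over all possible histories (the sequence of chosen nodes and their neighborhoods) and over the intersection patterns $\neighbors{u}\cap\neighbors{\greedy{t}}$. I would then verify that, for each unchosen $u$, the conditional law of $\neighbors{u}$ is uniform over $d_u$-sets with the prescribed intersection and residual size. This holds because the greedy-consistency constraints depend on each such $\neighbors{u}$ only through its intersection with $\neighbors{\greedy{t}}$ and its residual size $r_u$.
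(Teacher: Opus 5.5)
Your proposal is correct. It shares the paper's skeleton: the hybrids $\mathcal{Y}^t$, the telescoping, the observation that both $\mathcal{Y}^t$ and $\mathcal{Y}^{t-1}$ contain $\greedy{t}$, and conditional uniformity of residual neighborhoods. Where it departs from the paper is the comparison step, which you carry out differently.

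\textbf{The paper's route.} It conditions on the coarser family $\mathscr{F}$, which fixes only $\greedy{t}=A$, $\neighbors{A}=T$ and the residual sizes $D_i$. It then applies a union-domination lemma: for independent, uniform random sets whose expected sizes satisfy $\E[|X_\ell|]\ge\E[|Y_\ell|]$, the unions are ordered in expectation.

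\textbf{Your route.} Your $\sigma$-field additionally fixes the neighborhoods of the chosen nodes and the intersections $\neighbors{u}\cap\neighbors{\greedy{t}}$. This makes both completions, $\cM_{k-t}(\greedy{t})$ and $\cM_{k-t+1}(\greedy{t-1})\setminus\{u'^{(1)}\}$, deterministic on each atom. Since the greedy-consistency constraint is a function of the conditioned quantities, the product structure of \textsc{GenR} directly gives mutual independence and uniformity of the residual sets. The comparison then reduces to the explicit formula $m'\bigl(1-\prod_{u}(1-r_u/m')\bigr)$ together with monotonicity in each $r_u$ and a rearrangement argument.

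\textbf{What your route buys.} Under the paper's $\mathscr{F}$, the nodes $u'^{(\ell+1)}$ are ranked by $|\neighbors{u}\setminus\neighbors{\greedy{t-1}}|$. This depends on $\neighbors{u}\cap T$ and on the individual neighborhoods of $a_1,\dots,a_{t-1}$, so it is not determined by $\mathscr{F}$. As a result, the fixed-node structural lemmas (uniformity and pairwise independence for given $u_x,u_y$) do not directly yield independence of the $Y_\ell$. In general the $Y_\ell$ are in fact dependent given $\mathscr{F}$, since they are attached to distinct, randomly selected nodes. In addition, the product formula in the union lemma needs mutual rather than pairwise independence. Your finer conditioning sidesteps both issues, at the cost of nothing beyond the factorization check you already outline.
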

\begin{proof} 
    We claim that $\E[|\neighbors{\mathcal{Y}^{t}}|] \geq \E[|\neighbors{\mathcal{Y}^{t-1}}|]$ for any $t \in [k]$.
    Assuming it holds we apply it repeatedly and we have that:
    \begin{align*}
        \E[|\neighbors{\greedy{k}}|] 
        = \E[|\neighbors{\mathcal{Y}^{k}}|]
        \geq \E[|\neighbors{\mathcal{Y}^{k-1}}|]
        \geq \dots 
        \geq \E[|\neighbors{\mathcal{Y}^{0}}|]
        = \E[|\neighbors{\smax{k}}|],
    \end{align*}
    which implies the statement.
    For the rest of the proof, we fix $t$ and focus on showing the claim.
    We denote $\cM_{k-t}(\greedy{t}) = \{u^{(1)}, \dots, u^{(k-t)}\}$ and $\cM_{k-t+1}(\greedy{t-1}) = \{u'^{(1)}, \dots, u'^{(k-t+1)}\}$.
    Let $X_l = \neighbors{u^{(\ell)}} \setminus \neighbors{\greedy{t}}$ and $Y_{\ell} = \neighbors{u'^{(\ell+1)}} \setminus \neighbors{\greedy{t}}$ for all $\ell \in [k-t]$.
    For the left-hand side we have that:
    \begin{align*}
        |\neighbors{\mathcal{Y}^t}| 
        &= |\neighbors{\greedy{t} \cup \cM_{k-t}(\greedy{t})}| \\
        &= |\neighbors{\greedy{t}} \cup \neighbors{\cM_{k-t}(\greedy{t})}| \\
        &= |\neighbors{\greedy{t}} \cup \neighbors{\{u^{(1)}, \dots, u^{(i)}\}}| \\
        &= |\neighbors{\greedy{t}}| + |\neighbors{\{u^{(1)}, \dots, u^{(k-t)}\}} \setminus \neighbors{\greedy{t}}| \\
        &= |\neighbors{\greedy{t}}| + |\bigcup_{\ell=1}^{k-t}{\neighbors{u^{(\ell)}} \setminus \neighbors{\greedy{t}}}| \\
        &= |\neighbors{\greedy{t}}| + |\bigcup_{\ell=1}^{k-t}{X_{\ell}}|, \tag{1}
    \end{align*}
    where the first line is by definition of $\mathcal{Y}^t$, the second line is due to $\neighbors{A \cup B} = \neighbors{A} \cup \neighbors{B}$ for any sets of nodes $A, B \subseteq L$, the third line is by definition of $\cM_{k-t}(\greedy{t})$, the fourth line is by splitting the cardinality of the union with the sum of of cardinalities of disjoint sets, the fifth line is by $\neighbors{A \cup B} = \neighbors{A} \cup \neighbors{B}$ and the sixth line is by definition of $X_{\ell}$.
    For the right-hand side we have that:
    \begin{align*}
        |\neighbors{\mathcal{Y}^{t-1}}| 
        &= |\neighbors{\greedy{t-1} \cup \cM_{k-t+1}(\greedy{t-1})}| \\
        &= |\neighbors{ (\greedy{t-1} \cup \{u'^{(1)}\} ) \cup (\cM_{k-t+1}(\greedy{t-1}) \setminus \{u'^{(1)}\} )}| \\
        &= |\neighbors{ \greedy{t} \cup (\cM_{k-t+1}(\greedy{t-1}) \setminus \{u'^{(1)}\} )}| \\
        &= |\neighbors{ \greedy{t} } \cup \neighbors{\cM_{k-t+1}(\greedy{t-1}) \setminus \{u'^{(1)}\}}| \\
        &= |\neighbors{ \greedy{t} } \cup \neighbors{\{u'^{(2)}, \dots, u'^{(k-t+1)}\}}| \\
        &= |\neighbors{ \greedy{t} }| + |\neighbors{\{u'^{(2)}, \dots, u'^{(k-t+1)}\}} \setminus \neighbors{ \greedy{t} }| \\
        &= |\neighbors{ \greedy{t} }| + |\bigcup_{\ell=1}^{k-t}{\neighbors{u'^{(\ell+1)}} \setminus \neighbors{ \greedy{t} }}| \\
        &= |\neighbors{\greedy{t}}| + |\bigcup_{\ell=1}^{i=k-t}{Y_{\ell}}|, \tag{2}
    \end{align*}
    where the first equality is by definition of $\mathcal{Y}^{t-1}$, the second follows by removing $u'^{(1)}$ from the right set and adding it to the left which does not affect the union, the third is due to $\greedy{t} = \greedy{t-1} \cup \{u'^{(1)}\}$ which follows by definition of $\greedy{t-1}$ and $u'^{(1)}$, the fourth is due to $\neighbors{A \cup B} = \neighbors{A} \cup \neighbors{B}$ for any sets of nodes $A, B \subseteq L$, the fifth is by definition of $\cM_{k-t+1}(\greedy{t-1})$, the sixth is by splitting the cardinality of the union with the sum of of cardinalities of disjoint sets and the seventh is by $\neighbors{A \cup B} = \neighbors{A} \cup \neighbors{B}$.

    Let $A = \{a_1, \dots, a_t\} \subseteq L$, $T \subseteq R$ and $D_i \in [d_i]$ for all $i$ such that $u_i \in L \setminus A$ arbitrarily. Also let $\mathscr{F}$ be the family of all edge sets $E$ such that $\greedy{t} = A, \neighbors{A} = T, |\neighbors{u_i} \setminus T| = D_i$ for all $i$ such that $u_i \in L \setminus A$.
    We now show that that:
    \begin{align*}
        \E\left[ |\bigcup_{\ell=1}^{k-t}{X_{\ell}}| \mid \cE \in \mathscr{F} \right] 
        \geq \E\left[ |\bigcup_{\ell=1}^{k-t}{Y_{\ell}}| \mid \cE \in \mathscr{F} \right], \tag{3}
    \end{align*}
    by using \cref{lem: uniform A1 A2 dominate B1 B2 then their union dominates as well}, which requires the following:
    \begin{enumerate}
        \item $X_1, \dots, X_{k-t}$ and $Y_1, \dots, Y_{k-t}$ are pairwise independent, conditionally on $\mathscr{F}$,
        \item $X_{\ell}, Y_{\ell}$ uniform in $R \setminus \neighbors{\greedy{t}}$ for all $\ell \in [k-t]$, conditionally on $\mathscr{F}$,
        \item $\E[|X_{\ell}| \mid \cE \in \mathscr{F}] \geq \E[|Y_{\ell}| \mid \cE \in \mathscr{F}]$.
    \end{enumerate}
    For the first point, by \cref{lem: marginal contributions are independent across remaining nodes} we have that $X_1, \dots, X_i$ are pairwise independent and $Y_1, \dots Y_i$ are pairwise independent.
    For the second point, by \cref{lem: marginal contributions are uniformly distributed}
    we have that $X_{\ell}$ and $Y_{\ell}$ are uniform in $R \setminus \neighbors{\greedy{t}}$ for all $\ell \in [k-t]$.
    For the third point, notice that:
    \begin{align*}
        \E[|X_{\ell}| \mid \cE \in \mathscr{F}]
        = \E[|\neighbors{u^{(\ell)}} \setminus \neighbors{\greedy{t}}| \mid \cE \in \mathscr{F}] 
        &\geq \E[|\neighbors{u'^{(\ell+1)}} \setminus \neighbors{\greedy{t}}| \mid \cE \in \mathscr{F}] \\
        &= \E[|Y_{\ell}| \mid \cE \in \mathscr{F}],
    \end{align*}
    where the inequality follows because  right after iteration $t$, $u^{(\ell)}$ 
    has higher marginal contribution to $\greedy{t}$ than $u'^{(\ell+1)}$, as the former is by definition the $\ell$-th highest marginal contribution after removing $|\neighbors{\greedy{t}}|$, while the latter is by definition the $\ell+1$-th highest marginal contribution after removing $|\neighbors{\greedy{t-1}}|$.
    Overall we have that:
    \begin{align*}
        \E[|\neighbors{\mathcal{Y}^{t}}| \mid \cE \in \mathscr{F}] 
        &= \E[|\neighbors{\greedy{t}}| \mid \cE \in \mathscr{F}] + \E[|\bigcup_{\ell=1}^{k-t}{X_{\ell}}| \mid \cE \in \mathscr{F}] \\
        &= |T| + \E[|\bigcup_{\ell=1}^{k-t}{X_{\ell}}| \mid \cE \in \mathscr{F}] \\
        &\geq |T| + \E[|\bigcup_{\ell=1}^{k-t}{Y_{\ell}}| \mid \cE \in \mathscr{F}] \\
        &= \E[|\neighbors{\greedy{t}}| \mid \cE \in \mathscr{F}] + \E[|\bigcup_{\ell=1}^{k-t}{Y_{\ell}}| \mid \cE \in \mathscr{F}] \\
        &= \E[|\neighbors{\mathcal{Y}^{t-1}}| \mid \cE \in \mathscr{F}],
    \end{align*}
    where the first line is due to Equation (1), the second line is by definition of $\mathscr{F}$, the third line is due to Equation (3), the fourth line is by definition of $\mathscr{F}$ and the fifth line is due to Equation (2).
    Since the claim holds for any $A, T, (D_i)_{i: u_i \in L \setminus A}$, it also holds unconditionally.
\end{proof}

\subsubsection{Structural Lemmas}
We show some structural lemmas that help us prove \cref{lem: greedy lower bound by smax}. In this paragraph, when the bipartite graph is clear from context, we use $\greedydet{t}(E)$ to indicate the output of $\greedyalgorithm$ when it is run on edge set $E$.

\begin{lemma}[Uniform marginal contributions] \label{lem: marginal contributions are uniformly distributed}
    Let $m \geq d_1 \geq \dots \geq d_n \geq 1$. Draw graph $\cB$ from random model $\textsc{GenR}( n, m, d_1, \dots, d_n )$ with edge set $\cE$.
    Also define the following arbitrarily:
    \begin{enumerate}
        \item $A = \{a_1, \dots, a_t\} \subseteq L$,
        \item $T \subseteq R$ such that $\min_i{d_i} \leq |T| \leq \min\{\sum_{i=1}^{t}d_i, m\}$,
        \item $D_i \in [d_i]$ for all $i$ such that $u_i \in L \setminus A$,
        \item $u_x \in L \setminus A$.
    \end{enumerate}
   Let $\mathscr{F} = \{E \subseteq L \times R: \greedydet{t}(E) = A, \neighborsin{E}{A} = T, |\neighborsin{E}{u_i} \setminus T| = D_i \ \forall i \text{ s.t. } u_i \in L \setminus A\}$.
    Then we have that:
    \begin{align*}
    \Pr\left[\neighborsin{\cE}{u_x} \setminus T = X \mid \cE \in \mathscr{F} \right]
    = \Pr\left[\neighborsin{\cE}{u_x} \setminus T = X' \mid \cE \in \mathscr{F} \right],
    \end{align*}
    for any $X, X' \subseteq R \setminus T$ such that $|X| = |X'| = D_x$.
\end{lemma}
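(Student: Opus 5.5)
The plan is a symmetry (bijection) argument on edge sets. Fix $X, X' \subseteq R \setminus T$ with $|X| = |X'| = D_x$. Choose a permutation $\pi$ of $R$ that fixes every element of $T$, maps $X$ onto $X'$ (and hence $X'$ onto $X$ if we take it as the appropriate involution), and fixes everything else. Define a map $\Phi$ on edge sets that replaces the neighborhood of $u_x$ only: $\neighborsin{\Phi(E)}{u_x} = (\neighborsin{E}{u_x} \cap T) \cup \pi(\neighborsin{E}{u_x} \setminus T)$, and $\neighborsin{\Phi(E)}{u_i} = \neighborsin{E}{u_i}$ for all $i \neq x$. Since $\pi$ is a bijection on $R$, $\Phi$ preserves the degree $d_x$ of $u_x$ and is an involution, hence a bijection on the set of all valid edge sets. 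Under $\textsc{GenR}$ every valid edge set has the same probability $\prod_i \binom{m}{d_i}^{-1}$, so $\Phi$ is measure preserving.

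The key step is to show that $\Phi$ maps $\mathscr{F}$ onto $\mathscr{F}$, and that $\neighborsin{E}{u_x}\setminus T = X$ if and only if $\neighborsin{\Phi(E)}{u_x}\setminus T = X'$. The second claim is immediate from the definition of $\pi$. For the first, take $E \in \mathscr{F}$. Since $u_x \notin A$, the neighborhoods of the nodes of $A$ are untouched, so $\neighborsin{\Phi(E)}{A} = T$. For each $u_i \notin A$, the set $\neighborsin{}{u_i}\setminus T$ is unchanged when $i \neq x$. When $i = x$, it is replaced by its image under $\pi$, which has the same size $D_x$.

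The main obstacle is verifying that $\greedydet{t}(\Phi(E)) = A$, i.e.\ that greedy makes exactly the same choices, in the same order. I would argue by induction over the first $t$ iterations that the greedy prefixes coincide, $\greedydet{s}(\Phi(E)) = \greedydet{s}(E) =: G_s \subseteq A$. Every marginal contribution at step $s$ is a count $|\neighbors{u}\setminus \neighbors{G_s}|$, and these counts are unchanged for all $u \neq u_x$. For $u_x$, we have $\neighbors{G_s} \subseteq T$, and $\pi$ fixes $T$ pointwise and maps $R\setminus T$ to itself. Hence $\pi$ maps $R \setminus \neighbors{G_s}$ onto itself, and therefore $|\neighborsin{\Phi(E)}{u_x}\setminus \neighbors{G_s}| = |\neighborsin{E}{u_x}\setminus\neighbors{G_s}|$. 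So all marginal contributions agree. Since ties are broken by the same fixed ordering, greedy selects the same node, which closes the induction.

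Applying the same argument to $\Phi^{-1} = \Phi$ gives $\Phi(\mathscr{F}) = \mathscr{F}$. Consequently, $\Pr[\cE \in \mathscr{F},\ \neighborsin{\cE}{u_x}\setminus T = X] = \Pr[\cE \in \mathscr{F},\ \neighborsin{\cE}{u_x}\setminus T = X']$. Dividing by $\Pr[\cE\in\mathscr{F}]$ yields the statement. If $\mathscr{F}$ has probability zero, the conditional statement is vacuous; the constraints on $|T|$ in the hypotheses are presumably there to avoid this case.
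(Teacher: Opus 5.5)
Your argument is correct and is essentially the paper's proof: a probability-preserving bijection on degree-respecting edge sets that alters only $u_x$'s edges into $R\setminus T$, combined with an induction showing greedy's first $t$ choices are unchanged because every partial greedy solution $G_s\subseteq A$ satisfies $N(G_s)\subseteq T$. Your involution $\pi$ is a global packaging of the paper's direct replacement of $X$ by $X'$, which the paper splits into the lemma that degree-respecting edge sets are equiprobable and the lemma that greedy is not affected by irrelevant edges.
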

\begin{proof}
    Let $E'_1 = \{u_x\} \times X$ and $E'_2 = \{u_x\} \times X'$, then we have:
    \begin{align*}
        \Pr[\neighborsin{\cE}{u_x} \setminus T = X \mid \cE \in \mathscr{F}]
        &= \Pr[\cE_{\{u_x\}} = E'_1 \mid \cE \in \mathscr{F}] \\
        &= \Pr[\cE_{\{u_x\}} = E'_2 \mid \cE \in \mathscr{F}] \\
        &= \Pr[\neighborsin{\cE}{u_x} \setminus T = X' \mid \cE \in \mathscr{F}],
    \end{align*}
    where the first equality is due to $\cE_{\{u_x\}} = \{(u_x, v) \in \cE: v \in R \setminus T\}$ and $E'_1 = \{u_x\} \times X$ so $\cE_{\{x\}} = E'_1$ is equivalent to $\neighborsin{\cE}{x} \setminus T = X$, the second equality is due to \cref{lem: greedy is not affected by irrelevant edges} and the third equality is due to $\cE_{\{u_x\}} = \{(u_x, v) \in \cE: v \in R \setminus T\}$ and $E'_2 = \{u_x\} \times X'$ so $\cE_{\{u_x\}} = E'_2$ is equivalent to $\neighborsin{\cE}{u_x} \setminus T = X'$.
\end{proof}
\begin{lemma}[Independent marginal contributions] \label{lem: marginal contributions are independent across remaining nodes}
    Let $m \geq d_1 \geq \dots \geq d_n \geq 1$. Draw graph $\cB$ from random model $\textsc{GenR}(n, m, d_1, \dots, d_n)$ with edge set $\cE$.
    Also define the following arbitrarily:
    \begin{enumerate}
        \item $A = \{a_1, \dots, a_t\} \subseteq L$,
        \item $T \subseteq R$ such that $\min_i{d_i} \leq |T| \leq \min\{\sum_{i=1}^{t}d_i, m\}$,
        \item $D_i \in [d_i]$ for all $i$ such that $u_i \in L \setminus A$,
        \item $u_x, u_y \in L \setminus A$.
    \end{enumerate}
    Let $\mathscr{F} = \{E \subseteq L \times R: \greedydet{t}(E) = A, \neighborsin{E}{A} = T, |\neighborsin{E}{u_i} \setminus T| = D_i \ \forall i \text{ s.t. } u_i \in L \setminus A\}$.
    Then we have that:
    \begin{align*}
        \Pr[\neighborsin{\cE}{u_x} \setminus T = X, \neighborsin{\cE}{u_y} \setminus T = Y \mid \cE \in \mathscr{F}] 
        = &\Pr[\neighborsin{\cE}{u_x} \setminus T = X \mid \cE \in \mathscr{F}] \\ 
        &\cdot \Pr[\neighborsin{\cE}{u_y} \setminus T = Y \mid \cE \in \mathscr{F}],
    \end{align*}
    for any $X, Y \subseteq R \setminus T$ and $|X| = D_x, |Y| = D_y$.
\end{lemma}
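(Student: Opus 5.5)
We prove the factorization by a counting argument on edge sets, using the same device that underlies \cref{lem: marginal contributions are uniformly distributed}, namely \cref{lem: greedy is not affected by irrelevant edges}. Since $\cE$ is uniform over all edge sets in which each $u_i$ has exactly $d_i$ neighbors, the conditional distribution given $\cE \in \mathscr{F}$ is uniform over $\mathscr{F}$. Every conditional probability is therefore a ratio of cardinalities of subfamilies of $\mathscr{F}$. The plan is to show that $\mathscr{F}$ has a product structure in the coordinates $(\neighborsin{E}{u_i}\setminus T)_{u_i \in L\setminus A}$.

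Decompose each $E \in \mathscr{F}$ into two parts. The first part is the ``inside'' part $E^{\mathrm{in}} = E \cap \big((A\times R) \cup ((L\setminus A)\times T)\big)$. The second part is the ``outside'' sets $Z_i = \neighborsin{E}{u_i}\setminus T$ for $u_i\in L\setminus A$, each a $D_i$-subset of $R\setminus T$.

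\textbf{Key claim.} Membership in $\mathscr{F}$ depends only on $E^{\mathrm{in}}$, as long as each $|Z_i| = D_i$. In other words, if $E\in\mathscr{F}$ and we replace the outside sets $(Z_i)$ by any other tuple $(Z'_i)$ of $D_i$-subsets of $R\setminus T$, the resulting edge set is still in $\mathscr{F}$. This is the content of \cref{lem: greedy is not affected by irrelevant edges}, as used in the uniformity lemma. The reason is that greedy's first $t$ choices depend only on marginal contributions to the current partial solution. The nodes of $A$ have all their neighbors in $T$. For a non-selected node $u_i$, its marginal contribution to any $\greedydet{s}$ with $s<t$ depends on $\neighborsin{E}{u_i}\cap T$ and on $|Z_i|=D_i$, but not on which elements $Z_i$ contains, because $Z_i$ is disjoint from $N(\greedydet{s})\subseteq T$. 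Ties are broken by index, so they are unaffected. The conditions $\neighborsin{E}{A}=T$ and $|\neighborsin{E}{u_i}\setminus T|=D_i$ are clearly preserved.

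With the claim in hand, $\mathscr{F}$ is in bijection with $\mathcal{I}\times\prod_{u_i\in L\setminus A}\binom{R\setminus T}{D_i}$, where $\mathcal{I}$ is the set of admissible inside parts. Here we use that for each $u_i$, its inside part $\neighborsin{E}{u_i}\cap T$ has size $d_i-D_i$, so the total degree $d_i$ is automatically correct. Counting then gives three quantities:
\begin{align*}
\Pr[Z_x=X, Z_y=Y\mid \mathscr{F}] &= \frac{|\mathcal{I}|\prod_{i\ne x,y}\binom{m-|T|}{D_i}}{|\mathcal{I}|\prod_i\binom{m-|T|}{D_i}} = \binom{m-|T|}{D_x}^{-1}\binom{m-|T|}{D_y}^{-1},\\
\Pr[Z_x=X\mid\mathscr{F}] &= \binom{m-|T|}{D_x}^{-1},\\
\Pr[Z_y=Y\mid\mathscr{F}] &= \binom{m-|T|}{D_y}^{-1}.
\end{align*}
Multiplying the last two gives the first, which is the claimed factorization. (If $\mathscr{F}$ is empty, the statement is vacuous.)

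The main obstacle is the key claim, specifically verifying that greedy's trajectory for the first $t$ steps is invariant under swapping outside neighborhoods. The subtle points are that tie-breaking is by fixed index, and that a node outside $A$ with a large outside set could in principle be chosen. The latter is ruled out because its marginal contributions at each step $s<t$ are unchanged: $|Z_i|$ is fixed, and $Z_i$ never intersects $N(\greedydet{s})$. I would argue this by induction on $s$. If \cref{lem: greedy is not affected by irrelevant edges} is stated for a single node, I would apply it repeatedly, once to $u_x$ and once to $u_y$, to move between any two outside configurations.
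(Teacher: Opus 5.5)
Your proof is correct and uses the same mechanism as the paper. The conditional law given $\cE\in\mathscr{F}$ is uniform over the edge sets with the prescribed degrees, and by \cref{lem: greedy is not affected by irrelevant edges} the outside neighbourhoods $N_{\cE}(u_i)\setminus T$ can be replaced by any sets of the same sizes without leaving $\mathscr{F}$. The paper feeds two symmetry properties (uniform marginal, and a joint probability that does not change when $X$ changes with $Y$ fixed) into the abstract criterion \cref{lem: sufficient condition for independence}; you instead read independence directly off the product structure of $\mathscr{F}$, which also gives the explicit uniform marginals and mutual independence of all outside neighbourhoods.
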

\begin{proof}
    To obtain the statement we apply \cref{lem: sufficient condition for independence} on random sets $\neighborsin{\cE}{u_x} \setminus T, \neighborsin{\cE}{u_y} \setminus T$, so it suffices to show the following properties:
    \begin{enumerate}
        \item $\Pr[\neighborsin{\cE}{u_x} \setminus T = X \mid \cE \in \mathscr{F}] = \Pr[\neighborsin{\cE}{u_x} \setminus T = X' \mid \cE \in \mathscr{F}]$ for any $X, X' \subseteq R \setminus T$ such that $|X| = |X'| = D_x$,
        \item $\Pr[\neighborsin{\cE}{u_x} \setminus T = X, \neighborsin{\cE}{u_y} \setminus = Y \mid \cE \in \mathscr{F}] = \Pr[\neighborsin{\cE}{u_x} \setminus T = X', \neighborsin{\cE}{u_y} \setminus T = Y \mid \cE \in \mathscr{F}]$, for any $X, X', Y \subseteq R \setminus T$ such that $|X| = |X'| = D_x$ and $|Y| = D_y$.
    \end{enumerate}
    Notice that the first property follows directly from \cref{lem: marginal contributions are uniformly distributed}.
    We proceed to show the second property.
    Let $E'_1 = \{u_x\} \times X \cup \{u_y\} \times Y$ and $E'_2 = \{u_x\} \times X' \cup \{u_y\} \times Y$, then we have:
    \begin{align*}
        \Pr[\neighborsin{\cE}{u_x} \setminus T = X, \neighborsin{\cE}{u_y} \setminus T = Y \mid \cE \in \mathscr{F}]
        &= \Pr[\cE_{\{u_x, u_y\}} = E'_1 \mid \cE \in \mathscr{F}] \\
        &= \Pr[\cE_{\{u_x, u_y\}} = E'_2 \mid \cE \in \mathscr{F}] \\
        &= \Pr[\neighborsin{\cE}{u_x} \setminus T = X', \neighborsin{\cE}{u_y} \setminus T = Y \mid \cE \in \mathscr{F}],
    \end{align*}
    where the first equality is due to $\cE_{\{u_x, u_y\}} = \{(u, v) \in \cE: u \in \{u_x, u_y\}, v \in R \setminus T\}$ and $E'_1 = \{u_x\} \times X \cup \{u_y\} \times Y$ so $\cE_{\{u_x, u_y\}} = E'_1$ is equivalent to $\neighborsin{\cE}{u_x} \setminus T = X, \neighborsin{\cE}{u_y} \setminus T = Y$, the second equality is due to \cref{lem: greedy is not affected by irrelevant edges} and the third equality is due to $\cE_{\{u_x, u_y\}} = \{(u, v) \in \cE: u \in \{u_x, u_y\}, v \in R \setminus T\}$ and $E'_2 = \{u_x\} \times X' \cup \{u_y\} \times Y$ so $\cE_{\{u_x, u_y\}} = E'_2$ is equivalent to $\neighborsin{\cE}{u_x} \setminus T = X', \cup \{u_y\} \times Y$.
\end{proof}
\begin{lemma} \label{lem: irrelevant edges are uniform}
    Let $m \geq d_1 \geq \dots \geq d_n \geq 1$ and $\cB \sim \textsc{GenR}(n, m, d_1, \dots, d_n)$ with edge set $\cE$.
    Also define the following arbitrarily:
    \begin{enumerate}
        \item $A = \{a_1, \dots, a_t\} \subseteq L$,
        \item $T \subseteq R$ such that $\min_i{d_i} \leq |T| \leq \min\{\sum_{i=1}^{t}d_i, m\}$,
        \item $D_i \in [d_i]$ for all $i$ such that $u_i \in L \setminus A$,
        \item $Z \subseteq L \setminus A$.
    \end{enumerate}
    Let $\mathscr{F} = \{E \subseteq L \times R: \greedydet{t}(E) = A, \neighborsin{E}{A} = T, |\neighborsin{E}{u_i} \setminus T| = D_i \ \forall i \text{ s.t. } u_i \in L \setminus A\}$.
    Also let $\cE_{Z} = \{(u, v) \in \cE: u \in Z, v \in R \setminus T\}$.
    Then we have that:
    \begin{align*}
        \Pr\left[\cE_{Z} = E'_1 \mid \cE \in \mathscr{F} \right]
        = \Pr\left[\cE_{Z} = E'_2 \mid \cE \in \mathscr{F} \right],
    \end{align*}
    for any $E'_1, E'_2 \subseteq Z \times (R \setminus T)$ such that $|\neighborsin{E'_1}{u_i}| = |\neighborsin{E'_2}{u_i}| = D_i$ for all $i$ such that $u_i \in Z$.     
\end{lemma}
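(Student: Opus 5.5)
The plan is a measure-preserving bijection on edge sets. Since $\cE$ is uniform over all edge sets in which each $u_i$ has exactly $d_i$ neighbors, every such edge set has the same probability. So it suffices to show that the number of $E \in \mathscr{F}$ with $E_Z = E'_1$ equals the number of $E \in \mathscr{F}$ with $E_Z = E'_2$, where $E_Z = \{(u,v)\in E : u \in Z, v \in R\setminus T\}$.

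Define a map $\Phi$ that takes an edge set $E$ with $E_Z = E'_1$ and replaces $E_Z$ by $E'_2$, leaving all other edges unchanged. These unchanged edges are those of nodes outside $Z$, together with the edges of nodes in $Z$ into $T$. Because $|\neighborsin{E'_1}{u_i}| = |\neighborsin{E'_2}{u_i}| = D_i$ for each $u_i \in Z$, every left node keeps its degree $d_i$, so $\Phi(E)$ is again in the support of $\textsc{GenR}$. The map is clearly injective, and its inverse is the symmetric swap from $E'_2$ back to $E'_1$. The whole statement therefore reduces to showing that $\Phi$ maps $\mathscr{F}$ into $\mathscr{F}$; by symmetry the inverse does too.

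The constraints $\neighborsin{E}{A}=T$ and $|\neighborsin{E}{u_i}\setminus T| = D_i$ are immediate. Edges of nodes in $A$ are untouched because $Z \cap A = \emptyset$. For $u_i \in Z$, the part of $N(u_i)$ outside $T$ changes from one $D_i$-set to another, and for $u_i \notin Z$ nothing changes.

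The main obstacle is the constraint $\greedydet{t}(\Phi(E)) = A$. This is presumably the content of the lemma \cref{lem: greedy is not affected by irrelevant edges} cited earlier, but I would prove it directly by induction on $s \le t$, showing that greedy on $\Phi(E)$ picks the same node at step $s$ as greedy on $E$. Write $G_{s-1} \subseteq A$ for the common prefix of the greedy run. For any node $u$, split its marginal contribution as
\[
|N(u)\setminus N(G_{s-1})| = |(N(u)\cap T)\setminus N(G_{s-1})| + |N(u)\setminus T|.
\]
This split uses $N(G_{s-1}) \subseteq N(A) = T$. The first term depends only on edges into $T$, which $\Phi$ leaves unchanged. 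The second term equals $D_u$ for $u \notin A$ and is unchanged. For $u \in A$ it is $0$ in both graphs, since $N(u)\subseteq T$ and the edges of $A$ are untouched.

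Hence every marginal contribution is identical under $E$ and $\Phi(E)$. With the same deterministic tie-breaking order, greedy selects the same node at every step, so $\greedydet{t}(\Phi(E)) = A$.

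The point that needs care is exactly this use of $N(G_{s-1}) \subseteq T$. It holds because the greedy prefixes lie inside $A$, which is guaranteed inductively. Once that is in place, $\Phi$ is a bijection between the two subfamilies of $\mathscr{F}$, they have equal size, and the two conditional probabilities are equal.
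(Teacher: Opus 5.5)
Your proposal is correct and takes essentially the same route as the paper: it uses the uniform measure on degree-respecting edge sets (\cref{lem: all edge sets that respect degrees have the same probability}), reduces the claim to counting, and swaps $E'_1 \leftrightarrow E'_2$ on the edges into $R \setminus T$, while the paper routes the same counting through the larger family $\mathscr{J}$ and Bayes' rule. The only difference is that you prove greedy invariance inline, by showing every marginal contribution is preserved because $N(G_{s-1}) \subseteq T$, whereas the paper invokes \cref{lem: greedy is not affected by irrelevant edges} for that step.
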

\begin{proof}
    Let $\mathscr{J} = \{E \subseteq L \times R: \neighborsin{E}{A} = T, |\neighborsin{E}{u_i} \setminus T| = D_i \ \forall i \text{ s.t. } u_i \in L \setminus A\} \supseteq \mathscr{F}$.
    Also fix an $E' \in \mathscr{J}$.
    Then we have that:
    \setcounter{equation}{0}
    \begin{align}
        \Pr[&\cE_{Z} = E'_1 \mid \cE \in \mathscr{F}] = \nonumber \\
        & = \Pr\left[\cE_{Z} = E'_1 \mid \greedy{t} = A, \cE \in \mathscr{J} \right] \\
        &= \frac{1}{\Pr\left[\greedy{t} = A \mid \cE \in \mathscr{J} \right]} \cdot \Pr\left[\cE_{Z} = E'_1, \greedy{t} = A \mid \cE \in \mathscr{J} \right] \\
        &= \frac{1}{\Pr\left[\greedy{t} = A \mid \cE \in \mathscr{F} \right]} \cdot 
        \sum_{E \in \mathscr{J}} \Pr\left[\cE_{Z} = E'_1, \greedy{t} = A, \cE = E \mid \cE \in \mathscr{J} \right] \\
        &= \sum_{E \in \mathscr{J}} \frac{\Pr\left[\cE = E \mid \cE \in \mathscr{J}\right]}{\Pr\left[\greedy{t} = A \mid \cE \in \mathscr{J} \right]} \cdot \Pr\left[\cE_{Z} = E'_1, \greedy{t} = A \mid \cE = E \right] \\
        &= \sum_{E \in \mathscr{J}} \frac{\Pr\left[\cE = E\right]}{\Pr\left[\greedy{t} = A, \cE \in \mathscr{J}\right]} \cdot \mathbbm{1}\{E_{Z} = E'_1, \greedydet{t}(E) = A\} \\
        &= \frac{\Pr\left[\cE = E'\right]}{\Pr\left[\greedy{t} = A, \cE \in \mathscr{J} \right]} \cdot \sum_{E \in \mathscr{J}: E_{Z} = E'_1} \mathbbm{1}\{ \greedydet{t}(E) = A\}  \\
        &= \frac{\Pr\left[\cE = E'\right]}{\Pr\left[\greedy{t} = A, \cE \in \mathscr{J} \right]} \cdot \sum_{E \in \mathscr{J}: E_Z = E'_2} \mathbbm{1}\{\greedydet{t}(E) = A\}  \\
        &= \Pr\left[\cE_{Z} = E'_2 \mid \cE \in \mathscr{F} \right],
    \end{align}
    where (1) is by definition of $\mathscr{J}$, (2) is by Bayes' rule, (3) is by law of total probability, (4) is by Bayes' rule and including the constant in the sum, (5) is due to edges being fixed, therefore the runtime of $\greedyalgorithm$ is deterministic and using Bayes' rule on the fraction, (6) is by \cref{lem: all edge sets that respect degrees have the same probability} which implies that the fraction is constant and can be moved out of the sum, (7) is by \cref{lem: greedy is not affected by irrelevant edges} and (8) is by repeating the same steps in reverse.
    To see why (7) is correct, let edge set $E_1 \in \mathscr{J}$ such that $\neighborsin{E_1}{u_i} \setminus T = \neighborsin{E'_1}{u_i}$ for all $u_i \in Z$. Also let $E_2$ be an edge set such that: 
    \begin{align*}
        \neighborsin{E_2}{u_i} = 
        \begin{cases}
            \left(\neighborsin{E_1}{u_i} \setminus \neighborsin{E'_1}{u_i} \right) \cup \neighborsin{E'_2}{u_i} &\text{if } u_i \in Z \\
            \neighborsin{E_1}{u_i} &\text{otherwise }
        \end{cases},
    \end{align*}    
    and notice that $E_2 \in \mathscr{J}$ since $|\neighborsin{E'_1}{u_i}| = |\neighborsin{E'_2}{u_i}| = D_i$ for all $i$ such that $u_i \in Z$.
    Then by \cref{lem: greedy is not affected by irrelevant edges} we have that $\greedydet{t}(E_1) = A$ if and only if $\greedydet{t}(E_2) = A$, so summing on both sides we have:
    \begin{align*}
        \sum_{E_1 \in \mathscr{J}: \neighborsin{E_1}{u} \setminus T = N_1} \mathbbm{1}\{\greedydet{t}(E_1) = A\} 
        = \sum_{E_2 \in \mathscr{J}: \neighborsin{E_2}{u} \setminus T = N_2} \mathbbm{1}\{\greedydet{t}(E_2) = A\},
    \end{align*}
    which suffices by using $E$ instead of $E_1, E_2$ on each sum.
\end{proof}
\begin{lemma} \label{lem: greedy is not affected by irrelevant edges}
    Let $m \geq d_1 \geq \dots \geq d_n \geq 1$. Define the following arbitrarily:
    \begin{enumerate}
        \item $A = \{a_1, \ldots a_k\} \subseteq L$,
        \item $T \subseteq R$,
        \item $E' \subseteq (A \times R) \cup (L \times T)$ such that $\neighborsin{E'}{A} = T$, $|\neighborsin{E'}{u_i}| = d_i$ for all $i$ such that $u_i \in A$ and $|\neighborsin{E'}{u_i}| \leq d_i$ for all $i \in [n]$,
        \item $E_1'', E_2'' \subseteq (L \setminus A) \times (R \setminus T)$ such that $|\neighborsin{E''_1}{u_i}| = |\neighborsin{E''_2}{u_i}| = d_i - |\neighborsin{E'}{u_i} \cap T|$ for all $i$ such that $u_i \in L \setminus A$.
    \end{enumerate}
    Let $B_1 = (L, R, E_1)$ and $B_2 = (L, R, E_2)$ be two bipartite graphs such that 
    $E_1 = E' \cup E''_1$ and $E_2 = E' \cup E''_2$.
    Then, for all $t \in [k]$, $\greedydet{t}(E_1) = \{a_1, \ldots, a_t\}$ if and only if $\greedydet{t}(E_2) = \{a_1, \ldots, a_t\}$.
\end{lemma}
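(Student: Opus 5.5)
We argue by induction on $t$, proving the two implications simultaneously. By symmetry between $E''_1$ and $E''_2$, it suffices to show that if $\greedydet{t}(E_1) = \{a_1,\dots,a_t\}$ (in this order of selection), then $\greedydet{t}(E_2) = \{a_1,\dots,a_t\}$. We read the hypothesis as saying that $a_s$ is the node greedy adds at iteration $s$ for every $s \le t$, which is the form in which the lemma is used. The key observation is that every marginal contribution greedy compares during its first $t$ iterations is the same in $E_1$ and $E_2$.

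First we record the structure of the two graphs. For $u \in A$, all edges of $u$ lie in $E'$ (since $E''_j$ only touches $L\setminus A$), so $\neighborsin{E_1}{u} = \neighborsin{E_2}{u} = \neighborsin{E'}{u}$. In particular, for every $S \subseteq A$ we have $\neighborsin{E_1}{S} = \neighborsin{E_2}{S} \subseteq T$. For $u_i \in L\setminus A$, the neighborhood splits as $\neighborsin{E_j}{u_i} = (\neighborsin{E'}{u_i}\cap T) \cup \neighborsin{E''_j}{u_i}$. The first part is common to both graphs and lies inside $T$. The second part lies in $R\setminus T$ and has the same size $d_i - |\neighborsin{E'}{u_i}\cap T|$ in both graphs.

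Now fix $s \le t$ and the current set $S_{s-1} = \{a_1,\dots,a_{s-1}\} \subseteq A$, which is the greedy solution in $E_1$, and in $E_2$ by the inductive hypothesis. We compare marginal contributions. For $u \in A\setminus S_{s-1}$, $|\neighborsin{E_j}{u}\setminus \neighborsin{E_j}{S_{s-1}}|$ is identical for $j=1,2$ since both sets come from $E'$. For $u_i \notin A$, using $\neighborsin{}{S_{s-1}} \subseteq T$, the marginal contribution equals
\begin{align*}
|(\neighborsin{E'}{u_i}\cap T)\setminus \neighborsin{E'}{S_{s-1}}| + |\neighborsin{E''_j}{u_i}|,
\end{align*}
and both terms are independent of $j$. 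Hence the vector of marginal contributions of all nodes in $L\setminus S_{s-1}$ is the same in the two graphs. Greedy breaks ties by the same consistent ordering in both, so it selects the same node, namely $a_s$, in $E_2$. This completes the induction.

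The main subtlety is the step $\neighborsin{}{S_{s-1}} \subseteq T$. It is what makes the $R\setminus T$ part of every outside node fully "fresh", so that only its size matters. This holds because $S_{s-1} \subseteq A$ and $\neighborsin{E'}{A} = T$, and it is exactly why the lemma conditions on $\neighborsin{}{A}=T$. A second point to handle is the degree bookkeeping: the hypotheses on $E''_j$ guarantee that each $u_i$ has degree $d_i$ in both graphs, so the comparison above is legitimate.
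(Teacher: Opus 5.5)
Your proof is correct and follows essentially the paper's argument: induction on $t$, using that $N(S)\subseteq T$ for $S\subseteq A$ and that the $R\setminus T$ parts of outside nodes have equal sizes in $E_1$ and $E_2$ to show that marginal contributions agree, with identical tie-breaking. The only difference is cosmetic. You show the entire vector of marginal contributions coincides and conclude directly, whereas the paper argues by contradiction by comparing only $a_{t+1}$ with a hypothetical alternative choice $a$. Your ordered reading of the hypothesis matches how the paper uses it, and in fact your argument only needs $S_{s-1}\subseteq A$, which also holds under the set reading because greedy's solutions are nested.
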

\begin{proof}
    We show by induction that for all $t \in [k]$, if $\greedydet{t}(E_1) = \{a_1, \dots, a_t\}$, then $\greedydet{t}(E_2) = \{a_1, \dots, a_t\}$. The other direction follows by symmetry.
    For the base case, $t = 1$, we have that by construction of $E_1, E_2$, $|\neighborsin{E_1}{u_i}| = |\neighborsin{E_2}{u_i}|$ for all $i \in [n]$, therefore the algorithm works the same in both graphs and $\greedydet{1}(E_1) = \greedydet{1}(E_2) = \{a_1\}$.
    For the inductive step, assume towards a contradiction that $\greedydet{t+1}(E_2) = \{a_1, \dots, a_t, a\}$ where $a \neq a_{t+1}$.
    We consider the two cases in which $a$ would be chosen instead of $a_{t+1}$:
    \begin{enumerate}
        \item The marginal contribution of $a$ was higher than that of $a_{t+1}$, i.e. $|\neighborsin{E_2}{a} \setminus \neighborsin{E_2}{\greedydet{t}(E_2)}| > |\neighborsin{E_2}{a_{t+1}} \setminus \neighborsin{E_2}{\greedydet{t}(E_2)}|$, or
        \item The marginal contribution of $a$ was equal to that of $a_{t+1}$, but $a$ was chosen due to the tie-breaking rule.
    \end{enumerate}
    We proceed to show that the marginal contribution of $a, a_{t+1}$ on iteration $t+1$ is the same regardless of which graph we run $\greedyalgorithm$ on. The marginal contribution of $a$ on $B_2$ is:
    \begin{align*}
        |\neighborsin{E_2}{a} \setminus \neighborsin{E_2}{\greedydet{t}(E_2)}|
        &= |\neighborsin{E_2}{a} \setminus \neighborsin{E_2}{\greedydet{t}(E_1)}| \\
        &= |\neighborsin{E_2}{a} \setminus \neighborsin{E_1}{\greedydet{t}(E_1)}| \\ 
        &= |\neighborsin{E_2}{a}| - |\neighborsin{E_2}{a} \cap \neighborsin{E_1}{\greedydet{t}(E_1)}| \\
        &= |\neighborsin{E_2}{a}| - |\neighborsin{E_1}{a} \cap \neighborsin{E_1}{\greedydet{t}(E_1)}| \\
        &= |\neighborsin{E_1}{a}| - |\neighborsin{E_1}{a} \cap \neighborsin{E_1}{\greedydet{t}(E_1)}| \\ 
        &= |\neighborsin{E_1}{a} \setminus \neighborsin{E_1}{\greedydet{t}(E_1)}|, \tag{1} \label{eq: marginal contribution of a is the same in B_1 and B_2}
    \end{align*}
    where the first equality is by inductive hypothesis $\greedydet{t}(E_1) = \greedydet{t}(E_2) = \{a_1, \dots, a_t\}$, the second is due to $\neighborsin{E_1}{u_i} = \neighborsin{E'}{u_i} = \neighborsin{E_2}{u_i}$ for all $i$ such that $u_i \in A$ therefore $\neighborsin{E_1}{\greedydet{t}(E_1)} = \neighborsin{E_2}{\greedydet{t}(E_1)}$ because $\greedydet{t}(E_1) \subseteq A$, the third is by refactoring, 
    the fourth is due to $\neighborsin{E_2}{a} \cap T = \{v \in T: (a,v) \in E_2\} = \{v \in T: (a, v) \in E'\} = \{v \in T: (a,v) \in E_1\} = \neighborsin{E_1}{a} \cap T$ and $\neighborsin{E_1}{\greedydet{t}(E_1)} \subseteq T$, the fifth is due to $|\neighborsin{E_1}{u_i}| = |\neighborsin{E_2}{u_i}|$ for all $i \in [n]$ by construction and the sixth by refactoring.
    The marginal contribution of $a_{t+1}$ on $B_2$ is:
    \begin{align*}
        |\neighborsin{E_2}{a_{t+1}} \setminus \neighborsin{E_2}{\greedydet{t}(E_2)}| 
        &= |\neighborsin{E_2}{a_{t+1}} \setminus \neighborsin{E_2}{\greedydet{t}(E_1)}| \\
        &= |\neighborsin{E_2}{a_{t+1}} \setminus \neighborsin{E_1}{\greedydet{t}(E_1)}| \\
        &= |\neighborsin{E_1}{a_{t+1}} \setminus \neighborsin{E_1}{\greedydet{t}(E_1)}|,  \tag{2} \label{eq: marginal contribution of at+1 is the same on B_1 and B_2}
    \end{align*}
    where the first equality is by inductive hypothesis $\greedydet{t}(E_1) = \greedydet{t}(E_2) = \{a_1, \dots, a_t\}$, the second is due to $\neighborsin{E_1}{u_i} = \neighborsin{E'}{u_i} = \neighborsin{E_2}{u_i}$ for all $i$ such that $u_i \in A$ therefore $\neighborsin{E_1}{\greedydet{t}(E_1)} = \neighborsin{E_2}{\greedydet{t}(E_1)}$ because $\greedydet{t}(E_1) \subseteq A$, the third is due to $\neighborsin{E_2}{a_{t+1}} \cap T = \{v \in T: (a_{t+1},v) \in E_2\} = \{v \in T: (a_{t+1}, v) \in E'\} = \{v \in T: (a_{t+1},v) \in E_1\} = \neighborsin{E_1}{a_{t+1}} \cap T$ and $\neighborsin{E_1}{\greedydet{t}(E_1)} \subseteq T$.
    We can proceed by considering the two cases.
    Assume case 1, then by \cref{eq: marginal contribution of a is the same in B_1 and B_2,eq: marginal contribution of at+1 is the same on B_1 and B_2} we have that:
    \begin{align*}
        |\neighborsin{E_1}{a} \setminus \neighborsin{E_1}{\greedydet{t}(E_1)}| > |\neighborsin{E_1}{a_{t+1}} \setminus \neighborsin{E_1}{\greedydet{t}(E_1)}|,
    \end{align*}
    which is a contradiction since $\greedydet{t+1}(E_1) = \{a_1, \dots, a_t, a_{t+1}\}$ and $\greedydet{t}(E_1) = \{a_1, \dots, a_t\}$ so $\greedydet{t+1}(E_1) \setminus \greedydet{t}(E_1) = \{a_{t+1}\}$.
    For case 2, since the tie-breaking rule relies on the order of the nodes of $L$ and it is the same in both $B_1, B_2$, $a$ could not be chosen instead of $a_{t+1}$.
    Therefore $\greedydet{t+1}(E_2) = \{a_1, \dots, a_{t+1}\}$, which concludes the argument.
\end{proof}
\begin{lemma} \label{lem: all edge sets that respect degrees have the same probability}
    Let Let $m \geq d_1 \geq \dots \geq d_n \geq 1$ and $\cB \sim \textsc{GenR}(n, m, d_1, \dots, d_n)$ with edge set $\cE$.
    Let $B_1 = (L, R, E_1)$ and $B_2 = (L, R, E_2)$ be two bipartite graphs such that 
    $|\neighborsin{E_1}{u_i}| = |\neighborsin{E_2}{u_i}| = d_i$ for all $u \in L$.
    Then $\Pr[\cE = E_1] = \Pr[\cE = E_2]$.
\end{lemma}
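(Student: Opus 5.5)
The plan is to write down $\Pr[\cE = E]$ explicitly as a product over left nodes and observe that it depends only on the degree sequence, not on which particular right nodes are chosen. By definition of $\textsc{GenR}(n, m, d_1, \dots, d_n)$, the neighborhoods $\neighbors{u_1}, \dots, \neighbors{u_n}$ are independent. Moreover, each $\neighbors{u_i}$ is uniform over $\binom{R}{d_i}$, the family of $d_i$-subsets of $R$.

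First I note that an edge set $E \subseteq L \times R$ is determined by its neighborhoods $(\neighborsin{E}{u_i})_{i \in [n]}$. Hence the event $\cE = E$ is the intersection of the events $\neighbors{u_i} = \neighborsin{E}{u_i}$ over $i \in [n]$. Independence across left nodes then gives
\begin{align*}
    \Pr[\cE = E_1] = \prod_{i=1}^{n} \Pr\left[\neighbors{u_i} = \neighborsin{E_1}{u_i}\right] = \prod_{i=1}^{n} \binom{m}{d_i}^{-1}.
\end{align*}
Here each factor equals $\binom{m}{d_i}^{-1}$ because $\neighborsin{E_1}{u_i}$ is a specific element of $\binom{R}{d_i}$ (its size is exactly $d_i$ by assumption) and $\neighbors{u_i}$ is uniform on that family.

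Repeating the same computation for $E_2$, which also satisfies $|\neighborsin{E_2}{u_i}| = d_i$ for all $i$, yields the identical product $\prod_{i=1}^n \binom{m}{d_i}^{-1}$. This proves $\Pr[\cE = E_1] = \Pr[\cE = E_2]$.

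There is no real obstacle. The only point needing care is the bookkeeping identification of an edge set with its tuple of neighborhoods, together with the requirement $m \geq d_i$, which guarantees that the binomial coefficients are nonzero and the model is well defined.
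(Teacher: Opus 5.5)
Your proof is correct and follows the paper's argument essentially line for line. Both identify the event $\cE = E_j$ with the joint event that $\neighbors{u_i} = \neighborsin{E_j}{u_i}$ for every $i \in [n]$, factor it by independence across left nodes, and evaluate each factor as $\binom{m}{d_i}^{-1}$ by uniformity, so the common value $\prod_{i=1}^{n} \binom{m}{d_i}^{-1}$ does not depend on $j$.
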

\begin{proof}
    We proceed to show that $\Pr[\cE = E_j] = \prod_{i \in [n]} \binom{m}{d_i}^{-1}$ for any $j \in \{1, 2\}$:
    \begin{align*}
        \Pr[\cE = E_i]
        = \Pr[\neighborsin{\cE}{u_i} = \neighborsin{E_j}{u_i} \ \forall i \in [n]] 
        = \prod_{i \in [n]}{\Pr[\neighborsin{\cE}{u_i} = \neighborsin{E_j}{u_i}]} 
        = \prod_{i \in [n]}{\binom{m}{d_i}^{-1}}, 
    \end{align*}
    where the first equality is due to $\cB$ being bipartite, the second is due to the nodes of $L$ selecting neighbors independently and the third is due to each node $u_i$ selecting $d_i$ neighbors uniformly at random from $R$.
\end{proof}
\begin{lemma} \label{lem: coverage of s dominated by smax}
    Let Let $m \geq d_1 \geq \dots \geq d_n \geq 1$ and $\cB \sim \textsc{GenR}(n, m, d_1, \dots, d_n)$ with edge set $\cE$.
    Let $S = \{ w_1, w_2, \dots, w_k \} \subseteq L$, then $|\neighborsin{\cE}{S}| \preccurlyeq |\neighborsin{\cE}{\smax{k}}|$.
\end{lemma}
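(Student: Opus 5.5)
The plan is a direct coupling argument. The key observation is that for a \emph{fixed} set $S = \{w_1,\dots,w_k\}$ chosen independently of $\cE$, the random variable $|\neighborsin{\cE}{S}|$ depends only on the multiset of degrees of the nodes in $S$. Indeed, under $\textsc{GenR}(n,m,d_1,\dots,d_n)$ the neighborhoods $\neighborsin{\cE}{w_j}$ are independent, uniformly random subsets of $R$ of sizes $d_{w_j}$. The same holds for $\smax{k} = \{u_1,\dots,u_k\}$, with sizes $d_1,\dots,d_k$. So it suffices to compare the union size of $k$ independent uniform random subsets with sizes $d'_1 \geq \dots \geq d'_k$ (the sorted degrees of $S$) against the union size of $k$ independent uniform subsets with sizes $d_1 \geq \dots \geq d_k$. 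It does not matter whether $S$ and $\smax{k}$ overlap, since stochastic dominance only concerns the two marginal distributions.

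\textbf{Step 1 (rank inequality).} I first show that $d'_j \leq d_j$ for every $j \in [k]$. The $j$-th largest degree among any $k$ nodes of $L$ is at most the $j$-th largest degree overall. To see this, note that the $j$ nodes of $S$ with the largest degrees form a set of $j$ nodes all of degree at least $d'_j$. Since $d_1 \geq \dots \geq d_n$, at most $j-1$ indices can have degree strictly greater than $d_j$. Hence $d'_j \leq d_j$.

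\textbf{Step 2 (coupling).} Let $N_1,\dots,N_k$ be independent, with $N_j$ uniform in $\binom{R}{d_j}$; these have the law of the neighborhoods of $\smax{k}$. Conditionally on $N_j$, let $N'_j$ be a uniformly random subset of $N_j$ of size $d'_j$, which is possible by Step 1, drawn independently across $j$. By symmetry under permutations of $R$, each $N'_j$ is marginally uniform in $\binom{R}{d'_j}$. The pairs $(N_j,N'_j)$ are independent across $j$, so $N'_1,\dots,N'_k$ are independent. Therefore $\bigl|\bigcup_j N'_j\bigr|$ has the law of $|\neighborsin{\cE}{S}|$, and $\bigl|\bigcup_j N_j\bigr|$ has the law of $|\neighborsin{\cE}{\smax{k}}|$.

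\textbf{Step 3 (conclusion).} Pointwise we have $\bigcup_j N'_j \subseteq \bigcup_j N_j$. Hence for every $x$,
\[
\Pr\bigl[|\neighborsin{\cE}{S}| \geq x\bigr] = \Pr\Bigl[\bigl|\textstyle\bigcup_j N'_j\bigr| \geq x\Bigr] \leq \Pr\Bigl[\bigl|\textstyle\bigcup_j N_j\bigr| \geq x\Bigr] = \Pr\bigl[|\neighborsin{\cE}{\smax{k}}| \geq x\bigr],
\]
which is the claimed dominance $|\neighborsin{\cE}{S}| \preccurlyeq |\neighborsin{\cE}{\smax{k}}|$.

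The only delicate points are the rank inequality in Step 1 and checking that the sub-sampling in Step 2 preserves both uniformity and independence. Both are routine, so I do not expect a genuine obstacle. Ties in the degrees are harmless, because the argument never uses strict inequalities.
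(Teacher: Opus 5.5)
Your proposal is correct and is essentially the paper's coupling argument run in the opposite direction. The paper pads each $N(w_i)$ up to size $d_i$ with an independent uniform set $\cU_i \subseteq R \setminus N(w_i)$, and needs a separate lemma to show the padded set is uniform; you instead sub-sample each $N_j$ down to size $d'_j$ and get uniformity directly from permutation symmetry, and your Step 1 also proves the rank inequality $d'_j \leq d_j$ that the paper only asserts ``by definition of $\smax{k}$''.
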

\begin{proof}
    Without loss of generality assume that $w_i$ are in decreasing order of degree $|\neighbors{w_i}|$.
    Define the following random sets:
    \begin{align*}
        \cX = \neighborsin{\cE}{S} = \bigcup_{i}\neighborsin{\cE}{w_i},\
        \cY = \neighborsin{\cE}{\smax{k}} = \bigcup_{i}\neighborsin{\cE}{u_i},
    \end{align*} 
    for which we will construct a coupling $(\hat{\cX}, \hat{\cY})$ such that $|\hat{\cY}| \geq |\hat{\cX}|$ almost surely.
    Let $\cU_i \subseteq R \setminus \neighborsin{\cE}{w_i}$ be a uniformly random sample of size $|\neighborsin{\cE}{u_i}| - |\neighborsin{\cE}{w_i}|$ and independent of all other $\neighborsin{\cE}{w_j}, \neighborsin{\cE}{u_j}, \cU_j$ and define:
    \begin{align*}
        \hat{\cX} = \cX,\ \hat{\cY} = \bigcup_{i}\left(\neighborsin{\cE}{w_i} \cup \cU_i \right),
    \end{align*}
    for which it is easy to see that $|\hat{\cY}| \geq |\hat{\cX}|$ since by definition of $\smax{k}$ we have that $|\neighborsin{\cE}{u_i}| \geq |\neighborsin{\cE}{w_i}|$.
    What remains to show is that $(\hat{\cX}, \hat{\cY})$ is a coupling and it suffices to show that $\hat{\cY}$ is identically distributed to $\cY$. 
    To that end we show the following three properties:
    \begin{enumerate}
        \item $\neighborsin{\cE}{w_i} \cup \cU_i \ind \neighborsin{\cE}{w_j} \cup \cU_j$ for all $i \neq j \in [k]$,
        \item $|\neighborsin{\cE}{w_i} \cup \cU_i| = |\neighborsin{\cE}{u_i}|$,
        \item $\neighborsin{\cE}{w_i} \cup \cU_i$ is uniformly sampled from $R$.
    \end{enumerate}
    The first property follows by definition since $\neighborsin{\cE}{w_i} \ind \neighborsin{\cE}{w_j}$ and $\cU_i \ind \neighborsin{\cE}{w_j}, \cU_j$ since $i \neq j$.
    The second property also follows by definition since $\cU_i \in R \setminus \neighborsin{\cE}{w_i}$, therefore $|\neighborsin{\cE}{w_i} \cup \cU_i| = |\neighborsin{\cE}{w_i}| + |\cU_i| = |\neighborsin{\cE}{u_i}|$.
    The third property follows from \cref{lem: union of non overlapping uniform sets is uniform}, which concludes the proof.
\end{proof}

\subsubsection{Probability Lemmas}
\begin{lemma} \label{lem: union of non overlapping uniform sets is uniform}
    Let $\cA \subseteq R$ be uniformly sampled such that $|\cA| = a$ and $\cB \subseteq R \setminus \cA$ be uniformly sampled such that $|\cB| = b$. 
    Then $\cA \cup \cB$ is a uniform $(a+b)$-subset in $R$.
\end{lemma}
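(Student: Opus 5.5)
We prove the statement by computing the law of $\cA \cup \cB$ directly: we fix an arbitrary target set and count the ways it can arise. Fix any $C \subseteq R$ with $|C| = a+b$. Since $\cB \subseteq R \setminus \cA$, the sets $\cA$ and $\cB$ are disjoint. Hence the event $\{\cA \cup \cB = C\}$ is the disjoint union, over all $A' \subseteq C$ with $|A'| = a$, of the events $\{\cA = A', \cB = C \setminus A'\}$. Note that $C \setminus A' \subseteq R \setminus A'$ automatically and has size $b$.

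Each term is computed by conditioning on $\cA$. We have $\Pr[\cA = A'] = \binom{|R|}{a}^{-1}$. Since $\cB$ is a uniform $b$-subset of $R \setminus \cA$, we also have
\[
\Pr[\cB = C \setminus A' \mid \cA = A'] = \binom{|R|-a}{b}^{-1}.
\]
So every term equals $\binom{|R|}{a}^{-1}\binom{|R|-a}{b}^{-1}$, independently of $A'$. There are $\binom{a+b}{a}$ choices of $A'$, which gives
\[
\Pr[\cA \cup \cB = C] = \binom{a+b}{a}\binom{|R|}{a}^{-1}\binom{|R|-a}{b}^{-1}.
\]

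This quantity does not depend on $C$, so the union is uniform over all $(a+b)$-subsets of $R$. As a check, the multinomial identity $\binom{|R|}{a}\binom{|R|-a}{b} = \binom{|R|}{a+b}\binom{a+b}{a}$ shows the probability equals $\binom{|R|}{a+b}^{-1}$.

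The only point needing care is the meaning of "uniformly sampled" for $\cB$: it must be read as uniform conditional on $\cA$. This is the situation in which the lemma is applied in the proof of \cref{lem: coverage of s dominated by smax}, where $\cU_i$ is drawn uniformly from $R \setminus \neighborsin{\cE}{w_i}$ independently of everything else. The size of $\cB$ there is random, since it depends on the random degree $|\neighborsin{\cE}{w_i}|$. This is handled by applying the argument conditionally on the sizes $a$ and $b$.
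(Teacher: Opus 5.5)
Your proof is correct and follows the paper's argument essentially line by line: condition on $\cA$, note that only the $\binom{a+b}{a}$ sets $A' \subseteq C$ of size $a$ contribute, each with probability $\binom{|R|}{a}^{-1}\binom{|R|-a}{b}^{-1}$, and simplify to $\binom{|R|}{a+b}^{-1}$. One small aside: in \textsc{GenR} the degrees $d_i$ are fixed, so the set sizes in the application within \cref{lem: coverage of s dominated by smax} are deterministic, and your closing remark about conditioning on random sizes is harmless but not needed.
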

\begin{proof}
    Let $T \subseteq R$ such that $|T| = a + b$ and we have:
    \begin{align*}
        \Pr[\cA \cup \cB = T] 
        &= \sum_{N \subseteq R: |N| = a}\Pr[N \cup \cB = T \mid \cA = N] \cdot \Pr[\cA = N] \\
        &= \sum_{N \subseteq R: |N| = a, |T \setminus N| = b}\Pr[\cB = T \setminus N \mid \cA = N] \cdot \Pr[\cA = N] \\
        &= \sum_{N \subseteq R: |N| = a, |T \setminus N| = b}\frac{1}{ \binom{|R| - a}{b}} \cdot \frac{1}{\binom{|R|}{a}} \\
        &= \frac{\binom{b+a}{a}}{\binom{|R|}{a} \cdot \binom{|R| - a}{b}} \\ 
        &= \frac{\frac{(b+a)!}{a! b!}}{\frac{|R|!}{a! (|R| - a)!} \cdot \frac{(|R| - a)!}{b! (|R| -a - b)!}} \\
        &= \frac{1}{\binom{|R|}{a + b}},
    \end{align*}
    where the first equality is due to law of total probability, the second is by definition of $\cB$ and the condition $\cA = N$ which implies that $\cB$ is uniformly sampled from $R \setminus N$, the third is by definition of $\cA, \cB$ as uniformly sampled, the fourth is by counting all the different ways $N$ can be selected, the fifth is by definition of binomial coefficients and the sixth is by canceling terms in the fraction.
\end{proof}
\begin{lemma} \label{lem: uniform A1 A2 dominate B1 B2 then their union dominates as well}
    Let $\cA_1, \dots, \cA_{n}$ and $\cB_1, \dots, \cB_{n}$ be random subsets of $R$ such that:
    \begin{enumerate}
        \item $\cA_i \ind \cA_j$, $\cB_i \ind \cB_j$ for all $i \neq j \in [n]$,
        \item $\cA_i, \cB_i$ uniform in $R$ for all $i \in [n]$,
        \item $\E[|\cA_i|] \geq \E[|\cB_i|]$ for all $i \in [n]$.
    \end{enumerate}
    Then $\E[|\bigcup_{i=1}^{n}\cA_i|] \geq \E[|\bigcup_{i=1}^{n}\cB_i|]$.
\end{lemma}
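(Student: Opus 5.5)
The plan is to compute both expectations exactly, one right node at a time, using linearity of expectation. Uniformity turns each single-set membership probability into a simple function of $\E[|\cA_i|]$ or $\E[|\cB_i|]$, and independence turns the probability of missing all sets into a product.

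\emph{Step 1 (uniformity gives membership probabilities).} Here ``$\cA_i$ uniform in $R$'' means: conditioned on $|\cA_i| = s$, the set $\cA_i$ is a uniformly random $s$-subset of $R$. By symmetry, for every $v \in R$,
\[
\Pr[v \in \cA_i \mid |\cA_i| = s] = \frac{s}{|R|}, \qquad\text{hence}\qquad \Pr[v \in \cA_i] = \frac{\E[|\cA_i|]}{|R|}.
\]
In the same way, $\Pr[v \in \cB_i] = \E[|\cB_i|]/|R|$. Hypothesis 3 then gives $\Pr[v \in \cA_i] \geq \Pr[v \in \cB_i]$ for every $v$ and every $i$.

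\emph{Step 2 (independence factorizes non-coverage).} For a fixed $v \in R$,
\[
\Pr\Bigl[v \notin \bigcup_{i=1}^n \cA_i\Bigr] = \prod_{i=1}^n \Bigl(1 - \frac{\E[|\cA_i|]}{|R|}\Bigr) \leq \prod_{i=1}^n \Bigl(1 - \frac{\E[|\cB_i|]}{|R|}\Bigr) = \Pr\Bigl[v \notin \bigcup_{i=1}^n \cB_i\Bigr].
\]
The inequality compares products of factors lying in $[0,1]$ term by term.

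\emph{Step 3 (sum over right nodes).} Write $|\bigcup_i \cA_i| = \sum_{v \in R} \mathbbm{1}\{v \in \bigcup_i \cA_i\}$, take expectations, and apply Step 2 to each $v$. This gives
\[
\E\Bigl[\bigl|\bigcup_{i=1}^n \cA_i\bigr|\Bigr] \geq \E\Bigl[\bigl|\bigcup_{i=1}^n \cB_i\bigr|\Bigr].
\]

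\emph{Main obstacle.} The factorization in Step 2 needs \emph{mutual} independence of $\cA_1,\dots,\cA_n$, and likewise of the $\cB_i$; pairwise independence does not determine the probability that all $n$ events fail. I would therefore read hypothesis 1 as mutual independence.

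\emph{Adjustment in the application.} In the application inside the proof of \cref{lem: greedy lower bound by smax}, I would supply mutual independence as follows. Apply \cref{lem: irrelevant edges are uniform} to the whole set $Z = L \setminus A$, rather than only to pairs of nodes. It shows that, conditionally on $\mathscr{F}$, the joint law of $(\neighborsin{\cE}{u_i} \setminus T)_{u_i \in Z}$ is uniform over all tuples with the prescribed sizes $D_i$. Hence these residual neighbourhoods are mutually independent, each uniform on $(R \setminus T)$-subsets of size $D_i$.

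The remaining subtlety is that $X_\ell$ and $Y_\ell$ are residual neighbourhoods of \emph{indexed} nodes $u^{(\ell)}$ and $u'^{(\ell+1)}$. These indices are chosen using only the residual degrees $D_i$, together with marginals to $\greedy{t-1}$ that are determined given $\mathscr{F}$. So conditionally on $\mathscr{F}$ the indices are deterministic, and the mutual independence and uniformity carry over to the $X_\ell$ and to the $Y_\ell$. Confirming that these choices are indeed fixed under the conditioning is the step I would check most carefully.
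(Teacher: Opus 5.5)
Your proof follows the paper's own argument: uniformity gives $\Pr[r\in\cA_i]=\E[|\cA_i|]/|R|$, independence turns $\Pr[r\notin\bigcup_i\cA_i]$ into a product that you compare factor by factor, and linearity of expectation over $r\in R$ finishes. Your reading of hypothesis~1 as mutual independence is genuinely needed, since the paper's product step silently uses it and under pairwise independence alone the statement is false: take independent fair bits $b_1,b_2$, set $b_3=b_1\oplus b_2$, let $\cA_i=R$ if $b_i=1$ and $\cA_i=\emptyset$ otherwise, and let the $\cB_i$ be mutually independent copies, so that $\E[|\bigcup_i\cA_i|]=\tfrac34|R|<\tfrac78|R|=\E[|\bigcup_i\cB_i|]$.

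On the application you flagged, the claim that the indices are fixed given $\mathscr{F}$ is not right as stated. The paper's $\mathscr{F}$ fixes neither $\neighbors{u_i}\cap T$ nor $\neighbors{\greedy{t-1}}$, so the indices $u'^{(\ell+1)}$ are not determined by it. They become determined once you additionally condition on all edges in $(A\times R)\cup(L\times T)$, and the swap argument of \cref{lem: greedy is not affected by irrelevant edges} still works under that finer conditioning. You should then pair the $Y_\ell$ with the $X_\ell$ after sorting both by size, because $|Y_\ell|\le|X_\ell|$ can fail index by index.
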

\begin{proof}
    We start by showing that Assumptions (2), (3) imply higher probability of inclusion for $\cA_i$ than $\cB_i$. 
    We have that for $\cX_i \in \{\cA_i, \cB_i\}$ and for any $r \in R$:
    \begin{align*}
        \Pr[r \in \cX_i] 
        &= \sum_{x}{\Pr[r \in \cX_i \mid |\cX_i| = x] \cdot \Pr[|\cX_i| = x]} \\
        &= \sum_{x}{\frac{x}{|R|} \cdot \Pr[|\cX_i| = x]} \\
        &= \frac{1}{|R|} \cdot \sum_{x}{x \cdot \Pr[|\cX_i| = x]} \\ 
        &= \frac{1}{|R|} \cdot \E[|\cX_i|] \tag{*},
    \end{align*}
    where the first equality is due to the law of total probability, the second is due to Assumption (2) since $\cX_i$ is sampling $x$ elements of $R$ without replacement, the third is by moving $|R|$ out of the sum, and the fourth is by the definition of expectation.
    So we can show that:
    \begin{align*}
        \Pr[r \in \cA_i] 
        = \frac{1}{|R|} \cdot \E[|\cA_i|] 
        \geq \frac{1}{|R|} \cdot \E[|\cB_i|] 
        = \Pr[r \in \cB_i] \tag{2},
    \end{align*}
    where the equalities are due to Equation (1) and the inequality is due to Assumption (3).
    We can now show that the probability of inclusion in $\bigcup_{i}{\cA_i}$ is higher than that of $\bigcup_{i}{\cB_i}$ due to Assumption (1):
    \begin{align*}
        \Pr[r \in \bigcup_{i}{\cA_i}]
        &= 1 - \Pr[x \notin \cA_i \ \forall i] \\
        &= 1 - \prod_{i}{\left( 1 - \Pr[r \in \cA_i] \right)} \\
        &\geq 1 - \prod_{i}{\left( 1 - \Pr[r \in \cB_i] \right)} \\
        &= \Pr[r \in \bigcup_{i}{\cB_i}], \tag{3}
    \end{align*}
    where the second and third equality is due to independence by Assumption (1) and  the inequality is due to Equation (2).
    Overall we have that:
    \begin{align*}
        \E[|\bigcup_{i}{\cA_i}|] 
        = \E[\sum_{r \in R}{\mathbbm{1}\{r \in \bigcup_{i}{\cA_i}}]
        = \sum_{r \in R}{\Pr[r \in \bigcup_{i}{\cA_i}]}
        \geq \sum_{r \in R}{\Pr[r \in \bigcup_{i}{\cB_i}]}
        = \E[|\bigcup_{i}{\cB_i}],
    \end{align*}
    where the inequality is due to Equation (3).
\end{proof}
\begin{lemma} \label{lem: sufficient condition for independence}
    Let $\cA, \cB$ be random variables with support $A, B$ respectively that have the following properties:
    \begin{enumerate}
        \item $\Pr[\cA = a] = \Pr[\cA = a']$ for all $a, a' \in A$ and
        \item $\Pr[\cA=a, \cB=b] = \Pr[\cA=a', \cB=b]$ for all $a, a' \in A$ and all $b \in B$.
    \end{enumerate}
    Then $\cA, \cB$ are independent.
\end{lemma}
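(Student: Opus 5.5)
We need to prove that $\cA$ and $\cB$ are independent, i.e.\ that $\Pr[\cA=a,\cB=b]=\Pr[\cA=a]\Pr[\cB=b]$ for every $a\in A$ and $b\in B$. The plan is a direct computation: write the marginal of $\cB$ as a sum over the support of $\cA$, and use the two hypotheses to show that every term of this sum is equal. We assume $A$ is finite, which holds in our application since the supports are families of subsets of $R$. Property (1) then forces $\cA$ to be uniform on $A$, so $\Pr[\cA=a]=1/|A|$ for all $a\in A$.

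Fix $a\in A$ and $b\in B$. By the law of total probability and property (2),
\begin{align*}
\Pr[\cB=b] = \sum_{a'\in A}\Pr[\cA=a',\cB=b] = |A|\cdot\Pr[\cA=a,\cB=b].
\end{align*}
Hence $\Pr[\cA=a,\cB=b]=\Pr[\cB=b]/|A|=\Pr[\cA=a]\Pr[\cB=b]$, where the last equality uses property (1). Since $a$ and $b$ were arbitrary, independence follows.

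For pairs outside the supports there is nothing to check. If $a\notin A$ or $b\notin B$, both sides of the factorization are zero, because $\Pr[\cA=a,\cB=b]\le\min(\Pr[\cA=a],\Pr[\cB=b])=0$.

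The only real point of care is the finiteness of $A$, which is used to turn property (1) into $\Pr[\cA=a]=1/|A|$. If one wanted generality, one could instead note that property (2) says the conditional law of $\cA$ given $\cB=b$ is uniform on $A$, hence equal to the marginal law, but this is unnecessary here. In the application, the roles are played by $\neighborsin{\cE}{u_x}\setminus T$ and $\neighborsin{\cE}{u_y}\setminus T$, whose supports are the $D_x$- and $D_y$-subsets of $R\setminus T$, which are finite. Everything is taken conditionally on $\cE\in\mathscr{F}$, which only changes the underlying probability measure and does not affect the argument.
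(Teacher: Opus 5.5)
Your proof is correct and is essentially the paper's argument: expand $\Pr[\cB=b]$ by total probability over $A$, use property (2) to make every summand equal to $\Pr[\cA=a,\cB=b]$, and use property (1) to get $\Pr[\cA=a]=1/|A|$. Your remarks on the finiteness of $A$ and on pairs outside the supports make explicit what the paper leaves implicit.
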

\begin{proof}
    We proceed to show that $\Pr[\cA = a] \cdot \Pr[\cB = b] = \Pr[\cA = a, \cB = b]$ for all $a \in A, b \in B$:
    \begin{align*}
        \Pr[\cA = a] \cdot \Pr[\cB = b] 
        &= \Pr[\cA = a] \cdot \sum_{a' \in A}\Pr[\cA = a', \cB = b] \\
        &= \Pr[\cA = a] \cdot |A| \cdot \Pr[\cA = a, \cB = b] \\
        &= \frac{1}{|A|} \cdot |A| \cdot \Pr[\cA = a, \cB = b] \\
        &= \Pr[\cA = a, \cB = b],
    \end{align*}
    where the first equality is due to law of total probability, the second is due to property (2) and the third is due to property (1) which implies that $\Pr[\cA = a] = \frac{1}{|A|}$.
\end{proof}

\subsection{Lower Bound By Concentration Of \texorpdfstring{$\optalgorithm$}{Opt}}
\begin{lemma}
\label{lem:greedy lower bound by concentration}
    Let $m \geq d_1 \geq \dots \geq d_n \geq 1$ and $\mathcal{B} \sim \textsc{GenR}(n, m, d_1, \dots, d_n)$.
    For any $\epsilon \in (0,1)$, we have that 
    $\E\left[ |\neighbors{\greedy{k}}| \right] \geq (1 - \epsilon) \cdot \E\left[ |\neighbors{\opt{k}}| \right] - m \cdot \exp\left(\log \binom{n}{k} - \frac{\epsilon^2}{3} \left( 1 - e^{- \sum_{i=1}^{k}d_i / m} \right) \cdot m \right) $.
\end{lemma}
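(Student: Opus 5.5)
We would prove the statement by combining the fixed-set lower bound of \cref{lem: greedy lower bound by smax} with a concentration argument for every fixed $k$-subset of $L$. Write $\mu = \E\left[ |\neighbors{\smax{k}}| \right]$. By \cref{lem:smax expected coverage}, $\mu \geq \left(1 - e^{-\sum_{i=1}^k d_i/m}\right) m$. By \cref{lem: greedy lower bound by smax}, $\E\left[ |\neighbors{\greedy{k}}| \right] \geq \mu$. It therefore suffices to show
\[
\E\left[ |\neighbors{\opt{k}}| \right] \leq (1+\epsilon)\mu + m\cdot \Pr\left[|\neighbors{\opt{k}}| > (1+\epsilon)\mu\right],
\]
and then to bound the probability. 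Dividing by $1+\epsilon$ and using $\frac{1}{1+\epsilon} \geq 1-\epsilon$ and $|\neighbors{\opt{k}}| \leq m$ then gives the stated inequality. The error term may be loosened by the factor $(1-\epsilon) \leq 1$.

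\textbf{Concentration for a fixed set.} Fix $S \subseteq L$ with $|S| = k$ and write $|\neighbors{S}| = \sum_{v \in R} \mathbbm{1}\{v \in \neighbors{S}\}$. For a single left node $u$, the indicators $\mathbbm{1}\{v \in \neighbors{u}\}$, $v \in R$, come from sampling without replacement, so they are negatively associated. Independent families of negatively associated variables remain negatively associated. Coordinatewise maxima over disjoint index blocks are monotone functions, so the coverage indicators $\mathbbm{1}\{v \in \neighbors{S}\} = \max_{u \in S} \mathbbm{1}\{v \in \neighbors{u}\}$ are negatively associated as well. Hence the upper-tail Chernoff bound applies:
\[
\Pr\left[|\neighbors{S}| \geq (1+\epsilon)\mu'\right] \leq \exp(-\epsilon^2 \mu'/3)
\]
for any $\mu' \geq \E\left[ |\neighbors{S}| \right]$, using the standard form for $\epsilon \in (0,1)$. \cref{lem: coverage of s dominated by smax} gives $\E\left[ |\neighbors{S}| \right] \leq \mu$, so we take $\mu' = \mu$.

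\textbf{Union bound.} A union bound over the $\binom{n}{k}$ subsets gives
\[
\Pr\left[|\neighbors{\opt{k}}| \geq (1+\epsilon)\mu\right] \leq \binom{n}{k} e^{-\epsilon^2\mu/3} \leq \exp\left(\log\binom{n}{k} - \tfrac{\epsilon^2}{3}\left(1-e^{-\sum_{i=1}^k d_i/m}\right)m\right),
\]
where the last step uses the lower bound on $\mu$. Combining this with the first display yields the lemma.

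\textbf{Main obstacle.} The delicate step is justifying the Chernoff bound for the dependent coverage indicators. We plan to rely on the negative association closure properties (Joag-Dev and Proschan; Dubhashi and Ranjan): sampling without replacement, unions of independent families, and monotone functions of disjoint subsets. We will also check that the Chernoff bound with $\mu'$ an upper bound on the mean is valid, which holds by the standard monotonicity argument.
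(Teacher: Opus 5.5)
Your proposal is correct and follows the paper's proof. The paper likewise splits $\E[|\neighbors{\opt{k}}|]$ on the event $|\neighbors{\opt{k}}| > (1+\epsilon)\E[|\neighbors{\smax{k}}|]$, bounds that event by a union bound over the $\binom{n}{k}$ sets combined with a Chernoff bound, and concludes via $\E[|\neighbors{\greedy{k}}|] \geq \E[|\neighbors{\smax{k}}|] \geq (1-e^{-\sum_{i=1}^k d_i/m})m$ and $\frac{1}{1+\epsilon} \geq 1-\epsilon$. The only minor difference is that you apply Chernoff to each fixed $S$ directly, with the mean upper bound $\mu$ justified by negative association, whereas the paper first uses the stochastic dominance $|\neighbors{S}| \preccurlyeq |\neighbors{\smax{k}}|$ and then applies Chernoff to $\smax{k}$ via Doerr's negative-correlation lemma; both routes are valid.
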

\begin{proof}
    Let $\cX = \{ |\neighbors{\opt{k}}| \leq (1+\epsilon) \cdot \E\left[ |\neighbors{\smax{k}}| \right] \}$, then: 
    \begin{align*}
         \E\left[ |\neighbors{\opt{k}}| \right]
         &= \E\left[ |\neighbors{\opt{k}}| \middle| \cX  \right] \cdot \Pr\left[ \cX \right] + \E\left[ |\neighbors{\opt{k}}| \middle| \bar{\cX}  \right] \cdot \Pr\left[ \bar{\cX} \right] \\
         &\leq \E\left[ |\neighbors{\opt{k}}| \middle| \cX \right] \cdot 1 + m \cdot \Pr\left[ \bar{\cX} \right] \\
         &= \E\left[ |\neighbors{\opt{k}}| \middle| |\neighbors{\opt{k}}| \leq (1+\epsilon) \cdot \E\left[ |\neighbors{\smax{k}}| \right] \right] + m \cdot \Pr\left[ |\neighbors{\opt{k}}| > (1+\epsilon) \cdot \E\left[ |\neighbors{\smax{k}}| \right] \right] \\ 
         &\leq (1+\epsilon) \cdot \E\left[ |\neighbors{\smax{k}}| \right] + m \cdot \exp\left( \log{\binom{n}{k}} - \frac{\epsilon^2}{3} \E\left[ |\neighbors{\smax{k}}| \right] \right) \\
         &\leq (1+\epsilon) \cdot \E\left[ |\neighbors{\greedy{k}}| \right] + m \cdot \exp\left( \log{\binom{n}{k}} - \frac{\epsilon^2}{3} \left( 1 - e^{- \sum_{i=1}^{k}d_i / m} \right) \cdot m \right),
    \end{align*}
    where the first line is by tower rule, the second line is due to $|\neighbors{\opt{k}}| \leq m$ by \cref{lem: trivial upper bound for opt}, the third line is by definition of $\cX$, the fourth line is by \cref{lem: opt upper bound by concentration} and the fifth line is due to $\E\left[ |\neighbors{\smax{k}}| \right] \leq \E\left[ |\neighbors{\greedy{k}}| \right]$ by \cref{lem: greedy lower bound by smax} and $\E\left[ |\neighbors{\smax{k}}| \right] \geq \left( 1 - e^{-\sum_{i=1}^{k} \frac{d_i}{m}} \right) \cdot m$ by \cref{lem:smax expected coverage}.
    Solving for $\E\left[ |\neighbors{\greedy{k}}| \right]$ we have that:
    \begin{align*}
        \E\left[ |\neighbors{\greedy{k}}| \right] 
        &\geq \frac{1}{1 + \epsilon} \cdot \E\left[ |\neighbors{\opt{k}}| \right] - \frac{1}{1+\epsilon} m \cdot \exp\left(\log{\binom{n}{k}} - \frac{\epsilon^2}{3} \left( 1 - e^{- \sum_{i=1}^{k}d_i / m} \right) \cdot m \right) \\
        &\geq (1 - \epsilon) \cdot \E\left[ |\neighbors{\opt{k}}| \right] - m \cdot \exp\left(\log{\binom{n}{k}} - \frac{\epsilon^2}{3} \left( 1 - e^{- \sum_{i=1}^{k}d_i / m} \right) \cdot m \right),
    \end{align*}
    where the second line is due to $1 - \epsilon \leq \frac{1}{1+\epsilon} \leq 1$ since $\epsilon \in (0,1)$.
\end{proof}

\begin{lemma} \label{lem: opt upper bound by concentration}
    Let $n, m \in \mathbbm{N}$, $m \geq d_1 \geq \dots \geq d_n \geq 1, k \in [n]$, graph $\mathcal{B} \sim \textsc{GenR}(n, m, d_1, \dots, d_n)$.
    For any $\delta \in (0,1)$, $n, m, k$ we have  that $\f{\opt{k}} > (1+\delta) \E\left[ |\neighbors{\smax{k}}| \right]$ with probability at most $\exp\left( \log{\binom{n}{k}} - \frac{\delta^2}{3} \E\left[ |\neighbors{\smax{k}}| \right] \right)$.
\end{lemma}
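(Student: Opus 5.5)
The plan is a union bound over all $\binom{n}{k}$ subsets of size $k$, combined with a Chernoff-type tail bound for the coverage of each fixed set. Since $|\neighbors{\opt{k}}| = \max_{S \subseteq L, |S| = k} |\neighbors{S}|$, we have
\[
\Pr\left[ |\neighbors{\opt{k}}| > (1+\delta)\E\left[|\neighbors{\smax{k}}|\right] \right] \leq \sum_{S \subseteq L: |S|=k} \Pr\left[ |\neighbors{S}| > (1+\delta)\E\left[|\neighbors{\smax{k}}|\right] \right].
\]
It therefore suffices to show that each summand is at most $\exp\left(-\frac{\delta^2}{3}\E\left[|\neighbors{\smax{k}}|\right]\right)$. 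Multiplying by $\binom{n}{k} = \exp(\log\binom{n}{k})$ then gives the claim.

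For a fixed $S$, I would first reduce to $\smax{k}$. By \cref{lem: coverage of s dominated by smax}, $|\neighbors{S}|$ is stochastically dominated by $|\neighbors{\smax{k}}|$, so the tail probability at any threshold is at most the corresponding tail for $\smax{k}$. This reduces the task to a single upper-tail bound $\Pr\left[|\neighbors{\smax{k}}| > (1+\delta)\mu\right] \leq e^{-\delta^2\mu/3}$, where $\mu = \E\left[|\neighbors{\smax{k}}|\right]$.

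To obtain that bound, write $|\neighbors{\smax{k}}| = \sum_{v \in R} Z_v$ with $Z_v = \mathbbm{1}\{v \in \neighbors{\smax{k}}\}$. The Chernoff bound $e^{-\delta^2\mu/3}$ for $\delta\in(0,1)$ holds for sums of indicators satisfying the negative cylinder condition
\[
\E\Big[\prod_{v\in U} Z_v\Big] \leq \prod_{v\in U}\E[Z_v] \quad \text{for all } U \subseteq R,
\]
which gives the moment generating function bound used in the standard proof. An equivalent and more convenient route is to show that the complements $\bar Z_v = 1 - Z_v$ satisfy $\E[\prod_{v\in U}\bar Z_v] \leq \prod_{v \in U}\E[\bar Z_v]$. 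Here $\prod_{v\in U}\bar Z_v$ is the event that no $u_i \in \smax{k}$ hits $U$. By independence across left nodes, this probability equals
\[
\prod_{i=1}^k \frac{\binom{m-|U|}{d_i}}{\binom{m}{d_i}}.
\]
The key inequality is that for each $i$,
\[
\frac{\binom{m-|U|}{d_i}}{\binom{m}{d_i}} = \prod_{j=0}^{d_i-1}\frac{m-|U|-j}{m-j} \leq \left(1-\frac{d_i}{m}\right)^{|U|}.
\]
This holds because sampling without replacement makes avoidance events negatively correlated; one can check it by revealing the elements of $U$ one at a time and noting that the conditional avoidance probability only decreases.

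The main obstacle is matching the correct form of negative dependence to the Chernoff variant being invoked. The upper tail of $\sum Z_v$ needs the product bound on the $Z_v$ themselves, or equivalently a negative-association statement. I would handle this by appealing to the standard fact that the indicators $\mathbbm{1}\{v \in \neighbors{u_i}\}$ for a uniform $d_i$-subset are negatively associated, that independent unions preserve this, and that $Z_v = \max_i \mathbbm{1}\{v\in\neighbors{u_i}\}$ is a monotone function of disjoint coordinate blocks. This makes $\{Z_v\}$ negatively associated, and the usual Chernoff bound then applies verbatim.
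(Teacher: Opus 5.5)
Your proof is correct and follows the paper's argument. It takes a union bound over the $\binom{n}{k}$ sets, dominates $|\neighbors{S}|$ by $|\neighbors{\smax{k}}|$ via \cref{lem: coverage of s dominated by smax}, and applies an upper-tail Chernoff bound to $|\neighbors{\smax{k}}|$ using negative dependence of the coverage indicators. The paper gets that negative dependence by citing Doerr's Lemma 1.10.26 (\cref{lem: doerr lemma for na variables}); your negative-association closure argument is an equally valid substitute.

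One correction: the complement computation is not an ``equivalent route.'' It proves 0-negative correlation, which yields the lower tail and does not imply the 1-negative correlation the upper tail needs once $|U| \geq 3$ (the two conditions coincide only for pairs). That paragraph should be dropped, and the proof rests entirely on your final paragraph.
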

\begin{proof}
    To upper bound the optimal solution, first notice that $|\neighbors{\opt{k}}| = \max_{S \in \binom{L}{k}}|\neighbors{S}|$, then we have:
    \begin{align*}
        \Pr\left[ |\neighbors{\opt{k}}| > (1 + \delta) \cdot \E\left[ |\neighbors{\smax{k}}| \right] \right]
        &= \Pr\left[ \max_{S \in \binom{L}{k}}|\neighbors{S}| > (1 + \delta) \cdot \E\left[ |\neighbors{\smax{k}}| \right] \right] \\
        &\leq \sum_{S \in \binom{L}{k}} \Pr\left[|\neighbors{S}| > (1 + \delta) \cdot \E\left[ |\neighbors{\smax{k}}| \right] \right] \\
        &\leq \sum_{S \in \binom{L}{k}} \Pr\left[|\neighbors{\smax{k}}| > (1 + \delta) \cdot \E\left[ |\neighbors{\smax{k}}| \right] \right] \\
        &\leq \sum_{S \in \binom{L}{k}} \exp\left( -\frac{\delta^2}{3} \cdot \E\left[ |\neighbors{\smax{k}}| \right] \right) \\
        &= \exp\left( \log{\binom{n}{k}} - \frac{\delta^2}{3} \E\left[ |\neighbors{\smax{k}}| \right]  \right), \tag{2}
    \end{align*}
    where the first line is by definition of $\opt{k}$, the second line is by union bound, the third line is because $|\neighbors{S}|$ is dominated by $|\neighbors{\smax{k}}|$ for all $S \in \binom{L}{k}$ by \cref{lem: coverage of s dominated by smax}, the fourth line is by \cref{lem: coverage of smax admits chernoff} since $\delta \in (0, 1)$ by Equation (1) and the fifth line is because $|\binom{L}{k}| = \binom{n}{k} = e^{\log{\binom{n}{k}}}$.
\end{proof}

\begin{lemma} \label{lem: trivial upper bound for opt}
    Let $n, m \in \mathbbm{N}$, $m \geq d_1 \geq \dots \geq d_n \geq 1$ and $|\neighbors{u_i}| = d_i$ for all $i \in [n]$. Then $\f{\opt{k}} \leq \min\{ \sum_{i=1}^{k}{d_i}, \f{L}, m \}$ for any $n,m,k,d$.
\end{lemma}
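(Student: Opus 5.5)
The plan is to show $|\neighbors{\opt{k}}|$ is at most each of the three quantities $\sum_{i=1}^{k} d_i$, $\f{L}$ and $m$ in turn; the minimum bound then follows. Here $\f{L} = |\neighbors{L}|$ is the coverage value of the whole left side, not the cardinality $|L|$. Throughout, $\opt{k}$ is a set $S \subseteq L$ with $|S| \le k$ maximizing $|\neighbors{S}|$. All three bounds are deterministic, since the statement fixes the degrees $|\neighbors{u_i}| = d_i$.

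\textbf{Bound by $\sum_{i=1}^{k} d_i$.} First apply subadditivity of union cardinality:
\begin{align*}
|\neighbors{S}| = \Bigl|\bigcup_{u \in S}\neighbors{u}\Bigr| \le \sum_{u \in S} |\neighbors{u}|.
\end{align*}
This is a sum of at most $k$ of the degrees $d_1 \ge \dots \ge d_n$. Any such sum is at most the sum of the $k$ largest degrees, namely $\sum_{i=1}^{k} d_i$, because degrees are nonnegative and sorted in decreasing order. If $|S| < k$, the missing terms only add nonnegative amounts.

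\textbf{Bound by $\f{L}$.} Use monotonicity of the neighborhood operator. Since $S \subseteq L$, every neighbor of a node in $S$ is a neighbor of some node in $L$, so $\neighbors{S} \subseteq \neighbors{L}$ and hence $|\neighbors{S}| \le \f{L}$.

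\textbf{Bound by $m$.} The neighborhood satisfies $\neighbors{S} \subseteq \neighbors{L} \subseteq R$, and $|R| = m$. This bound is also implied by the previous one, since $\f{L} \le m$.

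Combining the three inequalities gives $\f{\opt{k}} \le \min\{\sum_{i=1}^{k} d_i, \f{L}, m\}$. No step is a real obstacle. The only points needing care are reading $\f{L}$ as $|\neighbors{L}|$, and justifying the sorted-degree comparison when $\opt{k}$ is not a prefix of $u_1, \dots, u_n$. The latter holds because the $k$ largest of $n$ nonnegative numbers dominate the sum of any $k$ of them.
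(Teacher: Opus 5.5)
Your proof is correct and follows the same route as the paper. The paper bounds $|N(\mathcal{O}_k)|$ by $\sum_{i=1}^k d_i$ via the cardinality constraint, by $|N(L)|$ since $\mathcal{O}_k \subseteq L$, and by $m$ since $|N(L)| \le m$; you spell out the details (subadditivity of unions and the sorted-degree comparison) that the paper leaves implicit.
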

\begin{proof}
    Since the capacity constraint is $k$, $\f{\opt{k}} \leq \sum_{i=1}^{k}d_i$.
    Also, since $\f{L}$ is the total value of all the left nodes, $\f{\opt{k}} \leq \f{L}$ and $\f{L} \leq m$.
\end{proof}

\subsubsection{Concentration of \texorpdfstring{$\smax{k}$}{Hk}}
\label{sec: negative correlation}

We define a property that allows us to use Chernoff bounds for sums of variables that are not independent.
\begin{definition}[Section 1.10.2.2 in \cite{DOERR18}]
\label{negative_correlation}
    Random variables $X_j$ are negatively correlated if for any choice of $I \subseteq [m]$ we have
    \begin{align*}
        \Pr[\forall j \in I: X_j = 1] &\leq \prod_{j \in I}{\Pr[X_j = 1]} \quad \text{(1-negative correlation)} \\
        \Pr[\forall j \in I: X_j = 0] &\leq \prod_{j \in I}{\Pr[X_j = 0]} \quad \text{(0-negative correlation)}.
    \end{align*}
\end{definition}
For such variables, the usual Chernoff bounds apply.
\begin{lemma}[Theorem 1.10.23 (a) in \cite{DOERR18}] 
\label{lem: upper tail bounds for negatively correlated variables}
    Let $X = \sum_j{X_j}$ for $X_1, \dots, X_m$ identically distributed negatively correlated random variables, such that $X_j \in \{0, 1\}$ for all $j \in [m]$.
    Then $Y = \sum_{i=1}^{m} X_i$ satisfies the usual Chernoff bound:
    \begin{align*}
        \Pr[Y \geq (1 + \delta) \cdot \E[Y]] \leq \exp\left( - \frac{\delta^2}{3} \E[Y] \right).
    \end{align*}
\end{lemma}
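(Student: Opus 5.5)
The plan is the standard moment-generating-function proof of the Chernoff bound. The only place where independence is normally used is the factorization of $\E[e^{\lambda Y}]$, and I will replace that equality by an inequality that follows from 1-negative correlation; 0-negative correlation and identical distribution are not needed for the upper tail. Write $p_j = \Pr[X_j = 1]$ and $\mu = \E[Y] = \sum_j p_j$, and fix $\lambda \geq 0$.

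\emph{Step 1 (the main obstacle): show $\E[e^{\lambda Y}] \leq \prod_{j} \E[e^{\lambda X_j}]$.} Since $X_j \in \{0,1\}$, we have $e^{\lambda X_j} = 1 + (e^{\lambda}-1)X_j$. Expanding the product gives
\begin{align*}
e^{\lambda Y} = \prod_{j=1}^{m} \bigl(1 + (e^{\lambda}-1) X_j\bigr) = \sum_{I \subseteq [m]} (e^{\lambda}-1)^{|I|} \prod_{j \in I} X_j .
\end{align*}
Take expectations and use $\E[\prod_{j\in I} X_j] = \Pr[\forall j \in I: X_j = 1] \leq \prod_{j \in I} p_j$, which is exactly 1-negative correlation. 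This step is valid because every coefficient $(e^{\lambda}-1)^{|I|}$ is nonnegative when $\lambda \geq 0$; this is the only point where care is needed. Re-collapsing the sum then yields $\prod_j (1 + (e^{\lambda}-1)p_j) = \prod_j \E[e^{\lambda X_j}]$.

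\emph{Step 2: the routine Chernoff calculation.} Using $1+x \leq e^x$ gives $\E[e^{\lambda Y}] \leq \exp(\mu(e^{\lambda}-1))$. Markov's inequality then gives
\begin{align*}
\Pr[Y \geq (1+\delta)\mu] \leq \exp\bigl(\mu(e^{\lambda}-1) - \lambda(1+\delta)\mu\bigr).
\end{align*}
Choosing $\lambda = \ln(1+\delta) \geq 0$ yields the bound $\bigl(e^{\delta}/(1+\delta)^{1+\delta}\bigr)^{\mu}$.

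\emph{Step 3: simplify the exponent.} Finally I will verify the elementary inequality $\delta - (1+\delta)\ln(1+\delta) \leq -\delta^2/3$ for $\delta \in (0,1]$, which is the range in which the lemma is used (cf. \cref{lem: opt upper bound by concentration}). To do this, set $f(\delta) = \delta - (1+\delta)\ln(1+\delta) + \delta^2/3$. Then $f(0) = 0$, $f'(\delta) = -\ln(1+\delta) + 2\delta/3$ with $f'(0) = 0$, and $f''(\delta) = -1/(1+\delta) + 2/3$. Since $f''<0$ on $[0,1/2)$ and $f''>0$ on $(1/2,1]$, $f'$ first decreases from $0$ and then increases, ending at $f'(1) = 2/3 - \ln 2 < 0$; hence $f' \leq 0$ on $[0,1]$. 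So $f$ is nonincreasing there, and therefore $f \leq 0$. This gives $\Pr[Y \geq (1+\delta)\E[Y]] \leq \exp(-\delta^2 \E[Y]/3)$, as claimed.
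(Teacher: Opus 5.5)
Your proof is correct. The paper gives no proof of this lemma and imports it from Doerr; your argument is the standard one behind that theorem. Binary-valuedness plus 1-negative correlation give $\E[e^{\lambda Y}] \le \prod_j \E[e^{\lambda X_j}]$ for $\lambda \ge 0$, using the nonnegative coefficients you point out, and the usual Chernoff computation then goes through unchanged.

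Your restriction to $\delta \in (0,1]$ is necessary, not just convenient. The paper's statement omits any range for $\delta$, but the $\delta^2/3$ form fails for larger $\delta$. For example, at $\delta = 2$, sums of independent Bernoulli variables with small $p$ and large mean have true tail rate $3\ln 3 - 2 < 4/3$. Nothing downstream is affected, since the paper only applies the lemma with $\delta \in (0,1)$.
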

The following is a useful lemma for showing that variables are negatively correlated. 
\begin{lemma}[Lemma 1.10.26 of \cite{DOERR18}]
\label{lem: doerr lemma for na variables}
    Let $k, N \in \mathbb{N}$. For all $i \in [k]$, let $n_i \in [N]$.  
    Let $S$ be some set of cardinality $N$; for convenience, $S = [N]$.  
    For all $i \in [k]$, let $U_i$ be a subset of $S$ having cardinality $n_i$, uniformly chosen among all such subsets.  
    Let the $U_i$ be stochastically independent.  
    For all $j \in S$, let $X_j$ be the indicator random variable for the event that $j \in U_i$ for some $i \in [k]$.  
    Then the random variables $X_1, \dots, X_N$ are negatively correlated.
\end{lemma}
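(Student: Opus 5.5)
The plan is to verify the two inequalities of \cref{negative_correlation} separately. The $0$-negative correlation is a direct computation. The $1$-negative correlation is the real content, and I would get it from negative association.

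\textbf{Step 1 ($0$-negative correlation).} Fix $I \subseteq S$ with $|I| = s$. The event that $X_j = 0$ for all $j \in I$ says that every $U_i$ avoids $I$. By independence of the $U_i$,
\[
\Pr[\forall j \in I: X_j = 0] = \prod_{i=1}^{k} \frac{\binom{N-s}{n_i}}{\binom{N}{n_i}} = \prod_{i=1}^{k} \prod_{l=0}^{s-1} \frac{N-n_i-l}{N-l}.
\]
Each factor satisfies $\frac{N-n_i-l}{N-l} \leq \frac{N-n_i}{N}$. Hence the expression is at most $\prod_{i}(1-n_i/N)^s$, which equals $\prod_{j\in I}\Pr[X_j=0]$ because $\Pr[X_j=0]=\prod_i(1-n_i/N)$ for every $j$.

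\textbf{Step 2 ($1$-negative correlation via negative association).} For each $i$, consider the indicators $Y_{ij} = \mathbbm{1}\{j \in U_i\}$ for $j \in S$. The vector $(Y_{ij})_{j}$ is a uniformly random $0/1$ vector with exactly $n_i$ ones, i.e.\ a permutation distribution. By the classical result of Joag-Dev and Proschan, it is negatively associated (NA). I would then invoke two standard closure properties of NA:
\begin{itemize}
\item the union of independent NA families is NA, so the whole family $\{Y_{ij}\}_{i,j}$ is NA;
\item nondecreasing functions of disjoint blocks of an NA family are NA.
\end{itemize}
Since $X_j = \max_i Y_{ij}$ is a nondecreasing function of the block $\{Y_{ij}: i\in[k]\}$, and these blocks are disjoint for distinct $j$, the family $(X_j)_{j\in S}$ is NA. Applying the NA inequality iteratively to the nondecreasing products $\prod_{j \in I'} X_j$ gives $\Pr[\forall j\in I: X_j=1]\le\prod_{j\in I}\Pr[X_j=1]$. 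The same argument applied to $1-X_j$ also recovers Step 1.

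\textbf{Main obstacle.} The delicate point is establishing NA of a single uniform $n_i$-subset, if one does not want to quote it. My first attempt would be the standard induction on $N$. Condition on whether a fixed element lies in $U_i$; the conditional laws are again uniform subsets of a smaller ground set. Then show that the conditional expectations of nondecreasing functions are ordered in the right way.

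If I wanted to avoid the NA machinery altogether, the fallback is an induction on $k$. Conditioning on $U_k = A$, the quantity $\Pr[I\setminus A \subseteq \bigcup_{i<k}U_i]$ is, by induction, at most $\prod_{j\in I\setminus A}p_j$, where $p_j = \Pr[j \in \bigcup_{i<k} U_i]$. It would then remain to show, for the single uniform set $U_k$, that $\E\big[\prod_{j\in I}(1-(1-p_j)\mathbbm{1}\{j\notin U_k\})\big] \le \prod_{j\in I}\bigl(1-(1-p_j)(1-n_k/N)\bigr)$. This is a negative-dependence inequality for one uniform set. I would try to prove it by expanding the product and using $\Pr[J\cap U_k=\emptyset]\le(1-n_k/N)^{|J|}$ from Step 1 termwise. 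The expansion has alternating signs, though, so this is where I expect the real work to lie.
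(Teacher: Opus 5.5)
The paper gives no proof of this lemma. It is quoted verbatim from Doerr's chapter and used only to obtain the upper-tail Chernoff bound in \cref{lem: coverage of smax admits chernoff}.

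Your argument is a correct proof. Step~1 is a valid direct computation; if $s>N-n_i$ one factor vanishes, so the product bound still holds. Step~2 is the standard negative-association route, in the spirit of Dubhashi--Ranjan. It combines Joag-Dev--Proschan for permutation distributions with closure of NA under unions of independent families and under nondecreasing functions of disjoint blocks. You then iterate the NA inequality on $\prod_{j\in I'}X_j$, and on $\prod_{j\in I'}(1-X_j)$ for the $0$-side. This proves more than the statement asks, namely negative association of $(X_j)_j$, but it leans on an external theorem. Your sketched induction on $N$ for a single uniform subset is not yet a proof, so as written Step~2 rests on the citation, which is legitimate.

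Your fallback is easier than you expect, and it gives a fully elementary proof. The alternating signs appear only because you expand in the exclusion indicators $\mathbbm{1}\{j\notin U_k\}$. Rewrite $1-(1-p_j)\mathbbm{1}\{j\notin U_k\}=p_j+(1-p_j)\mathbbm{1}\{j\in U_k\}$ and expand in the inclusion indicators:
\[
\E\Bigl[\prod_{j\in I}\bigl(p_j+(1-p_j)\mathbbm{1}\{j\in U_k\}\bigr)\Bigr]=\sum_{J\subseteq I}\Bigl(\prod_{j\in I\setminus J}p_j\prod_{j\in J}(1-p_j)\Bigr)\Pr[J\subseteq U_k].
\]
Every coefficient is now nonnegative, and $\Pr[J\subseteq U_k]=\prod_{l=0}^{|J|-1}\frac{n_k-l}{N-l}\le (n_k/N)^{|J|}$. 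Bounding termwise, the sum is at most $\prod_{j\in I}\bigl(p_j+(1-p_j)\,n_k/N\bigr)=\prod_{j\in I}\Pr[X_j=1]$. The base case $k=1$ is the same bound $\Pr[I\subseteq U_1]\le (n_1/N)^{|I|}$, so induction on $k$ gives $1$-negative correlation. Together with Step~1, this proves the lemma with no NA machinery. The upper-tail bound the paper actually uses needs only the $1$-side.
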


Finally, we put everything together and show that $|\neighbors{\smax{k}}|$ admits Chernoff bounds.
\begin{lemma}
\label{lem: coverage of smax admits chernoff}
    Let $n, m \in \mathbbm{N}$, $m \geq d_1 \geq \dots \geq d_n \geq 1$, $k \in [n]$ and we draw bipartite graph $\cB = (L, R, \cE)$ from $\textsc{GenR}(n, m, d_1, \dots, d_n)$.
    Then we have that for all $\delta \in (0,1)$:
    \begin{align*}
        \Pr\left[ |\neighbors{\smax{k}}| > (1 + \delta) \cdot \E\left[ |\neighbors{\smax{k}}| \right] \right] \leq \exp\left( - \frac{\delta^2}{3} \E\left[ |\neighbors{\smax{k}}| \right] \right).
    \end{align*}
\end{lemma}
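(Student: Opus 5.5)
The plan is to write $|\neighbors{\smax{k}}| = \sum_{j \in [m]} X_j$, where $X_j = \mathbbm{1}\{v_j \in \neighbors{\smax{k}}\}$, and then apply \cref{lem: upper tail bounds for negatively correlated variables}. That requires two properties of the indicators $X_1, \dots, X_m$: they are $\{0,1\}$-valued and identically distributed, and they are negatively correlated. Once both hold, the Chernoff bound gives $\Pr[Y \geq (1+\delta)\E[Y]] \leq \exp(-\frac{\delta^2}{3}\E[Y])$. The strict-inequality event in the statement is contained in the non-strict one, so the stated bound follows immediately.

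\textbf{Identical distribution.} Under $\textsc{GenR}(n, m, d_1, \dots, d_n)$, each neighborhood $\neighbors{u_i}$ is a uniform $d_i$-subset of $R$, and the neighborhoods are chosen independently. The computation in the proof of \cref{lem:smax expected coverage} then gives
\[
\Pr[X_j = 1] = 1 - \prod_{i=1}^{k}\left(1 - \frac{d_i}{m}\right),
\]
which does not depend on $j$. So the $X_j$ are identically distributed Bernoulli variables.

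\textbf{Negative correlation.} Here I invoke \cref{lem: doerr lemma for na variables} with $S = R$ (identified with $[m]$, so $N = m$), with $k$ sets $U_i = \neighbors{u_i}$ for $i \in [k]$, and with cardinalities $n_i = d_i \in [m]$. The hypotheses of that lemma are exactly the model's definition: each $U_i$ is a uniformly chosen $d_i$-subset of $R$, and the $U_i$ are mutually independent because left nodes pick their neighborhoods independently. Moreover, $X_j$ is precisely the indicator that $v_j \in U_i$ for some $i \in [k]$, which is the variable that lemma refers to. Hence $X_1, \dots, X_m$ are negatively correlated in the sense of \cref{negative_correlation}.

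\textbf{Main obstacle.} There is essentially no real obstacle; the only care needed is in the bookkeeping. First, the hypotheses of the cited Doerr lemma must match the model. The condition $n_i \in [N]$ holds because $1 \leq d_i \leq m$. Independence holds because of the product structure shown in \cref{lem: all edge sets that respect degrees have the same probability}. Second, the Chernoff statement requires $\delta \in (0,1)$, which is given. With these checked, chaining the two properties into \cref{lem: upper tail bounds for negatively correlated variables} completes the proof.
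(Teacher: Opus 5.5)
Your proposal is correct and matches the paper's proof: write $|\neighbors{\smax{k}}|$ as the sum of the indicators $X_j$, get negative correlation from \cref{lem: doerr lemma for na variables} with $S=R$, $N=m$, $n_i=d_i$, and finish with \cref{lem: upper tail bounds for negatively correlated variables}. Your extra checks are bookkeeping the paper leaves implicit: that the $X_j$ are identically distributed, that only the $k$ sets $U_1,\dots,U_k$ are used, and that the strict event is contained in the non-strict one.
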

\begin{proof}
    Let $U_i = \neighbors{u_i}$ for all $i \in [n]$, which are independent and uniform in $\binom{R}{d_i}$ by definition of the random model.
    Also let random variables $X_j  = \mathbbm{1}\{ v_j \in \neighbors{\smax{k}} \} = \mathbbm{1}\{ v_j \in \cup_{i=1}^{k}\neighbors{u_i} \}$ for $j \in [m]$. 
    Then for $S = R$, $n_i = d_i$, $N = m$, by \cref{lem: doerr lemma for na variables} we have that $X_1, \dots, X_m$ are negatively correlated. 
    Then the statement follows by \cref{lem: upper tail bounds for negatively correlated variables} since $|\neighbors{\smax{k}}| = \sum_{j=1}^{m}X_j$.
\end{proof}

\section{Omitted Proofs Of Section 3}
\label{sec: omitted proofs of section 3}

\subsection{Technical Tools}
\label{sec: technical tools n=m appendix}

\paragraph{The differential equation method.}
\FirstPhaseAcceptsN*
\begin{proof}
    Due to the equivalence between $\acceptrejectalgorithm$ and $\greedyalgorithm$ \cref{lem: greedy is equivalent to acceptreject}, on any graph $B = (L, R, E)$ we have $t_d = |\acceptreject{d}{n}|$, since at the end of phase $d$, $\acceptrejectalgorithm$ has accepted all the nodes with marginal contribution $d$. Therefore, it suffices to estimate $|\acceptreject{d}{n}|$. Towards this end we define $Y(i) = |\acceptreject{d}{i}|$, i.e. the number of accepted nodes on iteration $i$ of phase $d$ and we have:
    \begin{align*}
        \E\left[ Y(i+1) - Y(i) \middle| Y(i) \right] 
        = \E\left[ \mathbbm{1}\{u_{i+1} \text{ accepted }\} \middle| Y(i) \right]
        = \Pr\left[ u_{i+1} \text{ accepted } \middle| Y(i) \right]
        = \frac{\binom{n - Y(i) d}{d}}{\binom{n}{d}},
    \end{align*}
    where the last equality is because $u_{i+1}$ is accepted only if $|\neighbors{u_{i+1}} \setminus \neighbors{\acceptreject{d}{i}}| = d$. But since $|\neighbors{\acceptreject{d}{i}}| = Y(i) \cdot d$ and $\neighbors{u_{i+1}}$ are selected uniformly at random from the $n$ nodes of $R$, the only way $u_{i+1}$ was accepted was if it sampled $d$ nodes from the $n - Y(i) \cdot d$ remaining.
    We can now show one by one that the assumptions of the differential equation hold:
    \begin{enumerate}
        \item Trend Hypothesis: $\bigabs{ \E\left[ Y(i+1) - Y(i) \middle| Y(i) \right] - \left( 1 - d \frac{Y(i)}{n} \right)^d } \leq \frac{2d^2}{n}$ by \cref{lem: approximation of success probability n=m} for $r = n - Y(i) d$ since $n \geq 2d^2$,
        \item Lipschitz Hypothesis: $F(y) = \bigpar{1 - d y}^d$ is $d^2$-Lipschitz since: 
        \begin{align*}
            \bigabs{F(y_1) - F(y_2)} 
            &= \bigabs{ \bigpar{1 - d y_1}^d - \bigpar{1 - d y_2}^d } \\
            &= d \cdot \bigabs{ \bigpar{\bigpar{1 - d y_1}^{d-1} + \dots + \bigpar{1 - d y_2}^{d-1}}} \cdot \bigabs{ y_2 - y_1 } \\
            &\leq d \cdot \bigabs{ 1 + \dots + 1 } \cdot \bigabs{ y_2 - y_1 } \\
            &= d^2 \cdot \bigabs{ y_1 - y_2 },
        \end{align*}
        \item Boundedness Hypothesis: $\bigabs{Y(i+1) - Y(i)} \leq 1$ as only one accept can occur,
        \item Initial Condition: $Y(0) = \hat{y} = 0$ so any $\lambda \geq \frac{2}{n} = \frac{d^2}{n} \min\{1, \frac{1}{d^2} \} + \frac{1}{n} = \frac{d^2}{n} \min\{1, \frac{1}{d^2} \} + \frac{1}{n}$ works. We choose $\lambda = \sqrt{\frac{8\log{(2n)}}{n}}$.
        
    \end{enumerate}
    
    Model $Y(i)$ by $n y\left( \frac{i}{n} \right)$ and we obtain the following differential equation:
    \begin{align*}
        y'(t) = \left( 1 - d \cdot y(t) \right)^d,\ y(0) = 0 
        \Rightarrow y(t) = \frac{1}{d} \cdot \left( 1 - \left( 1 + d(d-1) \cdot t \right)^{- \frac{1}{d-1}} \right).
    \end{align*}
    By \cref{thm:differential equation method} with probability $1 - 2 e^{-n \lambda^2 / 8 } = 1 - \frac{1}{n}$ we have that:
    \begin{align*}
        \bigabs{Y(i) - n y(i/n)} < 3 e^{d^2} \cdot \lambda n = 3 e^{d^2} \cdot \sqrt{8n \log(2n)},
    \end{align*}
    then for $i = n$ we obtain the statement.
\end{proof}

\begin{lemma} \label{lem: approximation of success probability n=m}
    Let $d \in \mathbbm{N}$, $n \geq 2d^2$, $r \in [n]$
    then $ |\binom{r}{d} / \binom{n}{d} -  \left( \frac{r}{n} \right)^d | \leq \frac{2d^2}{n}$.
\end{lemma}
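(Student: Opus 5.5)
This is the $m=n$ specialization of \cref{lem: approximation of success probability}, and I would repeat that argument with $m$ replaced by $n$. The starting point is the product form
\[
\frac{\binom{r}{d}}{\binom{n}{d}}=\prod_{i=0}^{d-1}\frac{r-i}{n-i}.
\]
If $r<d$, the left side is $0$. Then $(r/n)^d \le (d/n)^d \le d/n \le 2d^2/n$, so the claim holds trivially. From now on assume $r\ge d$.

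\emph{Upper bound.} For $r\le n$ each factor satisfies $\frac{r-i}{n-i}\le \frac{r}{n}$, since $(r-i)n\le r(n-i)$ is equivalent to $i(n-r)\ge 0$. Taking the product gives $\binom{r}{d}/\binom{n}{d}\le (r/n)^d$.

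\emph{Lower bound.} Factor out $(r/n)^d$ and write
\[
\frac{r-i}{n-i}=\frac{r}{n}\Bigl(1-\frac{i(n-r)}{r(n-i)}\Bigr).
\]
Each bracketed term lies in $[0,1]$ when $r\ge d>i$. Apply $\prod(1-x_i)\ge 1-\sum x_i$ and bound $n-i\ge n-d$. This gives
\[
\frac{\binom{r}{d}}{\binom{n}{d}}\ \ge\ \Bigl(\frac{r}{n}\Bigr)^d-\Bigl(\frac{r}{n}\Bigr)^{d-1}\frac{n-r}{n-d}\cdot\frac{d(d-1)}{2n}.
\]
Now use $(r/n)^{d-1}\le 1$ and $n-r\le n$. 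The subtracted term is then at most $\frac{n}{n-d}\cdot\frac{d^2}{2n}$.

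\emph{Finishing.} It remains to show $\frac{n}{n-d}\le 4$, which yields the bound $2d^2/n$. Since $n\ge 2d^2$, we have $d\le\sqrt{n/2}\le n/2$ whenever $n\ge 2$, so $\frac{n}{n-d}\le 2$. This is even stronger than needed. The case $n=1$ forces $d=0$, which is trivial.

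The only delicate point is checking that the $x_i$ lie in $[0,1]$, which is why the case $r<d$ is handled separately. Everything else is routine.
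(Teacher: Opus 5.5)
Your proof is correct and follows essentially the same route as the paper: the product form, the per-factor comparison $\frac{r-i}{n-i}\le\frac{r}{n}$ for the upper bound, and, for the lower bound, factoring out $(r/n)^d$, applying $\prod(1-x_i)\ge 1-\sum x_i$ with $n-i\ge n-d$, and then bounding $\frac{n}{n-d}$. Your separate treatment of $r<d$ is a harmless refinement: there the paper's stated justification $x_i\in[0,1]$ does not apply, although its inequality still holds because the factor $1-x_r$ vanishes. Likewise harmless is your sharper bound $\frac{n}{n-d}\le 2$ in place of the paper's $4$.
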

\begin{proof}
    We start by showing the upper bound:
    \begin{align*}
        \frac{\binom{r}{d}}{\binom{n}{d}}
        = \frac{r (r - 1) \cdots (r - d + 1))}{n (n - 1) \cdots (n - d + 1)}
        = \prod_{i=0}^{d-1} \left( \frac{r - i}{n - i} \right)
        \leq \left( \frac{r}{n} \right)^d.
    \end{align*}
    We proceed to show the lower bound:
    \begin{align*}
        \frac{\binom{r}{d}}{\binom{n}{d}}
        = \prod_{i=0}^{d-1} \left( \frac{r - i}{n - i} \right)
        = \prod_{i=0}^{d-1} \left( \frac{r}{n} \cdot \frac{1 - \frac{i}{r}}{1 - \frac{i}{n}} \right)
        &= \left( \frac{r}{n} \right)^d \cdot \prod_{i=0}^{d-1} \left( 1 - \frac{i(n-r)}{r (n-i)} \right) \\
        &\geq \left( \frac{r}{n} \right)^d \cdot \left( 1 - \sum_{i=0}^{d-1} \frac{i(n-r)}{r (n-i)} \right) \\
        &\geq \left( \frac{r}{n} \right)^d \cdot \left( 1 - \frac{n-r}{r(n-d)} \cdot \sum_{i=0}^{d-1}i \right) \\
        &= \left( \frac{r}{n} \right)^d \cdot \left( 1 - \frac{n-r}{r(n-d)} \cdot \frac{d(d-1)}{2} \right) \\
        &= \left( \frac{r}{n} \right)^d - \left( \frac{r}{n} \right)^{d-1} \cdot \frac{n-r}{n-d} \cdot \frac{d^2}{2n} \\
        &\geq \left( \frac{r}{n} \right)^d - \frac{n}{n-d} \cdot \frac{d^2}{2n} \\
        &\geq \left( \frac{r}{n} \right)^d - \frac{2d^2}{n},
    \end{align*}
    where the first inequality is due to $\prod_{i=0}^{d-1}(1 - x_i) \geq 1 - \sum_{i=0}^{d-1} x_i$ for all $x_i \in [0,1]$, the second inequality is due to $n-i \geq n-d$ for all $i \in [d-1]$, the third inequality is due to $\frac{r}{n} \leq 1$ and the fourth inequality is because $d \leq \sqrt{n}$, so $\frac{n}{n-d} \leq \frac{n}{n - \sqrt{n}} = \frac{1}{1 - 1/\sqrt{n}} \leq 4$ for all $n \geq 2d^2 \geq 2$.
\end{proof}

\paragraph{The fixed set lower bound.}
\SmaxExpectedCoverageN*
\begin{proof}
    Notice that $\smax{k}$ covers node $v_j$ as long as at least one of its nodes covers $v_j$, so for all $j \in [n]$: 
    \begin{align*}
        \Pr\left[ v_j \in \neighbors{\smax{k}} \right] 
        = 1 - \prod_{u_i \in \smax{k}}{ \left( 1 - \Pr\left[ v_j \in \neighbors{u_i} \right] \right) }
        = 1 - \prod_{u_i \in \smax{k}}{\left(1 - \frac{d}{n}\right)}
        = 1 - \left( 1 - \frac{d}{n} \right)^{k}. \tag{*}
    \end{align*}
    We have that:
    \begin{align*}
        \E\left[ |\neighbors{\smax{k}}| \right] 
        = \E\left[ \sum_{j \in [n]}{\mathbbm{1}\{ v_j \in \neighbors{\smax{k}} \} } \right] 
        = \sum_{j \in [n]}{\Pr[v_j \in \neighbors{\smax{k}}]}
        = \left( 1 - \left( 1 - \frac{d}{n} \right)^{k} \right) \cdot n,
    \end{align*}    
    where the first equality is by definition, the second equality is by linearity of expectation and the third equality is due to Equation (*). The statement follows because $1 - \frac{d}{n} \leq e^{- \frac{d}{n}}$.
\end{proof}
The following lemma is implied by the more general \cref{lem: greedy lower bound by smax}.
\GreedyLowerBoundSmaxN*
As already mentioned, \cref{lem:greedy lower bound by smax for n=m} cannot be proved by simply comparing the marginal contributions and showing that $\E\left[ |\neighbors{\greedy{t}} \setminus \neighbors{\greedy{t-1}}| \right] \geq \E\left[ |\neighbors{\smax{t}} \setminus \neighbors{\smax{t-1}}| \right]$ for all iterations $[t]$. 
By plotting the average marginal contribution of $\greedyalgorithm$ against those of the fixed set, we show that there is an iteration $t \in [n]$, beyond which, the marginal contribution of $\greedyalgorithm$ is lower than that of the fixed set.
\begin{figure}[H] 
    \centering
    \includegraphics[width=0.7\textwidth]{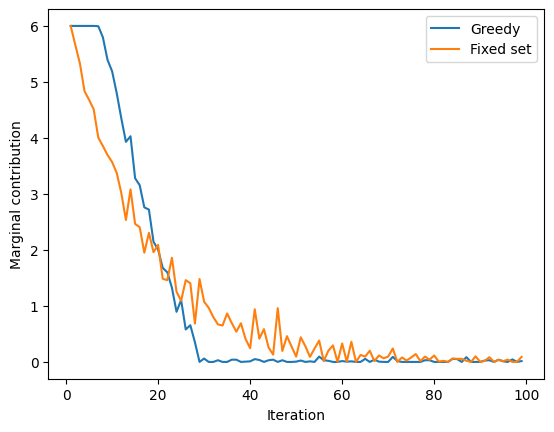}
    \caption{The computation is carried out for $n = m = 100, d=6$ and the marginal contributions of $\greedyalgorithm$ and the fixed set are calculated by averaging over $200$ runs.}
\label{fig:marginal of greedy vs smax}
\end{figure}

\subsection{Analysis For The Different Regions}
\label{sec: analysis for the different regions n=m appendix}

\CriticalRegionLargeDegreesND*
\begin{proof}
    To upper bound the optimal solution, first notice that $|\neighbors{\opt{k}}| = \max_{S \in \binom{L}{k}}|\neighbors{S}|$.
    Let $\delta = \sqrt{\frac{24}{\epsilon} \left(\frac{1}{\epsilon d} + \frac{\max(0, \log{\epsilon d})}{\epsilon d} + \frac{\log{n}}{n} \right)}$ and we show that $\delta \in (0,1)$:
    \begin{align*}
        \delta 
        = \sqrt{\frac{24}{\epsilon} \left(\frac{1}{\epsilon d} + \frac{\max(0, \log{\epsilon d})}{\epsilon d} + \frac{\log{n}}{n} \right)}
        \leq \sqrt{\frac{24}{\epsilon} \left( \frac{1}{\epsilon d} + \frac{1}{\sqrt{\epsilon d}} + \frac{1}{\sqrt{n}} \right)}
        \leq \sqrt{\frac{24}{\epsilon} \cdot \frac{3}{\epsilon \sqrt{d}}}
        = \frac{\sqrt{72}}{\epsilon d^{1/4}}
        \leq \frac{\epsilon}{2}, \tag{1}
    \end{align*}
    where the first inequality is due to $\frac{\max(0, \log{x})}{x} \leq \frac{1}{\sqrt{x}}$ for all $x$, the second inequality is due to $d \geq \sqrt{d}, \sqrt{\epsilon} \geq \epsilon$ and $n \geq d \geq \epsilon^2 d$ and the third inequality is by assumption $d \geq \frac{20^4}{\epsilon^8}$.
    By union bound over $S \in \binom{L}{k}$ and Chernoff bound:
    \begin{align*}
        \Pr\left[ |\neighbors{\opt{k}}| \geq (1 + \delta) \cdot \E\left[ |\neighbors{\smax{k}}| \right] \right]
        &= \Pr\left[ \max_{S \in \binom{L}{k}}|\neighbors{S}| \geq (1 + \delta) \cdot \E\left[ |\neighbors{\smax{k}}| \right] \right] \\
        &\leq \sum_{S \in \binom{L}{k}} \Pr\left[|\neighbors{S}| \geq (1 + \delta) \cdot \E\left[ |\neighbors{\smax{k}}| \right] \right] \\
        &\leq \sum_{S \in \binom{L}{k}} \Pr\left[|\neighbors{S}| \geq (1 + \delta) \cdot \E\left[ |\neighbors{S}| \right] \right] \\
        &\leq \sum_{S \in \binom{L}{k}} \exp\left( -\frac{\delta^2}{3} \cdot \E\left[ |\neighbors{S}| \right] \right) \\
        &= \sum_{S \in \binom{L}{k}} \exp\left( -\frac{\delta^2}{3} \cdot \E\left[ |\neighbors{\smax{k}}| \right] \right) \\
        &= \exp\left( \log{\binom{n}{k}} - \frac{\delta^2}{3} \E\left[ |\neighbors{\smax{k}}| \right]  \right), \tag{2}
    \end{align*}
    where the first line is by definition of $\opt{k}$, the second line is by union bound, the third line is because $\E\left[ |\neighbors{S}| \right] = \E\left[ |\neighbors{\smax{k}}| \right]$ for all $S \in \binom{L}{k}$, the fourth line is by \cref{lem: coverage of smax admits chernoff} since $\delta \in (0, 1)$ by Equation (1), the fifth line is because $\E\left[ |\neighbors{S}| \right] = \E\left[ |\neighbors{\smax{k}}| \right]$ and the sixth line is because $|\binom{L}{k}| = \binom{n}{k} = e^{\log{\binom{n}{k}}}$.
    We upper bound the exponent as follows:
    \begin{align*}
        \log{\binom{n}{k}} - \frac{\delta^2}{3} \E\left[ |\neighbors{\smax{k}}| \right]
        &\leq k \log\left( \frac{en}{k} \right) - 8 \left( \frac{ 1 + \max(0, \log{\epsilon d})}{\epsilon^2 d} + \frac{\log{n}}{\epsilon n} \right) \cdot \left( 1 - e^{-kd/n} \right) \cdot n \\
        &\leq \frac{1 + \log{ \frac{\epsilon d}{2} }}{\frac{\epsilon d}{2}} \cdot n - 8 \left( \frac{1 + \max(0, \log{\epsilon d})}{\epsilon^2 d} + \frac{\log{n}}{\epsilon n} \right) \cdot \left( 1 - e^{-\epsilon/2} \right) \cdot n \\
        &\leq \frac{1 + \log{\frac{\epsilon d}{2}}}{\frac{\epsilon d}{2}} \cdot n - 8 \left( \frac{1 + \max(0, \log{\epsilon d})}{\epsilon^2 d} + \frac{\log{n}}{\epsilon n} \right) \cdot \frac{\epsilon}{4} \cdot n \\
        &\leq \frac{1 + \max(0, \log{\epsilon d})}{\frac{\epsilon d}{2}} \cdot n - \left( \frac{1 + \max(0, \log{\epsilon d})}{\frac{\epsilon d}{2}} + \frac{2\log{n}}{n} \right) \cdot n \\
        &= - 2 \log{n} \\
        &\leq - \log{n}, \tag{2a}
    \end{align*}
    where the first line is due to $\binom{n}{k} \leq \left( \frac{en}{k} \right)^k$ and \cref{lem:smax expected coverage for n=m}, the second line is because $k \log\left( \frac{en}{k} \right)$ is increasing for all $k \in [n]$ and $\frac{\epsilon n}{2d} \leq k \leq \frac{2n}{\epsilon d}$ by assumption, the third line is because $1 - e^{-\epsilon} \geq \frac{\epsilon}{2}$ for all $\epsilon \in (0,1)$ and the fourth line is due to $\epsilon d \geq \frac{\epsilon d}{2}$.
    Let event $F = \{ |\neighbors{\opt{k}}| \geq (1 + \delta) \cdot \E\left[ |\neighbors{\smax{k}}| \right] \}$.
    We are ready to upper bound the expected value of $\optalgorithm$:
    \begin{align*}
        \E\left[ |\neighbors{\opt{k}}| \right] 
        &= \E\left[ |\neighbors{\opt{k}}| \middle| F \right] \cdot \Pr\left[ F \right] + \E\left[ |\neighbors{\opt{k}}| \middle| \bar{F} \right] \cdot \Pr\left[ \bar{F} \right] \\
        &\leq n \cdot \frac{1}{n} + \E\left[ |\neighbors{\opt{k}}| \middle| \bar{F} \right] \cdot 1 \\
        &\leq 1 + (1 + \delta) \cdot \E\left[ |\neighbors{\smax{k}}| \right] \\
        &\leq \left( 1 + \delta + \frac{\epsilon}{2} \right) \cdot \E\left[ |\neighbors{\smax{k}}| \right] \\
        &\leq \left( 1 + \epsilon \right) \cdot \E\left[ |\neighbors{\smax{k}} | \right] \\
        &\leq \left( 1 + \epsilon \right) \cdot \E\left[ |\neighbors{\greedy{k}} | \right], \tag{3}
    \end{align*}
    where the first line is by tower rule, the second line is because $|\neighbors{\opt{k}}| \leq n$ and $\Pr\left[ F \right] \leq \exp\left( - \log{n} \right) = \frac{1}{n}$ by substituting Equation (2a) to (2), the third line follows by definition of $F$, the fourth line is because $1 = \frac{\E\left[ |\neighbors{\smax{k}}| \right]}{\E\left[ |\neighbors{\smax{k}}| \right]} \leq \frac{\E\left[ |\neighbors{\smax{k}}| \right]}{\left( 1 - e^{-\epsilon/2} \right) \cdot n} \leq \frac{\E\left[ |\neighbors{\smax{k}}| \right]}{\frac{\epsilon}{4} \cdot n} \leq \frac{\epsilon}{2} \cdot \E\left[ |\neighbors{\smax{k}}| \right]$ by assumption $n \geq d \geq \frac{20^4}{\epsilon^8}$, the fifth line is because $\delta \leq \frac{\epsilon}{2}$ by Equation (1) and the sixth line is by \cref{lem:greedy lower bound by smax for n=m}.
    The statement follows directly from Equation (3):
    \begin{align*}
        \E\left[ |\neighbors{\greedy{k}}| \right] 
        \geq \frac{1}{1 + \epsilon} \cdot \E\left[ |\neighbors{\opt{k}}| \right]
        \geq \left( 1 - \epsilon \right) \cdot \E\left[ |\neighbors{\opt{k}}| \right].
    \end{align*}
    where the second inequality is because $\frac{1}{1 + \epsilon} \geq 1 - \epsilon$.
\end{proof}

\CriticalRegionSmallDegreesND*
\begin{proof}
    Let $t^*_d = \left( 1 - \left( 1 + d(d-1) \right)^{-\frac{1}{d-1}} \right) \cdot \frac{n}{d}$ and $\delta = 3e^{d^2} \cdot \sqrt{8n\log{2n}}$.
    Then:
    \begin{align*}
        \Pr\left[ |t_d - t^*_d| \leq \delta \right] \geq 1 - \frac{1}{n}, \tag{1}
    \end{align*}
    which follows by \cref{lem:first phase accepts for n=m} as $n \geq 2d^2$ by assumption.
    For $k \leq t^*_d - \delta$, $\greedyalgorithm$ gains $d$ on each iteration so:
    \begin{align*}
        \E\left[ |\neighbors{\greedy{k}}| \right] 
        \geq \Pr\left[ |t_d - t^*_d| \leq \delta \right] \cdot \E\left[ |\neighbors{\greedy{k}}| \middle| |t_d - t^*_d| \leq \delta \right] 
        \geq \left( 1 - \frac{1}{n} \right) \cdot kd
        \geq \left( 1 - \frac{1}{n} \right) \cdot \E\left[ |\neighbors{\opt{k}}| \right], \tag{2}
    \end{align*}
    where the second inequality is due to Equation (1).
    For $k \geq t^*_d - \delta$, $\greedyalgorithm$ gains $d$ on each iteration until $t^*_d - \delta$ and after that we use the worst case analysis which gives:
    \begin{align*}
        \E\left[ |\neighbors{\greedy{k}}| \right]
        &\geq \Pr\left[ |t_d - t^*_d| \leq \delta \right] \cdot \E\left[ |\neighbors{\greedy{k}} \middle| |t_d - t^*_d| \leq \delta \right] \\
        &\geq \left( 1 - \frac{1}{n} \right) \cdot \E\left[ |\neighbors{\greedy{k}}| \middle| |t_d - t^*_d| \leq \delta \right] \\
        &\geq \left( 1 - \frac{1}{n} \right) \cdot \left( 1 - \frac{1}{e} + \frac{1}{e} \cdot \left( \frac{t^*_d - \delta}{n/d} \right)^3 \right) \cdot \E\left[ |\neighbors{\opt{k}}| \middle| |t_d - t^*_d| \leq \delta \right], \tag{3}
    \end{align*}
    where the first line is by tower rule, the second line is due to Equation (1), the third line is due to \cref{lem:augmented worst case analysis for n=m} since $k \geq t^*_d - \delta \geq t_d$ by the condition.
    For the conditional value of $\optalgorithm$ we have:
    \begin{align*}
        \E&\left[ |\neighbors{\opt{k}}| \right] = \\
        &= \E\left[ |\neighbors{\opt{k}}| \middle| |t_d - t^*_d| \leq \delta \right] \cdot \Pr\left[ |t_d - t^*_d| \leq \delta \right] + \E\left[ |\neighbors{\opt{k}}| \middle| |t_d - t^*_d| > \delta \right] \cdot \Pr\left[ |t_d - t^*_d| > \delta \right] \\
        &\leq \E\left[ |\neighbors{\opt{k}}| \middle| |t_d - t^*_d| \leq \delta \right] \cdot 1 + n \cdot \frac{1}{n} \\
        &= \E\left[ |\neighbors{\opt{k}}| \middle| |t_d - t^*_d| \leq \delta \right] + 1 \\
        &\Rightarrow \E\left[ |\neighbors{\opt{k}}| \middle| |t_d - t^*_d| \leq \delta \right] 
        \geq \E\left[ |\neighbors{\opt{k}}| \right] - 1
        \geq \left( 1 - \frac{4}{\epsilon n} \right) \cdot \E\left[ |\neighbors{\opt{k}}| \right], \tag{3a}
    \end{align*}
    where the first line is by tower rule, the second line is due to $|\neighbors{\opt{k}}| \leq n$ and Equation (1), the third line is by cancellation and the fourth line is because $1 \leq \frac{\E\left[ |\neighbors{\opt{k}}| \right]}{\E\left[ |\neighbors{\smax{k}}| \right]} 
    \leq \frac{\E\left[ |\neighbors{\opt{k}}| \right]}{\left( 1 - e^{-\epsilon/2} \right) \cdot n} 
    \leq \frac{4}{\epsilon n} \cdot \E\left[ |\neighbors{\opt{k}}| \right]$ by \cref{lem:smax expected coverage for n=m}.
    Substituting Equation (3a) to (3) we have:
    \begin{align*}
        \E\left[ |\neighbors{\greedy{k}}| \right] 
        &\geq \left( 1 - \frac{1}{n} \right) \cdot \left( 1 - \frac{1}{e} + \frac{1}{e} \cdot \left( \frac{t^*_d - \delta}{n/d} \right)^3 \right) \cdot \left( 1 - \frac{4}{\epsilon n} \right) \cdot \E\left[ |\neighbors{\opt{k}}| \right] \\
        &\geq \left( 1 - \frac{1}{e} + \frac{1}{e} \cdot \left( \frac{t^*_d - \delta}{n/d} \right)^3 - \frac{8}{\epsilon n} \right) \cdot \E\left[ |\neighbors{\opt{k}}| \right], \tag{4}
    \end{align*}
    where the second line is because $\frac{1}{n} \leq \frac{4}{\epsilon n}$.
    At this point it suffices to show that $\frac{t^*_d - \delta}{n/d}$ is lower bounded by a positive constant:
    \begin{align*}
        \frac{t^*_d - \delta}{n/d} 
        \geq \frac{1}{d} - \frac{3e^{d^2} \cdot \sqrt{8n\log{2n}}}{n/d}
        \geq \frac{1}{d} - 12 d e^{d^2} \sqrt{\frac{\log{2n}}{2n}}
        \geq \frac{\epsilon^8}{20^4} - \frac{\frac{12 \cdot 20^4}{\epsilon^8} e^{\frac{20^4}{\epsilon^8}}}{(2n)^{1/4}}
        \geq \frac{\epsilon^8}{20^5}, \tag{4a}
    \end{align*} 
    where the first inequality is due to \cref{lem:simple lower bound for phase 1 n=m}, the second due to $d \leq \frac{8^4}{\epsilon^8}$, the third is by setting $n_{0}(\epsilon) = 8 \cdot \left( \frac{12 \cdot 20^8 e^{20^4/\epsilon^8}}{\epsilon^{16}}  \right)^4$.  
    Then we have:
    \begin{align*}
        \E\left[ |\neighbors{\greedy{k}}| \right]  
        &\geq \left( 1 - \frac{1}{e} + \frac{1}{e} \cdot \left( \frac{t^*_d - \delta}{n/d} \right)^3 - \frac{8}{\epsilon n} \right) \cdot \E\left[ |\neighbors{\opt{k}}| \right] \\
        &\geq \left( 1 - \frac{1}{e} + \frac{\epsilon^{24}}{20^{15} e} - \frac{8}{\epsilon n} \right) \cdot \E\left[ |\neighbors{\opt{k}}| \right] \\
        &\geq \left( 1 - \frac{1}{e} + \left( \frac{\epsilon}{20} \right)^{24} \right) \cdot \E\left[ |\neighbors{\opt{k}}| \right], \tag{5}
    \end{align*}
    where the first line is by Equation (4), the second line is by Equation (4a) and the third inequality is due to $n \geq 8 \cdot \left( \frac{20}{\epsilon} \right)^{26}$.
    By Equation (1) and Equation (5) we obtain the statement.
\end{proof}

\AugmentedWorstCaseAnalysisN*
\begin{proof}
    For all $k \geq t_d$ we have that:
    \begin{align*}
        |\neighbors{\greedy{k}}| 
        &\geq \left( 1 - \exp\left(- \left(1 - \frac{t_d}{k}\right) \right) \cdot \left( 1 - \frac{|\neighbors{\greedydet{t_d}}|}{|\neighbors{\optdet{k}}|} \right) \right) \cdot |\neighbors{\optdet{k}}| \\
        &\geq \left( 1 - \exp\left(- \left(1 - \frac{t_d}{k}\right) \right) \cdot \left( 1 - \frac{t_d \cdot d}{\min\{kd, n\}} \right) \right) \cdot |\neighbors{\optdet{k}}| \\
        &= \left( 1 - \exp\left(- \left(1 - \frac{1}{k/t_d}\right) \right) \cdot \left( 1 - \frac{1}{\min\{k/t_d, n/t_d d\}} \right) \right) \cdot |\neighbors{\optdet{k}}| \\
        &\geq \left( 1 - \exp\left(- \left(1 - \frac{t_d}{n/d}\right) \right) \cdot \left( 1 - \frac{t_d}{n/d} \right) \right) \cdot |\neighbors{\optdet{k}}| \\
        &\geq \left( 1 - \frac{1}{e} + \frac{1}{e} \cdot \left( \frac{t_d}{n/d} \right)^3 \right) \cdot |\neighbors{\optdet{k}}|,
    \end{align*}
    where the first line is due to \cref{lem:augmented worst case analysis n = m} for $t = t_d$, the second line is because $|\neighbors{\greedydet{t_d}}| = t_d \cdot d$ by definition of $t_d$ and $|\neighbors{\optdet{k}}| \leq \min\{kd, n \}$ for all $k$, the fourth line is by \cref{lem:critical region lower bound greedy m/d minimum n = m} since $k \geq t_d$ and the fifth line is due to \cref{lem:critical region lower bound greedy ratio simplification n = m} since $t_d \leq n/d$ by definition.
\end{proof}

\begin{lemma}
\label{lem:augmented worst case analysis n = m}
    Let $d \in \mathbbm{N}$ and $B = (L, R, E)$ be a bipartite graph. 
    Then for all $t \leq k$:
    $$|\neighbors{\greedydet{k}}| \geq \left( 1 - \exp\left(- \left( 1 - \frac{t}{k} \right) \right) \cdot \left( 1 - \frac{|\neighbors{\greedydet{t}}|}{|\neighbors{\optdet{k}}|} \right) \right) \cdot |\neighbors{\optdet{k}}|.$$
\end{lemma}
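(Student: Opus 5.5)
This is the standard greedy recursion, started at iteration $t$ instead of $0$. Write $g_j = |\neighbors{\greedydet{j}}|$ and $o = |\neighbors{\optdet{k}}|$. The key per-step inequality holds for every $j \geq 0$ on any bipartite graph:
\begin{align*}
g_{j+1} - g_j \geq \frac{o - g_j}{k}.
\end{align*}
To see it, note that $\optdet{k}$ has at most $k$ nodes and, by monotonicity and submodularity of coverage, $o - g_j \leq |\neighbors{\greedydet{j} \cup \optdet{k}}| - g_j \leq \sum_{u \in \optdet{k}} |\neighbors{u} \setminus \neighbors{\greedydet{j}}|$. Some $u \in \optdet{k}$ therefore has marginal contribution at least $(o-g_j)/k$ to $\greedydet{j}$. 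If $u \in \greedydet{j}$, its marginal is $0$ and the inequality is trivial, since then $o \le g_j$ contributions vanish appropriately; more simply, restrict the sum to $\optdet{k}\setminus \greedydet{j}$. Greedy picks a node whose marginal is at least this large.

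Next, rewrite the recursion as
\begin{align*}
o - g_{j+1} \leq \left(1 - \frac{1}{k}\right)(o - g_j).
\end{align*}
If at some point $g_j \geq o$, the claimed bound holds trivially, because the right-hand factor is at most $o$ and $g_k \geq g_j$. Otherwise all terms are positive and the inequality can be iterated. Doing so for $j = t, \dots, k-1$ gives
\begin{align*}
o - g_k \leq \left(1 - \frac{1}{k}\right)^{k-t}(o - g_t) \leq e^{-(k-t)/k}(o - g_t) = \exp\left(-\left(1 - \frac{t}{k}\right)\right)(o - g_t).
\end{align*}
Rearranging, $g_k \geq o - \exp(-(1-t/k))\,(o - g_t)$. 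Factoring out $o$ yields exactly the stated bound $\bigl(1 - \exp(-(1-t/k))(1 - g_t/o)\bigr)\cdot o$.

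The only delicate point is the case $g_t > o$. There the factor $(1 - g_t/o)$ is negative, but the bound still holds because $g_k \geq g_t > o$ while the right-hand side is at most $g_t$. Specifically, $o - e^{-x}(o-g_t) = o(1-e^{-x}) + e^{-x}g_t \le g_t$ when $g_t \ge o$. This case should be handled separately at the start. The rest is routine and does not depend on any earlier lemma; the case $t = k$ is trivial.
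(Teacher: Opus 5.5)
Your proof is correct and is essentially the paper's argument. Like you, the paper iterates $|\neighbors{\optdet{k}}| - |\neighbors{\greedydet{i}}| \le (1 - \frac{1}{k})(|\neighbors{\optdet{k}}| - |\neighbors{\greedydet{i-1}}|)$ for $i = t+1, \dots, k$ and then applies $1 - \frac{1}{k} \le e^{-1/k}$; your submodularity derivation of the per-step bound spells out what the paper takes as given, and your separate case $g_t > o$ is harmless but vacuous, since $\greedydet{t}$ is a feasible set of size $t \le k$ and so $g_t \le o$ always.
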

\begin{proof}
    We repeat the worst case analysis of $\greedyalgorithm$ for iterations $k \geq i > t$:
    \begin{align*}
        |\neighbors{\optdet{k}}| - |\neighbors{\greedydet{k}}| 
        \leq \left(1 - \frac{1}{k}\right)^{1} \cdot \left(|\neighbors{\optdet{k}}| - |\neighbors{\greedydet{k-1}}| \right)
        \leq \left(1 - \frac{1}{k}\right)^{k-t} \cdot \left(|\neighbors{\optdet{k}}| - |\neighbors{\greedydet{t}}| \right), \tag{1}
    \end{align*}
    and solving for $|\neighbors{\greedydet{k}}|$ we have:
    \begin{align*}
        |\neighbors{\greedydet{k}}| 
        &\geq \left( 1 - \left(1 - \frac{1}{k} \right)^{k - t} \cdot \left(1 - \frac{|\neighbors{\greedydet{t}}|}{|\neighbors{\optdet{k}}|} \right) \right) \cdot |\neighbors{\optdet{k}}| \\
        &\geq \left( 1 - \exp\left\{-\left(1 - \frac{t}{k}\right)\right\} \cdot \left(1 - \frac{|\neighbors{\greedydet{t}}|}{|\neighbors{\optdet{k}}|} \right) \right) \cdot |\neighbors{\optdet{k}}|,
    \end{align*}
    where the first line is by (1) and the second line is due to $1 - \frac{1}{k} \leq e^{-\frac{1}{k}}$.
\end{proof}

\begin{lemma}
\label{lem:critical region lower bound greedy m/d minimum n = m}
    Let $y \geq 1$ and $f(x) = 1 - \exp\left(- \left( 1 - \frac{1}{x} \right) \right) \cdot \left( 1 - \frac{1}{\min\{x, y\}} \right)$. 
    Then we have that
    $f(x) \geq 1 - \exp\left(- \left( 1 - \frac{1}{y} \right) \right) \cdot \left( 1 - \frac{1}{y} \right)$ for all $x \geq 1$.
\end{lemma}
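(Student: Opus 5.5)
We need to show: for $y\ge1$ and $x\ge 1$, $f(x) \ge g(y) := 1-e^{-(1-1/y)}(1-1/y)$. The plan is to split on whether $x \le y$ or $x > y$ and show that the subtracted term $h(x) = e^{-(1-1/x)}(1-1/\min\{x,y\})$ is maximized at $x=y$.

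\emph{Case $x \le y$.} Here $\min\{x,y\}=x$, so $h(x) = e^{-(1-1/x)}(1-1/x)$. Substitute $s = 1-1/x \in [0,1)$; this is increasing in $x$, and $s \le s_y := 1-1/y$. Then $h = s e^{-s}$. The function $s\mapsto se^{-s}$ has derivative $(1-s)e^{-s} \ge 0$ on $[0,1]$, so it is nondecreasing there. Hence $h(x) \le s_y e^{-s_y}$, which gives $f(x) \ge g(y)$.

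\emph{Case $x > y$.} Here $\min\{x,y\}=y$, so $h(x) = e^{-(1-1/x)}(1-1/y)$. Since $1-1/x$ is increasing in $x$, the factor $e^{-(1-1/x)}$ is decreasing. Therefore $h(x) \le e^{-(1-1/y)}(1-1/y)$, with equality in the limit $x\downarrow y$. This again gives $f(x)\ge g(y)$.

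The only mildly delicate point is monotonicity of $se^{-s}$, which holds exactly on $s\le1$; this is guaranteed because $x\ge1$ forces $s\in[0,1)$. Continuity at $x=y$ is automatic, since both branches agree there. No earlier results are needed beyond elementary calculus.
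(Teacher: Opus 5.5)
Your proof is correct and takes the same approach as the paper: split on $x \le y$ versus $x > y$, use that $s \mapsto s e^{-s}$ is nondecreasing on $[0,1]$ in the first case, and use that $e^{-(1-1/x)}$ decreases in $x$ in the second. You also supply the derivative argument $(1-s)e^{-s} \ge 0$ that the paper leaves implicit, and in the second case you could state that $1-1/y \ge 0$, since multiplying by this factor is what preserves the inequality.
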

\begin{proof}
    We start with case $x \leq y$:
    \begin{align*}
        f(k) 
        = 1 - \exp\left(- \left( 1 - \frac{1}{x} \right) \right) \cdot \left( 1 - \frac{1}{x} \right)
        \geq 1 - \exp\left(- \left( 1 - \frac{1}{y} \right) \right) \cdot \left( 1 - \frac{1}{y} \right).
    \end{align*}
    For $x > y$:
    \begin{align*}
        f(k) = 1 - \exp\left(- \left( 1 - \frac{1}{x} \right) \right) \cdot \left( 1 - \frac{1}{y} \right)
        \geq 1 - \exp\left(- \left( 1 - \frac{1}{y} \right) \right) \cdot \left( 1 - \frac{1}{y} \right).
    \end{align*}
\end{proof}

\begin{lemma}
\label{lem:critical region lower bound greedy ratio simplification n = m}
    For all $x \leq 1$ we have that $1 - e^{-(1-x)} \cdot (1 - x) \geq 1 - \frac{1}{e} + \frac{x^3}{e}$.
\end{lemma}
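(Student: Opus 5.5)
The plan is to clear the exponential and reduce the inequality to an elementary polynomial-versus-exponential comparison. Multiplying both sides of the claimed inequality by $e$ and rearranging, the statement $1 - e^{-(1-x)}(1-x) \geq 1 - \frac{1}{e} + \frac{x^3}{e}$ is equivalent to
\begin{align*}
    e^{x}(1-x) \leq 1 - x^3 .
\end{align*}
The key observation is that $1 - x^3 = (1-x)(1+x+x^2)$. So the difference of the two sides factors as
\begin{align*}
    g(x) := 1 - x^3 - (1-x)e^{x} = (1-x)\bigl(1 + x + x^2 - e^{x}\bigr).
\end{align*}
Since $x \leq 1$ gives $1 - x \geq 0$, it suffices to show $h(x) := 1 + x + x^2 - e^{x} \geq 0$ for all $x \leq 1$.

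I will split into two cases according to the sign of $x$, using Taylor's theorem in each.

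For $x \in [0,1]$, expand $e^{x} = 1 + x + \sum_{j \geq 2} x^j/j!$. Since $x^j \leq x^2$ for $j \geq 2$, this gives
\begin{align*}
    e^{x} \leq 1 + x + x^2 \sum_{j\geq 2} \frac{1}{j!} = 1 + x + (e-2)x^2 \leq 1 + x + x^2 .
\end{align*}
For $x \leq 0$, Taylor's theorem with Lagrange remainder gives $e^{x} = 1 + x + \frac{x^2}{2} + \frac{x^3}{6}e^{\xi}$ for some $\xi$ between $x$ and $0$. The remainder term is nonpositive, so $e^{x} \leq 1 + x + \frac{x^2}{2}$, and hence $h(x) \geq \frac{x^2}{2} \geq 0$.

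Combining the two cases, $h(x) \geq 0$ for all $x \leq 1$, so $g(x) \geq 0$, which is the claim. I do not expect a real obstacle. The only step needing care is noticing the factorization of $1 - x^3$ through $(1-x)$. Without it, one would be tempted to analyze the derivative of the full expression, which is messier. For the application in \cref{lem:augmented worst case analysis for n=m}, only $x = t_d/(n/d) \in [0,1]$ matters, but the argument covers all $x \leq 1$ anyway.
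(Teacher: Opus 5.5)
Your proposal is correct and follows the paper's proof: both factor $1-x^3=(1-x)(1+x+x^2)$ and reduce the claim, using $1-x\geq 0$, to the bound $e^{x}\leq 1+x+x^2$ for $x\leq 1$. The paper simply asserts that bound, while you prove it by splitting into the cases $x\in[0,1]$ and $x\leq 0$, and both cases are handled correctly.
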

\begin{proof}
    We have that:
    \begin{align*}
        1 - e^{-(1-x)} \cdot (1 - x) 
        &= 1 - e^{-1} e^{x} \cdot (1 - x) \\
        &\geq 1 - e^{-1} (1 + x + x^2) \cdot (1 - x) \\
        &= 1 - e^{-1} (1 - x^3) \\
        &= 1 - \frac{1}{e} + \frac{x^3}{e},
    \end{align*}
    where the inequality is due to $e^x \leq 1 + x + x^2$ for all $x \leq 1$.
\end{proof}

\begin{lemma}
\label{lem:simple lower bound for phase 1 n=m}
    For all $d \geq 2$ we have that $1 - \left( 1 + d(d-1) \right)^{- \frac{1}{d-1}} \geq \frac{1}{d}$.
\end{lemma}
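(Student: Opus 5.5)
The plan is to rewrite the inequality so that it becomes a comparison between a power and a simple rational expression. The claim $1 - (1+d(d-1))^{-\frac{1}{d-1}} \geq \frac{1}{d}$ is equivalent to
\begin{align*}
\left(1+d(d-1)\right)^{-\frac{1}{d-1}} \leq 1 - \frac{1}{d} = \frac{d-1}{d}.
\end{align*}
Both sides are positive, so we may raise both sides to the power $-(d-1)$, which reverses the inequality. The statement then becomes
\begin{align*}
1 + d(d-1) \geq \left(\frac{d}{d-1}\right)^{d-1} = \left(1 + \frac{1}{d-1}\right)^{d-1}.
\end{align*}

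Next we bound the right-hand side by the standard inequality $1+x \leq e^x$. This gives $\left(1+\frac{1}{d-1}\right)^{d-1} \leq e < 3$. On the other side, for $d \geq 2$ we have $1 + d(d-1) \geq 1 + 2 = 3$. Combining the two bounds yields the reduced inequality, and reversing the equivalences above gives the lemma.

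Nothing here is really an obstacle. The only point needing care is the direction of the inequality when raising to the negative power $-(d-1)$. That step is valid because $t \mapsto t^{-(d-1)}$ is decreasing on $(0,\infty)$ and both quantities involved are positive.
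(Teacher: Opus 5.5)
Your proof is correct and is essentially the paper's argument. The paper takes logarithms and shows $\log(1+d(d-1)) \geq 1 \geq (d-1)\log\frac{d}{d-1}$, using $1+d(d-1) \geq 3 \geq e$ and $\log(1+x) \leq x$; this is your chain $(1+\frac{1}{d-1})^{d-1} \leq e < 3 \leq 1+d(d-1)$ written in logarithmic form.
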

\begin{proof}
    Equivalently we show that $\left( 1 + d(d-1) \right)^{-\frac{1}{d-1}} \leq \frac{d-1}{d}$ or after taking logs:
    \begin{align*}
        \log\left( 1 + d(d-1) \right) \geq (d-1) \cdot \log\left( \frac{d}{d-1} \right).
    \end{align*}
    The left-hand side is greater than $1$ since $1+d(d-1) \geq e$ for all $d \geq 2$ and the right-hand side is less than $1$ since $(d-1) \cdot \log\left( \frac{d}{d-1} \right) \leq (d-1) \cdot \frac{1}{d-1} = 1$.
\end{proof}

\CLAIMONE*
\begin{proof}
    We have:
    \begin{align*}
        \frac{1 - e^{-\epsilon}}{\epsilon}
        \geq \frac{1 - (1 - \epsilon + \epsilon^2))}{\epsilon}
        = 1 - \epsilon
    \end{align*}
    where the inequality is due to $e^{-\epsilon} \leq 1 - \epsilon + \epsilon^2$ which holds for all $\epsilon \in (0,1)$.
\end{proof}

\CLAIMTWO*
\begin{proof}
    We have:
    \begin{align*}
        \left( 1 - \frac{1}{k} \right)^k
        = e^{k \log\left( 1 - \frac{1}{k} \right)}
        \leq e^{k \left( - \frac{1}{k} - \frac{1}{2k^2} \right)}
        = e^{-1 - \frac{1}{2k}}
        = e^{-1} \cdot e^{- \frac{1}{2k}}
        \leq e^{-1} \cdot \left( 1 - \frac{\frac{1}{2k}}{2} \right)
        = \frac{1}{e} - \frac{1}{4ek},
    \end{align*}
    where the first equality is by raising to the exponent, the first inequality is due to inequality $\log\left( 1 - x \right) \leq - x - \frac{x^2}{2}$ for all $x \geq 0$, the second equality is by simplifying the exponent, and the second inequality is due to inequality $e^{-x} \leq 1 - \frac{x}{2}$ for all $x \in [0,1]$.
\end{proof}

\section{Omitted Proofs Of Section 4}
\label{sec: omitted proofs of section 4}

\begin{definition}[Erdos-Renyi random graph]
    $\mathbbm{G}_{n, m}$ samples graphs $\cG = (V, \cE)$, with $|V| = n$ and $\cE \subseteq \binom{V}{2}$ chosen uniformly, such that $|\cE| = m$.
\end{definition}
\begin{definition}[Random graph with replacement]
    $\mathbbm{G}^r_{n, m}$ samples graphs $\cG^r = (V, \cE^r)$, with $|V| = n$. Also let $e_1, e_2, \dots, e_m \in \binom{V}{2}$ be independent uniform random samples (with replacement), then $\cE^r = \{e_1, \dots, e_m\}$.
\end{definition}
\begin{definition}[Maximum matching]
    Let $G = (V, E)$ and let $M$ be a maximum matching on $G$. We denote the size of this matching as $\mu(G) = |M|$ .
\end{definition}
\begin{definition}[Maximum matching on random graphs]
    Let $\cG \sim \mathbbm{G}_{n,m}$ and $\cG^r \sim \mathbbm{G}^r_{n,m}$. 
    Then $\mu(n,m) = \mu(\cG)$ denotes the size of the maximum matching on $\cG$ and $\mu^r(n,m) = \mu(\cG^r)$ denotes the size of the maximum matching on $\cG^r$.
\end{definition}

We state a classic result for the size of maximum matchings in random graphs by \cite{karp1981MaximumMatchingRandomGraphs}, but we use the statement of \cite{frieze1998MaximumMatchingRevisited}. The difference in some constants is because we assume $n = cm$, while in the original matching setting it was $n = cm/2$.
\begin{restatable}{lemma}{MaxMatchingSizeErdosRenyi}[Theorem 4 in \cite{frieze1998MaximumMatchingRevisited}]
\label{lem: maximum matching size in erdos-renyi}
    Let $n = cm$ for some constant $c > 0$ and sample graph $\cG$ from $\mathbbm{G}_{m, n}$.
    Then for any constant $\epsilon > 0$ we have that:
    \begin{align*}
        \Pr\left[ \middle| \frac{\mu(m, n)}{m} - \left(1 - \frac{\gamma^* + \gamma_* + \gamma^* \gamma_*}{4c} \right) \middle| \geq m^{-\frac{1}{7} + \epsilon} \right] = o(1),
    \end{align*}
    where $\gamma_*$ is the smallest root of $x = 2c e^{-2c e^{-x}}$ and $\gamma^* = 2c e^{-\gamma_*}$.
\end{restatable}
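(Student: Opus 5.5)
The plan is to follow the Karp--Sipser approach, as refined by Aronson, Frieze and Pittel: analyze a greedy matching algorithm on $\cG \sim \mathbbm{G}_{m,n}$ with $n = cm$, so the graph has $m$ vertices and average degree $2c$. The algorithm's output is provably optimal up to a small additive error, and its size is tracked by the differential equation method (Lemma~\ref{thm:differential equation method}).

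The algorithm (Karp--Sipser) has two phases.
\begin{itemize}
\item \textbf{Phase~1.} While the current graph has a vertex of degree~$1$, pick such a vertex uniformly at random. Add its unique edge to the matching and delete both endpoints. Isolated vertices are discarded and counted as unmatched. Phase~1 ends when the minimum degree is at least~$2$; the remaining graph is the Karp--Sipser core.
\item \textbf{Phase~2.} Repeatedly choose a uniformly random edge, match it, delete its endpoints, and resume pendant removal whenever degree-$1$ vertices appear.
\end{itemize}
The first structural fact is deterministic: a pendant edge always belongs to some maximum matching. Hence the Phase~1 matching extends to a maximum matching of $\cG$, and $\mu(\cG)$ equals the Phase~1 matching size plus the maximum matching of the core.

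The second step is a randomness-preservation argument. Conditioned on the vertex-degree counts, the remaining graph is uniform among graphs with those counts. In the configuration-model formulation, the degrees of vertices of degree $\ge 2$ follow a truncated Poisson law whose parameter is determined by the edge count. This gives a low-dimensional Markov state: the number of vertices of degree $0$, the number of degree $1$, the number of degree $\ge 2$, and the number of edges. From it I would write the expected one-step changes as Lipschitz functions of the scaled state, in the same way as in the proof of Lemma~\ref{lem:first phase accepts for n=m}. Lemma~\ref{thm:differential equation method} then shows the trajectory concentrates around the ODE solution with error $O(m^{-1/2+\epsilon})$, as long as we stay away from the singular boundary.

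Solving the ODE should show two things.
\begin{itemize}
\item Phase~1 ends at a fixed point governed by $x = 2c e^{-2c e^{-x}}$. Its smallest root $\gamma_*$ and $\gamma^* = 2c e^{-\gamma_*}$ are the extinction quantities of the local Poisson$(2c)$ branching process that underlies pendant removal.
\item The fraction of vertices discarded unmatched during Phase~1 is $(\gamma^*+\gamma_*+\gamma^*\gamma_*)/(2c)$. This gives the target $1 - (\gamma^*+\gamma_*+\gamma^*\gamma_*)/(4c)$ once the core is shown to be almost perfectly matchable.
\end{itemize}
I would verify this identity by integrating the ODE in closed form in the time change where the parameter of the truncated Poisson law is the clock.

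The third step, which I expect to be the main obstacle, is showing that the core loses only $O(m^{1/5}\,\mathrm{polylog}\, m)$ vertices in Phase~2. There, the number of degree-$1$ vertices evolves like a random walk with negative drift that is reflected at $0$. The number of vertices left unmatched is at most the number of times a vertex becomes isolated, which happens only when this walk is positive and two pendants collide. The difficulty is near the end of Phase~2: the core becomes small, the drift degenerates, and the Lipschitz constants in the ODE blow up. I would first try to stop the differential-equation tracking at a time where about $m^{1-\alpha}$ vertices remain and bound the remainder crudely. Balancing the $m^{-1/2}$ fluctuation error against the $m^{-\alpha}$ truncation loss and the walk excursions is what produces the exponent $-1/7+\epsilon$ in the error bound.

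Finally, combining the Phase~1 concentration with the Phase~2 loss bound, and noting that every error term is $o(m^{1-1/7+\epsilon})$ with probability $1-o(1)$, yields the stated bound.
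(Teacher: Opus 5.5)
The paper does not prove this statement. It quotes it from Aronson--Frieze--Pittel (Theorem~4 there), with their average degree $c$ replaced by $2c$. Your outline reconstructs that external argument, and its skeleton is sound. Write $I_1, I_2$ for the vertices isolated during Phases~1 and~2, and $\Gamma = \gamma^*+\gamma_*+\gamma^*\gamma_*$. Since pendant edges lie in some maximum matching, $\bigl|\mu(\cG) - \tfrac{m-I_1}{2}\bigr| \le \tfrac{I_2}{2}$. However, the identity you intend to verify is wrong. The target $1 - \Gamma/(4c)$ requires $I_1/m \to \Gamma/(2c) - 1$, not $\Gamma/(2c)$. Your value would give $\mu/m \to \frac12 - \Gamma/(4c)$, which is negative at $c=1$. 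There $\gamma_* = \gamma^* = W(2) \approx 0.853$ (Lambert $W$), and the correct limit is about $0.392$.

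The substantive gap is the polynomial error rate. The bound in \cref{thm:differential equation method} is $3e^{LT}\lambda n$, where $L$ is a Lipschitz constant on the whole domain. The Karp--Sipser drifts depend only on ratios of state variables, such as $v_1/(2\mu)$ and the Poisson parameter $z$ fixed by $(2\mu - v_1)/v$. On a domain where the scaled edge count is at least $m^{-\alpha}$, this gives $L \asymp m^{\alpha}$, so the bound becomes $e^{\Theta(m^{\alpha})}\lambda n$ and is vacuous. At constant distance from the singularity you only recover the $o(m)$ error of Karp--Sipser. The process hits this singular endpoint at the end of Phase~1 when $2c<e$, where the graph is consumed entirely; this includes $c=1$, the only case the paper uses. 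It hits it at the end of Phase~2 when $2c>e$. To track down to $m^{1-\alpha}$ remaining vertices you need a rescaling (scale-free coordinates, logarithmic time) plus a quantitative bound on how errors amplify over $\Theta(\log m)$ scales. That is where the real work lies; your sketch does not do it, and the exponent $-1/7$ is asserted rather than derived. The Phase~2 bound on $I_2$, which is the main theorem of Aronson--Frieze--Pittel, is likewise only gestured at. So this is a roadmap rather than a proof, and citing the theorem, as the paper does, is the right move.

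Note also that the paper needs much less. \cref{thm:Theorem 3} only uses a lower bound on $\mu$ at $c=1$ with any $o(1)$ relative error. Since the Phase~1 matching is itself a matching, the following would suffice, with no Phase~2 analysis: track Phase~1 at constant distance from the singularity, stop with $\delta m$ edges left, let $\delta \to 0$, and combine this with a closed-form ODE solution.
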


The following lemma shows that with high probability there is a sublinear number of pairs of duplicated edges in graphs drawn from model $\mathbbm{G}^r_{n, n}$.
\begin{lemma}
\label{lem: duplicated edges are sublinear}
    Draw $\cG^r(V, \cE^r) \sim \mathbbm{G}^r_{n, n}$.
    Then $\Pr\left[ n - |\cE^r| \geq \log{n} \right] \leq \frac{ 1 }{\log{n}}$.
\end{lemma}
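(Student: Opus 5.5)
Since $|\cE^r|$ counts distinct edges among $n$ independent uniform samples $e_1,\dots,e_n$ from $\binom{V}{2}$, the deficit $n-|\cE^r|$ is at most the number of colliding pairs. The plan is to bound the expected number of collisions and then apply Markov's inequality. Define
\[
\mathcal{C} = \sum_{1 \le i < j \le n} \mathbbm{1}\{e_i = e_j\}.
\]
I would first check the deterministic inequality $n - |\cE^r| \le \mathcal{C}$. For each distinct edge $e$ appearing with multiplicity $c_e \ge 1$, it contributes $c_e - 1$ to the deficit and $\binom{c_e}{2} \ge c_e - 1$ to $\mathcal{C}$. Summing over all distinct $e$ gives the claim.

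Next, by independence and uniformity, $\Pr[e_i = e_j] = \binom{n}{2}^{-1}$ for each $i<j$. By linearity of expectation,
\[
\E[\mathcal{C}] = \binom{n}{2}\cdot \binom{n}{2}^{-1} = 1 .
\]
Markov's inequality then gives
\[
\Pr\left[n - |\cE^r| \ge \log n\right] \le \Pr[\mathcal{C} \ge \log n] \le \frac{1}{\log n},
\]
which is the statement.

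There is no real obstacle here. The only point needing care is the combinatorial inequality $c_e - 1 \le \binom{c_e}{2}$ for integers $c_e \ge 1$, which holds since $\binom{c}{2} - (c-1) = (c-1)(c-2)/2 \ge 0$. For small $n$ with $\log n \le 1$, the bound is trivial because the right-hand side is at least $1$.
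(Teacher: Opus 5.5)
Your proposal is correct and is essentially the paper's own argument. You dominate $n-|\mathcal{E}^r|$ by the number of colliding pairs, compute that its expectation is $\binom{n}{2}\cdot\binom{n}{2}^{-1}=1$, and apply Markov's inequality; your explicit check $c_e-1\le\binom{c_e}{2}$ just spells out the step the paper justifies in one line.
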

\begin{proof}
    Let $X = n - |\cE^r|$, then $X \leq \sum_{1 \leq i < j \leq n} \mathbbm{1}\{e_i = e_j \}$, since the right-hand side counts more collisions when the same edge is sampled multiple times.
    Let $i \neq j$ arbitrary and we have that:
    \begin{align*}
        \Pr\left[ e_i = e_j \right]
        &= \sum_{\{v, v'\} \in \binom{V}{2}} \Pr\left[ e_i = \{v, v'\}, e_j = \{v, v'\} \right] \\
        &= \sum_{\{v, v'\} \in \binom{V}{2}} \Pr\left[ e_i = \{v, v'\} \right] \cdot \Pr\left[ e_j = \{v, v'\} \right] \\
        &= \binom{n}{2} \cdot \left( \frac{1}{\binom{n}{2}} \right)^2 \\
        &= \frac{1}{\binom{n}{2}}, \tag{1}
    \end{align*}
    where the first line is by law of total probability, the second line is due to independent choice of edges and the third line is because edges are chosen uniformly at random with replacement.
    For the expectation of $X$ we have that:
    \begin{align*}
        \E[X] \leq \sum_{1 \leq i < j \leq n} \Pr\left[ e_i = e_j \right]
        = \sum_{1 \leq i < j \leq n} \frac{1}{\binom{n}{2}}
        = \frac{\binom{n}{2}}{\binom{n}{2}}
        = 1 \tag{2}
    \end{align*}
    where the second equality is by (1).
    Finally we show the statement:
    \begin{align*}
        \Pr\left[ X \geq \log{n} \right] 
        \leq \frac{\E[X]}{\log{n}}
        \leq \frac{1}{\log{n}},
    \end{align*}
    where the first inequality is by Markov and the second is due to (2).
\end{proof}

Now we show that the maximum matching size in both random graph models is close.
\begin{lemma}
\label{lem:approximation of matching in random graph with replacement by erdos renyi}
    $\Pr\left[ \mu^r(n, n) \geq x \right] 
    \geq \Pr\left[ \mu(n, n) \geq x + \log{n} \right] \cdot \left( 1 - \frac{ 1 }{\log{n}} \right)$ for all $x$.
\end{lemma}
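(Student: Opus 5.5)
We couple $\mathbbm{G}^r_{n,n}$ with $\mathbbm{G}_{n,n}$ and then use \cref{lem: duplicated edges are sublinear} to control how many edges are lost to duplication. Let $\cG^r=(V,\cE^r)\sim\mathbbm{G}^r_{n,n}$ and set $N=|\cE^r|$. The key structural observation is that, conditioned on $N=j$, the set $\cE^r$ is a uniformly random $j$-subset of $\binom{V}{2}$. This holds because the sampling process is invariant under any permutation of $\binom{V}{2}$, and such a permutation acts transitively on $j$-subsets. Hence, conditioned on $N=j$, $\cG^r$ is distributed exactly as $\mathbbm{G}_{n,j}$.

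Next, we compare $\mathbbm{G}_{n,j}$ with $\mathbbm{G}_{n,n}$ for $j\ge n-\log n$. Draw $\cG\sim\mathbbm{G}_{n,n}$ and delete $n-j$ of its edges uniformly at random. The result is distributed as $\mathbbm{G}_{n,j}$, again by symmetry. Deleting one edge lowers the maximum matching size by at most one, so $\mu$ of the resulting graph is at least $\mu(\cG)-(n-j)\ge \mu(\cG)-\log n$. Therefore, for every $j\ge n-\log n$,
\[
\Pr\left[\mu(n,j)\ge x\right]\ge \Pr\left[\mu(n,n)\ge x+\log n\right].
\]

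Finally, we condition on $N$:
\[
\Pr\left[\mu^r(n,n)\ge x\right]\ge \sum_{j> n-\log n}\Pr[N=j]\,\Pr\left[\mu(n,j)\ge x\right]\ge \Pr\left[\mu(n,n)\ge x+\log n\right]\cdot\Pr\left[N> n-\log n\right].
\]
By \cref{lem: duplicated edges are sublinear}, $\Pr[N>n-\log n]=1-\Pr[n-N\ge\log n]\ge 1-\frac{1}{\log n}$, which gives the claim. The boundary case $j=n-\log n$ is excluded by the strict inequality and is harmless either way.

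The main obstacle is making the symmetry claim rigorous, namely that $\cE^r$ is uniform given its cardinality. The cleanest route is to note that for any permutation $\pi$ of $\binom{V}{2}$ the sequence $(\pi(e_1),\dots,\pi(e_n))$ has the same law as $(e_1,\dots,e_n)$. It follows that $\Pr[\cE^r=S]=\Pr[\cE^r=\pi(S)]$, so all sets of equal size are equally likely. The same argument shows that random deletion from $\mathbbm{G}_{n,n}$ yields $\mathbbm{G}_{n,j}$.
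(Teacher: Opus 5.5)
Your proof is correct and is essentially the paper's argument. The paper adds the $n-|\cE^r|$ missing edges to $\cG^r$ to get a $\mathbbm{G}_{n,n}$ graph, which, conditioned on $|\cE^r|=j$, has the same joint law as your deletion coupling, and it then combines ``each edge costs at most one in the matching'' with the duplicated-edges lemma exactly as you do. Your conditioning on $N=j$ plus the permutation-symmetry argument also proves two facts the paper asserts without justification: that the coupled graph is distributed as $\mathbbm{G}_{n,n}$, and that it is independent of the event $\{n-|\cE^r|\le\log n\}$.
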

\begin{proof}
    We construct a coupling between $\mathbbm{G}^r_{n, n}$ and $\mathbbm{G}_{n, n}$ by first sampling $\hat{\cG}^r = (V, \hat{\cE}^r) \sim \mathbbm{G}^r_{n, n}$ and then sampling an additional $n - |\hat{\cE}^r|$ edges from $\binom{V}{2} \setminus \hat{\cE}^r$ without replacement to obtain $\hat{\cG}_{n, n} = (V, \hat{\cE}) \sim \mathbbm{G}_{n, n}$.
    Then we have that:
    \begin{align*}
        \Pr\left[ \mu^r(n, n) \geq x \right] 
        &= \Pr\left[ \mu(\hat{\cG^r}) \geq x \right] \\
        &\geq \Pr\left[ \mu(\hat{\cG^r}) \geq x \middle| |\hat{\cE} \setminus \hat{\cE}^r| \leq \log{n} \right] \cdot \Pr\left[ |\hat{\cE} \setminus \hat{\cE}^r| \leq \log{n} \right] \\
        &\geq \Pr\left[ \mu(\hat{\cG^r}) \geq x \middle| |\hat{\cE} \setminus \hat{\cE}^r| \leq \log{n} \right] \cdot \left( 1 - \frac{ 1 }{\log{n}} \right) \\
        &\geq \Pr\left[ \mu(\hat{\cG}) \geq x + \log{n} \middle| |\hat{\cE} \setminus \hat{\cE}^r| \leq \log{n} \right] \cdot \left( 1 - \frac{ 1 }{\log{n}} \right) \\
        &= \Pr\left[ \mu(\hat{\cG}) \geq x + \log{n} \right] \cdot \left( 1 - \frac{ 1 }{\log{n}} \right) \\
        &= \Pr\left[ \mu(n, n) \geq x + \log{n} \right] \cdot \left( 1 - \frac{ 1 }{\log{n}} \right),
    \end{align*}
    where the first line is by definition of $\hat{\cG}^r$, the second line is by law of total probability, the third line is by \cref{lem: duplicated edges are sublinear} since $|\hat{\cE} \setminus \hat{\cE}^r| = n - |\hat{\cE}^r|$, the fourth line is because conditional on $|\hat{\cE} \setminus \hat{\cE}^r| \leq \log{n}$ we have $\mu(\hat{\cG}^r) \geq \mu(\hat{\cG}) - \log{n}$, so $\mu(\hat{\cG}) \geq x + \log{n}$ implies $\mu(\hat{\cG^r}) \geq x$, the fifth line is due to the condition being independent of matching size $\mu(\hat{\cG})$ and the sixth line by definition of $\hat{\cG}$.
\end{proof}

The following lemma gives a bound for the size of the maximum matching in graphs sampled from $\mathbbm{G}^r_{n, n}$, with high probability. It is only slightly weaker than \cref{lem: maximum matching size in erdos-renyi}, which is the corresponding result for Erd\H{o}s-Renyi $\mathbbm{G}_{n,n}$.
\LBmaxmatchGB*
\begin{proof}   
    Then we have that:
    \begin{align*}
        \Pr&\left[ \mu(\cI_{\cB}) \geq \left(1 - \frac{\gamma^* + \gamma_* + \gamma^* \gamma_*}{4} - n^{-\frac{1}{9}} \right) \cdot n \right] = \\
        &= \Pr\left[ \mu^r(n, n) \geq \left(1 - \frac{\gamma^* + \gamma_* + \gamma^* \gamma_*}{4} - n^{-\frac{1}{9}} \right) \cdot n \right] \\
        &\geq \Pr\left[ \mu(n, n) \geq \left(1 - \frac{\gamma^* + \gamma_* + \gamma^* \gamma_*}{4} - n^{-\frac{1}{9}} \right) \cdot n + \log{n} \right] \cdot \left( 1 - \frac{ 1 }{\log{n}} \right) \\
        &= \Pr\left[ \mu(n, n) - \left(1 - \frac{\gamma^* + \gamma_* + \gamma^* \gamma_*}{4} \right) n \geq \log{n} - n^{1 - \frac{1}{9}} \right] \cdot \left( 1 - \frac{ 1 }{\log{n}} \right) \\
        &\geq \Pr\left[ \mu(n, n) - \left(1 - \frac{\gamma^* + \gamma_* + \gamma^* \gamma_*}{4} \right) n \geq - n^{1 - \frac{1}{8} }\right] \cdot \left( 1 - \frac{ 1 }{\log{n}} \right) \\
        &\geq \Pr\left[ \abs{\mu(n, n) - \left(1 - \frac{\gamma^* + \gamma_* + \gamma^* \gamma_*}{4} \right) n} \leq n^{1 - \frac{1}{8}} \right] \cdot \left( 1 - \frac{ 1 }{\log{n}} \right)\\
        &= 1 - \Pr\left[ \abs{ \frac{\mu(n, n)}{n} - \left(1 - \frac{\gamma^* + \gamma_* + \gamma^* \gamma_*}{4} \right)} \geq n^{- \frac{1}{8}} \right] \cdot \left( 1 - \frac{ 1 }{\log{n}} \right) \\
        &= 1 - o(1),
    \end{align*}
    the first line is by definition of $\mu^r(n, n)$, the second line is due to \cref{lem:approximation of matching in random graph with replacement by erdos renyi}, the fourth line is due to $\log{n} - n^{1-1/9} = \log{n} - n^{8/9} \leq - n^{7/8} = -n^{1 - 1/8}$ by the assumption for large $n$, the fifth line is because this event is a subset, the sixth line is by taking the complement and the seventh line is due to \cref{lem: maximum matching size in erdos-renyi} for $\epsilon = \frac{1}{7} - \frac{1}{8} > 0$.
\end{proof}

\section{Generalization to Unbalanced Bipartite Graphs}
\label{sec: theorem 1 generalization n = c m}
In this section we generalize \cref{thm:Theorem 1} to the case $n = O(m)$. We assume $n = c \cdot m$ and we show a lower bound for the ratio that decreases as $c$ increases. Nevertheless, for all constant $c$, we obtain a bound that is a constant away from $1 - \frac{1}{e}$.
Our analysis is very similar to that of \cref{thm:Theorem 1}, except for the case where the number of left nodes is much smaller than the number of right nodes, i.e. $c$ much smaller than $1$. 
We use random model $\textsc{ULRR}$, which we define as follows. 
\begin{definition}
    Let $n, m \in \mathbbm{N}$ and $d \in [m]$. The random model \textsc{ULRR}($n, m, d$) samples bipartite graphs $\cB = (L, R, \cE)$ from \textsc{GenR}($n, m, d$).
\end{definition}
In this section we use results from \cref{sec:generalization of technical tools}, which hold since $\textsc{GenR}$ is a generalization.

\subsection{Linear And Saturated Regions.}
\begin{lemma}
\label{lem:ratio for k away from m/d}
    Let $\epsilon \in (0, 1)$, $m \in \mathbbm{N}$, $c \geq 0$, $n = cm$, $d \in [m], k \in \left[0, \frac{\epsilon m}{2d}\right] \cup \left[\frac{2m}{\epsilon d}, n \right]$ and $\cB \sim \textsc{ULRR}(n, m, d)$.
    Then we have $\E\left[ |\neighbors{\greedy{k}}| \right] \geq  \left( 1 - \epsilon \right) \cdot \E\left[ |\neighbors{\opt{k}}| \right]$.
    
\end{lemma}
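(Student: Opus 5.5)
This is the exact analogue of \cref{lem:ratio lower bound for n=m and uniform d - outside critical region k large or small}, with $n$ replaced by $m$ in the role of the right side. I follow the same two-case argument, now using the general tools of \cref{sec:generalization of technical tools}. The first step invokes \cref{lem: greedy lower bound by smax} with $d_1=\dots=d_n=d$, which gives $\E[|\neighbors{\greedy{k}}|] \geq \E[|\neighbors{\smax{k}}|]$. The second step applies \cref{lem:smax expected coverage}, which gives $\E[|\neighbors{\smax{k}}|] \geq (1-e^{-kd/m})\, m$. Combining them,
\begin{align*}
\E[|\neighbors{\greedy{k}}|] \geq \left(1-e^{-kd/m}\right) m .
\end{align*}
The remaining work is to compare this with the trivial upper bound from \cref{lem: trivial upper bound for opt}, namely $|\neighbors{\opt{k}}| \leq \min\{kd, m\}$.

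\emph{Saturated case,} $k \geq \frac{2m}{\epsilon d}$. Here $kd/m \geq 2/\epsilon \geq 1/\epsilon$, so
\begin{align*}
\E[|\neighbors{\greedy{k}}|] \geq \left(1-e^{-1/\epsilon}\right) m \geq (1-\epsilon)\, m \geq (1-\epsilon)\,\E[|\neighbors{\opt{k}}|].
\end{align*}
The middle inequality uses $e^{-1/\epsilon}\leq \epsilon$ on $(0,1)$, and the last uses $|\neighbors{\opt{k}}|\le m$. If the interval $[\frac{2m}{\epsilon d}, n]$ is empty, for instance when $c$ is small, this case is vacuous.

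\emph{Near-linear case,} $k \leq \frac{\epsilon m}{2d}$. Set $x = kd/m \leq \epsilon/2 \leq \epsilon$. Then
\begin{align*}
\left(1-e^{-x}\right) m = \frac{1-e^{-x}}{x}\, kd \geq \frac{1-e^{-\epsilon}}{\epsilon}\, kd .
\end{align*}
This step uses that $(1-e^{-x})/x$ is decreasing in $x>0$. By \cref{claim one}, $\frac{1-e^{-\epsilon}}{\epsilon}\, kd \geq (1-\epsilon)\, kd$. Since $|\neighbors{\opt{k}}| \leq kd$, this gives the bound. The case $k=0$ is trivial, because both sides are $0$.

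There is no real obstacle here. The only points needing care are the following. First, \cref{lem: greedy lower bound by smax} and \cref{lem:smax expected coverage} must be stated for \textsc{GenR} with $m \neq n$, and they are. Second, the parameter $c$ and the relation $n=cm$ play no role beyond allowing the saturated interval to be empty. Third, the monotonicity of $(1-e^{-x})/x$ should be noted explicitly, since the paper's $n=m$ proof uses it implicitly.
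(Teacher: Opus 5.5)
Your proposal is correct and follows the paper's own proof essentially line for line. You lower bound greedy by the fixed set $\smax{k}$ via \cref{lem: greedy lower bound by smax}, bound $\E[|\neighbors{\smax{k}}|] \geq (1-e^{-kd/m})\,m$ via \cref{lem:smax expected coverage}, and then compare against $|\neighbors{\opt{k}}| \leq \min\{kd, m\}$ in the two cases using \cref{claim one} and $e^{-1/\epsilon}\leq\epsilon$, exactly as the paper does. Your remarks on the monotonicity of $(1-e^{-x})/x$ and the possibly empty saturated interval just make explicit what the paper leaves implicit.
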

\begin{proof}
    For $k \leq \frac{\epsilon m}{2d}$ we have that:
    \begin{align*}
        \E\left[ |\neighbors{\greedy{k}}| \right] 
        &\geq \E\left[ |\neighbors{\smax{k}}| \right] \\
        &\geq \left( 1 - e^{-kd/m} \right) \cdot m \\
        &= \frac{1 - e^{-kd/m}}{kd/m} \cdot kd  \\
        &\geq \frac{1 - e^{- \epsilon}}{\epsilon} \cdot kd \\
        &\geq \left( 1 - \epsilon \right) \cdot kd \\
        &\geq \left(1 - \epsilon \right) \cdot \E\left[ |\neighbors{\opt{k}}| \right], \tag{1}
    \end{align*}
    where the first line is by \cref{lem: greedy lower bound by smax}, the second line is by \cref{lem:smax expected coverage}, the fourth line is due to assumption $k \leq \frac{\epsilon m}{2d} \leq \frac{\epsilon m}{d}$, the fifth line is because $\frac{1-e^{-\epsilon}}{\epsilon} \geq (1 - \epsilon)$ and the sixth line is due to $kd \geq |\neighbors{\opt{k}}|$ by \cref{lem: trivial upper bound for opt}.
    For $k \geq \frac{2m}{\epsilon d}$ we have that:
    \begin{align*}
        \E\left[ |\neighbors{\greedy{k}}| \right] 
        &\geq \E\left[ |\neighbors{\smax{k}}| \right] \\
        &\geq \left( 1 - e^{-kd/m} \right) \cdot m  \\
        &\geq \left( 1 - e^{-\frac{1}{\epsilon}} \right) \cdot m \\ 
        &\geq \left( 1 - \epsilon \right) \cdot m \\
        &\geq \left( 1 - \epsilon \right) \cdot \E\left[ |\neighbors{\opt{k}}| \right], \tag{2}
    \end{align*}
    where the first line is due to \cref{lem: greedy lower bound by smax}, the second line is due to \cref{lem:smax expected coverage}, the third is by assumption $k \geq \frac{2m}{\epsilon d} \geq \frac{m}{\epsilon d}$, the fourth line is due to $1-e^{-\frac{1}{\epsilon}} \geq 1 - \frac{1}{1/\epsilon} = 1 - \epsilon$ for all $\epsilon \in (0,1)$ and the fifth line is due to $m \geq |\neighbors{\opt{k}}|$ by \cref{lem: trivial upper bound for opt}.
    By (1) and (2) we obtain the statement.
\end{proof}

\subsection{Critical Region: Large Degrees.}
\begin{lemma}
\label{lem:ratio for large d}
    Let $\epsilon \in (0, 1)$, $m \in \mathbbm{N}$, $c \geq 0$, $n = cm$, $d \geq \frac{20^4 \max(1, c^2)}{\epsilon^8}, k \in [\frac{\epsilon m}{2 d}, \frac{2 m}{\epsilon d}]$ and $\cB \sim \textsc{ULRR}(n, m, d)$.
    Then we have $\E\left[ |\neighbors{\greedy{k}}| \right] \geq \left( 1 - \epsilon \right) \cdot \E[|\neighbors{\opt{k}}|]$.
\end{lemma}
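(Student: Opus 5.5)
The plan is to apply \cref{lem:greedy lower bound by concentration}, which is stated for the general \textsc{GenR} model and hence covers $\textsc{ULRR}(n,m,d)$. With all degrees equal to $d$, it gives
\[
\E\left[ |\neighbors{\greedy{k}}| \right] \geq (1 - \epsilon') \cdot \E\left[ |\neighbors{\opt{k}}| \right] - m \cdot \exp\left(\log \binom{n}{k} - \frac{\epsilon'^2}{3} \left( 1 - e^{- kd / m} \right) m \right)
\]
for any $\epsilon' \in (0,1)$. I would take $\epsilon' = \delta$, chosen analogously to the proof of \cref{lem:ratio lower bound for n=m and uniform d - large d}. The argument then has three parts: choose $\delta \leq \epsilon/2$, show the exponent is at most $-\log m$ so the additive error is at most $1$, and absorb that $1$ into a multiplicative $\epsilon/2$ loss.

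\textbf{Bounding the exponent.} Since $k \leq \frac{2m}{\epsilon d}$, we have $\log\binom{n}{k} \leq k\log(en/k)$, and $k\log(en/k)$ is increasing in $k$ for $k \leq n$. Evaluating at $k = 2m/(\epsilon d)$ gives
\[
\log\binom{n}{k} \leq \frac{2m}{\epsilon d}\,\log\left(\frac{e c \epsilon d}{2}\right)
\]
(if $2m/(\epsilon d) > n$, I would just use $k \leq n$). Since $k \geq \frac{\epsilon m}{2d}$, we get $1 - e^{-kd/m} \geq 1 - e^{-\epsilon/2} \geq \epsilon/4$. So it suffices to have
\[
\frac{\delta^2 \epsilon m}{12} \geq \frac{2m}{\epsilon d}\left(1 + \max(0,\log(c\epsilon d))\right) + 2\log m,
\]
which suggests
\[
\delta^2 = \frac{24}{\epsilon}\left(\frac{1+\max(0,\log (c\epsilon d))}{\epsilon d} + \frac{\log m}{m}\right).
\]

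\textbf{Checking $\delta \leq \epsilon/2$.} This is the main obstacle, because $c$ now enters inside the logarithm. I would use $\frac{\max(0,\log x)}{x} \leq \frac{1}{\sqrt x}$ with $x = c\epsilon d$, giving
\[
\frac{\log(c\epsilon d)}{\epsilon d} \leq \frac{c}{\sqrt{c\epsilon d}} = \sqrt{\frac{c}{\epsilon d}}.
\]
Then $\delta^2 \lesssim \frac{72\max(1,\sqrt c)}{\epsilon^2\sqrt d}$, and the hypothesis $d \geq 20^4\max(1,c^2)/\epsilon^8$ gives $\sqrt d \geq 400\max(1,c)/\epsilon^4$. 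Hence $\delta^2 \leq \frac{72\,\epsilon^2}{400} \leq \epsilon^2/4$. The term $\log m/m$ needs $m$ large relative to $\epsilon$. I would obtain this from $m \geq d$, which holds because $d \in [m]$, exactly as in \cref{lem:ratio lower bound for n=m and uniform d - large d}.

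\textbf{Absorbing the additive error.} The additive error is at most $m \cdot \frac{1}{m} = 1$. By \cref{lem: greedy lower bound by smax} and \cref{lem:smax expected coverage},
\[
\E\left[|\neighbors{\opt{k}}|\right] \geq \left(1-e^{-\epsilon/2}\right) m \geq \frac{\epsilon m}{4} \geq \frac{2}{\epsilon},
\]
using $m \geq d \geq 8/\epsilon^2$. So the error $1$ is at most $\frac{\epsilon}{2}\,\E\left[|\neighbors{\opt{k}}|\right]$. Combining with $\delta \leq \epsilon/2$ yields $\E\left[|\neighbors{\greedy{k}}|\right] \geq (1-\epsilon)\,\E\left[|\neighbors{\opt{k}}|\right]$.
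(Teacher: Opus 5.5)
Your proposal is correct and follows essentially the paper's route. You use a union bound over all $k$-subsets together with the negatively-correlated Chernoff bound, reaching it through \cref{lem:greedy lower bound by concentration}, which packages the paper's tower-rule step. Your $\delta$ is essentially the paper's, and you absorb the additive error using $m \ge d \ge 8/\epsilon^2$; your explicit treatment of the case $2m/(\epsilon d) > n$ and of the additive term ($m\cdot\frac{1}{m}=1$ rather than the paper's $n\cdot\frac{1}{m}$) is, if anything, slightly cleaner than the paper's.
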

\begin{proof}
    To upper bound the optimal solution, first notice that $|\neighbors{\opt{k}}| = \max_{S \in \binom{L}{k}}|\neighbors{S}|$.
    Also let $$\delta = \sqrt{\frac{24}{\epsilon} \left( \max\left(0, \frac{ \log{e c} + \log{\frac{\epsilon d}{2}}}{\epsilon d} \right)  + \frac{\log{m}}{m} \right)}.$$ 
    We proceed to show that $\delta \in (0,1)$:
    \begin{align*}
        \delta 
        &\leq \sqrt{\frac{24}{\epsilon} \left( \max\left(0, \frac{\log{e c}}{\sqrt{d}} \cdot \frac{1}{\epsilon \sqrt{d}} + \frac{\log{\epsilon d}}{\epsilon d} \right) + \frac{\log{m}}{m} \right)} \\
        &\leq \sqrt{\frac{24}{\epsilon} \left( \max\left(0, \frac{1}{\epsilon \sqrt{d}} + \frac{1}{\sqrt{\epsilon d}} \right) + \frac{1}{\sqrt{m}} \right)} \\
        &\leq \frac{\sqrt{72}}{\epsilon d^{1/4}} \\
        &\leq \frac{\epsilon}{2}, \tag{1}
    \end{align*}
    where the second line is due to $d \geq \frac{20^4 \max(1, c^2)}{\epsilon^8} \geq (\max(1, \log{e c}))^2$ and $\frac{\log{x}}{x} \leq \frac{1}{\sqrt{x}}$ for all $x$, the third line is due to $\sqrt{\epsilon} \geq \epsilon, m \geq d \geq \epsilon^2 d$ and the fourth line is by assumption $d \geq \frac{20^4}{\epsilon^8}$.
    Then we have:
    \begin{align*}
        \Pr\left[ |\neighbors{\opt{k}}| \geq (1 + \delta) \cdot \E\left[ |\neighbors{\smax{k}}| \right] \right]
        &= \Pr\left[ \max_{S \in \binom{L}{k}}|\neighbors{S}| \geq (1 + \delta) \cdot \E\left[ |\neighbors{\smax{k}}| \right] \right] \\
        &\leq \sum_{S \in \binom{L}{k}} \Pr\left[|\neighbors{S}| \geq (1 + \delta) \cdot \E\left[ |\neighbors{\smax{k}}| \right] \right] \\
        &\leq \sum_{S \in \binom{L}{k}} \Pr\left[|\neighbors{\smax{k}}| \geq (1 + \delta) \cdot \E\left[ |\neighbors{\smax{k}}| \right] \right] \\
        &\leq \sum_{S \in \binom{L}{k}} \exp\left( -\frac{\delta^2}{3} \cdot \E\left[ |\neighbors{\smax{k}}| \right] \right) \\
        &= \exp\left( \log{\binom{n}{k}} - \frac{\delta^2}{3} \E\left[ |\neighbors{\smax{k}}| \right]  \right), \tag{2}
    \end{align*}
    where the first line is by definition of $\opt{k}$, the second line is by union bound, the third line is because $|\neighbors{S}|$ is dominated by $|\neighbors{\smax{k}}$ for all $S \in \binom{L}{k}$ by \cref{lem: coverage of s dominated by smax}, the fourth line is by \cref{lem: coverage of smax admits chernoff} since $\delta \in (0, 1)$ by Equation (1) and the fifth line is because $|\binom{L}{k}| = \binom{n}{k} = e^{\log{\binom{n}{k}}}$.
    We upper bound the exponent as follows:
    \begin{align*}
        \log{\binom{n}{k}} - \frac{\delta^2}{3} \E\left[ |\neighbors{\smax{k}}| \right]
        &\leq k \log\left( \frac{en}{k} \right) - \frac{\delta^2}{3} \cdot \left( 1 - e^{-kd/m} \right) \cdot m \\
        &\leq \frac{\log{\frac{e c \epsilon d}{2}}}{\frac{\epsilon d}{2}} \cdot m - \frac{\delta^2}{3} \cdot \left( 1 - e^{-\epsilon/2} \right) \cdot m \\
        &\leq \frac{\log{e c} + \log{\frac{\epsilon d}{2}}}{\frac{\epsilon d}{2}} \cdot m - \frac{\delta^2}{3} \cdot \frac{\epsilon}{4} \cdot m \\
        &\leq \frac{\log{e c} + \log{\frac{\epsilon d}{2}}}{\frac{\epsilon d}{2}} \cdot m - \left( \frac{\log{e c} + \log{\frac{\epsilon d}{2}}}{\frac{\epsilon d}{2}} + \frac{2\log{m}}{m} \right) \cdot m \\
        &\leq - \log{m}, \tag{2a}
    \end{align*}
    where the first line is due to $\binom{n}{k} \leq \left( \frac{en}{k} \right)^k$ and \cref{lem:smax expected coverage}, the second line is because $k \log\left( \frac{en}{k} \right)$ is increasing for all $k \in [n]$ and $\frac{\epsilon m}{2 d} \leq k \leq \frac{2 m}{\epsilon d}$ by assumption, the third line is because $1 - e^{-\epsilon/2} \geq \frac{\epsilon}{4}$ for all $\epsilon \in (0,1)$, the fourth line is by definition of $\delta$ and the fifth line is due to $-2 \log{m} \leq - \log{m}$.
    Let event $F = \{ |\neighbors{\opt{k}}| \geq (1 + \delta) \cdot \E\left[ |\neighbors{\smax{k}}| \right] \}$.
    We are ready to upper bound the expected value of $\optalgorithm$:
    \begin{align*}
        \E\left[ |\neighbors{\opt{k}}| \right] 
        &= \E\left[ |\neighbors{\opt{k}}| \middle| F \right] \cdot \Pr\left[ F \right] + \E\left[ |\neighbors{\opt{k}}| \middle| \bar{F} \right] \cdot \Pr\left[ \bar{F} \right] \\
        &\leq n \cdot \frac{1}{m} + \E\left[ |\neighbors{\opt{k}}| \middle| \bar{F} \right] \cdot 1 \\
        &\leq c + (1 + \delta) \cdot \E\left[ |\neighbors{\smax{k}}| \right] \\
        &\leq \left( 1 + \delta + \frac{4c}{\epsilon m} \right) \cdot \E\left[ |\neighbors{\smax{k}}| \right] \\
        &\leq \left( 1 + \epsilon \right) \cdot \E\left[ |\neighbors{\smax{k}} | \right] \\
        &\leq \left( 1 + \epsilon \right) \cdot \E\left[ |\neighbors{\greedy{k}} | \right], \tag{3}
    \end{align*}
    where the first line is by tower rule, the second line is because $|\neighbors{\opt{k}}| \leq m$ and $\Pr\left[ F \right] \leq \exp\left( - \log{m} \right) = \frac{1}{m}$ by substituting Equation (2a) to (2), the third line follows by definition of $F$, the fourth line is because $c = c \frac{\E\left[ |\neighbors{\smax{k}}| \right]}{\E\left[ |\neighbors{\smax{k}}| \right]} \leq c\frac{\E\left[ |\neighbors{\smax{k}}| \right]}{\left( 1 - e^{-\epsilon/2} \right) \cdot m} \leq \frac{\E\left[ |\neighbors{\smax{k}}| \right]}{\frac{\epsilon}{4c} \cdot m}$, the fifth line is because $\delta \leq \frac{\epsilon}{2}$ by Equation (1) and $m \geq d \geq \frac{20^4 c^2}{\epsilon^8} \geq \frac{8 c}{\epsilon^2}$ and the sixth line is by \cref{lem: greedy lower bound by smax}.
    The statement follows directly from Equation (3):
    \begin{align*}
        \E\left[ |\neighbors{\greedy{k}}| \right] 
        \geq \frac{1}{1 + \epsilon} \cdot \E\left[ |\neighbors{\opt{k}}| \right]
        \geq \left( 1 - \epsilon \right) \cdot \E\left[ |\neighbors{\opt{k}}| \right].
    \end{align*}
    where the second inequality is because $\frac{1}{1 + \epsilon} \geq 1 - \epsilon$ for all $\epsilon \in (0, 1)$.
\end{proof}

\subsection{Critical Region: Small \texorpdfstring{$n$}{n}.}
\begin{lemma}
\label{lem:ratio for small n/m}
    Let $\epsilon \in (0,1)$, $m \geq \frac{8}{\epsilon^2}$, $0 \leq c \leq \frac{\epsilon^3}{48} - \frac{\log{m}}{m}$, $n = cm$, $d \in [m], k \in [\frac{\epsilon m}{2d}, \frac{2m}{\epsilon d}]$ and $\cB \sim \textsc{ULRR}(n, m, d)$.
    Then we have $\E\left[ |\neighbors{\greedy{k}}| \right] \geq \left( 1 - \epsilon \right) \cdot \E\left[ |\neighbors{\opt{k}}| \right]$.
\end{lemma}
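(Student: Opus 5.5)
We follow the template of \cref{lem:ratio for large d}: we show that $|\neighbors{\opt{k}}|$ concentrates below $(1+\delta)\E[|\neighbors{\smax{k}}|]$, and then use \cref{lem: greedy lower bound by smax} to replace $\smax{k}$ by $\greedy{k}$. The difference is the source of smallness in the union bound. Here $d$ is not assumed large, so we cannot rely on $\log\binom{n}{k}/(kd)$ being small. Instead the number of candidate sets is small because $n = cm$ with $c$ tiny, and $\log\binom{n}{k} \le n \log 2 \le n = cm$.

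\textbf{Step 1 (expected coverage is linear in $m$).} In the critical window $k \ge \frac{\epsilon m}{2d}$, \cref{lem:smax expected coverage} gives
\[
\E[|\neighbors{\smax{k}}|] \ge (1-e^{-\epsilon/2})\, m \ge \tfrac{\epsilon}{4}\, m .
\]
Here $k\le n$ is implicitly required; otherwise the statement is vacuous or handled trivially.

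\textbf{Step 2 (choice of $\delta$ and the union bound).} Choose
\[
\delta = \sqrt{\tfrac{12}{\epsilon}\bigl(c + \tfrac{\log m}{m}\bigr)} .
\]
Apply \cref{lem: opt upper bound by concentration}, which combines \cref{lem: coverage of s dominated by smax}, \cref{lem: coverage of smax admits chernoff} and a union bound over $\binom{n}{k}\le 2^{n}$ sets. The exponent is at most
\[
cm - \tfrac{\delta^2}{3}\cdot\tfrac{\epsilon}{4}\, m = cm - (cm + \log m) = -\log m ,
\]
so $\Pr[F]\le 1/m$ for $F=\{|\neighbors{\opt{k}}| > (1+\delta)\E[|\neighbors{\smax{k}}|]\}$. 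The hypothesis $c \le \frac{\epsilon^3}{48} - \frac{\log m}{m}$ gives $c + \frac{\log m}{m}\le \frac{\epsilon^3}{48}$, hence $\delta^2 \le \frac{12}{\epsilon}\cdot\frac{\epsilon^3}{48} = \frac{\epsilon^2}{4}$. So $\delta\le \epsilon/2 <1$, which is needed for the Chernoff bound. This calibration is why the constant $48$ appears, and getting the constants to match exactly is the main point to check; some slack in the constants 12 versus 24 may be needed depending on whether $\log\binom{n}{k}\le n\log 2$ or a cruder bound is used.

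\textbf{Step 3 (conclude).} Split on $F$, using $|\neighbors{\opt{k}}|\le m$ from \cref{lem: trivial upper bound for opt}:
\[
\E[|\neighbors{\opt{k}}|] \le m\cdot\tfrac1m + (1+\delta)\E[|\neighbors{\smax{k}}|] .
\]
The additive $1$ is absorbed as $1 \le \frac{4}{\epsilon m}\E[|\neighbors{\smax{k}}|] \le \frac{\epsilon}{2}\E[|\neighbors{\smax{k}}|]$, using $m\ge 8/\epsilon^2$. This gives
\[
\E[|\neighbors{\opt{k}}|]\le (1+\epsilon)\E[|\neighbors{\smax{k}}|]\le (1+\epsilon)\E[|\neighbors{\greedy{k}}|]
\]
by \cref{lem: greedy lower bound by smax}, and $\frac{1}{1+\epsilon}\ge 1-\epsilon$ finishes.

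The main obstacle is Step 2: we must bound $\log\binom{n}{k}$ uniformly in $k$ and $d$ by something of order $cm$, rather than by $k\log(en/k)$ as in earlier lemmas. Then the Chernoff exponent $\Theta(\delta^2\epsilon m)$ must dominate with $\delta\le\epsilon/2$, which is exactly what forces $c=O(\epsilon^3)$.
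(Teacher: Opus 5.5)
Your proposal is correct and is essentially the paper's argument. The paper fixes $\delta=\epsilon/2$ and invokes \cref{lem:greedy lower bound by concentration}, which packages the union bound over $\binom{n}{k}\le 2^n$ sets, the domination by $\smax{k}$, and the Chernoff bound; it bounds the exponent by $(c-\epsilon^3/48)m\le-\log m$ and absorbs the resulting additive $1$ on the greedy side, whereas you tune $\delta$ so the exponent is exactly $-\log m$ and absorb the $1$ on the $\optalgorithm$ side, and your constants already close with $\log\binom{n}{k}\le n\log 2\le cm$, so no extra slack is needed.
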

\begin{proof}
    For $\delta = \frac{\epsilon}{2} \in (0,1)$, by \cref{lem:greedy lower bound by concentration} we have that:
    \begin{align*}
        \E\left[ |\neighbors{\greedy{k}}| \right] 
        \geq \left(1 - \frac{\epsilon}{2} \right) \cdot \E\left[ |\neighbors{\opt{k}}| \right] - m \cdot \exp\left(\log{\binom{n}{k}} - \frac{\epsilon^2}{12} ( 1 - e^{-kd/m}) \cdot m \right).  \tag{1}
    \end{align*}
    We upper bound the exponent of the second term:
    \begin{align*}
        \log{\binom{n}{k}} - \frac{\epsilon^2}{12} \cdot (1 - e^{- kd/m}) \cdot m 
        &\leq \log{2} \cdot n - \frac{\epsilon^2}{12} \cdot (1 - e^{-\epsilon/2}) \cdot m \\
        &\leq n - \frac{\epsilon^3}{48} \cdot m \\
        &= \left( c - \frac{\epsilon^3}{48} \right) \cdot m \\
        &\leq - \frac{\log{m}}{m} \cdot m \\
        &= - \log{m} \tag{1a}
    \end{align*}
    where the first line is due to $\binom{n}{k} \leq 2^n$ and $k \geq \frac{\epsilon m}{2d}$ by the assumption, the second line is due to $\log{2} \leq 1$ and $1 - e^{-\epsilon/2} \geq \frac{\epsilon}{4}$ for $\epsilon \in (0,1)$, the third line is by assumption $n = c m$ and the fourth line is by assumption $c \leq \frac{\epsilon^3}{48} - \frac{\log{m}}{m}$.
    Substituting (1a) into (1) we have:
    \begin{align*}
        \E\left[ |\neighbors{\greedy{k}}| \right] 
        &\geq (1 - \frac{\epsilon}{2}) \cdot \E\left[ |\neighbors{\opt{k}}| \right] - m \cdot \exp\left(- \log{m} \right) \\
        &= (1 - \frac{\epsilon}{2}) \cdot \E\left[ |\neighbors{\opt{k}}| \right] - 1 \\
        &\geq \left( 1 - \epsilon \right) \cdot \E\left[ |\neighbors{\opt{k}}| \right],
    \end{align*}
    where the second line is by cancellation and the third line is due to $1 \leq \frac{\E\left[ |\neighbors{\opt{k}}| \right]}{ \E\left[ |\neighbors{\smax{k}}| \right]} \leq \frac{4}{\epsilon m} \cdot \E\left[ |\neighbors{\opt{k}}| \right]$ and $\frac{4}{\epsilon m} \leq \frac{\epsilon}{2}$.
\end{proof}

\subsection{Critical Region: Small Degrees.}
\begin{lemma}
\label{lem:ratio for small d large n/m}
    Let $\epsilon \in (0,1)$, $m \geq \frac{1}{2c} \cdot \left( c \left( \frac{20^6 c^2}{\epsilon^8} \right)^2 e^{\frac{20^4 c^3}{\epsilon^8}} \right)^4$, $c \geq \frac{\epsilon^3}{48} - \frac{\log{m}}{m}$, $n = cm$, $d \leq \frac{20^4 \max(1, c^2)}{\epsilon^8}, k \in [\frac{\epsilon m}{2d}, \frac{2m}{\epsilon d}]$ and $\cB \sim \textsc{ULRR}(n, m, d)$.
    Then there exists a function $f(x, y)$ increasing in $x$ and decreasing in $y$, such that $\E\left[ |\neighbors{\greedy{k}}| \right] \geq \left( 1 - \frac{1}{e} + f(\epsilon, c) \right) \cdot \E[|\neighbors{\opt{k}}|]$.
\end{lemma}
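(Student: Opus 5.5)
We follow the proof of \cref{lem:ratio lower bound for n=m and uniform d - small d}, replacing each balanced-case ingredient by its \textsc{GenR} counterpart. Set
\[
t^*_d = \Bigl(1-\bigl(1+\tfrac{nd(d-1)}{m}\bigr)^{-\frac{1}{d-1}}\Bigr)\tfrac{m}{d}
\quad\text{and}\quad
\delta = 3e^{cd^2}\sqrt{8n\log 2n}.
\]
Since $d\le 20^4\max(1,c^2)/\epsilon^8$ is bounded and $m$ is large, $m\ge 2d^2$ holds. Then \cref{lem:first phase number of accepts} gives $\Pr[|t_d-t^*_d|\le\delta]\ge 1-1/n$. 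Call this event $\mathcal{E}$.

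\textbf{Case $k\le t^*_d-\delta$.} On $\mathcal{E}$, greedy gains $d$ at every iteration, so $\E[|N(\greedy{k})|]\ge(1-1/n)kd$. By \cref{lem: trivial upper bound for opt}, $kd\ge|N(\opt{k})|$, which gives ratio $1-1/n$ in this case.

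\textbf{Case $k\ge t^*_d-\delta$.} We first need the unbalanced analogue of \cref{lem:augmented worst case analysis for n=m}. Its proof is identical: apply \cref{lem:augmented worst case analysis n = m} with $t=t_d$, use $|N(\greedydet{t_d})|=t_dd$ and $|N(\optdet{k})|\le\min\{kd,m\}$, then \cref{lem:critical region lower bound greedy m/d minimum n = m} with $y=m/(t_dd)$, and finally \cref{lem:critical region lower bound greedy ratio simplification n = m}. This yields the deterministic bound
\[
|N(\greedydet{k})|\ge\Bigl(1-\tfrac1e+\tfrac1e\bigl(\tfrac{t_d}{m/d}\bigr)^3\Bigr)|N(\optdet{k})| .
\]
Conditioning on $\mathcal{E}$ as in Equations (3) and (3a) of the balanced proof, we use $|N(\opt{k})|\le m$ and $\Pr[\bar{\mathcal{E}}]\le 1/n=1/(cm)$, together with $\E[|N(\opt{k})|]\ge\E[|N(\smax{k})|]\ge(\epsilon/4)m$ from \cref{lem:smax expected coverage}. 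This loses a multiplicative factor $1-O(1/(c\epsilon m))$, giving
\[
\E[|N(\greedy{k})|]\ge\Bigl(1-\tfrac1e+\tfrac1e\bigl(\tfrac{t^*_d-\delta}{m/d}\bigr)^3-O\bigl(\tfrac{1}{c\epsilon m}\bigr)\Bigr)\E[|N(\opt{k})|].
\]
The lower bound $c\ge\epsilon^3/48-\log m/m$ keeps $c$ bounded away from $0$ for large $m$, so the error term $O(1/(c\epsilon m))$ is small.

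\textbf{Main obstacle: bounding $(t^*_d-\delta)/(m/d)$ below by a positive quantity.} Write $a=nd/m=cd$. The first part is $1-(1+a(d-1))^{-1/(d-1)}$. We bound it below by a quantity decreasing in $d$, such as $\min(1,a)/(2d)$, using $(1+x)^{-1/(d-1)}\le 1-\frac{x/(d-1)}{1+x}$ when $x\le d-1$. With $x=a(d-1)$ this gives roughly $\min(1,c)/d$. By the bound on $d$, that is at least a polynomial in $\epsilon$ divided by a polynomial in $\max(1,c)$.

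The error part is $\delta d/m = 3de^{cd^2}\sqrt{8c\log(2cm)/m}$. Using $d\le 20^4\max(1,c^2)/\epsilon^8$ and the stated lower bound on $m$, chosen exactly to make $(2cm)^{1/4}$ dominate $c(20^6c^2/\epsilon^8)^2e^{20^4c^3/\epsilon^8}$, this term is at most half the main term. Here we use $\sqrt{\log x/x}\le x^{-1/4}$.

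We then take
\[
f(\epsilon,c)=\tfrac1e\Bigl(\tfrac{\min(1,c)\,\epsilon^8}{2\cdot 20^4\max(1,c^2)}\Bigr)^3-(\text{error}).
\]
The hard part is checking the monotonicity claims. The dependence on $\epsilon$ is clearly increasing. For $c$, the factor $\min(1,c)$ is not decreasing for $c<1$. Because $c$ is bounded below by roughly $\epsilon^3/48$, I would replace $\min(1,c)$ by $\epsilon^3/48$, absorbing $c$ into $\epsilon$-dependence. That leaves a function of $\max(1,c)$ alone, which is decreasing in $c$.

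Finally, $m$ large absorbs the $O(1/(c\epsilon m))$ and $1/n$ losses. In the first case the ratio $1-1/n$ exceeds $1-1/e+f$ trivially, which completes the statement.
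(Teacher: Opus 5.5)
Your plan follows the paper's proof almost line for line. It uses the same $t^*_d$ and $\delta$ from \cref{lem:first phase number of accepts}, the same split at $t^*_d-\delta$, and the unbalanced analogue of \cref{lem:augmented worst case analysis for n=m} (which the paper uses without writing out). It also has the same $1-\frac{4}{\epsilon n}$ loss from conditioning $|\neighbors{\opt{k}}|$ on the good event, and the same treatment of $c<1$ by trading $c$ for its lower bound in terms of $\epsilon$.

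The step that fails is your claim that the stated lower bound on $m$ makes $\delta d/m$ at most half of the main term. You correctly reduce to $\delta d/m\le 12\,c\,d\,e^{cd^2}(2cm)^{-1/4}$. However, the hypothesis only gives $(2cm)^{1/4}\ge c\,(20^6c^2/\epsilon^8)^2e^{20^4c^3/\epsilon^8}$, whereas $cd^2$ can be as large as $c\,(20^4\max(1,c^2)/\epsilon^8)^2=20^8c\max(1,c^4)/\epsilon^{16}$. That exceeds $20^4c^3/\epsilon^8$ by a factor $20^4/(c^2\epsilon^8)$ when $c<1$ and $20^4c^2/\epsilon^8$ when $c\ge 1$, so always by an enormous margin. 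Hence for $d$ near the top of its allowed range and $m$ near the stated threshold, $e^{cd^2}$ swamps every polynomial factor. Then $\delta$ exceeds $m/d\ge t^*_d$, and \cref{lem:first phase number of accepts} gives nothing. No bookkeeping inside this approach rescues the step under the hypothesis as written. The paper's Equation (4c) contains exactly the same slip: it bounds by $e^{c(20^4c^2/\epsilon^8)^2}$ and then invokes a hypothesis whose exponent is $20^4c^3/\epsilon^8$.

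The repair is to replace the hypothesis on $m$ by $m\ge m_0(\epsilon,c)$. Choose $m_0$ so that $(2cm)^{1/4}\ge K(\epsilon,c)\,e^{cD^2}$ for $D=20^4\max(1,c^2)/\epsilon^8$ and a suitable polynomial $K$. Also require $m\ge 2D^2$ and $\frac{8}{\epsilon c m}\le \frac{1}{2e}\,g(\epsilon,c)^3$, where $g$ is your lower bound on $(t^*_d-\delta)/(m/d)$. Your appeals to ``$m$ large'' for these last two conditions need the same strengthening: when $c$ is close to $\epsilon^3/48$, the stated bound on $m$ is only of order $\epsilon^{-7}$. Nothing downstream is affected, since \cref{thm:ratio lower bound for greedy} only uses this lemma for $m\ge m_0(\epsilon,c)$.

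A smaller point, also inherited from the paper: on $\mathcal{E}$ you only know $t_d\ge t^*_d-\delta$. So $k\ge t^*_d-\delta$ does not give the $k\ge t_d$ that your deterministic bound requires. If $k<t_d$, greedy already has value $kd\ge|\neighbors{\opt{k}}|$. If $k\ge t_d$, use that the bound is increasing in $t_d$.
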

\begin{proof}
    Let random variable $t_d = \max\{t \in \{1, \dots, n\}: |\neighbors{\greedy{t}} \setminus \neighbors{\greedy{t-1}}| = d \}$. 
    Notice that by definition, $t_d = |\acceptreject{d}{n}|$, i.e. $t_d$ is the number of accepts in the first phase of $\acceptrejectalgorithm$. 
    Also let 
    \begin{align*}
        t^*_d = \frac{m}{d} \cdot \left( 1 - \left( 1 + c d (d-1) \right)^{-\frac{1}{d-1}} \right),\
        \delta = 3 e^{c d^2} \cdot \sqrt{8n \log(2n)}.
    \end{align*}
    Then we have:
    \begin{align*}
        \E\left[ |\neighbors{\greedy{k}}| \right]  
        &\geq \E\left[ |\neighbors{\greedy{k}}| \middle| |t_d - t^*_d| \leq \epsilon \right] \cdot \Pr\left[ |t_d - t^*_d| \leq \epsilon \right] \\
        &\geq \E\left[ |\neighbors{\greedy{k}}| \middle| |t_d - t^*_d| \leq \epsilon \right] \cdot \left( 1 - \frac{1}{n} \right) \tag{1}
    \end{align*}
    where the first inequality is by tower rule and the second is due to $\Pr\left[ |t_d - t^*_d| \leq \epsilon \right] \geq 1 - \frac{1}{n} $ by \cref{lem:first phase number of accepts}.
    For $k \leq t^*_d - \epsilon$:
    \begin{align*}
        \E\left[ |\neighbors{\greedy{k}}| \right]
        &\geq \E\left[ |\neighbors{\greedy{k}}| \middle| |t_d - t^*_d| \leq \epsilon \right] \cdot \left( 1 - \frac{1}{n} \right) \\
        &=\left( 1 - \frac{1}{n} \right) \cdot kd \\
        &\geq \left( 1 - \frac{1}{n} \right) \cdot \E\left[ |\neighbors{\opt{k}}| \right] \tag{2}
    \end{align*}
    where the first line is by Equation (1), the second is because $t_d \geq t^*_d - \epsilon$ so up to $k \leq t^*_d - \epsilon$ $\greedyalgorithm$ still gains $d$ per iteration and the third line is because $\E\left[ |\neighbors{\opt{k}}| \right] \leq kd$ by \cref{lem: trivial upper bound for opt}.
    For $k \geq t^*_d - \delta$, $\greedyalgorithm$ gains $d$ on each iteration until $t^*_d - \delta$ and after that we use the worst case analysis which gives:
    \begin{align*}
        \E\left[ |\neighbors{\greedy{k}}| \right]
        &\geq \Pr\left[ |t_d - t^*_d| \leq \delta \right] \cdot \E\left[ |\neighbors{\greedy{k}} \middle| |t_d - t^*_d| \leq \delta \right] \\
        &\geq \left( 1 - \frac{1}{n} \right) \cdot \E\left[ |\neighbors{\greedy{k}}| \middle| |t_d - t^*_d| \leq \delta \right] \\
        &\geq \left( 1 - \frac{1}{n} \right) \cdot \left( 1 - \frac{1}{e} + \frac{1}{e} \cdot \left( \frac{t^*_d - \delta}{m/d} \right)^3 \right) \cdot \E\left[ |\neighbors{\opt{k}}| \middle| |t_d - t^*_d| \leq \delta \right], \tag{3}
    \end{align*}
    where the first line is by tower rule, the second line is due to Equation (1), the third line is due to \cref{lem:augmented worst case analysis} since $k \geq t^*_d - \delta \geq t_d$ by the condition.
    For the conditional value of $\optalgorithm$ we have:
    \begin{align*}
        \E&\left[ |\neighbors{\opt{k}}| \right] = \\
        &= \E\left[ |\neighbors{\opt{k}}| \middle| |t_d - t^*_d| \leq \delta \right] \cdot \Pr\left[ |t_d - t^*_d| \leq \delta \right] + \E\left[ |\neighbors{\opt{k}}| \middle| |t_d - t^*_d| > \delta \right] \cdot \Pr\left[ |t_d - t^*_d| > \delta \right] \\
        &\leq \E\left[ |\neighbors{\opt{k}}| \middle| |t_d - t^*_d| \leq \delta \right] \cdot 1 + m \cdot \frac{1}{n} \\
        &= \E\left[ |\neighbors{\opt{k}}| \middle| |t_d - t^*_d| \leq \delta \right] + \frac{1}{c} \\
        &\Rightarrow \E\left[ |\neighbors{\opt{k}}| \middle| |t_d - t^*_d| \leq \delta \right] 
        \geq \E\left[ |\neighbors{\opt{k}}| \right] - \frac{1}{c}
        \geq \left( 1 - \frac{4}{\epsilon n} \right) \cdot \E\left[ |\neighbors{\opt{k}}| \right], \tag{3a}
    \end{align*}
    where the first line is by tower rule, the second line is due to $|\neighbors{\opt{k}}| \leq m$ and Equation (1), the third line is by cancellation and the fourth line is because $\frac{1}{c} \leq \frac{\E\left[ |\neighbors{\opt{k}}| \right]}{c \E\left[ |\neighbors{\smax{k}}| \right]} \leq \frac{\E\left[ |\neighbors{\opt{k}}| \right]}{ \epsilon n / 4}$.
    Substituting Equation (3a) to (3) we have:
    \begin{align*}
        \E\left[ |\neighbors{\greedy{k}}| \right] 
        &\geq \left( 1 - \frac{1}{n} \right) \cdot \left( 1 - \frac{1}{e} + \frac{1}{e} \cdot \left( \frac{t^*_d - \delta}{m/d} \right)^3 \right) \cdot \left( 1 - \frac{4}{\epsilon n} \right) \cdot \E\left[ |\neighbors{\opt{k}}| \middle| |t_d - t^*_d| \leq \delta \right] \\
        &\geq \left( 1 - \frac{1}{e} + \frac{1}{e} \cdot \left( \frac{t^*_d - \delta}{m/d} \right)^3 - \frac{8}{\epsilon n} \right) \cdot \E\left[ |\neighbors{\opt{k}}| \middle| |t_d - t^*_d| \leq \delta \right]. \tag{4}
    \end{align*}
    At this point it suffices to lower bound $\frac{t^*_d - \delta}{m/d}$.
    For $c \in \left( \frac{\epsilon^3}{6} - \frac{\log{m}}{m}, 1 \right)$:
    \begin{align*}
        \frac{t^*_d}{m/d} 
        = 1 - \left( 1 + c d (d-1) \right)^{- \frac{1}{d-1}} 
        \geq 1 - e^{-\frac{c}{d}}
        \geq 1 - e^{-\frac{c}{\frac{20^4}{\epsilon^8}}}
        \geq 1 - e^{-\frac{\epsilon^3/7}{\frac{20^4}{\epsilon^8}}}
        \geq 1 - e^{- \frac{\epsilon^{11}}{7 \cdot 20^4}}, \tag{4a}
    \end{align*}
    where the first inequality is due to \cref{lem:simple lower bound for phase 1}, the second inequality is due to assumption $d \leq \frac{20^4 \max(1, c^2)}{\epsilon^8} = \frac{20^4}{\epsilon^8}$ and the third inequality is due to $c \geq \frac{\epsilon^3}{6} - \frac{\log{m}}{m} \geq \frac{\epsilon^3}{7}$.
    For $c \geq 1$:
    \begin{align*}
        \frac{t^*_d}{m/d} 
        = 1 - \left( 1 + c d (d-1) \right)^{- \frac{1}{d-1}} 
        \geq 1 - \left( 1 + d (d-1) \right)^{- \frac{1}{d-1}} 
        \geq \frac{1}{d} 
        \geq \frac{\epsilon^8}{20^4 c^2}, \tag{4b}
    \end{align*}
    where the first inequality is due to $c \geq 1$, the second inequality is due to \cref{lem:simple lower bound for phase 1 n=m} and the third inequality is by assumption $d \leq \frac{20^4 c^2}{\epsilon^8}$.
    We also have:
    \begin{align*}
        \frac{\delta}{m/d} 
        &= \frac{3 e^{cd^2} \sqrt{8 n \log{2n}}}{m/d} \\
        &= 3d e^{cd^2} \sqrt{16c^2  \frac{\log{2cm}}{2cm}} \\
        &\leq 12 c d e^{cd^2} \frac{1}{(2cm)^{1/4}} \\
        &\leq 12 c \frac{20^4 c^2}{\epsilon^8} e^{c \left( \frac{20^4 c^2}{\epsilon^8} \right)^2}  \cdot \frac{1}{(2cm)^{1/4}} \\
        &\leq \frac{20^6 c^3}{\epsilon^8} e^{c \left( \frac{20^4 c^2}{\epsilon^8} \right)^2} \cdot \frac{1}{(2cm)^{1/4}} \\
        &\leq \frac{\epsilon^8}{20^6 c^2}, \tag{4c}
    \end{align*}
    where the first line is by definition, the second line is due to $n = cm$, the third line is due to $\frac{\log{2cm}}{2cm} \leq \frac{1}{\sqrt{2cm}}$, the fourth line is due to $d \leq \frac{20^4 c^2}{\epsilon^8}$, the fifth line is due to $12 < 20^2$ and the sixth line is by assumption $m \geq \frac{1}{2c} \cdot \left( c \left( \frac{20^6 c^2}{\epsilon^8} \right)^2 e^{\frac{20^4 c^3}{\epsilon^8}} \right)^4$.
    Overall, by Equations (4a), (4b), (4c), there exists a positive function $g(x,y)$ that is increasing in $x$ and decreasing in $y$ such that:
    \begin{align*}
        \frac{t^*_d - \delta}{m/d} \geq g(\epsilon, c). \tag{4d}
    \end{align*}
    By Equation (1) and substituting Equation (4d) to (4) we obtain the statement for $f(x, y) = \frac{g^3(x, y)}{2e}$.
\end{proof}

\begin{lemma}
\label{lem:augmented worst case analysis}
    Let $d \in \mathbbm{N}$ and $B = (L, R)$ be a bipartite graph such that $|\neighbors{u}| = d$ for all $u \in L$. 
    Then $|\neighbors{\greedy{k}}| \geq \left( 1 - \exp\{- (1 - \frac{t}{k}) \} \cdot \left( 1 - \frac{|\neighbors{\greedy{t}}|}{|\neighbors{\opt{k}}|} \right) \right) \cdot |\neighbors{\opt{k}}|$ for all $t \leq k$.
\end{lemma}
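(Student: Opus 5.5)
The plan is to reproduce the standard worst-case analysis of greedy, started from iteration $t$ instead of iteration $0$; this is exactly the argument used for \cref{lem:augmented worst case analysis n = m}. The statement is deterministic, so fix the graph $B$, and write $O = |\neighbors{\optdet{k}}|$ and $g_i = |\neighbors{\greedydet{i}}|$. The degree assumption $|\neighbors{u}|=d$ plays no role. If $O = g_t$ or $O=0$, the claim is trivial by monotonicity of greedy ($g_k \ge g_t$), so assume $O>g_t$ where needed.

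\textbf{Step 1 (one-step contraction).} For every $i$ with $t \le i < k$, greedy satisfies
\[
g_{i+1} - g_i \ \ge\ \tfrac{1}{k}\,(O - g_i).
\]
To see this, note that by submodularity and monotonicity of coverage,
\[
O - g_i \ \le\ |\neighbors{\optdet{k} \cup \greedydet{i}}| - g_i \ \le\ \sum_{u \in \optdet{k}} |\neighbors{u}\setminus \neighbors{\greedydet{i}}|.
\]
This sum has at most $k$ terms, so some $u$ has marginal gain at least $(O-g_i)/k$. Greedy picks a node of maximum marginal gain, and nodes already in $\greedydet{i}$ have zero gain, so the bound transfers to greedy's choice. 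Rearranging gives $O - g_{i+1} \le (1-\tfrac1k)(O-g_i)$.

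\textbf{Step 2 (iterate and exponentiate).} Iterating Step 1 from $i=t$ to $i=k-1$ gives
\[
O - g_k \le \left(1-\tfrac1k\right)^{k-t}(O-g_t).
\]
Using $1-\tfrac1k \le e^{-1/k}$, we get $(1-\tfrac1k)^{k-t} \le \exp\{-(1-\tfrac tk)\}$. Since $O - g_t \ge 0$ when $O \ge g_t$, this yields
\[
g_k \ \ge\ O - \exp\{-(1-\tfrac tk)\}(O-g_t) \ =\ \left(1-\exp\{-(1-\tfrac tk)\}\left(1-\tfrac{g_t}{O}\right)\right) O,
\]
which is the claim.

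\textbf{Main obstacle: the case $g_t > O$.} Here the right-hand side of the claim exceeds $O$ only when multiplied out, namely
\[
O + \exp\{-(1-\tfrac tk)\}(g_t - O) \ \le\ O + (g_t - O) = g_t \le g_k.
\]
So the bound still holds, using monotonicity of greedy together with $\exp\{\cdot\}\le 1$. Apart from this sign check, the argument is the routine one, so I do not expect any real difficulty.
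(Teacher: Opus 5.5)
Your proposal is correct and follows the same route as the paper. Both iterate the standard greedy recursion $O - g_{i+1} \le (1-\frac{1}{k})(O-g_i)$ from $t$ to $k$ and then apply $1-\frac{1}{k} \le e^{-1/k}$ (you additionally justify the recursion via subadditivity, which the paper takes for granted). Your separate case $g_t > O$ is harmless but vacuous: $\greedydet{t}$ is a feasible solution with $t \le k$ nodes, so $g_t \le O$ always holds.
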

\begin{proof}
    We repeat the worst case analysis of $\greedyalgorithm$ for iterations $k \geq i > t$:
    \begin{align*}
        |\neighbors{\opt{k}}| - |\neighbors{\greedy{k}}| 
        \leq \left(1 - \frac{1}{k}\right)^{1} \cdot \left(|\neighbors{\opt{k}}| - |\neighbors{\greedy{k-1}}| \right)
        \leq \left(1 - \frac{1}{k}\right)^{k-t} \cdot \left(|\neighbors{\opt{k}}| - |\neighbors{\greedy{t}}| \right) \tag{1}
    \end{align*}
    and solving for $|\neighbors{\greedy{k}}|$ we have:
    \begin{align*}
        |\neighbors{\greedy{k}}| 
        &\geq \left( 1 - \left(1 - \frac{1}{k} \right)^{k - t} \cdot \left(1 - \frac{|\neighbors{\greedy{t}}|}{|\neighbors{\opt{k}}|} \right) \right) \cdot |\neighbors{\opt{k}}| \\
        &\geq \left( 1 - \exp\left\{-\left(1 - \frac{t}{k}\right)\right\} \cdot \left(1 - \frac{|\neighbors{\greedy{t}}|}{|\neighbors{\opt{k}}|} \right) \right) \cdot |\neighbors{\opt{k}}|,
    \end{align*}
    where the first line is by (1) and the second line is due to $1 - \frac{1}{k} \leq e^{-\frac{1}{k}}$.
\end{proof}

\begin{lemma}
\label{lem:critical region lower bound greedy m/d minimum}
    Let $y \geq 1$ and $f(x) = 1 - \exp\left(- \left( 1 - \frac{1}{x} \right) \right) \cdot \left( 1 - \frac{1}{\min\{x, y\}} \right)$. 
    Then we have that
    $f(x) \geq 1 - \exp\left(- \left( 1 - \frac{1}{y} \right) \right) \cdot \left( 1 - \frac{1}{y} \right)$ for all $x \geq 1$.
\end{lemma}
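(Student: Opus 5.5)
The statement is an elementary one-variable inequality. The plan is to split into the two cases according to which argument attains the minimum, and in each case compare with the target expression $g(y) := 1 - \exp\left(-\left(1 - \frac{1}{y}\right)\right)\cdot\left(1 - \frac{1}{y}\right)$. Throughout, $x \geq 1$ and $y \geq 1$, so $1 - \frac1x$ and $1 - \frac1y$ both lie in $[0,1)$.

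\emph{Case $x \leq y$.} Here $\min\{x,y\} = x$, so $f(x) = 1 - h\left(1 - \frac{1}{x}\right)$ with $h(s) = s e^{-s}$. The derivative is $h'(s) = (1-s)e^{-s} \geq 0$ for $s \in [0,1]$, so $h$ is nondecreasing on $[0,1]$. Since $x \leq y$ gives $1 - \frac1x \leq 1 - \frac1y \leq 1$, we get $h\left(1-\frac1x\right) \leq h\left(1-\frac1y\right)$. Therefore $f(x) \geq 1 - h\left(1 - \frac1y\right) = g(y)$.

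\emph{Case $x > y$.} Here $\min\{x,y\} = y$, so $f(x) = 1 - \exp\left(-\left(1-\frac1x\right)\right)\left(1-\frac1y\right)$. Since $x > y$, we have $1 - \frac1x > 1 - \frac1y$, hence $\exp\left(-\left(1-\frac1x\right)\right) < \exp\left(-\left(1-\frac1y\right)\right)$. Multiplying by the nonnegative factor $1 - \frac1y$ and subtracting from $1$ gives $f(x) \geq g(y)$.

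Combining the two cases yields the claim for all $x \geq 1$. The only point needing care is the first case: the monotonicity of $s e^{-s}$ holds only on $[0,1]$, so we must use that $1 - \frac1y \leq 1$. This is guaranteed by $y \geq 1$, and the rest is routine.
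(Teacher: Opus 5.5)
Your proof is correct and follows the same two-case split as the paper ($x \le y$ versus $x > y$); the second case is handled identically, using that the exponential is decreasing together with $1 - \frac{1}{y} \geq 0$. In the first case you also justify the step the paper leaves implicit: $s e^{-s}$ is nondecreasing on $[0,1]$, and $0 \le 1 - \frac{1}{x} \le 1 - \frac{1}{y} < 1$.
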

\begin{proof}
    We start with case $x \leq y$:
    \begin{align*}
        f(k) 
        = 1 - \exp\left(- \left( 1 - \frac{1}{x} \right) \right) \cdot \left( 1 - \frac{1}{x} \right)
        \geq 1 - \exp\left(- \left( 1 - \frac{1}{y} \right) \right) \cdot \left( 1 - \frac{1}{y} \right).
    \end{align*}
    For $x > y$:
    \begin{align*}
        f(k) = 1 - \exp\left(- \left( 1 - \frac{1}{x} \right) \right) \cdot \left( 1 - \frac{1}{y} \right)
        \geq 1 - \exp\left(- \left( 1 - \frac{1}{y} \right) \right) \cdot \left( 1 - \frac{1}{y} \right).
    \end{align*}
\end{proof}

\begin{lemma}
\label{lem:critical region lower bound greedy ratio simplification}
    For all $x \leq 1$ we have that $1 - e^{-(1-x)} \cdot (1 - x) \geq 1 - \frac{1}{e} + \frac{x^3}{e}$.
\end{lemma}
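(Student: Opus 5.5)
The plan is a direct algebraic comparison, the same argument used for \cref{lem:critical region lower bound greedy ratio simplification n = m}. First rewrite the left-hand side as
\[
1 - e^{-(1-x)}(1-x) = 1 - e^{-1}\, e^{x}(1-x).
\]
The inequality is then equivalent to
\[
e^{x}(1-x) \le 1 - x^3 = (1-x)(1+x+x^2).
\]

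The key step is the elementary bound $e^{x} \le 1 + x + x^2$, valid for all $x \le 1$. Since $1-x \ge 0$ on this range, multiplying the bound by $1-x$ preserves its direction. The factorization $1 - x^3 = (1-x)(1+x+x^2)$ then gives exactly the needed inequality.

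It remains to justify $e^{x}\le 1+x+x^2$ for $x\le 1$, which is the only real work. Set $h(x) = 1+x+x^2-e^x$.
\begin{itemize}
\item For $x \in [0,1]$, use the series: $e^x = 1 + x + x^2\sum_{j\ge 2} x^{j-2}/j! \le 1 + x + x^2(e-2) \le 1+x+x^2$.
\item For $x<0$, note that $h(0)=0$ and $h'(x) = 1 + 2x - e^x$. Since $e^x \ge 1+x$, we get $h'(x) \le x < 0$, so $h$ is decreasing on $(-\infty,0]$ and hence $h(x) \ge h(0) = 0$.
\end{itemize}

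No step is a genuine obstacle. The only point needing care is the sign of $1-x$ when multiplying, which is guaranteed by the hypothesis $x\le 1$. Combining the steps yields $1 - e^{-(1-x)}(1-x) \ge 1 - \frac{1}{e}(1-x^3) = 1 - \frac1e + \frac{x^3}{e}$.
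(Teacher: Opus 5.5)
Your proof is correct and follows the paper's argument exactly: rewrite the left side as $1 - e^{-1}e^{x}(1-x)$, apply $e^{x} \le 1+x+x^2$ for $x \le 1$, multiply by $1-x \ge 0$, and use $(1-x)(1+x+x^2) = 1-x^3$. Your justification of $e^{x} \le 1+x+x^2$, via the series bound on $[0,1]$ and monotonicity of $1+x+x^2-e^{x}$ for $x<0$, is also sound and fills in a step the paper only asserts.
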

\begin{proof}
    We have that:
    \begin{align*}
        1 - e^{-(1-x)} \cdot (1 - x) 
        &= 1 - e^{-1} e^{x} \cdot (1 - x) \\
        &\geq 1 - e^{-1} (1 + x + x^2) \cdot (1 - x) \\
        &= 1 - e^{-1} (1 - x^3) \\
        &= 1 - \frac{1}{e} + \frac{x^3}{e},
    \end{align*}
    where the inequality is due to $e^x \leq 1 + x + x^2$ for all $x \leq 1$.
\end{proof}

\begin{lemma}
\label{lem:simple lower bound for phase 1}
    For all $d \geq 2$, $c \in (0,1)$ we have that $1 - \left( 1 + c d(d-1) \right)^{- \frac{1}{d-1}} \geq 1 - e^{- c / d}$.
\end{lemma}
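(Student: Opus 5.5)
We need to show, for $d\ge 2$ and $c\in(0,1)$, that $(1+cd(d-1))^{-1/(d-1)}\le e^{-c/d}$. Taking logarithms, this is equivalent to
$$\frac{1}{d-1}\log\bigl(1+cd(d-1)\bigr)\;\ge\;\frac{c}{d},\qquad\text{i.e.}\qquad \log\bigl(1+cd(d-1)\bigr)\;\ge\;\frac{c(d-1)}{d}.$$

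The plan is to lower bound the left-hand side with the elementary inequality $\log(1+x)\ge \frac{x}{1+x}$, valid for $x\ge 0$. With $x=cd(d-1)$ this gives
$$\log\bigl(1+cd(d-1)\bigr)\ \ge\ \frac{cd(d-1)}{1+cd(d-1)}.$$
It then suffices to check
$$\frac{cd(d-1)}{1+cd(d-1)}\ \ge\ \frac{c(d-1)}{d}.$$
Since $c(d-1)>0$, we may divide it out, and the inequality becomes $\frac{d}{1+cd(d-1)}\ge\frac{1}{d}$, i.e. $d^2\ge 1+cd(d-1)$.

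Using $c<1$, we have $1+cd(d-1)\le 1+d^2-d\le d^2$ whenever $d\ge1$. This holds for all $d\ge 2$, so the chain closes, and exponentiating back gives the stated bound.

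No step here is a real obstacle. The only choice is which logarithmic inequality to use. If $\log(1+x)\ge x/(1+x)$ were too weak, the fallback is the cruder observation that $cd(d-1)\ge c$, so $\log(1+cd(d-1))\ge\log(1+c)\ge c/2\ge c(d-1)/(d\cdot\ldots)$, but this cruder route does not handle large $d$ cleanly. The $x/(1+x)$ route works uniformly in $d$ and uses $c<1$ only in the final comparison.
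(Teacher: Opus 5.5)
Your proof is correct, but it takes a different route from the paper's. The paper inserts the intermediate quantity $\left(\frac{d-1}{d}\right)^c$. It shows $\log(1+cd(d-1)) \geq \log(1+2c) \geq c$, using monotonicity in $d$ together with $1+2c \geq e^{c}$ on $(0,1)$, and separately that $c(d-1)\log\frac{d}{d-1} \leq c$. Together these give $(1+cd(d-1))^{-1/(d-1)} \leq \left(1-\frac{1}{d}\right)^c$, and the paper finishes with $\left(1-\frac{1}{d}\right)^c \leq e^{-c/d}$. This mirrors the paper's $c=1$ lemma, where the intermediate quantity is $\frac{d-1}{d}$, and it yields a slightly stronger intermediate bound. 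The cost is that the logarithm is bounded by a $d$-independent constant, which relies on the numerical fact $1+2c \geq e^{c}$. You instead go straight to the target $\log(1+cd(d-1)) \geq c(d-1)/d$ via $\log(1+x) \geq x/(1+x)$, reducing everything to the algebraic inequality $d^2 \geq 1+cd(d-1)$. That argument is self-contained and shows exactly where $c<1$ enters; in fact it gives the bound for all $0<c \leq (d+1)/d$. One correction to your closing remark: the ``crude'' route you set aside does handle large $d$ if you use $cd(d-1) \geq 2c$ rather than $cd(d-1) \geq c$. Then $\log(1+2c) \geq c \geq c(d-1)/d$ for every $d \geq 2$, which is essentially the paper's argument.
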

\begin{proof}
    Equivalently we show that $\left( 1 + cd(d-1) \right)^{-\frac{1}{d-1}} \leq \left( \frac{d-1}{d} \right)^c \leq e^{- c / d}$ or after taking logs:
    \begin{align*}
        \log\left( 1 + c d(d-1) \right) \geq (d-1) \cdot \log\left( \frac{d}{d-1} \right)^c.
    \end{align*}
    The left-hand side is greater than $c$ since $\log\left( 1+ c d(d-1) \right) \geq \log\left( 1 + 2c \right) \geq \log{e^c} = c$ for all $d \geq 2$ and $c \in (0,1)$, the right-hand side is less than $c$ since $(d-1) \cdot \log\left( \frac{d}{d-1} \right)^c \leq (d-1) \cdot \frac{c}{d-1} = c$.
\end{proof}

\subsection{Putting Everything Together}
\label{sec:generalization of theorem 1}

\begin{theorem} 
\label{thm:ratio lower bound for greedy}
     Let $\epsilon \in (0,1)$, $m \in \mathbbm{N}$, $c \geq 0$, $n = cm$, $d \in [m], k \in [n]$ and draw bipartite graph $\cB = (L, R, \cE)$ from random model $\textsc{ULRR}(n, m, d)$.
     Then, there exists a positive function $f(x,y)$ that is increasing in $x$ and decreasing in $y$ such that $\E\left[ |\neighbors{\greedy{k}}| \right] \geq \left( 1 - \frac{1}{e} + f(\epsilon, c) \right) \cdot \E\left[ |\neighbors{\opt{k}}| \right]$.
\end{theorem}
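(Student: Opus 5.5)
The plan is to stitch together the four regional lemmas of this section in the same way that Theorem~1 was obtained from Theorem~2 and \cref{lem:ratio lower bound for n=m and uniform d - small d}. The parameter space $(d,k,c,m)$ splits into pieces, and on each piece I exhibit a positive lower bound on the improvement over $1-1/e$. The function $f$ is then the minimum of these bounds, after adjusting so that it is monotone: increasing in $\epsilon$, decreasing in $c$.

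\emph{Near-linear and saturated regions.} If $k\notin[\frac{\epsilon m}{2d},\frac{2m}{\epsilon d}]$, \cref{lem:ratio for k away from m/d} gives ratio $1-\epsilon$. With $\epsilon$ small enough (e.g.\ replace $\epsilon$ by $\min(\epsilon,0.3)$), this is at least $1-1/e+\Omega(1)$. The resulting improvement $1-\epsilon-(1-1/e)$ does not depend on $c$.

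\emph{Critical region.} Now take $k\in[\frac{\epsilon m}{2d},\frac{2m}{\epsilon d}]$. I split into three cases.
\begin{itemize}
\item If $d\ge 20^4\max(1,c^2)/\epsilon^8$, \cref{lem:ratio for large d} gives ratio $1-\epsilon$.
\item If $c\le \frac{\epsilon^3}{48}-\frac{\log m}{m}$ and $m\ge 8/\epsilon^2$, \cref{lem:ratio for small n/m} gives ratio $1-\epsilon$.
\item Otherwise $d$ is bounded and $c$ is bounded below. If in addition $m$ exceeds the threshold $m_0(\epsilon,c)$ of \cref{lem:ratio for small d large n/m}, that lemma supplies an improvement $f_0(\epsilon,c)$ with the required monotonicity.
\end{itemize}

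\emph{Small graphs.} The leftover case is $m<\max(m_0(\epsilon,c),8/\epsilon^2)$. Here I argue exactly as in the proof of Theorem~1. Since $k\le n=cm<c\,\max(m_0,8/\epsilon^2)$, \cref{claim} gives the worst-case ratio
\[
1-\Bigl(1-\tfrac1k\Bigr)^k\ \ge\ 1-\tfrac1e+\tfrac{1}{4ek}\ \ge\ 1-\tfrac1e+\frac{1}{4ec\max(m_0(\epsilon,c),8/\epsilon^2)}.
\]
This is positive. Because $m_0$ is increasing in $c$ and decreasing in $\epsilon$, the bound is decreasing in $c$ and increasing in $\epsilon$. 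The degenerate case $k=0$ is trivial.

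\emph{Defining $f$.} Set $f(\epsilon,c)$ to be the minimum of $1/e-\epsilon'$ (with $\epsilon'=\min(\epsilon,0.3)$), $f_0(\epsilon,c)$, and the small-graph term. A minimum of functions that are each increasing in $\epsilon$ and decreasing in $c$ keeps that monotonicity. If some piece is not literally monotone, I would replace it by $\inf_{\epsilon'\le\epsilon,\,c'\ge c}$, which is still positive as long as the bounds are uniform on compact ranges.

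\emph{Expected obstacle.} The main difficulty is the bookkeeping at the boundaries between cases, in particular making the lemmas' hypotheses mesh. The threshold in \cref{lem:ratio for small n/m} involves $\log m/m$. \cref{lem:ratio for small d large n/m} assumes $c\ge \frac{\epsilon^3}{48}-\frac{\log m}{m}$, but its proof uses $c\gtrsim\epsilon^3/7$. To handle this I would apply the small-$c$ lemma with a smaller effective $\epsilon$ (e.g.\ $\epsilon/2$) so that the two ranges overlap. I would also check that the monotonicity of $m_0$ and $f_0$ in $(\epsilon,c)$ holds as stated.
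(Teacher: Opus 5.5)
Your argument is the paper's proof. For $m\ge m_0(\epsilon,c)$ the paper runs \cref{lem:ratio for k away from m/d,lem:ratio for large d,lem:ratio for small n/m,lem:ratio for small d large n/m} at the fixed accuracy $0.3$. Otherwise it uses \cref{claim} with $k\le n=cm\le c\,m_0(\epsilon,c)$. You are, if anything, more careful, since the paper never checks the hypothesis $m\ge 8/\epsilon^2$ of \cref{lem:ratio for small n/m}.

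\textbf{The definition of $f$.} This is the one step that does not work as written.
\begin{itemize}
\item The piece $1/e-\min(\epsilon,0.3)$ is decreasing in $\epsilon$, so the principle ``a minimum of increasing functions is increasing'' does not apply to it.
\item Your repair $\inf_{\epsilon'\le\epsilon,\,c'\ge c}$ reverses both monotonicities. Enlarging $\epsilon$ enlarges the index set, so the infimum can only drop. Enlarging $c$ shrinks the index set, so the infimum can only rise.
\item It also destroys positivity. Your small-graph term $\frac{1}{4ec'\max(m_0(\epsilon,c'),8/\epsilon^2)}$ tends to $0$ as $c'\to\infty$, so that infimum is $0$.
\end{itemize}
The fix is immediate. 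Since $\min(\epsilon,0.3)\le 0.3$, replace the first piece by the constant $1/e-0.3>0$. Alternatively, do as the paper does and run everything at $0.3$, so that $f$ depends on $c$ alone. If you want a monotone envelope, the correct one is $\inf_{\epsilon''\ge\epsilon,\,c''\le c}$.

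\textbf{The overlap remark.} The direction is reversed here as well. Running \cref{lem:ratio for small n/m} at $\epsilon/2$ shrinks its range $c\le\epsilon^3/48-\log m/m$. It is \cref{lem:ratio for small d large n/m} that should be run at $\epsilon/2$:
\begin{itemize}
\item its critical window then contains the one for $\epsilon$;
\item its degree condition is implied by $d<20^4\max(1,c^2)/\epsilon^8$;
\item the range its proof actually handles, $c>(\epsilon/2)^3/6-\log m/m=\epsilon^3/48-\log m/m$, meets the small-$c$ range exactly.
\end{itemize}
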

\begin{proof}
    Let $\epsilon = 0.3$ and $m_0(x, y)$ be the function from the condition on $m$ in \cref{lem:ratio for small d large n/m}.
    For $m \geq m_0(\epsilon, c)$,  the statement follows by \cref{lem:ratio for k away from m/d,lem:ratio for large d,lem:ratio for small n/m,lem:ratio for small d large n/m}.
    For $m < m_{0}(\epsilon, c)$, the worst-case approximation ratio of greedy is $1 - \left( 1 - \frac{1}{k} \right)^k$, so we get
    \begin{align*}
         1 - \left( 1 - \frac{1}{k} \right)^k  \geq  1 - \frac{1}{e} + \frac{1}{4ek}  \geq  1 - \frac{1}{e} + \frac{1}{4e c m_{0}(\epsilon, c)},
    \end{align*}
    where  the first inequality is by \cref{claim}  and the second since $k \leq n = cm \leq c m_{0}(\epsilon, c)$. 
\end{proof}

\section{Generalization to Non Regular Bipartite Graphs}
\label{sec:nonregular}

In this section, we generalize \cref{thm:Theorem 2} to unbalanced, non left-regular bipartite graphs and we use results from \cref{sec:generalization of technical tools}.

\subsection{Linear And Saturated Regions}
\begin{lemma}
\label{lem:asymptotic optimality outside critical region for unequal degrees}
    Let $n, m \in \mathbbm{N}$ and $m \geq d_1 \geq d_2 \geq \dots \geq d_n \geq 1$, arbitrary $\epsilon \in (0, 1)$ and $\cB \sim \textsc{GenR}(n, m, d_1, \dots, d_n)$.
    Then we have the following cases: 
    \begin{enumerate}
        \item $\E\left[ |\neighbors{\greedy{k}}| \right] \geq \left( 1 - \epsilon \right) \cdot \sum_{i=1}^{k} d_i$ for all $k$ such that $\sum_{i=1}^{k}d_i \leq \frac{\epsilon m}{2}$,
        \item $\E\left[ |\neighbors{\greedy{k}}| \right] \geq \left( 1 - \epsilon \right) \cdot m$ for all $k$ such that $\sum_{i=1}^{k}d_i \geq \frac{2m}{\epsilon}$.
    \end{enumerate}
\end{lemma}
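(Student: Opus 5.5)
The plan is to mirror the proof of \cref{lem:ratio for k away from m/d}, replacing $kd$ by $D_k := \sum_{i=1}^{k} d_i$ throughout. In both cases the chain starts with the fixed-set lower bound. By \cref{lem: greedy lower bound by smax}, which holds in the general model $\textsc{GenR}(n, m, d_1, \dots, d_n)$, we have $\E[|\neighbors{\greedy{k}}|] \geq \E[|\neighbors{\smax{k}}|]$. Here $\smax{k}$ is exactly the set of $k$ highest-degree left nodes, since the nodes are indexed with $d_1 \geq \dots \geq d_n$. Then \cref{lem:smax expected coverage} gives
\[
\E[|\neighbors{\smax{k}}|] \geq \left(1 - e^{-D_k/m}\right) m .
\]
What remains in each case is to bound this elementary function.

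For case 1, assume $D_k \leq \epsilon m/2$. Write
\[
\left(1 - e^{-D_k/m}\right) m = \frac{1 - e^{-x}}{x} \cdot D_k, \qquad x = D_k/m \in (0, \epsilon/2].
\]
The function $x \mapsto (1-e^{-x})/x$ is decreasing on $(0,\infty)$, because $1-e^{-x}$ is concave and vanishes at $0$. Hence the ratio is at least $(1-e^{-\epsilon})/\epsilon$. By \cref{claim one} this is at least $1-\epsilon$, giving $\E[|\neighbors{\greedy{k}}|] \geq (1-\epsilon) D_k$. One edge case needs care: $x$ cannot be $0$, since all $d_i \geq 1$ and $k \geq 1$. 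If $k = 0$ is allowed, both sides are $0$.

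For case 2, assume $D_k \geq 2m/\epsilon$. Then $e^{-D_k/m} \leq e^{-2/\epsilon} \leq e^{-1/\epsilon}$. Since $e^{y} \geq y$ for $y = 1/\epsilon$, we get $e^{-1/\epsilon} \leq \epsilon$, and therefore $(1-e^{-D_k/m}) m \geq (1-\epsilon) m$.

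There is no real obstacle here. The only point to check is that \cref{lem: greedy lower bound by smax} and \cref{lem:smax expected coverage} apply verbatim to non-regular degrees. They do, since both are stated for $\textsc{GenR}$ with $d_1 \geq \dots \geq d_n \geq 1$. Finally, comparing against $\optalgorithm$ via \cref{lem: trivial upper bound for opt} ($|\neighbors{\opt{k}}| \leq \min\{D_k, m\}$) turns these bounds into $(1-\epsilon)$-approximations, though the statement itself only asks for the bounds against $D_k$ and $m$.
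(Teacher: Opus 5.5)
Your proof is correct and follows essentially the same route as the paper's. Both use the fixed-set bound $\E[|\neighbors{\greedy{k}}|] \geq \E[|\neighbors{\smax{k}}|] \geq (1 - e^{-\sum_{i=1}^k d_i/m})\, m$, then monotonicity of $(1-e^{-x})/x$ together with $e^{-\epsilon} \leq 1 - \epsilon + \epsilon^2$ (equivalently \cref{claim one}) for case 1, and $e^{-1/\epsilon} \leq \epsilon$ for case 2.
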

\begin{proof} 
    For $\sum_{i=1}^{k}d_i \leq \frac{\epsilon m}{2}$ we have that:
    \begin{align*}
        \E\left[ |\neighbors{\greedy{k}}| \right] 
        &\geq \E\left[ |\neighbors{\smax{k}}| \right] \\
        &= \left( 1 - \prod_{i=1}^{k}\left( 1 - \frac{d_i}{m} \right) \right) \cdot m \\
        &\geq \left( 1 - e^{- \frac{\sum_{i=1}^{k}d_i}{m}} \right) \cdot m \\
        &= \frac{1 - e^{- \frac{\sum_{i=1}^{k}d_i}{m}}}{\frac{\sum_{i=1}^{k}d_i}{m}} \cdot \sum_{i=1}^{k}d_i \\
        &\geq \frac{1 - e^{- \epsilon}}{\epsilon} \cdot \sum_{i=1}^{k}d_i \\
        &\geq \frac{1 - (1 - \epsilon + \epsilon^2)}{\epsilon} \cdot \sum_{i=1}^{k}d_i \\
        &\geq \left( 1 - \epsilon \right) \cdot \sum_{i=1}^{k}d_i, \tag{1}
    \end{align*}
    where the first line is by \cref{lem: greedy lower bound by smax}, the second line is by direct calculation, the third line is due to $1 - \frac{d_i}{m} \leq e^{-\frac{d_i}{m}}$ for all $i$, the fourth line is by multiplying and dividing by the sum, the fifth line is because $f(x) = \frac{1 - e^{-x}}{x}$ is decreasing for all $x$ and by assumption $x = \frac{\sum_{i=1}^{k}d_i}{m} \leq \frac{\epsilon}{2} \leq \epsilon$, the sixth line is due to $e^{-\epsilon} \leq 1 - \epsilon + \epsilon^2$ for all $\epsilon < 1$.
    For $\sum_{i=1}^{k}d_i \geq \frac{2m}{\epsilon}$ we have that:
    \begin{align*}
        \E\left[ |\neighbors{\greedy{k}}| \right] 
        &\geq \E\left[ |\neighbors{\smax{k}}| \right] \\
        &= \left( 1 - \prod_{i=1}^{k}\left( 1 - \frac{d_i}{m} \right) \right) \cdot m \\
        &\geq \left( 1 - e^{- \frac{\sum_{i=1}^{k}d_i}{m}} \right) \cdot m \\
        &\geq \left( 1 - e^{- \frac{1}{\epsilon}} \right) \cdot m \\
        &\geq \left( 1 - \epsilon \right) \cdot m,  \tag{2}
    \end{align*}
    where the first line is by \cref{lem: greedy lower bound by smax}, the second line is by direct calculation, the third line is due to $1 - \frac{d_i}{m} \leq e^{-\frac{d_i}{m}}$, the fourth line is due to $1-e^{-x}$ being increasing for all $x$ and $x = \frac{\sum_{i=1}^{k}d_i}{m} \geq \frac{2}{\epsilon} \geq \frac{1}{\epsilon}$ by assumption, the fifth line is due to $e^{-x} \leq \frac{1}{x}$ for all $x$.
    By (1) and (2) we obtain the statement.
\end{proof}

\subsection{Critical Region: Unequal Large Degrees}
\begin{lemma}
\label{lem:asymptotic optimality for large unequal degrees}
    Let $n, m \in \mathbbm{N}$ and $m \geq d_1 \geq d_2 \geq \dots \geq d_n \geq 1$, arbitrary $\epsilon \in (0, 1)$ and $\cB \sim \textsc{GenR}(n, m, d_1, \dots, d_n)$.
    If $\max\left( \frac{\epsilon m}{2}, \frac{20^4 \max(1, (n/m)^2)}{\epsilon^8} k \right) \leq \sum_{i=1}^{k}d_i \leq \frac{2m}{\epsilon}$, then we have $\E\left[ |\neighbors{\greedy{k}}| \right] \geq \left( 1 - \epsilon \right) \cdot \E\left[ |\neighbors{\opt{k}}| \right]$.
\end{lemma}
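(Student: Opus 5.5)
The plan is to follow the template of \cref{lem:ratio for large d}, with $D := \sum_{i=1}^k d_i$ taking the place of $kd$ and $c := n/m$. The starting point is \cref{lem:greedy lower bound by concentration}, applied with accuracy parameter $\epsilon/2$. It gives
\[
\E\left[ |\neighbors{\greedy{k}}| \right] \geq \left(1-\tfrac{\epsilon}{2}\right)\E\left[ |\neighbors{\opt{k}}| \right] - m\cdot \exp\left(\log\binom{n}{k} - \tfrac{\epsilon^2}{12}\left(1-e^{-D/m}\right) m\right).
\]
Since $D \ge \epsilon m/2$, we have $1-e^{-D/m} \ge 1-e^{-\epsilon/2} \ge \epsilon/4$. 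The subtracted exponent is therefore at most $\log\binom{n}{k} - \frac{\epsilon^3}{48} m$.

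The main step is to show that $\log\binom{n}{k} \le \frac{\epsilon^3}{48}m - \log m$, or at least that the exponential term is $O(1)$. I will use $\log\binom{n}{k} \le k\log(en/k)$, which is increasing in $k$ on $[1,n]$. The hypothesis $D \ge \frac{20^4\max(1,c^2)}{\epsilon^8}k$ together with $D\le 2m/\epsilon$ gives
\[
k \le K := \frac{2m\epsilon^7}{20^4\max(1,c^2)}.
\]
This means the average degree of the top $k$ nodes plays exactly the role that $d$ played before. Then
\[
k\log\frac{en}{k} \le K\log\frac{ecm}{K} = \frac{2\epsilon^7 m}{20^4\max(1,c^2)}\log\frac{e\,c\,20^4\max(1,c^2)}{2\epsilon^7}.
\]
Using $\log x\le \sqrt{x}$ and $c/\max(1,c^2)\le 1$, the factor $\max(1,c^2)$ absorbs both the $c$ inside the logarithm and the $c$ in $n$. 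The whole expression is then $O(\epsilon^{7}\cdot \epsilon^{-7/2}/20^2)\, m = O(\epsilon^{3.5}/400)\,m$. With the explicit constants this should come out below $\frac{\epsilon^3}{96}m$. I expect this to be the main obstacle: tracking constants so that $20^4$ suffices, and handling $k<1$ or very small $k$, where monotonicity needs $K\le n$. If $K>n$, I will instead bound $\log\binom nk\le n\log 2$ and use $c\le$ small; alternatively I will just fall back on $k\le n$ bounds as in \cref{lem:ratio for small n/m}.

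Once the exponent is at most $-\frac{\epsilon^3}{96}m$, the error term is $m e^{-\epsilon^3 m/96}$. Since $\E[|\neighbors{\opt{k}}|]\ge\E[|\neighbors{\smax{k}}|]\ge \frac{\epsilon}{4}m$, this error is at most $\frac{\epsilon}{2}\E[|\neighbors{\opt{k}}|]$ once $m$ is large. Here $m \ge d_1 \ge D/k \ge 20^4/\epsilon^8$ already forces $m$ to be large, so that $\frac{4}{\epsilon} e^{-\epsilon^3 m/96}\le \epsilon/2$. Combining the two pieces gives $(1-\epsilon)\E[|\neighbors{\opt{k}}|]$.

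As an alternative route, if constants are tight, I would mirror \cref{lem:ratio for large d} directly. That means choosing $\delta = \sqrt{\frac{24}{\epsilon}\left(\frac{k\log(en/k)}{m}+\frac{\log m}{m}\right)}$ and showing $\delta\le\epsilon/2$ via the bound on $K$ above. I would then use \cref{lem: opt upper bound by concentration} to get $\Pr[|\neighbors{\opt{k}}|>(1+\delta)\E|\neighbors{\smax{k}}|]\le 1/m$, and conclude with \cref{lem: greedy lower bound by smax}.
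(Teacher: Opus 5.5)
Your proposal is correct. It packages the argument differently from the paper, which follows what you list as your alternative route.

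\textbf{The paper's route.} It sets $d=\frac1k\sum_{i\le k}d_i$ and $c=n/m$, and picks an adaptive $\delta=\sqrt{\frac{24}{\epsilon}\bigl(\max\bigl(0,\frac{\log(ec)+\log(\epsilon d/2)}{\epsilon d}\bigr)+\frac{\log m}{m}\bigr)}$. It shows $\delta\le\epsilon/2$ from the average-degree hypothesis. It then bounds $k\log(en/k)$ by its value at $k=\frac{2m}{\epsilon d}$ to get $\Pr[F]\le 1/m$, and absorbs the resulting additive $1$ using $\E[|\neighbors{\smax{k}}|]\ge\frac{\epsilon}{4}m$.

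\textbf{Your route.} You apply \cref{lem:greedy lower bound by concentration} with fixed accuracy $\epsilon/2$, and combine the two hypotheses into $k\le K=\frac{2\epsilon^7m}{20^4\max(1,c^2)}$. Your estimate $K\log(en/K)\le\sqrt{eKn}\le\frac{\sqrt{2e}}{400}\epsilon^{7/2}m<\frac{\epsilon^3}{96}m$ does work with the stated constants. Moreover, $m\ge d_1\ge 20^4/\epsilon^8$ gives $me^{-\epsilon^3m/96}\le\frac{\epsilon}{2}\cdot\frac{\epsilon}{4}m$, which closes the argument.

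\textbf{What your version buys.} The explicit case $K>n$ matters. The paper's monotonicity step tacitly requires $\frac{2m}{\epsilon d}\le n$, which can fail when $n/m<2\epsilon^7/20^4$. In that regime the claimed bound on $k\log(en/k)$ can be negative, and the adaptive $\delta$ can be too small to beat $\log\binom nk$, which may be of order $n\log 2$. Your fallback $\log\binom nk\le n\log2\le\frac{2\epsilon^7}{20^4}m$ handles exactly that corner.

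\textbf{What the paper's version buys.} It gives the quantitatively stronger concentration $\delta=O(\epsilon^{-1}d^{-1/4})$ for $\optalgorithm$ in the main regime, rather than a fixed $\epsilon/2$.
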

\begin{proof}
    Let $d = \frac{1}{k} \sum_{i=1}^{k}{d_i}$ be the average degree among the $k$ highest and $c = n/m$.
    To upper bound the optimal solution, first notice that $|\neighbors{\opt{k}}| = \max_{S \in \binom{L}{k}}|\neighbors{S}|$.
    Also let $$\delta = \sqrt{\frac{24}{\epsilon} \left( \max\left(0, \frac{ \log{e c} + \log{\frac{\epsilon d}{2}}}{\epsilon d} \right)  + \frac{\log{m}}{m} \right)}.$$ 
    We proceed to show that $\delta \in (0,1)$:
    \begin{align*}
        \delta 
        &\leq \sqrt{\frac{24}{\epsilon} \left( \max\left(0, \frac{\log{e c}}{\sqrt{d}} \cdot \frac{1}{\epsilon \sqrt{d}} + \frac{\log{\epsilon d}}{\epsilon d} \right) + \frac{\log{m}}{m} \right)} \\
        &\leq \sqrt{\frac{24}{\epsilon} \left( \max\left(0, \frac{1}{\epsilon \sqrt{d}} + \frac{1}{\sqrt{\epsilon d}} \right) + \frac{1}{\sqrt{m}} \right)} \\
        &\leq \frac{\sqrt{72}}{\epsilon d^{1/4}} \\
        &\leq \frac{\epsilon}{2}, \tag{1}
    \end{align*}
    where the second line is due to $d \geq \frac{20^4 \max(1, c^2)}{\epsilon^8} \geq (\max(1, \log{e c}))^2$ and $\frac{\log{x}}{x} \leq \frac{1}{\sqrt{x}}$ for all $x$, the third line is due to $\sqrt{\epsilon} \geq \epsilon, m \geq d \geq \epsilon^2 d$ and the fourth line is by assumption $d \geq \frac{20^4}{\epsilon^8}$.
    Then we have:
    \begin{align*}
        \Pr\left[ |\neighbors{\opt{k}}| \geq (1 + \delta) \cdot \E\left[ |\neighbors{\smax{k}}| \right] \right]
        &= \Pr\left[ \max_{S \in \binom{L}{k}}|\neighbors{S}| \geq (1 + \delta) \cdot \E\left[ |\neighbors{\smax{k}}| \right] \right] \\
        &\leq \sum_{S \in \binom{L}{k}} \Pr\left[|\neighbors{S}| \geq (1 + \delta) \cdot \E\left[ |\neighbors{\smax{k}}| \right] \right] \\
        &\leq \sum_{S \in \binom{L}{k}} \Pr\left[|\neighbors{\smax{k}}| \geq (1 + \delta) \cdot \E\left[ |\neighbors{\smax{k}}| \right] \right] \\
        &\leq \sum_{S \in \binom{L}{k}} \exp\left( -\frac{\delta^2}{3} \cdot \E\left[ |\neighbors{\smax{k}}| \right] \right) \\
        &= \exp\left( \log{\binom{n}{k}} - \frac{\delta^2}{3} \E\left[ |\neighbors{\smax{k}}| \right]  \right), \tag{2}
    \end{align*}
    where the first line is by definition of $\opt{k}$, the second line is by union bound, the third line is because $|\neighbors{S}|$ is dominated by $|\neighbors{\smax{k}}$ for all $S \in \binom{L}{k}$ by \cref{lem: coverage of s dominated by smax}, the fourth line is by \cref{lem: coverage of smax admits chernoff} since $\delta \in (0, 1)$ by Equation (1) and the fifth line is because $|\binom{L}{k}| = \binom{n}{k} = e^{\log{\binom{n}{k}}}$.
    We upper bound the exponent as follows:
    \begin{align*}
        \log{\binom{n}{k}} - \frac{\delta^2}{3} \E\left[ |\neighbors{\smax{k}}| \right]
        &\leq k \log\left( \frac{en}{k} \right) - \frac{\delta^2}{3} \cdot \left( 1 - e^{-kd/m} \right) \cdot m \\
        &\leq \frac{\log{\frac{e c \epsilon d}{2}}}{\frac{\epsilon d}{2}} \cdot m - \frac{\delta^2}{3} \cdot \left( 1 - e^{-\epsilon/2} \right) \cdot m \\
        &\leq \frac{\log{e c} + \log{\frac{\epsilon d}{2}}}{\frac{\epsilon d}{2}} \cdot m - \frac{\delta^2}{3} \cdot \frac{\epsilon}{4} \cdot m \\
        &\leq \frac{\log{e c} + \log{\frac{\epsilon d}{2}}}{\frac{\epsilon d}{2}} \cdot m - \left( \frac{\log{e c} + \log{\frac{\epsilon d}{2}}}{\frac{\epsilon d}{2}} + \frac{2\log{m}}{m} \right) \cdot m \\
        &\leq - \log{m}, \tag{2a}
    \end{align*}
    where the first line is due to $\binom{n}{k} \leq \left( \frac{en}{k} \right)^k$ and \cref{lem:smax expected coverage}, the second line is because $k \log\left( \frac{en}{k} \right)$ is increasing for all $k \in [n]$ and $\frac{\epsilon m}{2 d} \leq k \leq \frac{2 m}{\epsilon d}$ by assumption, the third line is because $1 - e^{-\epsilon/2} \geq \frac{\epsilon}{4}$ for all $\epsilon \in (0,1)$, the fourth line is by definition of $\delta$ and the fifth line is due to $-2 \log{m} \leq - \log{m}$.
    Let event $F = \{ |\neighbors{\opt{k}}| \geq (1 + \delta) \cdot \E\left[ |\neighbors{\smax{k}}| \right] \}$.
    We are ready to upper bound the expected value of $\optalgorithm$:
    \begin{align*}
        \E\left[ |\neighbors{\opt{k}}| \right] 
        &= \E\left[ |\neighbors{\opt{k}}| \middle| F \right] \cdot \Pr\left[ F \right] + \E\left[ |\neighbors{\opt{k}}| \middle| \bar{F} \right] \cdot \Pr\left[ \bar{F} \right] \\
        &\leq m \cdot \frac{1}{m} + \E\left[ |\neighbors{\opt{k}}| \middle| \bar{F} \right] \cdot 1 \\
        &\leq 1 + (1 + \delta) \cdot \E\left[ |\neighbors{\smax{k}}| \right] \\
        &\leq \left( 1 + \delta + \frac{4}{\epsilon m} \right) \cdot \E\left[ |\neighbors{\smax{k}}| \right] \\
        &\leq \left( 1 + \epsilon \right) \cdot \E\left[ |\neighbors{\smax{k}} | \right] \\
        &\leq \left( 1 + \epsilon \right) \cdot \E\left[ |\neighbors{\greedy{k}} | \right], \tag{3}
    \end{align*}
    where the first line is by tower rule, the second line is because $|\neighbors{\opt{k}}| \leq m$ and $\Pr\left[ F \right] \leq \exp\left( - \log{m} \right) = \frac{1}{m}$ by substituting Equation (2a) to (2), the third line follows by definition of $F$, the fourth line is because $1 =  \frac{\E\left[ |\neighbors{\smax{k}}| \right]}{\E\left[ |\neighbors{\smax{k}}| \right]} \leq \frac{\E\left[ |\neighbors{\smax{k}}| \right]}{\left( 1 - e^{-\epsilon/2} \right) \cdot m} \leq \frac{\E\left[ |\neighbors{\smax{k}}| \right]}{\frac{\epsilon}{4} \cdot m}$, the fifth line is because $\delta \leq \frac{\epsilon}{2}$ by Equation (1) and $m \geq d \geq \frac{20^4 c^2}{\epsilon^8} \geq \frac{8 c}{\epsilon^2}$ and the sixth line is by \cref{lem: greedy lower bound by smax}.
    The statement follows directly from Equation (3):
    \begin{align*}
        \E\left[ |\neighbors{\greedy{k}}| \right] 
        \geq \frac{1}{1 + \epsilon} \cdot \E\left[ |\neighbors{\opt{k}}| \right]
        \geq \left( 1 - \epsilon \right) \cdot \E\left[ |\neighbors{\opt{k}}| \right].
    \end{align*}
    where the second inequality is because $\frac{1}{1 + \epsilon} \geq 1 - \epsilon$ for all $\epsilon \in (0, 1)$.
\end{proof}

\subsection{Main Results}
\label{sec:generalization of theorem 2}

\begin{theorem}[Generalization of \cref{thm:Theorem 2}]
\label{lem: generalization of theorem 2 in non regular}
    Let $n, m \in \mathbbm{N}$ and $m \geq d_1 \geq d_2 \geq \dots \geq d_n \geq 1$, arbitrary $\epsilon \in (0, 1)$ and $\cB \sim \textsc{GenR}(n, m, d_1, \dots, d_n)$.
    Suppose that one of the following conditions holds:
    \begin{enumerate}
        \item $\sum_{i=1}^{k}d_i \not \in [(\epsilon/2) m, (\epsilon/2)^{-1}m]$ or
        \item $\frac{1}{k} \sum_{i=1}^{k}d_i \geq \frac{20^4 \max(1, (n/m)^2)}{\epsilon^8}$.
    \end{enumerate}
    Then greedy achieves, in expectation, a $1- \epsilon$ approximation.
\end{theorem}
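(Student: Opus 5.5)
The plan is a case split along the two hypotheses, with each case handed to a lemma already proved in this section. Throughout, set $D_k = \sum_{i=1}^{k} d_i$. The only problem-independent fact I need is \cref{lem: trivial upper bound for opt}, which gives $|\neighbors{\opt{k}}| \le \min\{D_k, m\}$ deterministically. Taking expectations, $\E[|\neighbors{\opt{k}}|] \le \min\{D_k, m\}$.

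\emph{Condition 1.} Since $D_k \notin [(\epsilon/2)m, (\epsilon/2)^{-1}m]$, either $D_k < \epsilon m/2$ or $D_k > 2m/\epsilon$.
\begin{itemize}
\item If $D_k < \epsilon m/2$, case 1 of \cref{lem:asymptotic optimality outside critical region for unequal degrees} gives $\E[|\neighbors{\greedy{k}}|] \ge (1-\epsilon) D_k$. Combined with $D_k \ge \E[|\neighbors{\opt{k}}|]$, this yields the $1-\epsilon$ approximation.
\item If $D_k > 2m/\epsilon$, case 2 of the same lemma gives $\E[|\neighbors{\greedy{k}}|] \ge (1-\epsilon) m$. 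Combined with $m \ge \E[|\neighbors{\opt{k}}|]$, this again yields the bound.
\end{itemize}

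\emph{Condition 2.} Assume $\frac{1}{k} D_k \ge 20^4\max(1,(n/m)^2)/\epsilon^8$, which means $D_k \ge \frac{20^4 \max(1,(n/m)^2)}{\epsilon^8}\, k$.
\begin{itemize}
\item If in addition $D_k$ lies outside $[\epsilon m/2, 2m/\epsilon]$, Condition 1 already applies and the previous paragraph finishes the proof.
\item Otherwise $\epsilon m/2 \le D_k \le 2m/\epsilon$. Together with the degree lower bound, this gives exactly $\max\bigl(\frac{\epsilon m}{2}, \frac{20^4 \max(1,(n/m)^2)}{\epsilon^8} k\bigr) \le D_k \le \frac{2m}{\epsilon}$. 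That is the hypothesis of \cref{lem:asymptotic optimality for large unequal degrees}, which directly yields $\E[|\neighbors{\greedy{k}}|] \ge (1-\epsilon)\E[|\neighbors{\opt{k}}|]$.
\end{itemize}

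The only step needing care is the boundary bookkeeping. The open complement in Condition 1 and the closed interval in \cref{lem:asymptotic optimality for large unequal degrees} overlap, so every $k$ is covered. The non-strict versions of the case-1 and case-2 hypotheses of \cref{lem:asymptotic optimality outside critical region for unequal degrees} also cover the endpoints. I would also double-check that the implicit side conditions used inside \cref{lem:asymptotic optimality for large unequal degrees} (for instance $m \ge d$ for the averaged degree, and the $\log m / m$ terms) are implied by $m \ge d_1 \ge d$, since $d$ is an average of $d_i \le m$. This is the one place where I expect a possible gap; everything else is a direct combination of the stated lemmas.
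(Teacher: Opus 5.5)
Your proposal is correct and is the same argument as the paper, whose proof simply invokes \cref{lem:asymptotic optimality outside critical region for unequal degrees,lem:asymptotic optimality for large unequal degrees}; your use of the deterministic bound $|\neighbors{\opt{k}}| \le \min\{D_k, m\}$ to turn the two cases of the first lemma into ratio statements is exactly the step the paper leaves implicit. The side condition you flagged is harmless: the average of the top-$k$ degrees is at most $d_1 \le m$, so $m \ge d$ holds for the averaged degree used in the second lemma.
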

\begin{proof}
    Follows by \cref{lem:asymptotic optimality outside critical region for unequal degrees,lem:asymptotic optimality for large unequal degrees}.
\end{proof}

\begin{theorem}[Power Law Left Degrees]
\label{thm:power law asymptotic optimality}
    Let $n,m \in \mathbbm{N}$, $X_1, X_2, \dots, X_n$ be i.i.d. samples from Pareto distribution with parameters $a > 1$ and $x_{min} = 1$.
    Also let $d_1 \geq \dots \geq d_n$ be their order statistics, $\epsilon \in (0,1)$ arbitrary and draw graph $\cB$ from random model $\textsc{GenR}(n, m, \floor{d_1}, \dots, \floor{d_n} )$.
    Then $\E\left[ |\neighbors{\greedy{k}}| \right] \geq \left( 1 - \epsilon - O\left( \frac{1}{\log^a{m}} \right) \right) \cdot \E[|\neighbors{\opt{k}}|]$
    for all $k \leq \frac{\epsilon m n^{-1/a}}{2 \log{m}}$.
\end{theorem}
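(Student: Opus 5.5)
The plan is to reduce to case 1 of \cref{lem:asymptotic optimality outside critical region for unequal degrees}, the near-linear region, on a high-probability event about the Pareto order statistics. On that event, the sum of the $k$ largest rounded degrees is at most $\frac{\epsilon m}{2}$. That lemma then gives $\E[|\neighbors{\greedy{k}}| \mid \text{degrees}] \geq (1-\epsilon)\sum_{i=1}^k \floor{d_i}$, and by \cref{lem: trivial upper bound for opt} we have $|\neighbors{\opt{k}}| \leq \sum_{i=1}^k \floor{d_i}$ deterministically. So conditional on the degree sequence, the ratio is at least $1-\epsilon$ whenever $\sum_{i\le k}\floor{d_i}\le \epsilon m/2$. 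Since the random graph model is defined conditional on the degrees, we first fix the degree realization, apply the lemma, and only then average over the Pareto draws.

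The key probabilistic step is to control the maximum degree $d_1$. Define the event
\[
  \mathcal{D} = \{ d_1 \leq n^{1/a}\log m \}.
\]
By a union bound, $\Pr[X_i > x] = x^{-a}$ for the Pareto distribution with $x_{\min}=1$. This gives $\Pr[\bar{\mathcal D}] \leq n \cdot (n^{1/a}\log m)^{-a} = \log^{-a} m$. On $\mathcal D$, every degree is at most $n^{1/a}\log m$, so
\[
  \sum_{i\le k}\floor{d_i} \leq k\, n^{1/a}\log m \leq \frac{\epsilon m}{2}
\]
by the hypothesis on $k$. This is exactly where the threshold $k \leq \frac{\epsilon m n^{-1/a}}{2\log m}$ comes from. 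We should also make sure that $m \geq \floor{d_1}$ so the model is well-defined; on $\mathcal D$ this holds for the relevant range, and otherwise one can cap the degrees at $m$.

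To combine, write $\E[|\neighbors{\greedy{k}}|] \geq \E[|\neighbors{\greedy{k}}| \mathbbm{1}_{\mathcal D}] \geq (1-\epsilon)\E[|\neighbors{\opt{k}}|\mathbbm{1}_{\mathcal D}]$. It remains to show that $\E[|\neighbors{\opt{k}}|\mathbbm{1}_{\bar{\mathcal D}}] = O(\log^{-a} m)\cdot\E[|\neighbors{\opt{k}}|]$. This is the main obstacle, because the optimum is not bounded by something proportional to its own expectation on the bad event.

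The first attempt will be to bound $|\neighbors{\opt{k}}| \leq m$ on $\bar{\mathcal D}$. It then suffices to show $\E[|\neighbors{\opt{k}}|] = \Omega(m)$ times something that absorbs the loss, which is not automatic when $k$ is small. If that fails, a cleaner route is to use $|\neighbors{\opt{k}}| \le \sum_{i\le k} \floor{d_i} \le k d_1$. Combined with the Pareto tail of $d_1$ restricted to $\bar{\mathcal D}$, this gives roughly $k\,\E[d_1\mathbbm 1_{d_1>n^{1/a}\log m}] = O(k n^{1/a}\log^{1-a} m)$. Meanwhile $\E[|\neighbors{\opt{k}}|] \geq \E[|\neighbors{\smax{k}}|] \gtrsim k\,\E[d_1]$ up to constants, by \cref{lem:smax expected coverage} in the near-linear regime, and $\E[d_1] = \Theta(n^{1/a})$. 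The ratio is then $O(\log^{1-a} m)$ or better. Sharpening the threshold by a constant factor, or conditioning more carefully, should bring the error term to the stated $O(\log^{-a} m)$. This tail-versus-expectation comparison is the step I expect to require the most care.
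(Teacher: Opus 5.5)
Your reduction is the same as the paper's. It uses the event $\{d_1\le n^{1/a}\log m\}$, whose complement has probability at most $\log^{-a}m$, then case~1 of \cref{lem:asymptotic optimality outside critical region for unequal degrees} together with $|\neighbors{\opt{k}}|\le\sum_{i\le k}\floor{d_i}$. You are right that combining these is the crux, and the paper's proof does not handle it either. With $Y=\sum_{i\le k}\floor{d_i}$, the paper uses $\E[|\neighbors{\greedy{k}}|\mid Y\le\epsilon m/2]\ge(1-\epsilon)\E[Y]$. Case~1 only gives $(1-\epsilon)\E[Y\mid Y\le\epsilon m/2]$, which is at most $(1-\epsilon)\E[Y]$.

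The gap in your proposal is that the claim you reduce to, $\E[|\neighbors{\opt{k}}|\mathbbm{1}_{\bar{\mathcal D}}]=O(\log^{-a}m)\cdot\E[|\neighbors{\opt{k}}|]$, is false. Take $k$ bounded. Since $k\ge1$ forces $m\ge 2n^{1/a}\log m/\epsilon$, on $\bar{\mathcal D}$ we have $|\neighbors{\opt{k}}|\ge\min(\floor{d_1},m)\ge n^{1/a}\log m-1$. Hence the left side is at least $(1-o(1))\,n^{1/a}\log^{1-a}m$. Meanwhile $\E[|\neighbors{\opt{k}}|]\le k\,\E[d_1]\le\frac{ka}{a-1}\,n^{1/a}$. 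The ratio is therefore $\Omega(\log^{1-a}m)$, off by a factor $\log m$.

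None of the obvious adjustments fixes this:
\begin{itemize}
\item Rescaling the threshold by a constant does not change the exponent.
\item Raising the threshold to $n^{1/a}\log^{a/(a-1)}m$, which is what $\E[d_1\mathbbm{1}_{d_1>x}]\approx\frac{a}{a-1}\,n x^{1-a}$ would require, no longer guarantees $\sum_{i\le k}\floor{d_i}\le\epsilon m/2$ under the hypothesis on $k$.
\item Using greedy's worst-case $1-1/e$ guarantee on $\bar{\mathcal D}$ only changes constants.
\end{itemize}
The loss comes entirely from discarding the realizations in which $d_1$ alone is large.

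Your fallback is also wrong as stated. The bound $\E[|\neighbors{\opt{k}}|]\gtrsim k\,\E[d_1]$ fails for large $k$, because $\E[\sum_{i\le k}d_i]=\Theta(n^{1/a}k^{1-1/a})$. So bounding $|\neighbors{\opt{k}}|\le kd_1$ gives a relative error of order $k^{1/a}\log^{1-a}m$, which need not vanish.

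The missing idea is that one huge degree costs greedy nothing, since that node is picked first. So the good event should constrain only $d_2,\dots,d_k$. Let $D=\floor{d_1}$ and $s=\sum_{2\le i\le k}\floor{d_i}$. Conditionally on the degrees, the exact formula of \cref{lem:smax expected coverage} gives $\E[|\neighbors{\smax{k}}|]\ge m\bigl(1-(1-D/m)e^{-s/m}\bigr)\ge D+s(1-D/m)(1-s/m)\ge(1-2s/m)(D+s)$. Since $D+s$ upper bounds the optimum, \cref{lem: greedy lower bound by smax} makes greedy conditionally $(1-\epsilon)$-optimal whenever $s\le\epsilon m/2$. This is implied by $d_2\le x:=n^{1/a}\log m$.

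The bad event now requires two large samples. On it, $\E[(d_1+d_2)\mathbbm{1}_{d_2>x}]\le n^2\,\E[X_1\mathbbm{1}_{X_1>x}]\,x^{-a}=\frac{a}{a-1}\,n^{1/a}\log^{1-2a}m$. By independence, the indices $3,\dots,k$ contribute at most $O(\log^{-2a}m)\cdot\E[|\neighbors{\opt{k}}|]$, since $\binom{n}{2}x^{-2a}\le\log^{-2a}m$. Finally $\E[|\neighbors{\opt{k}}|]\ge\E[\min(\floor{d_1},m)]=\Omega(n^{1/a})$, so the total error is $O(\log^{1-2a}m)=o(\log^{-a}m)$ because $a>1$.
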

\begin{proof}
    Let $x \geq 1$, then for the highest degree we have that:
    \begin{align*}
        \Pr\left[ \floor{d_1} \leq x \right]
        &\overset{\floor{d_1} \leq d_1}{\geq} \Pr\left[ d_1 \leq x \right] \\
        &= \Pr\left[ \max_{i \in [n]}X_i \leq x \right] \\
        &= \left( \Pr\left[ X_1 \leq x \right] \right)^n \\
        &\overset{\text{Pareto}}{=} \left( 1 - x^{-a} \right)^n \\
        &\overset{\text{Bernoulli}}{\geq} 1 - n x^{-a}. \tag{1}
    \end{align*}
    For the sum of the $k$ highest degrees we have that:
    \begin{align*}
        \Pr\left[ \sum_{i=1}^{k} \floor{d_i} \leq \frac{\epsilon m}{2} \right]
        &\geq \Pr\left[ k \floor{d_1} \leq \frac{\epsilon m}{2} \right] \\
        &= \Pr\left[ \floor{d_1} \leq \frac{\epsilon m}{2k} \right] \\
        &\geq \Pr\left[ \floor{d_1} \leq n^{1/a} \log{m} \right] \\
        &\geq 1 - n \left( n^{1/a} \log{m} \right)^{-a} \\
        &= 1 -  \frac{1}{\log^a{m}}, \tag{2}
    \end{align*}
    where the first line is due to $\sum_{i=1}^{k} \floor{d_i} \leq k \floor{d_1}$, the second line is by dividing by $k$, the third line is by assumption $k \leq \epsilon m n^{-1/a} / 2\log{m}$, the fourth line is by substituting $x = n^{1/a} \log{m} \geq 1$ for $n$ large enough into (1) and the fifth line is by rearranging.
    Then for the value of $\greedyalgorithm$ we have that:
    \begin{align*}
        \E\left[ |\neighbors{\greedy{k}}| \right]
        &\geq \E\left[ |\neighbors{\greedy{k}}| \middle| \sum_{i=1}^{k} \floor{d_i} \leq \frac{\epsilon m}{2} \right] \cdot \Pr\left[ \sum_{i=1}^{k} \floor{d_i} \leq \frac{\epsilon m}{2} \right] \\
        &\geq (1 - \epsilon) \cdot \E\left[\sum_{i=1}^{k} \floor{d_i} \right] \cdot \left( 1 -  \frac{1}{\log^a{m}} \right) \\
        &\geq \left(1 - \epsilon -  \frac{1}{\log^a{m}} \right) \cdot \E\left[ \sum_{i=1}^{k}  \floor{d_i} \right] \\
        & \geq \left(1 - \epsilon - O\left( \frac{1}{\log^a{m}} \right) \right) \cdot \E\left[ |\neighbors{\opt{k}}| \right],
    \end{align*}
    where the first line is by tower rule, the second line is by case 1 of \cref{lem:asymptotic optimality outside critical region for unequal degrees} and (2), the third line is due to $(1 - x)(1 - y) = 1 - x - y + xy \geq 1 - x - y$ for all $x, y \geq 0$ and the fourth line is due to $\E\left[ \sum_{i=1}^{k} \floor{d_i} \right] \geq \E\left[ |\neighbors{\opt{k}}| \right]$ by \cref{lem: trivial upper bound for opt}.
\end{proof}

Our techniques unfortunately do not suffice for extending Theorem~\ref{thm:Theorem 1} to the setting from this section with non left-regular graphs. The main reason is that there are regimes where we only count the number of times we accept nodes
whose contribution is the highest possible. But in non left-regular graphs, we could have
just one node with a high degree and all other nodes with low degrees, in which case the
contribution of the one node is insignificant overall.

\end{document}